\def\place #1#2#3{\mspace{#2}\makebox[0pt]{\raisebox{#3}{#1}}\mspace{-#2}}	% the second argument should be in mu, and the third argument in pt
\newcommand{\bra}[1]{{\langle#1|}}
\newcommand{\ket}[1]{{|#1\rangle}}
\newcommand{\braket}[2]{{\langle#1|#2\rangle}}
\newcommand{\ketbra}[2]{{\ket{#1}\!\bra{#2}}}
\newcommand{\abs}[1]{{\lvert #1\rvert}}	% since the delimiters do not scale, it might be a good idea to add a dummy {} at the end, so \abs{big expression}^2 has the superscript at a low height
\newcommand{\norm}[1]{{\| #1 \|}}
\newcommand{\eps}{{\epsilon}}
\newcommand{\binomial}[2]{\ensuremath{\left(\begin{smallmatrix}#1 \\ #2 \end{smallmatrix}\right)}}
\DeclareMathOperator{\Ex}{\operatorname{E}}
\def\tensor {\otimes}
\def\adjoint{\dagger} %{*}
\def\C {{\bf C}}
\def\N {{\bf N}}
\def\R {{\bf R}}
\def\L {{\mathcal L}}
\def\D {{\mathcal D}}
\DeclareMathOperator{\Span}{\operatorname{Span}}
\DeclareMathOperator{\Range}{\operatorname{Range}}
\DeclareMathOperator{\Kernel}{\operatorname{Ker}}
\newcommand{\identity}{\ensuremath{\boldsymbol{1}}} %\mathbb{I}
\newcommand{\ADV} {\mathrm{Adv}}
\newcommand{\ADVpm} {\mathrm{Adv}^{\pm}}
\newcommand{\B}{B}	% \{0,1\}	{{\bf Z}_2}
\DeclareMathOperator{\wsizeop}{{\operatorname{wsize}}}		% short for span program witness size
\newcommand{\wsize}[1]{{\wsizeop({#1})}}
\newcommand{\wsizeD}[1]{\wsize{{#1},\D}}
\newcommand{\wsizeS}[2]{{\wsizeop_{#2}({#1})}}
\newcommand{\wsizeSD}[2]{{\wsizeS{{#1},\D}{#2}}}
\newcommand{\wsizex}[2]{{\wsizeop({#1},{#2})}}
\newcommand{\wsizexS}[3]{{\wsizeop_{#3}({#1},{#2})}}
\def\Ifree{I_\mathrm{free}}		% note: there are no extra brackets around I_free, so \Ifree' will have a prime next to the I
\def\Pifree{{\Pi_\mathrm{free}}}
\def\Pim {{\overline{\Pi}}}
\def\jc {\varsigma}	% this is used in the proof of \thmref{t:spanprogramcomposition} to map i in I_{j,b} to (j,b)
\DeclareMathOperator{\abst}{\operatorname{abs}}
\def\biadj {B}	% A_G is the adjacency matrix, \biadj_G the biadjacency matrix
\def\algorithm {{\mathcal{A}}}
\DeclareMathOperator{\AND}{\ensuremath{\operatorname{AND}}}
\DeclareMathOperator{\OR}{\ensuremath{\operatorname{OR}}}
\newcommand{\size}[1]{{{\operatorname{size}}({#1})}}
\DeclareMathOperator{\signdegree}{\operatorname{sign-degree}}
\newcounter{sprows}
\newlength{\spheight}
\newlength{\spraise}
\newcommand{\comment}[1]{\emph{\color{blue}Comment:\color{black} #1}} % use for simply removing comments
\newlength{\commentslength}
\newcommand{\comments}[1]{
\hspace{-2\parindent}
\addtolength{\commentslength}{-\commentslength}
\addtolength{\commentslength}{\linewidth}
\addtolength{\commentslength}{-\parindent}
\fcolorbox{blue}{white}{\smallskip\begin{minipage}[c]{\commentslength}
\emph{Comments:}\begin{itemize}#1\end{itemize}\end{minipage}}\bigskip
}
\renewcommand{\comment}[1]{}\renewcommand{\comments}[1]{}
\newcommand{\rem}[1]{}
\newtheorem{theorem}{Theorem}[section]
\newtheorem{lemma}[theorem]{Lemma}
\newtheorem{corollary}[theorem]{Corollary}
\newtheorem{claim}[theorem]{Claim}
\newtheorem{proposition}[theorem]{Proposition}
\newtheorem{conjecture}[theorem]{Conjecture}
\newtheorem{definition}[theorem]{Definition}
\newfont{\subsubsecfnt}{ptmri8t at 10pt}
\renewcommand{\subparagraph}[1]{\smallskip{\subsubsecfnt #1.}}
\numberwithin{equation}{section} % makes Eq. numbers (section.number)
\newcommand{\eqnref}[1]{\hyperref[#1]{{(\ref*{#1})}}}
\newcommand{\thmref}[1]{\hyperref[#1]{{Theorem~\ref*{#1}}}}
\newcommand{\lemref}[1]{\hyperref[#1]{{Lemma~\ref*{#1}}}}
\newcommand{\corref}[1]{\hyperref[#1]{{Corollary~\ref*{#1}}}}
\newcommand{\defref}[1]{\hyperref[#1]{{Definition~\ref*{#1}}}}
\newcommand{\secref}[1]{\hyperref[#1]{{Section~\ref*{#1}}}}
\newcommand{\figref}[1]{\hyperref[#1]{{Figure~\ref*{#1}}}}
\newcommand{\tabref}[1]{\hyperref[#1]{{Table~\ref*{#1}}}}
\newcommand{\remref}[1]{\hyperref[#1]{{Remark~\ref*{#1}}}}
\newcommand{\appref}[1]{\hyperref[#1]{{Appendix~\ref*{#1}}}}
\newcommand{\claimref}[1]{\hyperref[#1]{{Claim~\ref*{#1}}}}
\newcommand{\propref}[1]{\hyperref[#1]{{Proposition~\ref*{#1}}}}
\newcommand{\exampleref}[1]{\hyperref[#1]{{Example~\ref*{#1}}}}
\newcommand{\conjref}[1]{\hyperref[#1]{{Conjecture~\ref*{#1}}}}
\newcommand{\threshold}[2]{{T_{{#1}}^{{#2}}}}
\newcommand{\interval}[3]{{I_{{#1},{#2}}^{{#3}}}}
\begin{document}

\title{Span programs and quantum query complexity: \\The general adversary bound is nearly tight\\ for every boolean function}
\author{%
Ben W.~Reichardt%
  \thanks{School of Computer Science and Institute for Quantum Computing, University of Waterloo.}
}
\date{}

\maketitle

\begin{abstract}
The general adversary bound is a semi-definite program (SDP) that lower-bounds the quantum query complexity of a function.  We turn this lower bound into an upper bound, by giving a quantum walk algorithm based on the dual SDP that has query complexity at most the general adversary bound, up to a logarithmic factor.  

In more detail, the proof has two steps, each based on ``span programs," a certain linear-algebraic model of computation.  First, we give an SDP that outputs for any boolean function a span program computing it that has optimal ``witness size."  The optimal witness size is shown to coincide with the general adversary lower bound.  Second, we give a quantum algorithm for evaluating span programs with only a logarithmic query overhead on the witness size.  

\smallskip

The first result is motivated by a quantum algorithm for evaluating composed span programs.  The algorithm is known to be optimal for evaluating a large class of formulas.  The allowed gates include all constant-size functions for which there is an optimal span program.  So far, good span programs have been found in an ad hoc manner, and the SDP automates this procedure.  Surprisingly, the SDP's value equals the general adversary bound.  A corollary is an optimal quantum algorithm for evaluating ``balanced" formulas over any finite boolean gate set.  

The second result broadens span programs' applicability beyond the formula evaluation problem.  We extend the analysis of the quantum algorithm for evaluating span programs.  The previous analysis shows that a corresponding bipartite graph has a large spectral gap, but only works when applied to the composition of constant-size span programs.  We show generally that properties of eigenvalue-zero eigenvectors in fact imply an ``effective" spectral gap around zero.  

\smallskip

A strong universality result for span programs follows.  A good quantum query algorithm for a problem implies a good span program, and vice versa.  Although nearly tight, this equivalence is nontrivial.  Span programs are a promising model for developing more quantum algorithms.   
\end{abstract}

\clearpage

\tableofcontents

\clearpage

\section{Introduction} \label{s:introduction}

Quantum algorithms for evaluating formulas have developed rapidly since the breakthrough AND-OR formula-evaluation algorithm~\cite{fgg:and-or}.  The set of allowed gates in the formula has increased from just AND and OR gates to include all boolean functions on up to three bits, e.g., the three-majority function, and many four-bit functions---with certain technical balance conditions.  Operationally, these new algorithms can be interpreted as evaluating ``span programs," a certain linear-algebraic computational model~\cite{KarchmerWigderson93span}.  Discovering an optimal span program for a function immediately allows it to be added to the gate set~\cite{ReichardtSpalek08spanprogram}.  

This paper is motivated by three main puzzles: 
\begin{enumerate}
\item
Can the gate set allowed in the formula-evaluation algorithm be extended further?  Given that the search for optimal span programs has been entirely ad hoc, yet still quite successful, it seems that the answer must be yes.  How far can it be extended, though?
\item
What is the relationship between span program complexity, or ``witness size," and the adversary lower bounds on quantum query complexity?  There are two different adversary bounds, $\ADV \leq \ADVpm$, but the power of the latter is not fully understood.  Span program witness size appears to be closely connected to these bounds.  For example, so far all known optimal span programs are for functions $f$ with $\ADV(f) = \ADVpm(f)$.  
\item
Aside from their applications to formula evaluation, can span programs be used to derive other quantum algorithms?  
\end{enumerate}

Our first result answers the first two questions.  Unexpectedly, we find that for any boolean function~$f$, the optimal span program has witness size equal to the general adversary bound $\ADVpm(f)$.  This result is surprising because of its broad scope.  It allows us to optimally evaluate formulas over any finite gate set, quantumly.  Classically, optimal formula-evaluation algorithms are known only for a limited class of formulas using AND and OR gates, and a few other special cases.  

This result suggests a new technique for developing quantum algorithms for other problems.  Based on the adversary lower bound, one can construct a span program, and hopefully turn this into an algorithm, i.e., an upper bound.  Unfortunately, it has not been known how to evaluate general span programs.  The second result of this paper is a quantum algorithm for evaluating span programs, with only a logarithmic query overhead on the witness size.  The main technical difficulty is showing that a corresponding bipartite graph has a large spectral gap.  We show that properties of eigenvalue-zero eigenvectors in fact imply an ``effective" spectral gap around zero.  

In combination, the two results imply that the general quantum adversary bound, $\ADVpm$, is tight up to a logarithmic factor for every boolean function.  This is surprising because $\ADVpm$ is closely connected to the nonnegative-weight adversary bound $\ADV$, which has strong limitations.  The results also imply that quantum computers, measured by query complexity, and span programs, measured by witness size, are equivalent computational models, up to a logarithmic factor.  

\smallskip

Some further background material is needed to place the results in context.  

\subsection*{Quantum algorithms for evaluating formulas}

Farhi, Goldstone and Gutmann in 2007 gave a nearly optimal quantum query algorithm for evaluating balanced binary AND-OR formulas~\cite{fgg:and-or, ccjy:and-or}.  This was extended by Ambainis et al.\ to a nearly optimal quantum algorithm for evaluating all AND-OR formulas, and an optimal quantum algorithm for evaluating ``approximately balanced" AND-OR formulas~\cite{AmbainisChildsReichardtSpalekZhang07andor}.  

Reichardt and {\v S}palek gave an optimal quantum algorithm for evaluating ``adversary-balanced" formulas over a considerably extended gate set~\cite{ReichardtSpalek08spanprogram}, including in particular: 
\begin{itemize}
\item All functions $\{0,1\}^n \rightarrow \{0,1\}$ for $n \leq 3$, such as AND, OR, PARITY and MAJ${}_3$.  
\item $69$ of the $92$ inequivalent functions $f: \{0,1\}^4 \rightarrow \{0,1\}$ with $\ADV(f) = \ADVpm(f)$ (\defref{t:adversarydef}).
\end{itemize}

They derived this result by generalizing the previous approaches to consider span programs, a computational model introduced by Karchmer and Wigderson~\cite{KarchmerWigderson93span}.  They then derived a quantum algorithm for evaluating certain concatenated span programs, with a query complexity upper-bounded by the span program witness size (\defref{t:wsizedef}).  Thus in fact the allowed gate set includes all functions $f: \{0,1\}^n \rightarrow \{0,1\}$, with $n = O(1)$, for which we have a span program $P$ computing $f$ and with witness size $\wsize{P} = \ADVpm(f)$ (\defref{t:adversarydef}).  A special case of~\cite[Theorem~4.7]{ReichardtSpalek08spanprogram} is: 

\begin{theorem}[\cite{ReichardtSpalek08spanprogram}] \label{t:spanprogramalgorithm}
Fix a function $f : \{0,1\}^n \rightarrow \{0,1\}$.  For $k \in \N$, define $f^k : \{0,1\}^{n^k} \rightarrow \{0,1\}$ as follows: $f^1 = f$ and $f^k(x) = f \big(f^{k-1}(x_1,\ldots, x_{n^{k-1}}), \ldots, f^{k-1}(x_{n^k-n^{k-1}+1}, \ldots, x_{n^k}) \big)$ for $k > 1$.  If span program $P$ computes $f$, then 
\begin{equation}
Q(f^k) = O(\wsize{P}^k)
 \enspace ,
\end{equation}
where $Q(f^k)$ is the bounded-error quantum query complexity of $f^k$.  
\end{theorem}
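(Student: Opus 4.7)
The plan is to build a span program $P^k$ computing $f^k$ by iterating a span program composition construction $k-1$ times on $P$, bound its witness size by $\wsize{P^k} \le \wsize{P}^k$, and then invoke a quantum walk algorithm whose query complexity matches the witness size up to constants.

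First I would recall the composition operation for span programs. Given a span program $P$ for $f$, with input vectors grouped by pairs $(i,b)$ specifying which bit is queried and the queried value, and span programs $P_1, \ldots, P_n$ for functions $g_1, \ldots, g_n$, the composed span program $P \circ (P_1, \ldots, P_n)$ for $f(g_1, \ldots, g_n)$ is obtained by replacing each group of $P$'s input vectors (corresponding to bit~$i$) by an appropriately scaled copy of $P_i$, embedded into the free-input space so that a positive evaluation of $P_i$ makes the corresponding input vector of $P$ available. A composition lemma then gives $\wsize{P \circ (P_1, \ldots, P_n)} \le \wsize{P} \cdot \max_i \wsize{P_i}$; to prove it, one pastes together optimal positive and negative witnesses of the outer and inner programs and bounds the resulting norms. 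Iterating $k-1$ times with all inner programs equal to $P$ yields $\wsize{P^k} \le \wsize{P}^k$.

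The main task is the quantum algorithm. To $P^k$ and an input $x \in \{0,1\}^{n^k}$ I would associate the natural bipartite graph whose weighted adjacency matrix $A_{P^k,x}$ is obtained from the target and input vectors of $P^k$ with input-dependent truncations on the false inputs. The algorithm performs phase estimation at precision $\Theta(1/\wsize{P^k})$ on the quantum walk generated by $A_{P^k,x}$, starting from a canonical state supported on the target vector. A positive input $x$ (so $f^k(x)=1$) produces a large overlap of the start state with a zero-eigenvalue eigenvector built from an optimal positive witness; a negative input instead produces an eigenvalue-free neighborhood of zero of width $\Omega(1/\wsize{P^k})$, certified by an optimal negative witness. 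Distinguishing the two cases by phase estimation costs $O(\wsize{P^k}) = O(\wsize{P}^k)$ queries to~$x$.

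The main obstacle is establishing the $\Omega(1/\wsize{P^k})$ spectral gap for the composed program, because such gaps tend to shrink multiplicatively under composition and naive bounds accumulate to something much worse than $\wsize{P}^k$. The key technical step is to show that each level of composition multiplies the relevant gap by at most $1/\wsize{P}$, so that after $k$ levels the gap is $\Omega(1/\wsize{P}^k)$, exactly matching the witness-size bound. This calls for an inductive analysis of the singular-value structure of the bipartite adjacency matrix along the recursive decomposition of $P^k$, tracking simultaneously the start-state overlap on positive instances and the eigenvalue-free neighborhood of zero on negative instances at each level.
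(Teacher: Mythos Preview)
Your proposal matches the approach of the cited paper \cite{ReichardtSpalek08spanprogram} from which this theorem is quoted: compose span programs to obtain $P^k$ with $\wsize{P^k}\le\wsize{P}^k$, then run phase estimation on a Szegedy-style walk over the associated bipartite graph, with the spectral gap established by an inductive analysis along the composition tree. You have correctly identified the crux, namely that the recursive spectral argument must not lose more than a factor of $1/\wsize{P}$ per level. In \cite{ReichardtSpalek08spanprogram} this is carried out by, roughly, inverting $A_{G_{P^k}(x)}-\rho\identity$ gate by gate from the leaves toward the root, tracking the first-order behavior in $\rho$; it is technically delicate and relies on the constituent span program being of constant size with constant entries. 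One point you do not mention is that the query bound $O(\wsize{P^k})$ also requires $\norm{\abst(A_{G_{P^k}})}=O(1)$, which holds here because the graph is assembled from copies of a single constant-size piece, but must be verified.

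It is worth noting that the present paper does not reprove this theorem by the inductive route; instead it develops a different and more general argument (\thmref{t:spanprogramspectralanalysis} and \thmref{t:bipartitepsdreduction}) showing that for \emph{any} span program, the eigenvalue-zero witness in the false case automatically yields an ``effective'' spectral gap of the right order, without any recursion. This replaces the level-by-level analysis by a single global statement about bipartite graphs, and is what allows the paper to go beyond balanced compositions of constant-size gates. For the specific statement of \thmref{t:spanprogramalgorithm}, either route works; your route is the original one, while the paper's route is both simpler and far more broadly applicable.
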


\cite{ReichardtSpalek08spanprogram} followed an ad hoc approach to finding optimal span programs for various functions.  Although successful so far, continuing this method seems daunting: 
\begin{itemize}
\item
For most functions $f$, probably $\ADVpm(f) > \ADV(f)$.  Indeed, there are $222$ four-bit boolean functions, up to the natural equivalences, and for only $92$ of them does $\ADVpm = \ADV$ hold.  For no function with a gap has a span program matching $\ADVpm(f)$ been found.  This suggests that perhaps span programs can only work well for the rare cases when $\ADVpm = \ADV$.  
\item
Moreover, for all the functions for which we know an optimal span program, it turns out that an optimal span program can be built just by using AND and OR gates with optimized weights.  (This fact has not been appreciated; see \appref{s:thresholdappendix}.)  On the other hand, there is no reason to think that optimal span programs will in general have such a limited form.  
\item
Finally, it can be difficult to prove a span program's optimality.  For several functions, we have found span programs whose witness sizes match $\ADV$ numerically, but we lack a proof.  
\end{itemize}

In any case, the natural next step is to try to automate the search for good span programs.  A main difficulty is that there is considerable freedom in the span program definition, e.g., span programs are naturally continuous, not discrete.  The search space needs to be narrowed down.  

We show that it suffices to consider span programs written in so-called ``canonical" form.  This form was introduced by~\cite{KarchmerWigderson93span}, but its significance for developing quantum algorithms was not at first appreciated.  We then find a semi-definite program (SDP) for varying over span programs written in canonical form, optimizing the witness size.  This automates the search for span programs.  

Remarkably, the SDP has a value that corresponds exactly to the general adversary bound $\ADVpm$, in a new formulation.  Thus we characterize optimal span program witness size: 

\begin{theorem} \label{t:spanprogramSDPintro}
For any function $f : \{0,1\}^n \rightarrow \{0,1\}$,
\begin{equation} %\label{e:spanprogramSDP}
\inf_{P} \wsize P  = \ADVpm(f)
 \enspace ,
\end{equation}
where the infimum is over span programs $P$ computing $f$.  Moreover, this infimum is achieved.  
\end{theorem}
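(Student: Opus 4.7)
The plan is to express the minimum witness size as the value of a semi-definite program and then match it to a known SDP formulation of $\ADVpm(f)$ via duality.

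First, I would reduce to span programs in \emph{canonical form}, following Karchmer and Wigderson. Any span program $P$ computing $f$ can be transformed, without increasing witness size, into a canonical one whose input vectors are indexed by pairs $(i,b)$ with $i \in [n]$, $b \in \{0,1\}$, and whose target vector and negative-instance witnesses are parameterized in a rigid way by $f^{-1}(1)$ and $f^{-1}(0)$ respectively. The reason this reduction is lossless is essentially that a canonical span program can be read off from any witnesses for positive and negative inputs of $P$: the positive witnesses provide the input vectors, and the negative witnesses provide a dual structure certifying that the target is not in the span on $0$-inputs. I would prove that this transformation preserves (or reduces) both the positive and negative witness sizes, and hence $\wsize{P}$.

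Second, I would write an SDP whose variables parameterize span programs in canonical form and whose objective is the witness size. Here the witness-size constraints become positive semi-definiteness conditions on matrices built out of the input vectors, indexed by the $1$- and $0$-inputs of $f$. The fact that this is a valid SDP uses that $\wsize{P}$ is the maximum of the positive and negative witness sizes, each of which is an inner-product expression that can be linearized by a Gram-matrix substitution. Optimality of the resulting span program then reduces to solving this SDP.

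Third, I would compute the Lagrangian dual. The dual variables are naturally indexed by pairs $(x,y)$ with $f(x)\neq f(y)$, i.e., exactly the variables appearing in the general adversary bound $\ADVpm(f)$. The dual constraints involve, for each $i$, the restriction to pairs with $x_i \neq y_i$, mirroring the structure of the adversary SDP. I expect the dual to coincide, after a substitution of variables, with the standard dual formulation of $\ADVpm(f)$ (expressed via a matrix $\Gamma$ and $\max_i \norm{\Gamma \circ D_i}$). Strong duality holds because the primal SDP is strictly feasible (a trivial span program computing $f$ can be plugged in), giving equality $\inf_P \wsize{P} = \ADVpm(f)$ and attainment of the infimum.

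The main obstacle will be the duality calculation: setting up the canonical-form SDP with exactly the right variables and constraints so that its dual matches the known SDP for $\ADVpm$ on the nose. In particular, care is needed because the witness size is a $\max$ of two quantities, which must be encoded so that the dual has the clean ``maximum over $i$ of a single spectral norm'' form characteristic of $\ADVpm$. Once the two SDPs are seen to be duals of each other, attainment follows from standard SDP theory plus a compactness argument turning an optimal Gram matrix back into a span program.
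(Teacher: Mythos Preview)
Your proposal is correct and follows essentially the same approach as the paper: reduce to canonical span programs, express the optimal witness size as an SDP (via a Gram/Cholesky substitution on the vectors $\{\ket{v_{xj}}\}$), and identify this SDP with $\ADVpm(f)$ via duality. The only cosmetic difference is the direction in which the duality is traversed: the paper first derives a (new) minimization dual of $\ADVpm$ (their \thmref{t:adversarydual}) and then recognizes it directly as the canonical-span-program SDP, whereas you propose to dualize the span-program SDP and land on the adversary maximization---these are the same SDP pair, and strict feasibility/attainment goes through either way.
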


This result greatly extends the gate set over which the formula-evaluation algorithm of~\cite{ReichardtSpalek08spanprogram} works optimally.  In fact, it allows the algorithm to run on formulas with any finite gate set.  A factor is lost that depends on the gates, but for a finite gate set, this will be a constant.  As another corollary, \thmref{t:spanprogramSDPintro} also settles the question of how the general adversary bound behaves under function composition, and it implies a new upper bound on the sign-degree of boolean functions.  

\subsection*{Quantum algorithm for evaluating span programs}

Now that we know there are span programs with witness size matching the general adversary bound, it is of considerable interest to extend the formula-evaluation algorithm to evaluate arbitrary span programs.  Unfortunately, though, a key theorem from~\cite{ReichardtSpalek08spanprogram} does not hold general span programs.  

The~\cite{ReichardtSpalek08spanprogram} algorithm works by plugging together optimal span programs for the individual gates in a formula $\varphi$ to construct a composed span program $P$ that computes $\varphi$.  Then a family of related graphs $G_P(x)$, one for each input $x$, is constructed.  For an input $x$, the algorithm starts at a particular ``output vertex" of the graph, and runs a quantum walk for about $1/\wsize P$ steps in order to compute $\varphi(x)$.  The algorithm's analysis has two parts.  First, for completeness, it is shown that when $\varphi(x) = 1$, there exists an eigenvalue-zero eigenvector of the weighted adjacency matrix $A_{G_P(x)}$ with large support on the output vertex.  Second, for soundness, it is shown that if $\varphi(x) = 0$, then $A_{G_P(x)}$ has a spectral gap of $\Omega(1/\wsize P)$ for eigenvectors supported on the output vertex.  This spectral gap determines the algorithm's query complexity.  

The completeness step of the proof comes from relating the definition of $G_P(x)$ to the witness size definition.  Eigenvalue-zero eigenvectors correspond exactly to span program ``witnesses," with the squared support on the output vertex corresponding to the witness size.  This argument straightforwardly extends to arbitrary span programs.  

For soundness, the proof essentially inverts the matrix $A_{G_P(x)} - \rho \identity$ gate by gate, span program by span program, starting at the inputs and working recursively toward the output vertex.  In this way, it roughly computes the Taylor series about $\rho = 0$ of the eigenvalue-$\rho$ eigenvectors in order eventually to find a contradiction for $\abs \rho$ small.  One would not expect this method to extend to arbitrary span programs, because it loses a constant factor that depends badly on the individual span programs used for each gate.  Indeed, it fails in general.  Span programs can be constructed for which the associated graphs simply do not have an $\Omega(1/\wsize P)$ spectral gap in the $0$ case.  (For example, take a large span program and add an AND gate to the top whose other input is $0$.  The composed span program computes the constant $0$ function and has constant witness size, but the spectral gaps of the associated large graphs need not be $\Omega(1)$.)  

On the other hand, it has not been understood why the~\cite{ReichardtSpalek08spanprogram} analysis works so well when applied to balanced compositions of constant-size optimal span programs.  In particular, the correspondence between graphs and span programs by definition relates the witness size to properties of eigenvalue-zero eigenvectors.  Why does the witness size quantity also appear in the spectral gap?   

We show that this is not a coincidence, that in general an eigenvalue-zero eigenvector of a bipartite graph implies an ``effective" spectral gap for a perturbed graph.  Somewhat more precisely, the inference is that the total squared overlap on the output vertex of small-eigenvalue eigenvectors is small.  This argument leads to a substantially more general small-eigenvalue spectral analysis.  It also implies simpler proofs of \thmref{t:spanprogramalgorithm} as well as of the AND-OR formula-evaluation result in~\cite{AmbainisChildsReichardtSpalekZhang07andor}.  

This small-eigenvalue analysis is the key step that allows us to evaluate span programs on a quantum computer.  Besides showing an effective spectral gap, though, we would also need to bound $\norm{A_{G_P}}$ in order to generalize~\cite{ReichardtSpalek08spanprogram}.  However, recent work by Cleve et al.\ shows that this norm does not matter if we are willing to concede a logarithmic factor in the query complexity~\cite{CleveGottesmanMoscaSommaYongeMallo08discretize}.  We thus obtain: 

\begin{theorem} \label{t:generalspanprogramalgorithmintro}
Let $P$ be a span program computing $f: \{ 0, 1\}^n \rightarrow \{0, 1\}$.  Then
\begin{equation}
Q(f) = O\bigg( \wsize P \frac{\log \wsize P}{\log \log \wsize P} \bigg)
 \enspace .
\end{equation}
\end{theorem}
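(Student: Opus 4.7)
The plan is to run phase estimation at a designated ``output vertex'' $\ket{\mathrm{out}}$ of the weighted bipartite graph $G_P(x)$ associated to the span program $P$ and input $x$, in the framework of \cite{ReichardtSpalek08spanprogram}, and decide $f(x)$ by whether the estimated phase of the weighted adjacency matrix $A = A_{G_P(x)}$ is close to zero. Two spectral properties of the pair $(A,\ket{\mathrm{out}})$ would suffice:
\emph{(completeness)} if $f(x)=1$, then $A$ has a zero-eigenvalue eigenvector whose squared overlap on $\ket{\mathrm{out}}$ is $\Omega(1/\wsize P)$; and
\emph{(soundness)} if $f(x)=0$, then for some $\Delta = \Theta(1/\wsize P)$, the total squared overlap of $\ket{\mathrm{out}}$ on eigenvectors of $A$ with $\abs{\lambda}\le\Delta$ is bounded by a constant less than~$1$.

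Completeness is essentially unchanged from \cite{ReichardtSpalek08spanprogram}: a positive witness for $f(x)=1$ is by construction a kernel vector of $A$ whose overlap on $\ket{\mathrm{out}}$ is the reciprocal of the witness size, so an optimal witness gives the required $\Omega(1/\wsize P)$. Given both spectral bounds, applying phase estimation to $e^{iAt}$ with precision $\Theta(1/\wsize P)$, initialized on $\ket{\mathrm{out}}$, distinguishes the two cases with constant advantage and runs for total simulated time $O(\wsize P)$. A naive simulation of $e^{iAt}$ would pay a $\poly(\norm{A})$ overhead per step, which is undesirable because $\norm{A}$ can be very large relative to $\wsize P$. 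To avoid this, I would invoke the discrete-query Hamiltonian simulation of Cleve et al.~\cite{CleveGottesmanMoscaSommaYongeMallo08discretize}, which simulates sparse row-computable Hamiltonians with only $O(\log N / \log \log N)$ query overhead per unit of simulated time, independent of $\norm{A}$. Combined with the $O(\wsize P)$ amount of simulated time required by phase estimation, this gives the claimed $O(\wsize P \cdot \log \wsize P / \log \log \wsize P)$ query complexity.

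The main obstacle is the soundness step, i.e.\ the ``effective spectral gap'' promised in the introduction. The example of prepending a trivially-$0$ AND gate to an arbitrary span program shows that $A$ itself need not have any actual spectral gap at $0$ when $f(x)=0$, so the argument cannot be about the full spectrum of $A$; it must concern only how $\ket{\mathrm{out}}$ sits relative to the near-kernel. The only tool available is the \emph{negative witness} $\ket{w}$ for the $0$-input $x$, which, in the bipartite picture dual to the positive-witness picture, is again a zero-eigenvalue eigenvector of $A$ (of the complementary graph), with controlled overlap on $\ket{\mathrm{out}}$ and norm governed by the witness size of $P$ on $0$-inputs.

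The natural way to convert such a $\ket{w}$ into an effective spectral gap is a functional-calculus / smoothing argument: for a smooth cutoff $g$ approximating the indicator of $[-\Delta,\Delta]$, bound the projection $\norm{g(A)\ket{\mathrm{out}}}$ by inserting the identity $g(0)\braket{w}{\mathrm{out}} = \bra{w}g(A)\ket{\mathrm{out}}$ that holds because $A\ket{w} = 0$ on the relevant factor, and then use Cauchy--Schwarz together with $\norm{g(A)}\le\norm{g}_\infty$. Choosing $g$ with short Fourier support so that $g$ is well approximated by a low-degree polynomial in $A$, I expect the bound on $\norm{g(A)\ket{\mathrm{out}}}$ to be, up to constants, $\Delta \cdot \norm{w} / \abs{\braket{w}{\mathrm{out}}}$; taking $\Delta = c/\wsize P$ with $c$ small then yields the desired constant-bounded overlap on small-$\abs\lambda$ eigenvectors. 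The care required is in choosing $g$ so that the approximation error does not destroy the gain from $\Delta$, and in tracking the bipartite structure of $G_P(x)$ so that the same $\ket{w}$ can be used across all small-$\abs\lambda$ eigenvectors simultaneously; this is where I anticipate most of the technical work lies.
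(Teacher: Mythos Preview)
Your high-level plan matches the paper's: run phase estimation at the output vertex of $G_P(x)$, use the negative witness to get an effective spectral gap when $f(x)=0$, and invoke the continuous-to-discrete simulation of \cite{CleveGottesmanMoscaSommaYongeMallo08discretize} to avoid paying for $\norm{A_{G_P(x)}}$. But two steps in the execution would fail as written.

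\textbf{Completeness versus soundness do not separate.} You claim completeness overlap $\Omega(1/\wsize P)$ and soundness overlap ``a constant less than~$1$''; those do not give constant advantage. Even if your later soundness estimate comes out as $O(c^2/\wsize P)$, the gap between the two cases is only $\Theta(1/\wsize P)$, and boosting that costs an extra $\wsize P$ factor. The paper fixes this by first passing to a canonical span program and then \emph{scaling the target vector} by $1/\sqrt{\wsize P}$ (\thmref{t:spanprogramspectralanalysis}). This makes the true-case witness have length at most~$1$, so the squared overlap on $\ket{0}$ is $\geq 1/2$; simultaneously the false-case bound becomes $\leq 16c^2$ at scale $\Lambda = c/\wsize P$. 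Now the two cases separate by a constant and a single run of phase estimation suffices.

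\textbf{The bipartite structure kills your identity.} The negative-witness kernel vector lives entirely on the $T$-side of $G_P(x)$ (see \thmref{t:zeroenergy}), while $\ket{\mathrm{out}} = \ket{\mu_0}$ is on the $U$-side, so $\braket{w}{\mathrm{out}} = 0$ and the relation $g(0)\braket{w}{\mathrm{out}} = \bra w g(A)\ket{\mathrm{out}}$ is vacuous. What the paper uses instead is the bipartite identity $\rho(\alpha)\braket{0}{\alpha} = \braket{t}{\alpha_T}$, which transfers the question to the $T$-side and reduces it to a rank-one perturbation problem for the positive semidefinite matrix $X' = \biadj_G \biadj_G^\dagger + \ketbra t t$ (\thmref{t:bipartitepsdreduction}). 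The effective gap is then proved by a direct calculus minimization of $\norm{X'\ket\xi}^2$ subject to $\braket t \xi$ fixed (\lemref{t:psdsquaredsupportmaximize}), followed by a dyadic summation over eigenvalue shells. A smoothing/functional-calculus argument in this spirit could likely be made to work, but only after the same reduction; as stated, your Cauchy--Schwarz step has no leverage.
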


We can now prove the main result of this paper, that for any boolean function $f$ the general adversary bound on the quantum query complexity is tight up to a logarithmic factor: 

\begin{theorem} \label{t:querycomplexitytightintro}
For any function $f : \{0,1\}^n \rightarrow \{0,1\}$, the quantum query complexity of $f$ satisfies 
\begin{equation}
Q(f) = \Omega(\ADVpm(f)) \quad \text{and} \quad Q(f) = O\bigg(\ADVpm(f) \, \frac{\log \ADVpm(f)}{\log \log \ADVpm(f)}\bigg)
 \enspace .
\end{equation}
\end{theorem}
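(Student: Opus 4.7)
The plan is to obtain \thmref{t:querycomplexitytightintro} essentially as a corollary of the two previously stated main results, \thmref{t:spanprogramSDPintro} and \thmref{t:generalspanprogramalgorithmintro}, together with the known fact that the general adversary bound is a valid lower bound on bounded-error quantum query complexity. The proof will have two essentially independent halves, one for each direction.

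For the lower bound $Q(f) = \Omega(\ADVpm(f))$, I would simply cite the prior work of H\o yer, Lee, and \v{S}palek establishing the (possibly negative-weight) general adversary bound as a lower bound on bounded-error quantum query complexity. No new ideas enter here; the adversary method gives this direction directly from the SDP formulation of $\ADVpm(f)$.

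For the upper bound, the strategy is to chain the two new theorems. First, I would apply \thmref{t:spanprogramSDPintro}, which asserts that the infimum of $\wsize{P}$ over span programs $P$ computing $f$ equals $\ADVpm(f)$ and is achieved. In particular, there exists a span program $P^*$ computing $f$ with $\wsize{P^*} = \ADVpm(f)$. Next, I would feed $P^*$ into \thmref{t:generalspanprogramalgorithmintro}, which produces a bounded-error quantum query algorithm for $f$ with query complexity
\begin{equation}
O\bigg(\wsize{P^*}\,\frac{\log \wsize{P^*}}{\log\log \wsize{P^*}}\bigg) = O\bigg(\ADVpm(f)\,\frac{\log \ADVpm(f)}{\log\log \ADVpm(f)}\bigg)
 \enspace ,
\end{equation}
as claimed. (I would handle the degenerate case $\ADVpm(f) = O(1)$, e.g.\ constant functions, separately by noting that $Q(f) = O(1)$ is trivial there and that the claimed $O(\cdot)$ expression is interpreted accordingly.)

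There is no substantive obstacle at this stage: all the conceptual work has been pushed into the two earlier theorems, whose proofs constitute the body of the paper. \thmref{t:spanprogramSDPintro} supplies the existence of a witness-optimal span program via the canonical-form SDP and its identification with $\ADVpm$, and \thmref{t:generalspanprogramalgorithmintro} supplies the quantum walk algorithm with the effective-spectral-gap analysis and the logarithmic discretization overhead from Cleve et al. The final theorem is thus a one-line composition, and I would write it as such.
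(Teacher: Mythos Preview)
Your proposal is correct and matches the paper's own proof essentially line for line: the lower bound is cited to H{\o}yer, Lee, and {\v S}palek, and the upper bound is obtained by chaining \thmref{t:spanprogramSDPintro} with \thmref{t:generalspanprogramalgorithmintro}. The paper treats this theorem as exactly the one-line composition you describe.
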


\begin{proof}
The lower bound is due to~\cite{HoyerLeeSpalek07negativeadv} (see \thmref{t:advquerycomplexity}).  For the upper bound, use the SDP from \thmref{t:spanprogramSDPintro}, to construct a span program $P$ computing $f$, with $\wsize P = \ADVpm(f)$.  Then apply \thmref{t:generalspanprogramalgorithmintro} to obtain a bounded-error quantum query algorithm that evaluates $f$.  
\end{proof}

Thus the $\ADVpm$ semi-definite program is in fact an SDP for quantum query complexity, up to a logarithmic factor.  Previously, Barnum et al.\ have already given an SDP for quantum query complexity~\cite{BarnumSaksSzegedy03adv}, and have shown that the nonnegative-weight adversary bound $\ADV$ can be derived by strengthening it, but their SDP is quite different.  In particular, the $\ADVpm$ SDP is ``greedy," in the sense that it considers only how much information can be learned using a single query; see \defref{t:adversarydef} below.  The \cite{BarnumSaksSzegedy03adv} SDP, on the other hand, has separate terms for every query.  It is surprising that a small modification to $\ADV$ can not only break the certiÞcate complexity and property testing barriers~\cite{HoyerLeeSpalek07negativeadv}, but in fact be nearly optimal always.  For example, for the Element Distinctness problem with the input in $[n]^n$ specified in binary, $\ADV(f) = O(\sqrt n \log n)$~\cite{SpalekSzegedy04advequivalent} but $Q(f) = \Omega(n^{2/3})$ by the polynomial method~\cite{AaronsonShi04collisioned, Ambainis05polynomialmethod}.  \thmref{t:querycomplexitytightintro} implies that $\ADVpm(f) = \Omega(n^{2/3} / \log n)$.

\section{Definitions} \label{s:definitions}

For a natural number $n \in \N$, let $[n] = \{1, 2, \ldots, n\}$.  Let $\B = \{0,1\}$.  For a bit $b \in \B$, let $\bar b = 1-b$ denote its complement.  
A function $f$ with codomain $\B$ is a (total) boolean function if its domain is $\B^n$ for some $n \in \N$; $f$ is a partial boolean function if its domain is a subset $\D \subseteq \B^n$.  

The complex and real numbers are denoted by $\C$ and $\R$, respectively.  For a finite set $X$, let $\C^X$ be the inner product space $\C^{\abs X}$ with orthonormal basis $\{ \ket x : x \in X \}$.  We assume familiarity with ket notation, e.g., $\sum_{x \in X} \ketbra x x = \identity$ the identity on $\C^X$.  For vector spaces $V$ and $W$ over $\C$, let $\L(V, W)$ denote the set of all linear transformations from $V$ into $W$, and let $\L(V) = \L(V, V)$.  For $A \in \L(V, W)$, $\norm{A}$ is the operator norm of $A$.  

The union of disjoint sets is sometimes denoted by $\sqcup$.  

In the remainder of this section, we will define span programs, from~\cite{KarchmerWigderson93span}, and the ``witness size" span program complexity measure from~\cite{ReichardtSpalek08spanprogram}.  We will then define the quantum adversary bounds and state some of their basic properties, including composition, lower bounds on quantum query complexity, and the previously known lower bound on span program witness size.

\subsection{Span programs} \label{s:spanprogramdef}

A span program $P$ is a certain linear-algebraic way of specifying a boolean function $f_P$~\cite{KarchmerWigderson93span, GalPudlak03spanprogram}.  Roughly, a span program consists of a target $\ket t$ in a vector space $V$, and a collection of subspaces $V_{j,b} \subseteq V$, for $j \in [n]$, $b \in B$.  For an input $x \in \B^n$, $f_P(x) = 1$ when the target can be reached using a linear combination of vectors in $\cup_{j \in [n]} V_{j, x_j}$.  For our complexity measure on span programs, however, it will be necessary to fix a set of ``input vectors" that span 
each subspace $V_{j, b}$.  We desire to span the target using a linear combination of these vectors with small coefficients.  

Formally we therefore define a span program as follows: 

\begin{definition}[Span program~\cite{KarchmerWigderson93span}] \label{t:spanprogramdef}
Let $n \in \N$.  A span program $P$ consists of a ``target" vector $\ket t$ in a finite-dimensional inner-product space $V$ over $\C$, together with ``input" vectors $\ket{ v_i } \in V$ for $i \in I$.  Here the index set $I$ is a disjoint union $I = \Ifree \sqcup \bigsqcup_{j \in [n], b \in \B} I_{j,b}$.  

To $P$ corresponds a function $f_P : \B^n \rightarrow \B$, defined by 
\begin{equation} \label{e:spanprogramdef}
f_P(x) = \begin{cases}
1 & \text{if $\ket t \in \Span(\{ \ket{ v_i } : i \in \Ifree \cup \bigcup_{j \in [n]} I_{j, x_j} \})$} \\ 
0 & \text{otherwise}
\end{cases}
\end{equation}
\end{definition}

We say that $\Ifree$ indexes the set of ``free" input vectors, while $I_{j,b}$ indexes input vectors ``labeled by" $(j,b)$.  We say that $P$ ``computes" the function $f_P$.  For $x \in \B^n$, $f_P(x)$ evaluates to $1$, or true, when the target can be reached using a linear combination of the ``available" input vectors, i.e., input vectors that are either free or labeled by $(j, x_j)$ for $j \in [n]$. 

Some additional notation will come in handy.  Let $\{ \ket i : i \in I \}$ be an orthonormal basis for $\C^{\abs I}$.  Let $A : \C^{\abs I} \rightarrow V$ be the linear operator 
\begin{equation}
A = \sum_{i \in I} \ketbra{v_i}{i}
 \enspace .
\end{equation}
Written as a matrix, the columns of $A$ are the input vectors of $P$.  For an input $x \in \B^n$, let $I(x)$ be the set of available input vector indices and $\Pi(x) : \C^{\abs I} \rightarrow \C^{\abs I}$ the projection thereon, 
\begin{align}
I(x) &= \Ifree \cup \bigcup_{j \in [n]} I_{j, x_j} \\
\Pi(x) &= \sum_{i \in I(x)} \ketbra i i
 \enspace .
\end{align}

\begin{lemma} \label{t:witnesses}
For a span program $P$, $f_P(x) = 1$ if and only if $\ket t \in \Range(A \Pi(x))$.  Equivalently, $f_P(x) = 0$ if and only if $\Pi(x) A^\dagger \ket t \in \Range\!\Big[ \Pi(x) A^\dagger \Big(\identity - \frac{\ketbra t t}{\norm{t}^2}\Big)\Big]$.  
\end{lemma}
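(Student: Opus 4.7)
The first equivalence is essentially by definition. Since
$A\Pi(x) = \sum_{i\in I(x)} \ketbra{v_i}{i}$,
its range is precisely $\Span(\{\ket{v_i}:i\in I(x)\})$, so $\ket t\in\Range(A\Pi(x))$ unwinds to exactly the defining condition $f_P(x)=1$.

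For the ``equivalent'' reformulation of $f_P(x)=0$, my plan is to translate both sides into the existence of a \emph{negative witness}: a vector $\ket w\in V$ satisfying $\Pi(x)A^\dagger\ket w=0$ and $\braket{t}{w}\neq 0$. The first equivalence, together with the Fredholm alternative in the finite-dimensional space $V$, gives
\[
\ket t\notin \Range(A\Pi(x))\ \Longleftrightarrow\ \ket t\notin\big(\Kernel(\Pi(x)A^\dagger)\big)^\perp
\ \Longleftrightarrow\ \exists\,\ket w\in\Kernel(\Pi(x)A^\dagger)\text{ with }\braket{t}{w}\neq 0,
\]
since $\Range(A\Pi(x))^\perp=\Kernel(\Pi(x)A^\dagger)$ and ranges of finite-dimensional operators are closed.

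It remains to show that the stated range condition is equivalent to the existence of such a $\ket w$. Let $P_t=\identity-\ketbra{t}{t}/\norm{t}^2$ denote the projection onto $\ket t^\perp$. For the forward direction, suppose there exists $\ket u$ with $\Pi(x)A^\dagger\ket t=\Pi(x)A^\dagger P_t\ket u$. Set $\ket w=P_t\ket u-\ket t$. Then
\[
\Pi(x)A^\dagger\ket w=\Pi(x)A^\dagger P_t\ket u-\Pi(x)A^\dagger\ket t=0,
\qquad \braket{t}{w}=\braket{t}{P_t u}-\norm{t}^2=-\norm{t}^2\neq 0,
\]
so $\ket w$ is a negative witness. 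For the converse, given a negative witness $\ket w$, rescale so that $\braket{t}{w}=-\norm{t}^2$ and set $\ket u=\ket w+\ket t$. Then $\braket{t}{u}=-\norm{t}^2+\norm{t}^2=0$, so $P_t\ket u=\ket u$, and
\[
\Pi(x)A^\dagger P_t\ket u=\Pi(x)A^\dagger\ket u=\Pi(x)A^\dagger\ket w+\Pi(x)A^\dagger\ket t=\Pi(x)A^\dagger\ket t,
\]
which exhibits $\Pi(x)A^\dagger\ket t$ as an element of $\Range(\Pi(x)A^\dagger P_t)$.

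The argument is really just a bookkeeping exercise built on the duality $\Range(M)^\perp=\Kernel(M^\dagger)$; the only mild subtlety is choosing the right affine shift by $\ket t$ to translate between ``$\ket u$ living in $\ket t^\perp$'' and ``$\ket w$ having nonzero overlap with $\ket t$.'' I do not expect any genuine obstacle: the nonzero denominator $\norm{t}^2$ is implicitly assumed (if $\ket t=0$ the statement trivializes, since then $f_P$ is the constant-$1$ function).
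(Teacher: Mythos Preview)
Your proof is correct and follows essentially the same approach as the paper, which merely asserts that the lemma ``follows from Eq.~(2.1)'' and then restates the conclusion in terms of positive and negative witnesses $\ket w$ and $\ket{w'}$. You have supplied the details the paper omits: the Fredholm-alternative step identifying $f_P(x)=0$ with the existence of a negative witness, and the explicit affine translation between such a witness and the range condition on $\Pi(x)A^\dagger\ket t$.
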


\lemref{t:witnesses} follows from Eq.~\eqnref{e:spanprogramdef}.  Therefore exactly when $f_P(x) = 1$ is there a ``witness" $\ket w \in \C^{\abs I}$ satisfying $A \Pi(x) \ket w = \ket t$.  Exactly when $f_P(x) = 0$, there is a witness $\ket{w'} \in V$ satisfying $\braket{t}{w'} \neq 0$ and $\Pi(x) A^\dagger \ket{w'} = 0$, i.e., $\ket{w'}$ has nonzero inner product with the target vector and is orthogonal to the available input vectors.  

The complexity measure we use to characterize span programs is the witness size~\cite{ReichardtSpalek08spanprogram}:

\begin{definition}[Witness size with costs~\cite{ReichardtSpalek08spanprogram}] \label{t:wsizedef}
Consider a span program $P$, and a vector $s \in [0, \infty)^n$ of nonnegative ``costs."  Let $S = \sum_{j \in [n], b \in \B, i \in I_{j,b}} \sqrt{s_j} \ketbra i i$.  For each input $x \in \B^n$, define the witness size of $P$ on $x$ with costs $s$, $\wsizexS P x s$, as follows: 
\begin{itemize}
\item If $f_P(x) = 1$, then $\ket{t} \in \Range(A \Pi(x))$, so there is a witness $\ket{w} \in \C^{\abs I}$ satisfying $A \Pi(x) \ket{w} = \ket{t}$.  Then $\wsizexS P x s$ is the minimum squared length of any such witness, weighted by the costs $s$: 
\begin{equation} \label{e:wsizetrue}
\wsizexS P x s 
= \min_{\ket w : \, A \Pi(x) \ket w = \ket t} \norm{S \ket w}^2
 \enspace .
\end{equation}
\item If $f_P(x) = 0$, then $\ket t \notin \Range(A \Pi(x))$.
Therefore there is a witness $\ket{w'} \in V$ satisfying $\braket{t}{w'} = 1$ and $\Pi(x) A^\adjoint \ket{w'} = 0$.  Then 
\begin{equation} \label{e:wsizefalse}
\wsizexS P x s 
= \min_{\substack{\ket{w'} : \, \braket{t}{w'} = 1 \\ \Pi(x) A^\adjoint \ket{w'} = 0}} \norm{S A^\adjoint \ket{w'}}{}^2
 \enspace .
\end{equation}
\end{itemize}

The witness size of $P$ with costs $s$, restricted to domain $\D \subseteq \B^n$, is 
\begin{equation}
\wsizeSD P s =  \max_{x \in \D} \wsizexS P x s
 \enspace .
\end{equation}
\end{definition}

The $\wsizeSD P s$ notation is for handling partial boolean functions.  For the common case that $\D = \B^n$, let $\wsizeS P s = \wsizeS {P, \B^n} s$.  For $j \in [n]$, $s_j$ can intuitively be thought of as the charge for evaluating the $j$th input bit.  When the subscript $s$ is omitted, the costs are taken to be uniform, $s = \vec 1 = (1, 1, \ldots, 1)$, e.g., $\wsize P = \wsizeS P {\vec 1}$.  In this case, note that $S = \identity - \sum_{i \in \Ifree} \ketbra i i$.  The extra generality of allowing nonuniform costs is necessary for considering unbalanced formulas.  

Before continuing, let us remark that the above definition of span programs differs slightly from the original definition due to Karchmer and Wigderson~\cite{KarchmerWigderson93span}.  Call a span program \emph{strict} if $\Ifree = \emptyset$.  Ref.~\cite{KarchmerWigderson93span} considers only strict span programs.  For the witness size complexity measure, we will later prove that span programs and strict span programs are equivalent (\propref{t:spanprogramrelaxed}).  Allowing free input vectors is often convenient for defining and composing span programs, though, and may be necessary for developing efficient quantum algorithms based on span programs.  Ref.~\cite{ReichardtSpalek08spanprogram} uses an even more relaxed span program definition than \defref{t:spanprogramdef}, letting each input vector to be labeled by a subset of $[n] \times B$.  This definition is convenient for terse span program constructions, and is also easily seen to be equivalent to ours.  

Classical applications of span programs have used a different complexity measure, the ``size" of $P$ being the number of input vectors, $\abs I$.  This measure has been characterized in~\cite{Gal01spanprogramsize}.  

Note that replacing the target vector $\ket t$ by $c \ket t$, for $c \neq 0$, changes the witness sizes by a factor of $\abs{c}^2$ or $1/\abs{c}^2$, depending on whether $f_P(x) = 1$ or $0$.  Thus we might just as well have defined the witness size as 
\begin{equation} \label{e:wsizescalet}
\sqrt{ \max_{x : f_P(x) = 0} \wsizexS P x s \max_{x : f_P(x) = 1} \wsizexS P x s } \enspace ,
\end{equation}
provided that $f_P$ is not the constant $0$ or constant $1$ function on $\D$.  Explicit formulas for $\wsizexS P x s$ can be written in terms of Moore-Penrose pseudoinverses of certain matrices, and are given in~\cite[Lemma~A.3]{ReichardtSpalek08spanprogram}.  \thmref{t:generalspanprogramalgorithmnonblackbox} will give an alternative, related criterion for comparing span programs.

\subsection{Adversary lower bounds} \label{s:adversarydef}

There are essentially two techniques, the polynomial and adversary methods, for lower-bounding quantum query complexity.  The polynomial method was introduced in the quantum setting by Beals et al.~\cite{BealsBuhrmanCleveMoscaWolf98}.  It is based on the observation that after running a quantum algorithm for $q$ oracle queries to an input $x$, the probability of any measurement result is a polynomial of degree at most $2q$ in the variables~$x_j$.  The first of the adversary bounds, $\ADV$, was introduced by Ambainis~\cite{Ambainis00adversary}.  Adversary bounds are a generalization of the classical hybrid argument, that considers the entanglement of the system when run on a superposition of input strings.  Both methods have classical analogs; see~\cite{Beigel93polynomialclassical} and~\cite{Aaronson03localsearch}

The polynomial method and $\ADV$ are incomparable.  {\v S}palek and Szegedy~\cite{SpalekSzegedy04advequivalent} proved the equivalence of a number of formulations for the adversary bound $\ADV$, and also showed that $\ADV$ is subject to a certificate complexity barrier.  For example, for $f$ a total boolean function, $\ADV(f) \leq \sqrt{C_0(f) C_1(f)}$, where $C_b(f)$ is the best upper bound over those $x$ with $f(x) = b$ of the size of the smallest certificate for $f(x)$.  The polynomial method can surpass this barrier.  In particular, for the Element Distinctness problem, the polynomial method implies an $\Omega(n^{2/3})$ lower bound on the quantum query complexity~\cite{AaronsonShi04collisioned, Ambainis05polynomialmethod}, and this is tight~\cite{Ambainis03elementdistinctness, Szegedy04walkfocs}.  However, displaying two list elements that are the same is enough to prove that the list does not have distinct elements, so $C_0(f) = 2$ and $\ADV(f) = O(\sqrt n)$.  $\ADV$ also suffers a ``property testing barrier" on partial functions.  

On the other hand, the polynomial method can also be loose.  Ambainis gave a total boolean function $f^k$ on $n = 4^k$ bits that can be represented exactly by a polynomial of degree only $2^k$, but for which $\ADV(f^k) = 2.5^k$~\cite{Ambainis06polynomial}, and see~\cite{HoyerLeeSpalek07negativeadv} for other examples.  

Thus both lower bound methods are limited.  In 2007, though, H{\o}yer et al.\ discovered a strict generalization $\ADVpm$ of $\ADV$~\cite{HoyerLeeSpalek07negativeadv}.  For example, for Ambainis's function, $\ADVpm(f^k) \geq 2.51^k$.  $\ADVpm$ also breaks the certificate complexity and property testing barriers.  No similar limits on its power have been found.  In particular, for no function $f$ is it known that the quantum query complexity of $f$ is $\omega(\ADVpm(f))$.  

In this section, we define the two adversary bounds.  On account of how their definitions differ, we call $\ADV$ the ``nonnegative-weight" adversary bound, and $\ADVpm$ the ``general" adversary bound.  We also state some previous results.  

\begin{definition}[Adversary bounds with costs~\cite{HoyerLeeSpalek05compose, HoyerLeeSpalek07negativeadv}] \label{t:adversarydef}
For finite sets $C$ and $E$, and $\D \subseteq C^n$, let $f: \D \rightarrow E$ and let $s \in [0, \infty)^n$ be a vector of nonnegative costs.  An adversary matrix for $f$ is a nonzero, $\abs{\D} \times \abs{\D}$ real, symmetric matrix $\Gamma$ that satisfies $\bra x \Gamma \ket y = 0$ for all $x, y \in \D$ with $f(x) = f(y)$.  

Define the nonnegative-weight adversary bound for $f$, with costs $s$, as 
\begin{equation} \label{e:adversarydef}
\ADV_s(f) = \max_{\large \substack{\text{adversary matrices $\Gamma$:} \\
\forall x, y \in \D, \, \bra x \Gamma \ket y \geq 0 \\
\forall j \in [n], \, \norm{\Gamma \circ \Delta_j} \leq s_j}} \norm{\Gamma}
 \enspace ,
\end{equation}
where $\Gamma \circ \Delta_j$ denotes the entry-wise matrix product between $\Gamma$ and $\Delta_j = \sum_{x, y \in \D : x_j \neq y_j} \ketbra x y$, and the norm is the operator norm.  

The general adversary bound for $f$, with costs $s$, is 
\begin{equation}
\ADVpm_s(f) = \max_{\large \substack{\text{adversary matrices $\Gamma$:} \\
\forall j \in [n], \, \norm{\Gamma \circ \Delta_j} \leq s_j}} \norm{\Gamma}
 \enspace .
\end{equation}
In this maximization, the entries of $\Gamma$ need not be nonnegative.  In particular, $\ADVpm_s(f) \geq \ADV_s(f)$.  

Letting $\vec{1} = (1, 1, \ldots, 1)$, the nonnegative-weight adversary bound for $f$ is $\ADV(f) = \ADV_{\vec 1}(f)$ and the general adversary bound for $f$ is $\ADVpm(f) = \ADVpm_{\vec 1}(f)$.  
\end{definition}

One special case is when $s_{j^*} = 0$ for some $j^* \in [n]$.  In this case, since $\Gamma \circ \Delta_{j^*}$ must be zero, letting $s' = (s_1, \ldots, \widehat{s_{j^*}}, \ldots, s_n)$ and $f_b$ be the restriction of $f$ to inputs $x$ with $x_{j^*} = b$, we have $\ADV_s(f) = \max_{b \in C} \ADV_{s'}(f_b)$ and $\ADVpm_s(f) = \max_{b \in C} \ADVpm_{s'}(f_b)$.  Provided $s_j > 0$ for all $j \in [n]$, we can write 
\begin{align}
\ADV_s(f) 
&= \max_{\substack{\text{adversary matrices $\Gamma$:} \\
\forall x, y \in \D, \, \bra x \Gamma \ket y \geq 0}} \min_{j \in n} s_j \frac{\norm{\Gamma}}{\norm{\Gamma \circ \Delta_j}} \\
\ADVpm_s(f) 
&= \max_{\text{adversary matrices $\Gamma$}} \min_{j \in n} s_j \frac{\norm{\Gamma}}{\norm{\Gamma \circ \Delta_j}} 
 \enspace ,
\end{align}
which are the expressions used in Refs.~\cite{HoyerLeeSpalek05compose, HoyerLeeSpalek07negativeadv}.  Furthermore, \thmref{t:adversarydual} and \thmref{t:adversarydualgeneral} will state dual semi-definite programs for $\ADV$ and $\ADVpm$. 

The adversary bounds are primarily of interest because, with uniform costs $s = \vec 1$, they give lower bounds on quantum query complexity.  

\begin{definition}
For $f : \D \rightarrow E$, with $\D \subseteq C^n$, let $Q_\epsilon(f)$ be the $\epsilon$-bounded-error quantum query complexity of $f$, $Q(f) = Q_{1/10}(f)$, and, when $E = \{0,1\}$, let $Q^1(f)$ be the one-sided bounded-error quantum query complexity.  
\end{definition}

\begin{theorem}[\cite{BarnumSaksSzegedy03adv, HoyerLeeSpalek07negativeadv}] \label{t:advquerycomplexity}
For any function $f : \D \rightarrow E$,  with $\D \subseteq C^n$, the $\epsilon$-bounded-error quantum query complexity of $f$ is lower-bounded as 
\begin{equation}\begin{split}
Q_\epsilon(f) 
&\geq \frac{1 - 2\sqrt{\epsilon(1-\epsilon)}}{2} \ADV(f) \\
Q_\epsilon(f) 
&\geq \frac{1 - 2\sqrt{\epsilon(1-\epsilon)} - 2 \epsilon}{2} \ADVpm(f)
 \enspace .
\end{split}\end{equation}
In particular, $Q(f) = \Omega(\ADVpm(f))$.  
Moreover, if $D = \{0,1\}$, then 
\begin{equation}
Q_\epsilon(f) 
\geq \frac{1 - 2\sqrt{\epsilon(1-\epsilon)}}{2} \ADVpm(f)
 \enspace .
\end{equation}
\end{theorem}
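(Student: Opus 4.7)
The plan is to use the standard progress-function argument that underlies all adversary lower bounds. Fix an optimal adversary matrix $\Gamma$ for $f$ (nonnegative entries in the $\ADV$ case, arbitrary real entries in the $\ADVpm$ case), scaled so that $\max_j \norm{\Gamma \circ \Delta_j} \leq 1$, and let $\delta$ be a unit-norm principal eigenvector so that $\delta^\top \Gamma \delta = \norm{\Gamma}$. Given a $T$-query $\epsilon$-bounded-error quantum algorithm $\algorithm$ for $f$, let $\ket{\psi_x^t}$ be its state after $t$ queries on input $x\in \D$. Define the progress function
\[
W^t \;=\; \sum_{x,y \in \D} \delta_x\, \delta_y\, \Gamma[x,y]\, \braket{\psi_x^t}{\psi_y^t} \enspace .
\]
Because the initial state is input-independent, $\braket{\psi_x^0}{\psi_y^0} = 1$ and so $W^0 = \delta^\top \Gamma \delta = \norm{\Gamma}$.

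The per-query change is bounded identically for both adversary bounds. Writing $\braket{\psi_x^{t+1}}{\psi_y^{t+1}} - \braket{\psi_x^t}{\psi_y^t}$ in terms of the query register's amplitudes and noting that the oracle $O_x^\adjoint O_y$ only acts nontrivially on basis states of the query register for which $x_j \neq y_j$, a direct computation plus Cauchy--Schwarz (applied to the two ``halves'' of the query register state) yields $\abs{W^{t+1} - W^t} \leq 2\max_j \norm{\Gamma \circ \Delta_j} \leq 2$. Summing over steps, $\abs{W^T - W^0} \leq 2T$.

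For the nonnegative bound $\ADV(f)$, I would invoke Perron--Frobenius to take $\delta$ coordinate-wise nonnegative; then every coefficient $\delta_x \delta_y \Gamma[x,y]$ appearing in $W^T$ is nonnegative. Since $\Gamma[x,y] = 0$ whenever $f(x) = f(y)$, the bounded-error hypothesis forces $\abs{\braket{\psi_x^T}{\psi_y^T}} \leq 2\sqrt{\epsilon(1-\epsilon)}$ for every surviving pair (the usual fidelity/trace-distance bound applied to the final measurement statistics). Thus $\abs{W^T} \leq 2\sqrt{\epsilon(1-\epsilon)}\, W^0$, and combining with $\abs{W^T - W^0} \leq 2T$ gives $T \geq \tfrac{1}{2}(1 - 2\sqrt{\epsilon(1-\epsilon)})\,\norm{\Gamma}$. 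Optimizing over $\Gamma$ yields the $\ADV$ bound.

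The hard part is $W^T$ for the general bound, since without Perron--Frobenius the entries $\delta_x \delta_y \Gamma[x,y]$ may be signed and we cannot pull the absolute value inside the sum. My approach would be to insert the algorithm's output projectors: write $\identity = \sum_e \Pi_e$ and decompose
\[
\braket{\psi_x^T}{\psi_y^T} \;=\; \bra{\psi_x^T}\Pi_{f(x)}\ket{\psi_y^T} + \bra{\psi_x^T}(\identity - \Pi_{f(x)})\ket{\psi_y^T} \enspace ,
\]
and symmetrically with $\Pi_{f(y)}$. The ``error'' pieces have norm $\leq \sqrt{\epsilon}$ by the bounded-error hypothesis and contribute a term bounded by $2\epsilon\,\norm{\Gamma}$ after Cauchy--Schwarz against $\delta$. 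The ``correct'' piece with a fixed $e$ can be handled by viewing the matrix with entries $\Gamma[x,y]\bra{\psi_x^T}\Pi_e\ket{\psi_y^T}$ restricted to $x,y$ with $f(x) = f(y) = e$ as zero (so it can be ignored), and off those diagonal blocks the Cauchy--Schwarz with Perron--Frobenius style estimate again costs $2\sqrt{\epsilon(1-\epsilon)}\,\norm{\Gamma}$. The main obstacle is tracking these decompositions to get exactly the constants $2\sqrt{\epsilon(1-\epsilon)} + 2\epsilon$ stated; for boolean codomain $E = \{0,1\}$ only two projectors appear and a symmetrization eliminates the $2\epsilon$ loss, recovering the sharper final inequality.
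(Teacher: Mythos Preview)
The paper does not give its own proof of this theorem; it is quoted from \cite{BarnumSaksSzegedy03adv, HoyerLeeSpalek07negativeadv} and used as a black box throughout. So there is nothing to compare against except the standard arguments in those references. Your progress-function outline is exactly that standard argument, and for the $\ADV$ bound your sketch is essentially complete.

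For the $\ADVpm$ bound, however, your handling of $W^T$ has a real gap. Decomposing via the output projectors is the right opening move, and it matches H{\o}yer--Lee--{\v S}palek. But you then claim the ``correct'' off-block piece costs $2\sqrt{\epsilon(1-\epsilon)}\,\norm{\Gamma}$ by a ``Cauchy--Schwarz with Perron--Frobenius style estimate.'' That is exactly the step that is \emph{not} available when $\Gamma$ has signed entries: without nonnegativity of the coefficients $\delta_x\delta_y\Gamma[x,y]$ you cannot push an entry-wise bound on $\lvert\bra{\psi_x^T}\Pi_e\ket{\psi_y^T}\rvert$ through the sum. What H{\o}yer--Lee--{\v S}palek actually do is split each $\ket{\psi_x^T}$ into its ``good'' part $\Pi_{f(x)}\ket{\psi_x^T}$ and ``bad'' part $(\identity-\Pi_{f(x)})\ket{\psi_x^T}$, observe that the good--good contribution vanishes because $\Pi_{f(x)}\Pi_{f(y)}=0$ when $f(x)\neq f(y)$, and then bound each remaining piece as a bilinear form $\sum_{x,y}\Gamma[x,y]\,\braket{a_x}{b_y}$ by $\norm{\Gamma}\bigl(\sum_x\norm{a_x}^2\bigr)^{1/2}\bigl(\sum_y\norm{b_y}^2\bigr)^{1/2}$, with those sums controlled by $\sum_x\delta_x^2=1$ and $\norm{(\identity-\Pi_{f(x)})\ket{\psi_x^T}}^2\le\epsilon$. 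The binary-output improvement then comes from the bipartite block structure of $\Gamma$, which in that case also kills the bad--bad term and leaves only the two good--bad crosses.
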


For boolean functions, the nonnegative-weight adversary bound composes multiplicatively, but this was not known to hold for the general adversary bound~\cite{HoyerLeeSpalek07negativeadv}:  

\begin{theorem}[Adversary bound composition {\cite{HoyerLeeSpalek07negativeadv, Ambainis06polynomial, LaplanteLeeSzegedy06adversary, HoyerLeeSpalek05compose}}] \label{t:weakadversarycomposition}
Let $f : \{0,1\}^n \rightarrow \{0,1\}$ and, for $j \in [n]$, let $f_j : \{0,1\}^{m_j} \rightarrow \{0,1\}$.  Define $g : \{0,1\}^{m_1} \times \cdots \times \{0,1\}^{m_n} \rightarrow \{0,1\}$ by 
\begin{equation}
g(x) = f\big(f_1(x_1), \ldots, f_n(x_n)\big)
 \enspace .
\end{equation} 
Let $s \in [0, \infty)^{m_1} \times \cdots \times [0, \infty)^{m_n}$, and let $\alpha_j = \ADV_{s_j}(f_j)$ and $\beta_j = \ADVpm_{s_j}(f_j)$ for $j \in [n]$.  Then 
\begin{align}
\ADV_s(g) &= \ADV_\alpha(f) \\
\ADVpm_s(g) &\geq \ADVpm_\beta(f) \label{e:weakadversarycomposition}
 \enspace .
\end{align}
In particular, if $\ADV_{s_1}(f_1) = \cdots = \ADV_{s_n}(f_n) = \alpha$, then $\ADV_s(g) = \alpha \, \ADV(f)$, and if $\ADVpm_{s_1}(f_1) = \cdots = \ADVpm_{s_n}(f_n) = \beta$, then $\ADVpm_s(g) \geq \beta \, \ADVpm(f)$.  
\end{theorem}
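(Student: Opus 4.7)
The equality for $\ADV$ is already established in the cited prior work, whose $\leq$ direction uses a min-max dual characterization of the nonnegative-weight bound. Both bounds satisfy the $\geq$ direction, and that is all that is claimed for $\ADVpm$, so my plan is to give a primal construction of a composed adversary matrix that handles both simultaneously.

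Fix optimal adversary matrices: $\Gamma_j$ for $f_j$ with $\norm{\Gamma_j} = \beta_j$ and $\norm{\Gamma_j \circ \Delta_k} \leq (s_j)_k$ for each $k \in [m_j]$, and $\Gamma$ for $f$ achieving $\ADVpm_\beta(f)$ with costs $\beta = (\beta_1, \ldots, \beta_n)$. Since $\Gamma_j$ is symmetric and supported off-diagonal between $f_j^{-1}(0)$ and $f_j^{-1}(1)$, a principal eigenvector splits as $\ket{\delta_j} = \ket{\delta_j^0} + \ket{\delta_j^1}$ with $\ket{\delta_j^b}$ supported on $f_j^{-1}(b)$, and $\Gamma_j \ket{\delta_j^b} = \beta_j \ket{\delta_j^{\bar b}}$. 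I then define
\[
\Gamma_g = \sum_{a, a' \in \{0,1\}^n} \bra{a}\Gamma\ket{a'} \bigotimes_{j=1}^n M_j^{a_j, a_j'},
\]
where $M_j^{b,b} = \ketbra{\delta_j^b}{\delta_j^b}/\braket{\delta_j^b}{\delta_j^b}$ is rank-one within the $f_j^{-1}(b)$ block and $M_j^{b,\bar b} = \Gamma_j/\beta_j$. By construction $\Gamma_g$ is symmetric and vanishes on pairs $(x, y)$ with $g(x) = g(y)$, so it is a valid adversary matrix for $g$.

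To see $\norm{\Gamma_g} \geq \ADVpm_\beta(f)$, I would check that $\ket{\Phi} = \sum_a u_a \bigotimes_j \ket{\delta_j^{a_j}}$ is an eigenvector of eigenvalue $\norm{\Gamma}$ when $\sum_a u_a \ket{a}$ is a principal eigenvector of $\Gamma$; this reduces to the pointwise identity $M_j^{a_j', a_j} \ket{\delta_j^{a_j}} = \ket{\delta_j^{a_j'}}$, true in both the $a_j = a_j'$ and $a_j \neq a_j'$ cases by construction.

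The main obstacle is the cost constraint $\norm{\Gamma_g \circ \Delta_{(j,k)}} \leq (s_j)_k$. Entry-wise, $\Gamma_g \circ \Delta_{(j,k)}$ keeps only pairs differing exactly in bit $k$ of block $j$, which forces $a_{j'} = a_{j'}'$ for $j' \neq j$ and $a_j \neq a_j'$ in the defining sum. The resulting matrix factors, after a permutation of indices, into an outer part $\Gamma \circ \Delta_j$ of norm at most $\beta_j$, a $j$-th inner part $\frac{1}{\beta_j}(\Gamma_j \circ \Delta_k)$ of norm at most $(s_j)_k/\beta_j$, and for each $j' \neq j$ a diagonal restriction of the rank-one projector $M_{j'}^{b,b}$ of norm at most $1$. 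Submultiplicativity of operator norms on this tensor-like decomposition yields the desired bound $(s_j)_k$. The delicate step is the bookkeeping between the $\{0,1\}^n$-indexing of $\Gamma$ and the $\prod_j \{0,1\}^{m_j}$-indexing of $\Gamma_g$: the rank-one projectors $M_{j'}^{b,b}$ act as embeddings of the outer index structure into the full product space, and one must argue that this embedding preserves the submultiplicative bound on the relevant subspace. In the general case one additionally needs the entries of $\Gamma$ and $\Gamma_j$ to be allowed negative values without disrupting the norm factorization, which is automatic since the argument uses only the operator norm and never a nonnegativity hypothesis.
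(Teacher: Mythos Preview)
The paper does not prove this theorem; it is stated with citation to prior work and used as a known ingredient. (The paper's own contribution on composition is \thmref{t:advpmcomposition}, which supplies the complementary inequality $\ADVpm_s(g) \le \ADVpm_\beta(f)$ via the dual formulation.) Your primal construction of a composed adversary matrix is essentially the standard approach from the cited references.

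That said, your norm analysis has a real gap. The assertion that applying $\circ\,\Delta_{(j,k)}$ ``forces $a_{j'} = a_{j'}'$ for $j' \neq j$ and $a_j \neq a_j'$'' is false on both counts. Since $\Delta_{(j,k)}$ factors as all-ones matrices on blocks $j' \neq j$ tensored with $\Delta_k$ on block $j$, it places no restriction at all on the $j'$th factors, so terms with $a_{j'} \neq a_{j'}'$ survive intact. And because your $M_j^{b,b} = \ketbra{\delta_j^b}{\delta_j^b}/\braket{\delta_j^b}{\delta_j^b}$ is rank-one rather than diagonal, it has nonzero off-diagonal entries, so $M_j^{b,b} \circ \Delta_k \neq 0$ in general and terms with $a_j = a_j'$ also persist. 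Thus $\Gamma_g \circ \Delta_{(j,k)}$ does not reduce to the clean tensor product you describe, and submultiplicativity alone does not deliver the bound $(s_j)_k$. There is also a smaller definitional issue: if $M_j^{b,\bar b}$ is literally the full matrix $\Gamma_j/\beta_j$ rather than its $(b,\bar b)$ block, then for each coordinate with $f_j(x_j)\neq f_j(y_j)$ both orderings $(a_j,a_j')\in\{(0,1),(1,0)\}$ contribute identically, and the resulting sum over $2^{|S|}$ pairs $(a,a')$ need not vanish when $g(x)=g(y)$, so $\Gamma_g$ fails to be an adversary matrix. The proofs in the cited literature address the norm bound by a more careful factorization---typically conjugating by diagonal matrices built from the principal eigenvectors and then bounding a single tensor-product expression---and this step is genuinely more work than your sketch indicates.
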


Reichardt and {\v S}palek~\cite{ReichardtSpalek08spanprogram} show that the adversary bounds lower-bound the witness size of a span program:

\begin{theorem}[\cite{ReichardtSpalek08spanprogram}] \label{t:wsizeadvbound}
For any span program $P$ computing $f_P : \{0,1\}^n \rightarrow \{0,1\}$, 
\begin{equation}
\wsize{P} \geq \ADVpm(f_P) \geq \ADV(f_P)
 \enspace .
\end{equation}
\end{theorem}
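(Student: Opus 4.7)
The second inequality $\ADVpm(f_P) \geq \ADV(f_P)$ is immediate from \defref{t:adversarydef}: dropping the entrywise nonnegativity constraint on $\Gamma$ only enlarges the feasible region.

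For $\wsize{P} \geq \ADVpm(f_P)$, the plan is to fix an arbitrary $\Gamma$ feasible for $\ADVpm(f_P)$ with uniform costs (so $\norm{\Gamma \circ \Delta_j} \leq 1$ for all $j$) and directly bound $\norm\Gamma$ using the span program's witnesses.  Rescaling $\ket t$ does not change $\ADVpm(f_P)$; by~\eqnref{e:wsizescalet}, we may choose the rescaling so that $W := \max_{f_P(x)=1} \wsizex P x = \max_{f_P(y)=0} \wsizex P y$, in which case $W \leq \wsize P$, so it suffices to prove $\norm\Gamma \leq W$.  For each $1$-input $x$, fix an optimal witness $\ket{w_x}$ supported on $I(x)$ with $A\Pi(x)\ket{w_x} = \ket t$ and $\norm{S\ket{w_x}}^2 = \wsizex P x$; and for each $0$-input $y$, fix an optimal witness $\ket{w'_y}$ with $\braket{t}{w'_y} = 1$, $\Pi(y) A^\adjoint \ket{w'_y} = 0$, and $\norm{SA^\adjoint \ket{w'_y}}^2 = \wsizex P y$.

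The key scalar identity is $\bra{w'_y} A \Pi(x) \ket{w_x} = \braket{w'_y}{t} = 1$.  Decomposing $\Pi(x) = \Pifree + \sum_j \Pi_{j, x_j}$ with $\Pi_{j,b} := \sum_{i \in I_{j,b}} \ketbra i i$, the orthogonality $\Pi(y) A^\adjoint \ket{w'_y} = 0$ forces $\langle w'_y | v_i \rangle = 0$ for every $i \in \Ifree \cup I_{j, y_j}$, which kills the $\Pifree$ contribution and every $j$-term with $x_j = y_j$.  Setting $a_j(x,y) := \bra{w'_y} A \Pi_{j, x_j} \ket{w_x}$ (with $a_j = 0$ when $x_j = y_j$) therefore yields $\sum_j a_j(x,y) = 1$ for every pair with $f_P(x) = 1$, $f_P(y) = 0$.

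Let $\ket\delta = \sum_x \delta_x \ket x$ be a real unit eigenvector of $\Gamma$ realizing $\norm\Gamma = \lvert \bra\delta \Gamma \ket\delta \rvert$.  Using symmetry and $\Gamma_{xy} = 0$ for $f_P(x) = f_P(y)$, one has $\bra\delta \Gamma \ket\delta = 2 \sum_{f_P(x)=1,\, f_P(y)=0} \Gamma_{xy} \delta_x \delta_y$; substituting $1 = \sum_j a_j(x,y)$ and regrouping, each $j$-group becomes an inner product in $\C^{\D} \otimes \C^{\abs I}$: with $T_j := \Gamma \circ \Delta_j$ and
\begin{equation*}
\ket{A_j} := \sum_{f_P(x)=1} \delta_x \ket x \otimes \Pi_{j,x_j}\ket{w_x}, \qquad \ket{B_j} := \sum_{f_P(y)=0} \delta_y \ket y \otimes \Pi_{j, \bar y_j} A^\adjoint \ket{w'_y},
\end{equation*}
one verifies $\bra\delta \Gamma \ket\delta = 2 \sum_j \bra{B_j}(T_j \otimes \identity)\ket{A_j}$.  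Taking absolute values, applying $\norm{T_j} \leq 1$ in each inner product, Cauchy--Schwarz over $j$, and AM--GM ($2\sqrt{XY} \leq X+Y$) gives $\norm\Gamma \leq \sum_j \norm{A_j}^2 + \sum_j \norm{B_j}^2$.  The tensor factors were chosen precisely so that these sums collapse via the witness-size definitions to $\sum_{f_P(x)=1} \delta_x^2 \wsizex P x + \sum_{f_P(y)=0} \delta_y^2 \wsizex P y \leq W \norm\delta^2 = W$.  The main obstacle is discovering exactly this packaging: it is what converts the pointwise identity $\sum_j a_j(x,y) = 1$ into an operator-norm bound on $\Gamma$ that consumes the hypothesis $\norm{\Gamma \circ \Delta_j} \leq 1$ once per coordinate; without the tensor-product device one is left with term-by-term Cauchy--Schwarz bounds on each $a_j(x,y)$ but no mechanism to extract $\norm\Gamma$.
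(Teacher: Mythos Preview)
Your argument is correct.  The identity $\sum_{j:x_j\neq y_j} a_j(x,y)=1$ together with the tensor-product packaging into $\ket{A_j},\ket{B_j}$ is exactly what is needed to turn the witness data into a feasible upper bound on $\norm\Gamma$; the computation of $\sum_j\norm{A_j}^2$ and $\sum_j\norm{B_j}^2$ as witness sizes checks out (using that $\ket{w_x}$ is supported on $I(x)$, and that $\Pi(y)A^\dagger\ket{w'_y}=0$ forces all of $A^\dagger\ket{w'_y}$ onto the unavailable indices).

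This is a genuinely different route from what the paper does.  The paper's argument here is \emph{indirect}: it invokes \thmref{t:spanprogramalgorithm} to get $Q(f_P^k)=O_k(\wsize{P}^k)$, combines this with $\ADVpm(f_P)^k\le\ADVpm(f_P^k)=O(Q(f_P^k))$ from \thmref{t:weakadversarycomposition} and \thmref{t:advquerycomplexity}, and lets $k\to\infty$.  A direct proof is deferred to \thmref{t:spanprogramSDP}, which goes through canonical span programs and full SDP duality to establish the stronger equality $\inf_P\wsizeD P=\ADVpm(f)$.  Your approach is instead a self-contained weak-duality argument working for any fixed $P$ and any feasible $\Gamma$; it is essentially the direct proof of $\wsize P\ge\ADV(f_P)$ alluded to (from \cite[Sec.~5.3]{ReichardtSpalek08spanprogram}), with the observation that entrywise nonnegativity of $\Gamma$ is never used, so the same argument handles $\ADVpm$.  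The gain is elementarity: you avoid the formula-evaluation algorithm, the canonical-form reduction, and the SDP machinery.  The cost is that you only get the inequality, not the tight characterization of \thmref{t:spanprogramSDP}.
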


There is a direct proof that $\wsize{P} \geq \ADV(f_P)$ in \cite[Sec.~5.3]{ReichardtSpalek08spanprogram}, but the inequality $\wsize{P} \geq \ADVpm(f_P)$ is only implicit in~\cite{ReichardtSpalek08spanprogram}.  The argument is as follows.  Letting $f^{k}: \{0,1\}^{n^k} \rightarrow \{0,1\}$ be the $k$-times-iterated composition of $f$ on itself, $Q(f_P^{k}) = O_k(\wsize{P}^k)$ by \thmref{t:spanprogramalgorithm}.  Now by \thmref{t:weakadversarycomposition},  $\ADVpm(f)^k \leq \ADVpm(f^{k}) = O(Q(f_P^{k}))$.  Putting these results together and letting $k \rightarrow \infty$ gives $\ADVpm(f) \leq \wsize{P}$.  A full and direct proof will be given below in \thmref{t:spanprogramSDP}.

\section{Example: Span programs based on one-sided-error quantum query algorithms} \label{s:spanprogramuniversal}

Span programs have proved useful in~\cite{ReichardtSpalek08spanprogram} for evaluating formulas.  There, span programs for constant-size gates are composed to generate a span program for a full formula.  In this section, we give an explicit construction of asymptotically large span programs that are interesting from the perspective of quantum algorithms and that do not arise from the composition of constant-size span programs.  We relate span program witness size to {one-sided} bounded-error quantum query complexity.  \thmref{t:spanprogramuniversalboundederror} below will strengthen the results in this section, but the construction there will be less explicit.  

Formally, we show:

\begin{theorem} \label{t:spanprogramuniversal}
Consider a quantum query algorithm $\algorithm$ that evaluates $f: \{0,1\}^n \rightarrow \{0,1\}$, with bounded one-sided error on false inputs, using $q$ queries.  Then there exists a span program $P$ computing $f_P = f$, with 
\begin{equation}
\wsize{P} = O(q)
 \enspace .
\end{equation}
In particular, $\inf_{P : f_P = f} \wsize{P} = O(Q^1(f))$.  
\end{theorem}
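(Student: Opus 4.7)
The plan is to build $P$ by embedding the computation of $\mathcal{A}$ directly into the span program, with the algorithm's time steps becoming an extra register of the span program's vector space. First I put $\mathcal{A}$ into a standard form: alternating input-independent unitaries $U_0, U_1, \ldots, U_q$ on a workspace $\mathcal{H}$ (padded so the query register sits inside $\mathcal{H}$, with $\ket{j}$ a computational basis vector for each $j \in [n]$) and standard phase queries $O_x : \ket{j} \mapsto (-1)^{x_j}\ket{j}$; let $\ket{\psi_0}$ be the initial state and $\Pi_\mathrm{acc}$ an acceptance projection such that the final state $\ket{\phi_{2q+1}} := U_q O_x U_{q-1} \cdots O_x U_0 \ket{\psi_0}$ satisfies $\Pi_\mathrm{acc}\ket{\phi_{2q+1}} = \ket{\phi_{2q+1}}$ whenever $f(x)=1$, and $\norm{\Pi_\mathrm{acc}^\perp\ket{\phi_{2q+1}}}^2 \geq 2/3$ whenever $f(x)=0$. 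The one-sided-error hypothesis is exactly what makes the first (zero-error) condition achievable.

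Next, take $V = \mathcal{H} \otimes \C^{2q+2}$ with time register $\{0,1,\ldots,2q+1\}$, set the target $\ket{t} = \ket{\psi_0}\ket{0}$, and use three families of input vectors: free ``unitary'' vectors $U_s\ket{\phi}\ket{2s+1} - \ket{\phi}\ket{2s}$ for each $s \in \{0,\ldots,q\}$ and each basis vector $\ket{\phi}$ of $\mathcal{H}$ (enforcing propagation by $U_s$); labeled ``query'' vectors $(-1)^b\ket{j}\ket{2s} - \ket{j}\ket{2s-1}$ placed in $I_{j,b}$, for each $s \in [q]$, $j \in [n]$, $b \in \{0,1\}$ (enforcing the $s$-th phase query when $x_j = b$); and free ``accept-sink'' vectors $\ket{\phi}\ket{2q+1}$ for each $\ket{\phi} \in \Pi_\mathrm{acc}\mathcal{H}$ (which absorb anything the algorithm has left in the accepting subspace at the final time).

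I then verify the two witness bounds. For $f(x)=1$, the positive witness is the algorithm's history on $x$: the coefficient on the available labeled vector indexed by $(s, j, x_j)$ is $-\braket{j}{\phi_{2s-1}}$, and telescoping across the time register shows $A\Pi(x)\ket{w} = \ket{\psi_0}\ket{0}$, using $\ket{\phi_{2q+1}} \in \Pi_\mathrm{acc}\mathcal{H}$ to close the sum via the accept-sink vectors; the witness size is $\sum_{s=1}^{q}\sum_{j \in [n]}\abs{\braket{j}{\phi_{2s-1}}}^2 = q$. For $f(x)=0$, set $\ket{w'_{2q+1}} = \Pi_\mathrm{acc}^\perp\ket{\phi_{2q+1}}/\norm{\Pi_\mathrm{acc}^\perp\ket{\phi_{2q+1}}}^2$, propagate backward by $\ket{w'_{2s}} = U_s^\dagger\ket{w'_{2s+1}}$ and $\ket{w'_{2s-1}} = O_x\ket{w'_{2s}}$, and assemble $\ket{w'} = \sum_t \ket{w'_t}\ket{t}$. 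Orthogonality of $\ket{w'}$ to every available input vector is immediate from these two recursions; $\braket{t}{w'} = \braket{\phi_{2q+1}}{w'_{2q+1}} = 1$ by construction; each $\norm{\ket{w'_t}}^2 \leq 3/2$ by the error assumption; and each unavailable labeled vector $(s, j, \overline{x_j})$ contributes $4\abs{\braket{j}{w'_{2s}}}^2$ to $\norm{SA^\dagger\ket{w'}}^2$, so the total is bounded by $4\sum_{s=1}^q \norm{\ket{w'_{2s}}}^2 \leq 6q$.

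The main obstacle is the negative-witness analysis: one has to simultaneously satisfy $\braket{t}{w'} = 1$, make $\ket{w'}$ orthogonal to every available free and labeled vector, and bound the squared inner products with the unavailable labeled vectors, all using only the bounded-error guarantee. The backward-propagation ansatz reduces all three requirements to the two recursions above, which is the critical observation; the remaining estimates are then elementary. Combining the two cases gives $\wsize{P} \leq \max(q, 6q) = O(q)$, and specializing $q$ to the optimal one-sided algorithm yields $\inf_{P : f_P = f}\wsize{P} = O(Q^1(f))$.
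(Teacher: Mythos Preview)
Your construction is essentially the paper's history-state span program: free vectors for the input-independent unitaries, labeled vectors for the oracle calls, and telescoping through the time register to get $O(q)$ witness sizes. The two places you differ are both pleasant variations---you add free accept-sink vectors at time $2q+1$ rather than first uncomputing back to $\ket{\psi_0}$ (which costs the paper a factor of two in $q$), and your negative witness backward-propagates the rejecting component $\Pi_\mathrm{acc}^\perp\ket{\phi_{2q+1}}$ rather than reusing the forward history directly. One small gap to patch: as written your labeled query vectors are indexed only by $j\in[n]$, but $O_x$ acts as identity on the workspace factor of $\mathcal{H}$, so each query vector must also carry a workspace basis index $y$ (i.e., $(-1)^b\ket{y,j}\ket{2s}-\ket{y,j}\ket{2s-1}\in I_{j,b}$, exactly as in the paper)---otherwise propagation from time $2s-1$ to $2s$ is undefined off the bare query register; with that fix your identities $\sum_{y,j}\abs{\braket{y,j}{\cdot}}^2=\norm{\cdot}^2$ go through unchanged.
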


This example should be illustrative for \defref{t:spanprogramdef} and \defref{t:wsizedef}, but is not needed for the rest of this article.  Another nontrivial span program example is given in \appref{s:thresholdappendix}.

Many known quantum query algorithms have one-sided error, as required by \thmref{t:spanprogramuniversal}, or can be trivially modified to have one-sided error.  Examples include algorithms for Search, Ordered Search, Graph Collision, Triangle Finding, and Element Distinctness.  There are exceptions, though.  For example, the formula-evaluation algorithms discussed above and implicit in \thmref{t:spanprogramalgorithm} all have bounded two-sided error.  In particular, for AND-OR formula evaluation, the algorithm from~\cite{AmbainisChildsReichardtSpalekZhang07andor} outputs the formula's evaluation but not a witness to that evaluation~\cite[Sec.~5]{ReichardtSpalek08spanprogram}.  For AND-OR formula evaluation, a witness can be extracted from the $\lambda = 0$ graph eigenstate, but it is not known how far this generalizes~\cite{AmbainisChildsLegallTani09witness}.  We certainly expect that there are functions $f$ with bounded two-sided-error quantum query complexity, $Q(f)$, strictly less than the bounded one-sided-error quantum query complexity, $Q^1(f)$.  

\begin{proof}[Proof of \thmref{t:spanprogramuniversal}]
Assume that the quantum algorithm $\algorithm$ has a workspace of $m$ qubits, and an $n$-dimensional query register.  Starting in the state $\ket{0^m,1}$, it alternates between applying unitaries independent of the input string $x$ and oracle queries to $x$.   The evolution of the system is given by 
\begin{equation}
\begin{array}{rlclcl}
\ket{\varphi_0} = \ket{0^m,1} & \;\, \overset{V_1}{\rightarrow} & \ket{\varphi_1} = \sum_{j=1}^{n} \ket{\varphi_{1,j}} \ket j & \overset{O_x}{\rightarrow} & \ket{\varphi_2} = \sum_{j=1}^{n} (-1)^{x_j} \ket{\varphi_{1,j}} \ket j & \rightarrow \cdots \\
\cdots & \overset{V_{2r-1}}{\rightarrow} & \ket{\varphi_{2r-1}} = \sum_{j=1}^{n} \ket{\varphi_{2r-1,j}} \ket j & \overset{O_x}{\rightarrow} & \ket{\varphi_{2r}} = \sum_{j=1}^{n} (-1)^{x_j} \ket{\varphi_{2r-1,j}} \ket j & \rightarrow \cdots \\
\cdots & \overset{V_{2q+1}}{\rightarrow} & \ket{\varphi_{2q+1}}
\end{array}
\end{equation}
Here, for $r \in [q+1]$, $V_{2r-1}$ is the unitary independent of $x$ that is applied at odd time step $2r-1$, while $O_x : \ket y \ket j \mapsto (-1)^{x_j} \ket y \ket j$ is the phase-flip input oracle applied at even time steps.  (To allow conditional queries, prepend a constant bit $0$ to the input string $x$.)  The state of the system after $s$ time steps is $\ket{\varphi_\tau} = \sum_{j=1}^{n} \ket{\varphi_{s,j}} \otimes \ket j$; for $s \geq 2$, these states depend on $x$.  

On inputs $x$ evaluating to $f(x) = 1$, the algorithm $\algorithm$ does not make errors.  Thus for these $x$ we may assume without loss of generality that $\ket{\varphi_{2q+1}} = \ket{0^m,1}$, by at most doubling the number of queries to clean the algorithm's workspace.  On inputs evaluating to $f(x) = 0$, then, $\abs{\braket{0^m,1}{\varphi_{2q+1}}} \leq \epsilon$ for some $\epsilon$ bounded away from one.  

Recall that $\B = \{0,1\}$.  We construct a span program $P$ as follows: 
\begin{itemize}
\item The inner product space is $V = \C^{(2q+2) 2^m}$, spanned by the orthonormal basis $\{ \ket{s, y, j} : s \in \{0,1,\ldots,2q+1\}, y \in \B^m, j \in [n]\}$.  
\item The target vector is $\ket t = - \ket{0, 0^m, 1} + \ket{2q+1, 0^m, 1}$.   
\item There are free input vectors for each odd time step $s$: $\Ifree = \{ 2r-1 : r \in [q+1] \} \times \B^m \times [n]$, with 
\begin{equation}
\ket{v_{s,y,j}} = -\ket{s-1,y,j} + \ket s \otimes V_s \ket{y, j}
\end{equation}
for $(s, y, j) \in \Ifree$.   
\item For $j \in [n]$ and $b \in \B$, $I_{j,b} = \{ 2r : r \in [q] \} \times \B^m \times [n] \times \{ b \}$, with, for $(s, y, j, b) \in I_{j, b}$, 
\begin{equation}
\ket{v_{s, y, j, b}} = -(\ket{s-1} + (-1)^b \ket s) \otimes \ket{y, j}
 \enspace .
\end{equation}
\end{itemize}

For analyzing this span program, it will be helpful to set up some additional notation.  Let $U_s$ be the unitary applied at time step $s$: 
\begin{equation}
U_s = \begin{cases} 
V_s & \text{if $s$ is odd} \\
O_x & \text{if $s$ is even}
\end{cases}
\end{equation}
For an input $x$, the available input vectors, i.e., those indexed by $I(x)$, are then 
\begin{equation}
\ket{v_{s,y,j}} := -\ket{s,y,j} + \ket{s+1} \otimes U_{s+1} \ket{y, j}
\end{equation}
for all $y \in \B^m$, $j \in [n]$ and $s = 0, 1, \ldots, 2q$ even or odd.  Let $A(x) = \sum_{s=0}^{2q} \sum_{y, j} \ketbra{v_{s,y,j}}{s,y,j}$.  Then $f_P(x) = 1$ if and only if $\ket t \in \Range(A(x))$.  

\begin{claim} \label{t:spanprogramuniversaltrue}
If $f(x) = 1$, then $f_P(x) = 1$ and $\wsizex{P}{x} \leq q$.  
\end{claim}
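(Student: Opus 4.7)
The plan is to build a witness $\ket w$ whose coefficients trace out the algorithm's intermediate states $\ket{\varphi_s}$. Since $f(x) = 1$ and the algorithm has one-sided error on false inputs, we may assume $\ket{\varphi_{2q+1}} = \ket{0^m, 1}$. Writing $\ket{\varphi_s} = \sum_{y, j} \varphi_s(y, j) \ket{y, j}$, I would set the coefficient of the (unique) available input vector indexed by $(s, y, j)$ in the unified notation to be $\varphi_s(y, j)$, for all $s \in \{0, 1, \ldots, 2q\}$ and $(y, j) \in \B^m \times [n]$, so that $\ket w$ lies in the image of $\Pi(x)$ by construction.

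With the unified description $\ket{v_{s, y, j}} = -\ket{s, y, j} + \ket{s+1} \otimes U_{s+1} \ket{y, j}$ stated just before the claim, a direct computation gives
\begin{equation*}
A \Pi(x) \ket w = -\sum_{s=0}^{2q} \ket s \otimes \ket{\varphi_s} + \sum_{s=0}^{2q} \ket{s+1} \otimes U_{s+1} \ket{\varphi_s}.
\end{equation*}
Using the identity $U_{s+1} \ket{\varphi_s} = \ket{\varphi_{s+1}}$ and re-indexing the second sum, this telescopes to $-\ket 0 \otimes \ket{\varphi_0} + \ket{2q+1} \otimes \ket{\varphi_{2q+1}} = -\ket{0, 0^m, 1} + \ket{2q+1, 0^m, 1} = \ket t$, which establishes $f_P(x) = 1$.

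For the witness size, note that $S$ (with uniform costs) projects onto the non-free coordinates. In the unified notation these are exactly the odd-$s$ indices $s \in \{1, 3, \ldots, 2q-1\}$, corresponding to the $q$ oracle-query time steps. Each such index contributes $\sum_{y, j} \abs{\varphi_s(y, j)}^2 = \norm{\ket{\varphi_s}}^2 = 1$, since each intermediate state of the algorithm is a unit vector. Summing over the $q$ query steps gives $\norm{S \ket w}^2 = q$, whence $\wsizex{P}{x} \leq q$. The only mild obstacle is checking that the sign convention in the definition of $\ket{v_{s, y, j, b}}$ is consistent with the unified expression when $b = x_j$ and $U_{s+1}$ is the phase-flip oracle; once this sign bookkeeping is settled, the telescoping argument above is mechanical.
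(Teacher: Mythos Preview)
Your proposal is correct and is essentially the paper's own proof: the same witness $\ket w = \sum_{s=0}^{2q} \ket s \otimes \ket{\varphi_s}$, the same telescoping computation to obtain $\ket t$, and the same observation that only the $q$ oracle steps contribute to $\norm{S\ket w}^2$. Your remark about the sign convention is apt; the unified expression differs from the original $\ket{v_{s,y,j,b}}$ by an overall sign, which is immaterial for the span and witness size.
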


\begin{proof}
Letting $\ket w = \sum_{s=0}^{2q} \ket s \otimes \ket{\varphi_s}$, then 
\begin{equation}\begin{split}
A(x) \ket w
&= \sum_{s=0}^{2q} \sum_{y, j} \ket{v_{s,y,j}} \braket{y,j}{\varphi_s} \\
&= \sum_{s=0}^{2q} -\ket s \otimes \ket{\varphi_s} + \ket{s+1} \otimes U_{s+1} \ket{\varphi_s} \\
&= \sum_{s=0}^{2q} -\ket s \otimes \ket{\varphi_s} + \ket{s+1} \otimes \ket{\varphi_{s+1}} \\
&= -\ket 0 \otimes \ket{\varphi_0} + \ket{2q+1} \otimes \ket{\varphi_{2q+1}} \\
&= \ket t
 \enspace ,
\end{split}\end{equation}
where we have used for the second equality that $\sum_{y, j} \ketbra{y, j}{y, j}$ is a resolution of the identity, and for the third equality that $U_{s+1} \ket{\varphi_s} = \ket{\varphi_{s+1}}$ in order to get a telescoping series.  Thus $\ket w$ is a witness to $f_P(x) = 1$.  Since the input vectors $\ket{v_{s,y,j}}$ for $s$ even are free, the witness size is 
\begin{align}
\wsizex{P}{x}
&\leq \left\|{ \Bigg(\sum_{k=1}^{q} \ketbra{2k+1}{2k+1} \otimes \identity \Bigg) \ket w }\right\|^2 \\
&= \sum_{k=1}^{q} \norm{\ket{2k+1} \otimes \ket{\varphi_{2k+1}}}^2 \nonumber \\
&= q
 \enspace . \qedhere
\end{align}
\end{proof}

\begin{claim} \label{t:spanprogramuniversalfalse}
If $f(x) = 0$, then $f_P(x) = 0$ and $\wsizex{P}{x} \leq 4q/(1-\epsilon)^2$.  
\end{claim}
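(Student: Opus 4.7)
The plan is to construct an explicit false witness $\ket{w'} \in V$ by stringing together the algorithm's own intermediate states $\ket{\varphi_0}, \ldots, \ket{\varphi_{2q+1}}$ on input $x$, in direct analogy with the true-case witness used in \claimref{t:spanprogramuniversaltrue}. Concretely, I would set
\begin{equation}
\ket{w'} = \alpha \sum_{s=0}^{2q+1} \ket s \otimes \ket{\varphi_s}
\end{equation}
for a scalar $\alpha$ to be chosen below, and then check the three conditions imposed by \defref{t:wsizedef}: (i) orthogonality $\Pi(x) A^\adjoint \ket{w'} = 0$; (ii) the target normalization $\braket{t}{w'} = 1$; and (iii) an upper bound on $\norm{S A^\adjoint \ket{w'}}^2$.

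Condition~(i) is a reprise of the telescoping from \claimref{t:spanprogramuniversaltrue}: for each available input vector $\ket{v_{s,y,j}} = -\ket{s,y,j} + \ket{s+1} \otimes U_{s+1}\ket{y,j}$, the two contributions to $\braket{v_{s,y,j}}{w'}$ cancel via the identity $U_{s+1}\ket{\varphi_s} = \ket{\varphi_{s+1}}$, so orthogonality holds automatically once the algorithm's trajectory is plugged in. Condition~(ii) is where the one-sided-error hypothesis enters: a short calculation gives $\braket{t}{w'} = \alpha\bigl(\braket{0^m,1}{\varphi_{2q+1}} - 1\bigr)$, and $\abs{\braket{0^m,1}{\varphi_{2q+1}}} \leq \epsilon$ forces this scalar to have magnitude at least $1 - \epsilon$, so $\alpha$ can be chosen with $\abs{\alpha} \leq 1/(1-\epsilon)$.

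Condition~(iii), the witness-size bound, is the main technical step. Because $S$ annihilates the free (odd-starting-step) indices, only the labeled vectors at the $q$ even starting steps contribute to $\norm{S A^\adjoint \ket{w'}}^2$; and among these, only the unavailable ones (with the ``wrong'' label $b = \overline{x_j}$) survive, since the available $b = x_j$ vectors were already arranged orthogonal to $\ket{w'}$ in step~(i). For each such unavailable labeled vector the inner product with $\ket{w'}$ works out to $\pm 2\alpha \braket{y,j}{\varphi_{s-1}}$---the factor of $2$ arising because the ``wrong'' oracle sign differs from the correct one by $-1$. Summing the squared magnitudes over $y \in \B^m$ and $j \in [n]$ collapses via $\norm{\ket{\varphi_{s-1}}} = 1$ to $4\abs{\alpha}^2$, and summing over the $q$ query steps yields $\wsizex P x \leq 4q\abs{\alpha}^2 \leq 4q/(1-\epsilon)^2$. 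The only real obstacle is keeping the sign conventions of the labeled input vectors straight; conceptually everything follows from unitarity of the steps of $\algorithm$ together with the one-sided-error hypothesis.
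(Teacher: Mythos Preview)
Your proposal is correct and is essentially identical to the paper's own proof: the paper uses the same witness $\ket{w'} = \sum_{s=0}^{2q+1} \ket s \otimes \ket{\varphi_s}$, verifies orthogonality to the available input vectors by the same telescoping argument, and obtains the same factor $-2\braket{y,j}{\varphi_s}$ for the inner products with the unavailable labeled vectors. The only cosmetic difference is that you carry an explicit normalization constant $\alpha$ to enforce $\braket{t}{w'} = 1$, while the paper instead notes $\abs{\braket{t}{w'}} \geq 1-\epsilon$ and divides by $\abs{\braket{t}{w'}}^2$ at the end.
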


\begin{proof}
Let $\ket{w'} = \sum_{s=0}^{2q+1} \ket s \otimes \ket{\varphi_s}$.  Then $\abs{\braket{t}{w'}} = \abs{1 - \braket{0^m,1}{\varphi_{2q+1}}} \geq 1 - \epsilon > 0$.  Moreover, since 
\begin{equation}
\bra{v_{s,y,j}} (\ket \sigma \otimes \ket{\varphi_\sigma}) = \begin{cases}
- \braket{y, j}{\varphi_s} & \text{if $\sigma = s$} \\
\bra{y,j} U_{s+1}^\dagger \ket{\varphi_{s+1}} = \braket{y, j}{\varphi_s} & \text{if $\sigma = s+1$} \\
0 & \text{otherwise}
\end{cases}
\end{equation}  
we compute 
\begin{equation}
A(x)^\dagger \ket{w'} 
= \sum_{s=0}^{2q} \sum_{\sigma=0}^{2q+1} \sum_{y, j} \ketbra{s,y,j}{v_{s,y,j}} (\ket \sigma \otimes \ket{\varphi_\sigma})
= 0
 \enspace .
\end{equation}
Thus $\ket{w'}$ is a witness to $f_P(x) = 0$.  Now the input vectors associated with false inputs are, for odd $s$ between $1$ and $2q-1$, $y \in \{0,1\}^m$ and $j \in [n]$, $\ket{v_{s,y,j}'} := -\ket{s,y,j} - \ket{s+1} \otimes U_{s+1} \ket{y, j}$.  
Now $\braket{v_{s,y,j}'}{w'} = -\braket{y,j}{\varphi_s} - \bra{y,j} U_{s+1}^\dagger \ket{\varphi_{s+1}} = -2 \braket{y,j}{\varphi_s}$.  
The witness size therefore satisfies 
\begin{align}
(1-\epsilon)^2 \wsizex{P}{x} 
&\leq \sum_{k=1}^{q} \sum_{y, j} \abs{\braket{v_{2k-1,y,j}'}{w'}}^2 \\
&= 4 \sum_{k=1}^{q} \sum_{y, j} \abs{\braket{y,j}{\varphi_{2k+1}}}^2 \nonumber \\
&= 4 q
 \enspace . \qedhere
\end{align}
\end{proof}

After scaling the target vector appropriately---see Eq.~\eqnref{e:wsizescalet}---\claimref{t:spanprogramuniversaltrue} and \claimref{t:spanprogramuniversalfalse} together give $\wsize{P} \leq 2 q / (1-\epsilon)$, proving \thmref{t:spanprogramuniversal}.  
\end{proof}

\section{Span program manipulations} \label{s:spanprogrammanipulations}

This section presents several useful manipulations of span programs.  First, we develop span program complementation and composition.  The essential ideas for both manipulations have already been proposed in~\cite{ReichardtSpalek08spanprogram}, but the ideas there were not fully translated into the span program formalism, which we do here.  \secref{s:spanprogramcomposition} also introduces a new construction of composed span programs, tensor-product composition, which appears be useful for designing more efficient quantum algorithms for evaluating formulas~\cite{Reichardt09andorfaster}.  

Both techniques take as inputs span programs computing certain functions and output a span program computing a different function.  In \secref{s:strictrealspanprograms}, we give two ways of simplifying a span program $P$ that do not change $f_P$ nor increase the witness size.  \secref{s:canonicalspanprograms} will present a more dramatic simplification, though.

\subsection{Span program complementation} \label{s:dualspanprogram}

Although \defref{t:spanprogramdef} seems to have asymmetrical conditions conditions for when $f_P(x) = 1$ versus when $f_P(x) = 0$, this is misleading.  In fact, span programs can be complemented freely.  This is important for composing span programs that compute non-monotone functions.  

\begin{lemma} \label{t:dualspanprogram}
For every span program $P$, there exists a span program $P^\dagger$, said to be ``dual" to $P$, that computes the negation of $f_P$, $f_{P^\dagger}(x) = \neg f_P(x)$, with witness size $\wsizexS{P^\dagger} x s = \wsizexS P x s$ for all $x \in \B^n$ and $s \in [0,\infty)^n$.  
\end{lemma}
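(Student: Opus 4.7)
Plan: To prove \lemref{t:dualspanprogram}, I would construct the dual span program $P^\dagger$ explicitly and then verify its two required properties by exhibiting a norm-preserving bijection between true-witnesses of one program and false-witnesses of the other.

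The construction is motivated by the observation that, by Fredholm's alternative applied to $\Range(A\Pi(x))$, the true-witness system $A\Pi(x)\ket w=\ket t$ for $P$ and the false-witness system $\braket{t}{w'}=1,\ \Pi(x)A^\dagger\ket{w'}=0$ are mutually exclusive, and that the objective values $\norm{S\ket w}^2$ and $\norm{S A^\dagger\ket{w'}}^2$ in \defref{t:wsizedef} form an SDP primal/dual pair with strong duality. The dual span program $P^\dagger$ should realize this SDP-level duality at the span-program level: its true-witness equation should be a reformulation of the false-witness system of $P$. Concretely, I would take $V^\dagger$ to be $V$ together with an auxiliary one-dimensional summand $\C\ket*$ whose purpose is to absorb the affine constraint $\braket{t}{w'}=1$ as a homogeneous spanning condition with $\ket{t^\dagger}=\ket*$. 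For each non-free $i\in I_{j,b}$ of $P$, the dual input vector $\ket{v_i^\dagger}$ is a suitable combination of $\ket{v_i}$ and $\ket*$, placed with swapped label in $I^\dagger_{j,\bar b}$, so that the indices "unavailable at $x$" in $P$ become the "available at $x$" in $P^\dagger$. The free vectors of $P$ (which impose orthogonality on any false-witness of $P$) are re-encoded in $P^\dagger$ through the auxiliary coordinate $\ket *$ and through appropriate dual free input vectors enforcing the corresponding orthogonality.

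Given such a $P^\dagger$, the verification goes by explicit bijection. Given a false-witness $\ket{w'}$ of $P$ on $x$, form the vector $\ket{w^\dagger}\in\C^{\abs{I^\dagger}}$ whose coordinate at $i\in I^\dagger_{j,x_j}=I_{j,\bar{x_j}}$ is proportional to $\braket{v_i}{w'}$ (with the proportionality constants dictated by the costs $s$); then I would verify directly that $A^\dagger_{P^\dagger}\Pi^\dagger(x)\ket{w^\dagger}=\ket{t^\dagger}$, witnessing $f_{P^\dagger}(x)=1$, and that $\norm{S^\dagger\ket{w^\dagger}}^2=\sum_i s(i)\abs{\braket{v_i}{w'}}^2=\norm{SA^\dagger\ket{w'}}^2$. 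The inverse direction reads off $\ket{w'}$ from $\ket{w^\dagger}$ using the same coordinate identification. Finally, I would check that the construction is involutive, $P^{\dagger\dagger}\cong P$ (up to obvious identification of the two extra coordinates with a two-dimensional summand), so that the analogous correspondence between false-witnesses of $P^\dagger$ and true-witnesses of $P$ follows from the first correspondence applied to $P^\dagger$ in place of $P$, giving both $f_{P^\dagger}=\neg f_P$ and equality of witness sizes on every input simultaneously.

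The main obstacle is engineering $V^\dagger$ and the dual vectors so that the affine constraint $\braket{t}{w'}=1$ is cleanly encoded as the homogeneous spanning condition $\ket{t^\dagger}\in\Range(A^\dagger_{P^\dagger}\Pi^\dagger(x))$, and so that free input vectors of $P$ are correctly translated without altering witness-size contributions. Once the construction is pinned down, the bijection and the norm calculation are elementary, and involutivity halves the verification work.
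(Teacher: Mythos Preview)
Your high-level plan---give an explicit $P^\dagger$ and exhibit a size-preserving correspondence between true-witnesses of one program and false-witnesses of the other---is exactly what the paper does.  The concrete construction you sketch, however, does not work, and the involutivity shortcut is unavailable.

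\textbf{The space $V\oplus\C\ket*$ is wrong.}  You propose labeled dual input vectors of the form $\ket{v_i^\dagger}=a_i\ket{v_i}+b_i\ket*$ and a true-witness with coordinates $w^\dagger_i\propto\braket{v_i}{w'}$ on $i\in\bigcup_j I_{j,\bar x_j}$.  Then the $V$-component of $\sum_i w^\dagger_i\ket{v_i^\dagger}$ is a combination $\sum_{i\notin I(x)}a_i\braket{v_i}{w'}\ket{v_i}$, which has no reason to vanish, so $\ket{t^\dagger}=\ket*$ is not reached.  You could try to add free input vectors spanning $V$ to cancel this $V$-component, but then the $\ket*$-component condition becomes $\sum_i b_i\braket{v_i}{w'}=1$; making this follow from $\braket{t}{w'}=1$ would force $\ket t\in\Range(A)$, which need not hold.  (Extreme case: if $I=\emptyset$ then $f_P\equiv 0$, and your recipe produces a $P^\dagger$ with no way to reach $\ket*$ at all.)

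\textbf{The paper's construction.}  The right dual space is $V^\dagger=\C^{1+\abs I}$ with basis $\{\ket 0\}\cup\{\ket i:i\in I\}$ and target $\ket{t^\dagger}=\ket0$.  The labeled input vectors of $P^\dagger$ are just unit vectors $\ket i$ (with swapped label $I^\dagger_{j,b}=I_{j,\bar b}$) and carry no information about $P$.  All of the structure of $P$---both $\ket t$ and $A$---is packed into the \emph{free} input vectors of $P^\dagger$: for each basis vector $\ket k$ of $V$, take $\ket{v_k^\dagger}=\ket0\braket{t}{k}+A^\dagger\ket k$.  Now the free coordinates of a true-witness of $P^\dagger$ directly encode a vector $\ket{w'}\in V$; the free part then contributes $\ket0\braket{t}{w'}+A^\dagger\ket{w'}$, and the labeled coordinates $-\braket{v_i}{w'}$ for $i\notin I(x)$ exactly cancel the $\C^I$-part, leaving $\ket0$.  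The size calculation $\norm{S^\dagger\ket{w^\dagger}}^2=\norm{SA^\dagger\ket{w'}}^2$ you wrote down is then correct.

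\textbf{Involutivity fails.}  With this (or any similar) construction, $P^{\dagger\dagger}$ lives in a strictly larger space than $P$, so you cannot appeal to $P^{\dagger\dagger}\cong P$ to cover the reverse direction.  The paper instead checks all four inequalities directly: from an optimal true-witness of $P$ it builds a false-witness of $P^\dagger$ (giving $\wsizexS{P^\dagger}{x}{s}\le\wsizexS{P}{x}{s}$ when $f_P(x)=1$), from an optimal false-witness of $P^\dagger$ a true-witness of $P$, and the two analogous passes when $f_P(x)=0$.
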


\begin{proof}
There are different constructions of dual span programs~\cite{CramerFehr02secretshare, NikovNikovaPreneel05spanprogram, ReichardtSpalek08spanprogram}.  Here we more or less follow \cite[Sec.~2.3]{ReichardtSpalek08spanprogram}, as the other constructions may not preserve witness size.  

As in \defref{t:spanprogramdef}, let $P$ have target vector $\ket t$ and input vectors $\ket{v_i}$, for $i \in I = \Ifree \sqcup \bigsqcup_{j \in [n], b \in \B} I_{j,b}$, in the inner product space $V = \C^d$.  Recall that $A = \sum_{i \in I} \ketbra{v_i}{i}$, $I(x) = \Ifree \cup \bigcup_{j \in [n]} I_{j, x_j}$ and $\Pi(x) = \sum_{i \in I(x)} \ketbra i i$.  
Let $\tilde \Pi(x) = \sum_{i \in I(x) \smallsetminus \Ifree} \ketbra i i$, and fix an orthonormal basis $\{ \ket k : k \in [d] \}$ for $V$.  

\begin{definition}
The dual span program $P^\dagger$, with target vector $\ket{t'}$ and input vectors $\ket{v_k'}$ for $k \in I' = \Ifree' \sqcup \bigsqcup_{j \in [n], b \in \B} I_{j, b}'$ in the inner product space $V'$, is defined by: 
\begin{itemize}
\item
$V' = \C^{1 + \abs{I}}$, with orthonormal basis $\{ \ket 0 \} \sqcup \{ \ket i : i \in I \}$.  
\item 
$\ket{t'} = \ket 0$.  
\item
$\Ifree' = [d]$, with free input vectors, for $k \in \Ifree'$, 
\begin{equation}
\ket{v_k'} = (\ketbra 0 t + A^\dagger) \ket k = \ket 0 \!\braket{t}{k} + \sum_{i \in I} \ket i \!\braket{v_i}{k}
 \enspace .
\end{equation}  
\item
For $j \in [n]$ and $b \in \B$, $I_{j, b}' = I_{j, \bar b}$ with $\ket{v_i'} = \ket i$ for $i \in I_{j, b}'$.  
\end{itemize}
\end{definition}

Fix $s \in [0,\infty)^n$, and let $A' = \sum_{k \in I'} \ketbra{v_k'}{k} = \ketbra 0 t + A^\dagger + \sum_{i \in I \smallsetminus \Ifree} \ketbra i i$.  Let $I'(x) = \Ifree' \cup \bigcup_{j \in [n]} I_{j, x_j}'$ and $\Pi'(x) = \sum_{i \in I'(x)} \ketbra i i$.  

If $f_P(x) = 1$, then there exists a witness $\ket w \in \C^{\abs{I}}$ such that $A \Pi(x) \ket w = \ket t$.  Assume that $\ket w$ is an optimal witness, i.e., $\wsizexS P x s = \norm{S \ket w}^2$ (see \defref{t:wsizedef}).  Let $\ket{w'} = \ket 0 - \Pi(x) \ket w$.  Then $\braket{t'}{w'} = 1$ and
\begin{equation} \label{e:dualspanprogramtruecase}
\begin{split}
A'^\dagger \ket{w'} 
&= \Big( \ketbra{t}{0} + A + \sum_{i \in I \smallsetminus \Ifree} \ketbra i i \Big) (\ket 0 - \Pi(x) \ket w) \\
&= \ket t - A \Pi(x) \ket w - \sum_{i \in I \smallsetminus \Ifree} \ketbra i i \Pi(x) \ket w \\
&= - \Big(\identity - \sum_{i \in \Ifree} \ketbra i i \Big) \Pi(x) \ket w 
 \enspace .
\end{split}
\end{equation}
Therefore, $\ket{w'}$ is orthogonal to the available input vectors of $P^\dagger$ ($\Pi'(x) A'^\dagger \ket{w'} = 0$), implying that $\ket{w'}$ is a witness to $f_{P^\dagger}(x) = 0$.  Moreover, Eq.~\eqnref{e:dualspanprogramtruecase} also implies $\wsizexS P x s = \norm{S A'^\dagger \ket{w'}}^2 \geq \wsizexS{P^\dagger}{x}{s}$.  

Conversely, if $f_{P^\dagger}(x) = 0$, then there is a witness $\ket{w'} \in V'$, with $\braket{t'}{w'} = 1$, orthogonal to the available input vectors of $P^\dagger$.  Assume that $\ket{w'}$ is an optimal witness, i.e., $\wsizexS{P^\dagger}{x}{s} = \norm{S A'^\dagger \ket{w'}}{}^2$.  The two conditions $\braket{0}{w'} = 1$, and $\braket{i}{w'} = 0$ for all $i \in \cup_{j \in [n]} I_{j, \bar x_j}$, imply $(\identity - \Pi(x)) \ket{w'} = \ket 0$.  The condition that $\ket{w'}$ is orthogonal to the free input vectors then implies
\begin{equation}\begin{split}
0
&= (\ketbra t 0 + A) \ket{w'} \\
&= (\ketbra t 0 + A) (\ket 0 + \Pi(x) \ket{w'}) \\
&= \ket t + A \Pi(x) \ket{w'}
 \enspace .
\end{split}\end{equation}
Thus $f_P(x) = 1$, with witness $\ket w = -\Pi(x) \ket{w'}$.  Moreover, the equalities of Eq.~\eqnref{e:dualspanprogramtruecase} still hold, so $\wsizexS{P^\dagger}{x}{s} = \norm{S \Pi(x) \ket w}^2 \geq \wsizexS P x s$.  

So far we have shown that $f_{P^\dagger}(x) = \neg f_P(x)$ for all $x \in \B^n$.  It remains to show that $\wsizexS P x s = \wsizexS{P^\dagger}{x}{s}$ in the case $f_P(x) = 0$.  

Assume that $f_P(x) = 0$.  Then there exists an optimal witness $\ket{w'}$ satisfying $\braket{t}{w'} = 1$, $\Pi(x) A^\dagger \ket{w'} = 0$ and $\wsizexS P x s = \norm{S A^\dagger \ket{w'}}^2$.  Let $\ket w = \ket{w'} - (\identity-\Pi(x)) A^\dagger \ket{w'}$.  Then $\ket w$ is supported only on the available input vector indices of $P^\dagger$; the first term, $\ket{w'}$, is supported on $\Ifree'$, while the second term is supported only on $\cup_{j \in [n]} I_{j \bar x_j}$.  Furthermore, 
\begin{equation}\begin{split}
A' \ket w
&= \bigg(\ketbra 0 t + A^\dagger + \sum_{i \in I \smallsetminus \Ifree} \ketbra i i \bigg) (\ket{w'} - (\identity-\Pi(x)) A^\dagger \ket{w'}) \\
&= (\ketbra 0 t + A^\dagger) \ket{w'} - \bigg( \sum_{i \in I \smallsetminus \Ifree} \ketbra i i - \tilde\Pi(x) \bigg) A^\dagger \ket{w'} \\
&= \ket 0 = \ket{t'}
\end{split}\end{equation}
since $\sum_{i \in I \smallsetminus \Ifree} \ketbra i i - \tilde\Pi(x) = \identity-\Pi(x)$.  Therefore $\ket w$ is a witness to $f_{P^\dagger}(x) = 0$.  Moreover, the squared length of $S \big(\identity - \sum_{k \in \Ifree'} \ketbra k k\big) \ket w$ is $\norm{S (1-\Pi(x)) A^\dagger \ket{w'}}^2 = \norm{S A^\dagger \ket{w'}}^2 = \wsizexS P x s$, so $\wsizexS{P^\dagger}{x}{s} \leq \wsizexS P x s$.  

\def\Pipfree{{\Pi'_\mathrm{free}}}
To show the converse statement, $\wsizexS P x s \leq \wsizexS{P^\dagger}{x}{s}$, let $\Pipfree = \sum_{k \in \Ifree'} \ketbra k k$ be the projection onto the free columns of $A'$.  Let $\ket w$ be an optimal witness to $f_{P^\dagger}(x) = 1$, i.e., $\wsizexS{P^\dagger}{x}{S} = \norm{S(\identity-\Pipfree) \ket w}^2$.  Then $\Pi'(x) \ket w = \big(\Pipfree + \sum_{i \in I'(x) \smallsetminus \Ifree'} \ketbra i i \big) \ket w = \ket w$ and 
\begin{align}
\ket{t'} = \ket 0
&= A' \ket w \nonumber \\
&= (\ketbra 0 t + A^\dagger) \Pipfree \ket w + \sum_{i \in I'(x) \smallsetminus \Ifree'} \ket i \!\braket i w
 \enspace .
\end{align}
This implies that $\bra t \Pipfree \ket w = 1$ and also $A^\dagger \Pipfree \ket w + \sum_{j \in n, i \in I_{j, \bar x_j}} \ket i \!\braket i w = 0$.  Multiplying by $\Pi(x)$, the latter equation implies that $\Pi(x) A^\dagger \Pipfree \ket w = 0$, so $\ket{w'} = \Pipfree \ket w$ is a witness for $f_P(x) = 0$.  Therefore, 
\begin{equation}\begin{split}
\wsizexS P x s
&\leq \norm{S A^\dagger \ket{w'}}^2 \\
&= \Big\lVert{S \sum_{j \in [n], i \in I_{j, \bar x_j}} \ket i \!\braket i w}\Big\rVert^2 \\
&= \norm{S (\identity - \Pipfree) \ket w}^2 \\
&= \wsizexS{P^\dagger}{x}{s}
 \enspace .
\end{split}\end{equation}
Thus $\wsizexS{P^\dagger}{x}{s} = \wsizexS P x s$ always.  
\end{proof}

\subsection{Tensor-product and direct-sum span program composition} \label{s:spanprogramcomposition}

\def\rtensor {{r\tensor}}

We will now show that the best span program witness size for a function composes sub-multiplicatively, in the following sense: 

\begin{theorem}[Span program composition] \label{t:spanprogramcomposition}
Consider functions $f : \B^n \rightarrow \B$ and, for $j \in [n]$, $f_j : \B^m \rightarrow \B$.  Let $g : \B^m \rightarrow \B$ be defined by 
\begin{equation}
g(x) = f\big( f_1(x), f_2(x), \ldots, f_n(x) \big)
 \enspace .
\end{equation}
Let $P$ be a span program computing $f_P = f$ and, for $j \in [n]$, let $P_j$ be a span program computing $f_{P_j} = f_j$.  

Then there exists a span program $Q$ computing $f_Q = g$, and such that, for any $s \in [0, \infty)^m$ and $r_j = \wsizeS{P_j}{s}$, 
\begin{equation} \label{e:spanprogramcomposition}
\wsizeS Q s \leq \wsizeS P r
 \enspace .
\end{equation}
In particular, $\wsizeS Q s \leq \wsize P \max_{j \in [n]} \wsizeS{P_j}{s}$.
\end{theorem}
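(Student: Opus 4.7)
I would construct $Q$ by a ``gadget composition'' that plugs into each non-free input vector of $P$ a copy of the appropriate single-level span program. Using \lemref{t:dualspanprogram}, define $P_j^1 = P_j$ and $P_j^0 = P_j^\dagger$, so that $P_j^b$ computes the indicator of $f_j(x) = b$ with witness size $\wsizexS{P_j^b}{x}{s} = \wsizexS{P_j}{x}{s}$ on every $x$. The inner product space of $Q$ is $V^P \oplus \bigoplus_{j, b, i \in I_{j,b}^P} W_j^{b,i}$, where each $W_j^{b,i}$ is a fresh copy of $P_j^b$'s inner product space, carrying copies of its target $\ket{t_j^{b,i}}$ and of its input vectors. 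The target of $Q$ is $\ket{t}^P$. The free input vectors of $Q$ are (i)~the free input vectors of $P$ inside $V^P$, (ii)~the free input vectors of each $P_j^{b,i}$ copy inside the corresponding $W_j^{b,i}$, and (iii)~one ``bridge'' $\ket{v_i}^P - \ket{t_j^{b,i}} \in V^P \oplus W_j^{b,i}$ for every non-free $i \in I_{j,b}^P$. The $(k,c)$-labeled input vectors of $Q$, for $(k,c) \in [m] \times \B$, are inherited from the corresponding labeled input vectors inside the $P_j^{b,i}$ copies.

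To establish $f_Q = g$ together with the bound $\wsizeS{Q}{s} \leq \wsizeS{P}{r}$, I would handle the 1-case and 0-case of $g$ separately, in each case splicing a top-level $P$-witness at $y = (f_1(x),\dots,f_n(x))$ with gadget-level $P_j^b$-witnesses at $x$. For the 1-case, fix an optimal 1-witness $\ket{w}^P$ for $P$ on $y$ with costs $r$, supported on $I^P(y)$. For each $i \in I_{j,y_j}^P$, activate the bridge $\ket{v_i}^P - \ket{t_j^{y_j,i}}$ with coefficient $\braket{i}{w^P}$ and use a rescaled optimal 1-witness for $P_j^{y_j,i}$ on $x$ to cancel the residue $\braket{i}{w^P}\ket{t_j^{y_j,i}}$. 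Summing all contributions produces exactly $\ket{t}^P$, and since bridges and top-level free vectors contribute nothing to the cost, the witness size is $\sum_{j}\sum_{i \in I_{j,y_j}^P} |\braket{i}{w^P}|^2 \wsizexS{P_j^{y_j}}{x}{s} \leq \sum_j r_j \sum_{i \in I_{j,y_j}^P} |\braket{i}{w^P}|^2 = \wsizexS{P}{y}{r}$.

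The 0-case is dual. Starting from an optimal 0-witness $\ket{w'}^P \in V^P$ for $P$ at $y$, I would extend it to $\ket{w'}^Q$ by adjoining, for each $i \in I_{j,\bar y_j}^P$, a rescaled optimal 0-witness $\ket{\psi_i} \in W_j^{\bar y_j, i}$ for $P_j^{\bar y_j, i}$ on $x$ (which exists since $f_{P_j^{\bar y_j}}(x) = 0$), normalized so that $\braket{t_j^{\bar y_j, i}}{\psi_i} = \braket{v_i^P}{w'^P}$. A direct check handles orthogonality: bridges for $i \in I_{j,y_j}^P$ vanish because $\ket{w'}^P$ is already orthogonal to available $P$-vectors; bridges for $i \in I_{j,\bar y_j}^P$ vanish by the choice of $\ket{\psi_i}$; and all non-free $(k,x_k)$-labeled vectors in the gadgets are killed either because $\ket{\psi_i}$ is itself a 0-witness (when $b = \bar y_j$) or because the $\ket{w'}^Q$-component in $W_j^{b,i}$ is zero (when $b = y_j$). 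The weighted squared norm of $\ket{w'}^Q$ in $Q$'s non-free input-vector basis again telescopes to $\wsizexS{P}{y}{r}$.

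The principal obstacle is orchestrating the dual construction so that bridges and gadget copies interact correctly with the 1-case and 0-case simultaneously: each index $i \in I_{j,b}^P$ plays one role when $b = y_j$ and the opposite role when $b = \bar y_j$, and the corresponding gadget $P_j^{b,i}$ must be the ``right'' dual variant to render the roles compatible. \lemref{t:dualspanprogram} is precisely what makes this coherent, since it supplies dual span programs with \emph{matching} witness sizes. Once the construction is set up this way, the remaining work is bookkeeping of witness sums, and the final bound $\wsizeS{Q}{s} \leq \max_x \wsizexS{P}{y(x)}{r} \leq \wsizeS{P}{r}$ is immediate; the weaker corollary $\wsizeS{Q}{s} \leq \wsize{P}\,\max_j \wsizeS{P_j}{s}$ then follows from monotonicity of $\wsizeop_{r'}(P)$ in the cost vector $r'$.
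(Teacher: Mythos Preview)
Your proposal is correct and is essentially the paper's direct-sum composition $Q^\oplus$ (\defref{t:directsumcomposedef}): your space $V^P \oplus \bigoplus_{j,b,i} W_j^{b,i}$ is exactly $V \oplus \bigoplus_{j,c}(\C^{I_{jc}}\otimes V^{jc})$, your bridges are the paper's free vectors $\ket{v_i}_V - \ket i \otimes \ket{t^{jc}}$, and your witness constructions and cost computations in both the $1$-case and the $0$-case coincide with the paper's $\ket{w^\oplus}$ and $\ket{u^\oplus}$. The only difference is that the paper additionally presents a second, tensor-product construction $Q^\otimes$ yielding the same bound, but either one alone proves the theorem.
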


The ease with which span programs compose is one of their nicest features.  To prove \thmref{t:spanprogramcomposition}, we will give two constructions of composed span programs, a tensor-product-composed span program $Q^\tensor$ and a direct-sum-composed span program $Q^\oplus$, that each satisfy Eq.~\eqnref{e:spanprogramcomposition}.  The method of composing span programs used in~\cite{ReichardtSpalek08spanprogram} is a special case of direct-sum composition, but tensor-product composition is new.  Below the proof of \thmref{t:spanprogramcomposition}, we will define a third composition method, reduced-tensor-product span program composition, that is closely related to tensor-product composition.  

Of course, only one proof of \thmref{t:spanprogramcomposition} is needed, so the definitions and proofs for $Q^\oplus$ and $Q^\rtensor$ can be safely skipped over.  We include here multiple composition methods because the different constructions have different tradeoffs when it comes to designing efficient quantum algorithms for formula evaluation.  In particular, we believe that an intermediate construction, in which some inputs are composed in a reduced-tensor-product fashion and other inputs in the direct-sum fashion, should be useful for designing a slightly faster quantum algorithm for evaluating AND-OR formulas~\cite{Reichardt09andorfaster}.  \appref{s:compositionexamples} gives examples of the three different span program composition methods for AND-OR formulas, using the correspondence between span programs and bipartite graphs that will be developed in \secref{s:bipartite}.  

\begin{proof}[Proof of \thmref{t:spanprogramcomposition}]
Let span program $P$ be in inner-product space $V$, with target vector $\ket t$ and input vectors indexed by $\Ifree$ and $I_{jc}$ for $j \in [n]$ and $c \in \B$.  For $j \in [n]$, let $P^{j1} = P_j$ and let $P^{j0}$ be a span program computing $f_{P^{j0}} = \neg f_{P^{j1}}$ with $\wsizeS{P^{j0}}{s} = \wsizeS{P^{j1}}{s}$.  Such span programs exist by \lemref{t:dualspanprogram}.  For $j \in [n]$ and $c \in \B$, let $P^{jc}$ be in the inner product space $V^{jc}$ with target vector $\ket{t^{jc}}$ and input vectors indexed by $\Ifree^{jc}$ and $I^{jc}_{kb}$ for $k \in [m]$, $b \in \B$.  

Some more notation will be convenient.  For $x \in \B^m$, let $y = y(x) \in \B^n$ be given by $y(x)_j = f_{P^{j1}}(x) = f_j(x)$ for $j \in [n]$.  Thus $g(x) = f(y(x))$.  Also let $I(y)' = I(y) \smallsetminus \Ifree = \cup_{j \in [n]} I_{j y_j}$.  Define $\jc : I \smallsetminus \Ifree \rightarrow [n] \times \B$ by $\jc(i) = (j,c)$ if $i \in I_{jc}$.  The idea is that $\jc$ maps $i$ to the index of the span program that must evaluate to true in order for $\ket{v_i}$ to be available.  

\begin{definition} \label{t:tensorproductcomposedef}
The tensor-product-composed span program $Q^\tensor$ is defined by: 
\begin{itemize}
\item
The inner product space is $V^\tensor = V \otimes \bigotimes_{j \in [n], c \in \B} V^{jc}$.  
\item
The target vector is $\ket{t^\tensor} = \ket{t}_V \otimes \bigotimes_{j \in [n], c \in \B} \ket{t^{jc}}_{V^{jc}}$.  
\item
The free input vectors are indexed by $\Ifree^\tensor = \Ifree \sqcup \bigsqcup_{j \in [n], c \in \B} (I_{jc} \times \Ifree^{jc})$ with, for $i \in \Ifree^\tensor$, 
\begin{equation}
\ket{v^\tensor_i} = \begin{cases}
\ket{v_i}_V \otimes \bigotimes_{j \in [n], c \in \B} \ket{t^{jc}}_{V^{jc}} & \text{if $i \in \Ifree$} \\
\ket{v_{i'}}_V \otimes \ket{v_{i''}}_{V^{jc}} \otimes \bigotimes_{\substack{j' \in [n], c' \in \B : \\ (j',c') \neq (j,c)}} \ket{t^{j'c'}}_{V^{j'c'}} & \text{if $i = (i', i'') \in I_{jc} \times \Ifree^{jc}$} 
\end{cases}
\end{equation}
\item
The other input vectors are indexed by $I^\tensor_{kb} = \sqcup_{j \in [n], c \in \B} (I_{jc} \times I^{jc}_{kb})$ for $k \in [m]$, $b \in \B$.  For $i \in I_{jc}$, $i' \in I^{jc}_{kb}$, let 
\begin{equation}
\ket{v^\tensor_{ii'}} = \ket{v_i}_V \otimes \ket{v_{i'}}_{V^{jc}} \otimes \bigotimes_{\substack{j' \in [n], c' \in \B : \\ (j',c') \neq (j,c)}} \ket{t^{j'c'}}_{V^{j'c'}}
 \enspace .
\end{equation}
\end{itemize}
\end{definition}

\begin{definition} \label{t:directsumcomposedef}
The direct-sum-composed span program $Q^\oplus$ is defined by: 
\begin{itemize}
\item
The inner product space is $V^\oplus = V \oplus \bigoplus_{j \in [n], c \in \B} (\C^{I_{jc}} \otimes V^{jc})$.  Any vector in $V^\oplus$ can be uniquely expressed as $\ket{u}_V + \sum_{i \in I \smallsetminus \Ifree} \ket i \otimes \ket{u_i}$, where $\ket u \in V$ and $\ket{u_i} \in V^{\jc(i)}$.  
\item
The target vector is $\ket{t^\oplus} = \ket{t}_V$.  
\item
The free input vectors are indexed by $\Ifree^\oplus = I \sqcup \bigsqcup_{j \in [n], c \in \B} (I_{jc} \times \Ifree^{jc})$ with, for $i \in \Ifree^\oplus$, 
\begin{equation}
\ket{v^\oplus_i} = \begin{cases}
\ket{v_i}_V & \text{if $i \in \Ifree$} \\
\ket{v_i}_V - \ket i \otimes \ket{t^{jc}} & \text{if $i \in I_{jc}$} \\
\ket{i'} \otimes \ket{v_{i''}} & \text{if $i = (i', i'') \in I_{jc} \times \Ifree^{jc}$} 
\end{cases}
\end{equation}
\item
The other input vectors are indexed by $I^\oplus_{kb}
= \sqcup_{j \in [n], c \in \B} (I_{jc} \times I^{jc}_{kb})$ for $k \in [m]$, $b \in \B$.  For $i \in I_{jc}$, $i' \in I^{jc}_{kb}$, let 
\begin{equation}
\ket{v^\oplus_{ii'}} = \ket{i} \otimes \ket{v_{i'}}
 \enspace .
\end{equation}
\end{itemize}
\end{definition}

For $x \in \B^m$, the indices of the available input vectors for $Q^\tensor$ and $Q^\oplus$ are 
\begin{align}
I^\tensor(x) 
&= \Ifree \cup \bigcup_{j \in [n], c \in \B} I_{jc} \times I^{jc}(x)
\\
I^\oplus(x)
&= I \cup \bigcup_{j \in [n], c \in \B} I_{jc} \times I^{jc}(x)
 \enspace .
\end{align}
Note that if $\Ifree = \Ifree^{j c} = \emptyset$ for $j \in [n]$ and $c \in \B$, then $Q^\tensor$ has no free input vectors either, $\Ifree^\tensor = \emptyset$.  

Assume $g(x) = f_P(y(x)) = 1$.  Then we have witnesses $\ket w \in \C^I$ and $\ket{w^{j y_j}} \in \C^{I^{j y_j}}$, for $j \in [n]$, such that 
\begin{equation}\begin{split}
\ket t &= \sum_{i \in I(y)} w_i \ket{v_i} \\
\ket{t^{j y_j}} &= \sum_{i \in I^{j y_j}(x)} w^{j y_j}_i \ket{v_i}
 \enspace ,
\end{split}\end{equation}
and such that $\wsizexS P y r = \norm{R \ket w}^2$ (where, analogous to the definition of $S$ in \defref{t:wsizedef}, $R = \sum_{j \in [n], c \in \B, i \in I_{jc}} \sqrt{r_j} \ketbra i i$) and $\wsizexS{P^{j y_j}}{x}{s} = \norm{S \ket{w^{j y_j}}}{}^2$.  

Let $\ket{w^\tensor} \in \C^{I^\tensor(x)}$ be 
\begin{equation}
w^\tensor_i 
= \begin{cases}
w_i & \text{if $i \in \Ifree$} \\
w_{i'} w^{\jc(i')}_{i''} & \text{if $i = (i', i'')$ with $i' \in I(y)'$, $i'' \in I^{\jc(i')}(x)$} \\
0 & \text{otherwise}
\end{cases}
\end{equation}
Then 
\begin{align}
\sum_{i \in I^\tensor(x)} w^\tensor_i \ket{v^\tensor_i}
&= 
\sum_{i \in \Ifree} w_i \ket{v_i}_V \otimes \bigotimes_{j \in [n], c \in \B} \ket{t^{jc}}_{V^{jc}} 
 + \sum_{\substack{i \in I(y)', \\ i' \in I^{\jc(i)}(x)}} w_i \ket{v_i}_V \otimes w^{\jc(i)}_{i'} \ket{v_{i'}}_{V^{\jc(i)}} \otimes \bigotimes_{\substack{j \in [n], c \in \B: \\ (j,c) \neq \jc(i)}} \ket{t^{jc}}_{V^{jc}} \nonumber \\
&= \sum_{i \in I(y)} w_i \ket{v_i}_V \otimes \bigotimes_{j \in [n], c \in \B} \ket{t^{jc}}_{V^{jc}} \nonumber \\
&= \ket{t^\tensor}
 \enspace ,
\end{align}
so indeed $f_{Q^\tensor}(x) = 1$.  

Let $\ket{w^\oplus} \in \C^{I^\oplus(x)}$ be 
\begin{equation}
w^\oplus_i 
= \begin{cases}
w_i & \text{if $i \in I(y)$} \\
w_{i'} w^{\jc(i')}_{i''} & \text{if $i = (i', i'')$ with $i' \in I(y)'$, $i'' \in I^{\jc(i')}(x)$} \\
0 & \text{otherwise}
\end{cases}
\end{equation}
Then
\begin{align}
\sum_{i \in I^\oplus(x)} w^\oplus_i \ket{v^\oplus_i}
&= \sum_{i \in \Ifree} w_i \ket{v_i}_V + \sum_{i \in I(y)'} w_i \big( \ket{v_i}_V - \ket i \otimes \ket{t^{\jc(i)}} \big) + \sum_{\substack{i \in I(y)',\\ i' \in I^{\jc(i)}(x)}} w_{i} w^{\jc(i)}_{i'} \ket i \otimes \ket{v_{i'}} \nonumber \\
&= \sum_{i \in I(y)} w_i \ket{v_i}_V + \sum_{i \in I(y)'} w_i \ket i \otimes \bigg[ - \ket{t^{\jc(i)}} + \sum_{i' \in I^{\jc(i)}(x)} w^{\jc(i)}_{i'} \ket{v_{i'}} \bigg] \nonumber \\
&= \ket{t}_V = \ket{t^\oplus}
 \enspace ,
\end{align}
so indeed $f_{Q^\oplus}(x) = 1$.  

Moreover, 
\begin{equation}\begin{split}
\norm{S \ket{w^\tensor}}{}^2 = \norm{S \ket{w^\oplus}}{}^2 
&= \sum_{\substack{j \in [n], i \in I_{j y_j}, \\ k \in [m], i' \in I^{j y_j}_{k x_k}}} s_k \lvert w_i w^{j y_j}_{i'} \rvert^2 \\
&= \sum_{i \in I(y)'} \wsizexS{P^{\jc(i)}}{x}{s} \abs{w_i}^2 \\
&= \wsizexS P y r
 \enspace ,
\end{split}\end{equation}
so $\wsizexS{Q^\tensor}{x}{s} = \wsizexS{Q^\oplus}{x}{s} \leq \wsizexS P y r$.  

Now assume that $g(x) = f_P(y) = 0$.  Then we have witnesses $\ket u \in V$ and $\ket{u^{j \bar y_j}} \in V^{j \bar y_j}$, for $j \in [n]$, such that $\braket t u = \braket{t^{j \bar y_j}}{u^{j \bar y_j}} = 1$, $\braket{v_i}{u} = 0$ for $i \in I(y)$, $\braket{v_i}{u^{j \bar y_j}} = 0$ for $i \in I^{j \bar y_j}(x)$, $\wsizexS P y r = \sum_{j \in [n], i \in I_{j \bar y_j}} r_j \abs{\braket{v_i}{u}}^2$ and $\wsizexS{P^{j \bar y_j}} x s = \sum_{k \in [m], i \in I^{j \bar y_j}_{k \bar x_k}} s_k \abs{\braket{v_i}{u^{j \bar y_j}}}{}^2$.  

Let 
\begin{equation}
\ket{u^\tensor} = \ket{u}_V \otimes \bigotimes_{j \in [n]} \Bigg( \ket{u^{j \bar y_j}}_{V^{j \bar y_j}} \otimes \frac{\ket{t^{j y_j}}_{V^{j y_j}}}{\norm{\ket{t^{j y_j}}}^2} \Bigg)
 \enspace .
\end{equation}
Then $\braket{t^\tensor}{u^\tensor} = 1$.  For $i \in \Ifree$, $\braket{v^\tensor_i}{u^\tensor} = 0$ since $\braket{v_i}{u} = 0$, and similarly for $i \in I_{j y_j}$, $i' \in I^{j y_j}(x)$, $\braket{v^\tensor_{i,i'}}{u^\tensor} = 0$.   We also have that for $j \in [n]$, $i \in I_{j \bar y_j}$ and $i' \in I^{j \bar y_j}(x)$, $\braket{v^\tensor_{ii'}}{u^\tensor} = 0$, since $\braket{v_{i'}}{u^{j \bar y_j}} = 0$.  
Thus $\braket{v^\tensor_i}{u^\tensor} = 0$ for all $i \in I^\tensor(x)$, so $\ket{u^\tensor}$ is a witness for $f_{Q^\tensor}(x) = 0$.  Moreover, 
\begin{align}
\wsizexS{Q^\tensor}{x}{s}
&\leq \sum_{\substack{j \in [n], c \in \B, i \in I_{j c}, \\ k \in [m], i' \in I^{j c}_{k \bar x_k}}} s_k \abs{\braket{v^\tensor_{ii'}}{u^\tensor}}{}^2 \nonumber \\
&= 
\sum_{\substack{j \in [n], i \in I_{j \bar y_j}, \\ k \in [m], i' \in I^{j \bar y_j}_{k \bar x_k}}} s_k \abs{\braket{v^\tensor_{ii'}}{u^\tensor}}{}^2 
 \enspace , \\
\intertext{where we have used $\braket{v^\tensor_{ii'}}{u^\tensor} = 0$ for $i \in I(y)$, since $\braket{v_i}{u} = 0$,}
&= 
\sum_{\substack{i \in I \smallsetminus I(y), \\ k \in [m], i' \in I^{\jc(i)}_{k \bar x_k}}}
s_k \Biggl | 
\Biggl[ \bra{v_i}_V \otimes \bra{v_{i'}}_{V^{\jc(i)}} \otimes \bigotimes_{\substack{j \in [n], c \in \B : \\ (j,c) \neq \jc(i)}} \bra{t^{jc}}_{V^{jc}} \Biggr] \nonumber \\
&\qquad\qquad\qquad\qquad 
\cdot \Biggl[ \ket{u}_V \otimes \bigotimes_{j \in [n]} \Bigg( \ket{u^{j \bar y_j}}_{V^{j \bar y_j}} \otimes \frac{\ket{t^{j y_j}}_{V^{j y_j}}}{\norm{\ket{t^{j y_j}}}{}^2} \Bigg) \Biggr] 
\Biggr |^2 \nonumber \\
&= \sum_{\substack{i \in I \smallsetminus I(y), \\ k \in [m], i' \in I^{\jc(i)}_{k \bar x_k}}} s_k \abs{\braket{v_i}{u}}^2 \cdot \abs{\braket{v_{i'}}{u^{\jc(i)}}}{}^2 \nonumber \\
&= \sum_{i \in I \smallsetminus I(y)} \wsizexS{P^{\jc(i)}}{x}{s} \abs{\braket{v_i}{u}}^2 \nonumber \\
&= \wsizexS P y r
 \enspace ,
\end{align}
where we have substituted the definitions of $\ket{v^\tensor_{ii'}}$ and $\ket{u^\tensor}$, and used $\braket{t^{j \bar y_j}}{u^{j \bar y_j}} = 1$.  We conclude that $f_{Q^\tensor} = g$ and $\wsizeS{Q^\tensor}{s} \leq \wsizeS P r$.  

Let 
\begin{equation}
\ket{u^\oplus} = \ket{u}_V + \sum_{i \in I \smallsetminus I(y)} \braket{v_i}{u} \ket i \otimes \ket{u_i}
 \enspace .
\end{equation}
Then $\braket{t^\oplus}{u^\oplus} = 1$.  For $i \in \Ifree^\oplus$, $\braket{v^\oplus_i}{u^\oplus} = 0$.  Indeed, this follows for $i \in I(y)$ since $\braket{v_i}{u} = 0$, and it holds for $i \in I \smallsetminus I(y)$ since $(\bra{v_i}_V - \bra i \otimes \bra{t^{\jc(i)}}) (\ket{u}_V + \braket{v_i}{u} \ket i \otimes \ket{u_i}) = 0$.  
$\ket{u^\oplus}$ is clearly orthogonal to the entire subspace $\ket i \otimes V^{\jc(i)}$ for $i \in I(y)$.  
Finally, for $i \in I \smallsetminus I(y)$ and $i' \in I^{\jc(i)}(x)$, $\braket{v^\oplus_{ii'}}{u^\oplus} = 0$ since $\braket{v_{i'}}{u_i} = 0$.  
Thus $\braket{v^\oplus_i}{u^\oplus} = 0$ for all $i \in I^\oplus(x)$, so $\ket{u^\oplus}$ is a witness for $f_{Q^\oplus}(x) = 0$.  Moreover, 
\begin{align}
\wsizexS{Q^\oplus}{x}{s}
&\leq \sum_{\substack{i \in I \smallsetminus \Ifree \\ k \in [m], i' \in I^{\jc(i)}_{k \bar x_k}}} s_k \abs{\braket{v^\oplus_{ii'}}{u^\oplus}}{}^2 \nonumber \\
&= \sum_{\substack{i \in I \smallsetminus I(y) \\ k \in [m], i' \in I^{\jc(i)}_{k \bar x_k}}} s_k \big\lvert\big(\bra{i} \otimes \bra{v_{i'}}\big) \big(\braket{v_i}{u} \ket i \otimes \ket{u_i}\big)\big\rvert^2 \nonumber \\
&= \sum_{\substack{i \in I \smallsetminus I(y) \\ k \in [m], i' \in I^{\jc(i)}_{k \bar x_k}}} s_k \abs{\braket{v_i}{u}}^2 \abs{\braket{v_{i'}}{u_i}}^2 \nonumber \\
&= \sum_{i \in I \smallsetminus I(y)} \wsizexS{P^{\jc(i)}}{x}{s} \abs{\braket{v_i}{u}}^2 \nonumber \\
&= \wsizexS P y r
 \enspace .
\end{align}
We conclude that $f_{Q^\oplus} = g$ and $\wsizeS{Q^\oplus}{s} \leq \wsizeS P r$.  
\end{proof}

Tensor-product composition is somewhat extravagant in the dimension of the final inner product space.  This is not a particular concern theoretically, since a set of $m$ vectors can always be embedded isometrically in at most $m$ dimensions.  However, it can be convenient to have an explicit isometric embedding of the composed span program's vectors into a lower dimensional space.  The ``reduced" tensor-product span program composition, which we will define next, is such an embedding.  It is particularly effective when the outer span program has many zero entries in its input vectors.  Canonical span programs, defined below in \secref{s:canonicalspanprograms}, are good examples.  

As in the setup for \thmref{t:spanprogramcomposition}, consider functions $f : \B^n \rightarrow \B$ and, for $k \in [n]$, $f_k : \B^m \rightarrow \B$.  Let $g : \B^m \rightarrow \B$ be defined by 
\begin{equation}
g(x) = f\big( f_1(x), f_2(x), \ldots, f_n(x) \big)
 \enspace .
\end{equation}
Let $P$ be a span program computing $f_P = f$ and, for $j \in [n]$, let $P_j$ be a span program computing $f_{P_j} = f_j$.  

Let span program $P$ be in inner-product space $V$, with target vector $\ket t$ and input vectors indexed by $\Ifree$ and $I_{jc}$ for $j \in [n]$ and $c \in \B$.  For $j \in [n]$, let $P^{j1} = P_j$ and let $P^{j0}$ be a span program computing $f_{P^{j0}} = \neg f_{P^{j1}}$ with $\wsizeS{P^{j0}}{s} = \wsizeS{P^{j1}}{s}$.  For $j \in [n]$ and $c \in \B$, let $P^{jc}$ be in the inner product space $V^{jc}$ with target vector $\ket{t^{jc}}$ and input vectors indexed by $\Ifree^{jc}$ and $I^{jc}_{kb}$ for $k \in [m]$, $b \in \B$.  

Let $d = \dim(V)$ and $\{ \ket l : l \in [d] \}$ be an orthonormal basis for $V$.  

\begin{definition} \label{t:reducedtensorproductcomposedef}
The tensor-product-composed span program, reduced with respect to the basis $\{ \ket l : l \in [d] \}$, is $Q^\rtensor$, defined by: 
\begin{itemize}
\item
For $l \in [d]$, let $Z_l = \{ (j, c) \in [n] \times \B : \forall \, i \in I_{jc}, \braket{l}{v_i} = 0 \}$, and let $\pi_l = \prod_{(j,c) \in Z_l} \norm{\ket{t^{jc}}}$.
\item
The inner product space of $Q^\rtensor$ is $V^\rtensor = \bigoplus_{l \in [d]} \big( \bigotimes_{(j,c) \notin Z_l} V^{jc} \big)$.  Any vector $\ket v \in V^\rtensor$ can be uniquely expressed as $\sum_{l \in [d]} \ket l \otimes \ket{v_l}$, where $\ket{v_l} \in \bigotimes_{(j,c) \notin Z_l} V^{jc}$.  
\item
The target vector is 
\begin{equation}
\ket{t^\rtensor} = \sum_{l \in [d]} \braket l t \ket l \pi_l \otimes \bigotimes_{(j,c) \notin Z_l} \ket{t^{jc}}_{V^{jc}}
 \enspace .
\end{equation}
\item
The free input vectors are indexed by $\Ifree^\rtensor = \Ifree^\tensor = \Ifree \sqcup \bigsqcup_{j \in [n], c \in \B} (I_{jc} \times \Ifree^{jc})$ with, for $i \in \Ifree^\rtensor$, 
\begin{equation}
\ket{v^\rtensor_i} = \begin{cases}
\sum_{l \in [d]} \braket{l}{v_i} \ket l \pi_l \otimes \bigotimes_{(j,c) \notin Z_l} \ket{t^{jc}}_{V^{jc}} & \text{if $i \in \Ifree$} \\
\sum_{l \in [d]} \braket{l}{v_{i'}} \ket l \pi_l \otimes \ket{v_{i''}}_{V^{jc}} \otimes \bigotimes_{\substack{(j',c') \notin Z_l : \\ (j',c') \neq (j,c)}} \ket{t^{j'c'}}_{V^{j'c'}} & \text{if $i = (i', i'') \in I_{jc} \times \Ifree^{jc}$}
\end{cases}
\end{equation}
\item
The other input vectors are indexed by $I^\rtensor_{kb} = I^\tensor_{kb} = \sqcup_{j \in [n], c \in \B} (I_{jc} \times I^{jc}_{kb})$ for $k \in [m]$, $b \in \B$.  For $i \in I_{jc}$, $i' \in I^{jc}_{kb}$, let 
\begin{equation}
\ket{v^\rtensor_{ii'}} = 
\sum_{l \in [d]} \braket{l}{v_i} \ket l \pi_l \otimes \ket{v_{i'}}_{V^{jc}} \otimes \bigotimes_{\substack{(j',c') \notin Z_l : \\ (j', c') \neq (j,c)}} \ket{t^{j'c'}}_{V^{j'c'}}
 \enspace .
\end{equation}
\end{itemize}
\end{definition}

For example, if $P$ is a canonical span program---see \defref{t:spanprogramcanonicaldef} below---with $\{ \ket x : x \in \B^n , f_P(x) = 0 \}$ an orthonormal basis for $V$, then for each $x$ with $f_P(x) = 0$, $\{ (j, x_j) : j \in [n] \} \subseteq Z_x$.  

\begin{proposition} \label{t:reducedtensorproductcompose}
The span program $Q^\rtensor$ computes $f_{Q^\rtensor} = g$, and, for any $s \in [0, \infty)^m$, 
\begin{equation}
\wsizeS {Q^\rtensor} s \leq \wsizeS P \sigma
 \enspace ,
\end{equation}
where $\sigma_j = \wsizeS{P_j}{s}$ for $j \in [n]$.  In particular, $\wsizeS {Q^\rtensor} s \leq \wsize P \max_{j \in [n]} \wsizeS{P_j}{s}$.
\end{proposition}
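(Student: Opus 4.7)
The plan is to construct an isometry $\Psi : V^\rtensor \to V^\tensor$ that identifies the span-program data of $Q^\rtensor$ with that of the tensor-product composition $Q^\tensor$ (\defref{t:tensorproductcomposedef}), and then invoke \thmref{t:spanprogramcomposition} applied to $Q^\tensor$. Since the input-vector index sets coincide ($\Ifree^\rtensor = \Ifree^\tensor$ and $I^\rtensor_{kb} = I^\tensor_{kb}$), an isometry satisfying $\Psi(\ket{t^\rtensor}) = \ket{t^\tensor}$ and $\Psi(\ket{v^\rtensor_i}) = \ket{v^\tensor_i}$ for every index $i$ forces $f_{Q^\rtensor} = f_{Q^\tensor}$ and $\wsizexS{Q^\rtensor}{x}{s} = \wsizexS{Q^\tensor}{x}{s}$ for every input $x$: the true-witness equality is immediate (witnesses live in the shared index space, and $A^\tensor = \Psi A^\rtensor$), while the false-witness equality uses $\ket{u} \mapsto \Psi^\adjoint \ket{u}$ in one direction and $\ket{u'} \mapsto \Psi \ket{u'}$ in the other, together with $\Psi^\adjoint \Psi = \identity_{V^\rtensor}$.

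To define $\Psi$, I would work summand by summand: on the $l$-th direct summand of $V^\rtensor$, set
\begin{equation}
\Psi\bigl( \ket{l} \otimes \ket{u} \bigr) = \frac{1}{\pi_l}\, \ket{l}_V \otimes \ket{u} \otimes \bigotimes_{(j,c) \in Z_l} \ket{t^{jc}}_{V^{jc}}
\end{equation}
for $\ket{u} \in \bigotimes_{(j,c) \notin Z_l} V^{jc}$, and extend linearly. The factor $1/\pi_l$ makes this an isometry within each summand, since the inserted target vectors contribute exactly the norm $\pi_l = \prod_{(j,c) \in Z_l} \norm{\ket{t^{jc}}}$; distinct $l$-summands remain orthogonal in $V^\tensor$ through the $\ket{l}_V$ factor.

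The remaining step is a direct computation verifying $\Psi(\ket{t^\rtensor}) = \ket{t^\tensor}$ and $\Psi(\ket{v^\rtensor_i}) = \ket{v^\tensor_i}$ for each of the free and non-free input indices. The main point requiring care, and essentially the crux of why the reduced construction is well-defined in the first place, is the following: for non-free input vectors indexed by $i \in I_{jc}$ paired with $i' \in I^{jc}_{kb}$, the $l$-summand of $\ket{v^\rtensor_{ii'}}$ a priori demands a factor in $V^{jc}$, which is consistent only when $(j,c) \notin Z_l$; but exactly for those $l$ with $(j,c) \in Z_l$ the coefficient $\braket{l}{v_i}$ vanishes by the definition of $Z_l$, so the potentially ill-defined summand is already zero. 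Once this is observed, $\sum_l \braket{l}{v_i} \ket{l}_V$ rebuilds $\ket{v_i}_V$, the $1/\pi_l$ in $\Psi$ cancels the $\pi_l$ in the definition of $\ket{v^\rtensor_{ii'}}$, and the tensor factors reassemble into $\ket{v^\tensor_{ii'}}$. The analogous checks for the target and for the two kinds of free input vectors are simpler variants of the same computation, after which the proposition follows from \thmref{t:spanprogramcomposition}.
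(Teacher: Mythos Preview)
Your proposal is correct and takes essentially the same approach as the paper. The paper's proof simply asserts that the input vectors of $Q^\rtensor$ are in one-to-one correspondence with those of $Q^\tensor$ with all lengths and pairwise angles preserved, and concludes $\wsizexS{Q^\rtensor}{x}{s} = \wsizexS{Q^\tensor}{x}{s}$; your explicit isometry $\Psi$ is precisely what certifies that assertion, and your handling of the $(j,c)\in Z_l$ case is exactly the point that makes the reduced construction well-defined.
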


\begin{proof}
Rather than repeat the proof of \thmref{t:spanprogramcomposition}, it is enough to note that the input vectors of $Q^\rtensor$ are in one-to-one correspondence with the input vectors of $Q^\tensor$, and that the lengths of, and angles between, corresponding vectors are preserved.  Therefore, $f_{Q^\rtensor} = f_{Q^\tensor}$ and for all $s \in [0, \infty)^n$ and $x \in \B^n$, $\wsizexS {Q^\rtensor} x s = \wsizexS {Q^\tensor} x s$.  
\end{proof}

To conclude this section, let us remark that the composed span programs $Q^\oplus$, $Q^\tensor$ and $Q^\rtensor$ from \thmref{t:spanprogramcomposition} and \defref{t:reducedtensorproductcomposedef} are optimal under certain conditions.  

\begin{corollary} \label{t:spanprogramcompositiontight}
In \thmref{t:spanprogramcomposition}, assume that the functions $f_j$, $j \in [n]$, depend on disjoint sets of the input bits.  Assume also that the span programs $P_j$ have witness sizes $r_j = \wsizeS{P_j}{s} = \ADVpm_s(f_j)$ and that $P$ has witness size $\wsizeS P r = \ADVpm_r(f)$.  (By \thmref{t:wsizeadvbound}, these witness sizes are optimal.)   Then the composed span program $Q$ satisfies 
\begin{equation} \label{e:spanprogramcompositiontight}
\wsizeS Q s = \ADVpm_s(g) = \ADVpm_r(f)
 \enspace ,
\end{equation}
which is optimal.  
\end{corollary}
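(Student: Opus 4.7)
The plan is to establish the four-part chain
\begin{equation*}
\wsizeS P r \;=\; \ADVpm_r(f) \;\leq\; \ADVpm_s(g) \;\leq\; \wsizeS Q s \;\leq\; \wsizeS P r ,
\end{equation*}
from which every entry must coincide, yielding the corollary. The leftmost equality is one of the hypotheses. The rightmost inequality is exactly \thmref{t:spanprogramcomposition} applied to $P, P_1, \ldots, P_n$ (recall $r_j = \wsizeS{P_j}{s}$).

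For the inequality $\ADVpm_r(f) \leq \ADVpm_s(g)$, I would invoke \thmref{t:weakadversarycomposition}. The hypothesis that the $f_j$ depend on disjoint sets of input bits lets me partition the relevant coordinates as $M_1 \sqcup \cdots \sqcup M_n \subseteq [m]$ and view $g$ as the composition $g(x) = f\big(f_1(x|_{M_1}), \ldots, f_n(x|_{M_n})\big)$ required by \thmref{t:weakadversarycomposition}. That theorem then delivers $\ADVpm_s(g) \geq \ADVpm_\beta(f)$ with $\beta_j = \ADVpm_{s|_{M_j}}(f_j)$. Because $f_j$ ignores coordinates outside $M_j$, $\ADVpm_{s|_{M_j}}(f_j) = \ADVpm_s(f_j)$, and hence $\beta_j = r_j$ by hypothesis, giving $\ADVpm_s(g) \geq \ADVpm_r(f)$.

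For the middle inequality $\ADVpm_s(g) \leq \wsizeS Q s$, I would use the cost-generalized form of \thmref{t:wsizeadvbound} applied to $Q$. Although the statement in the excerpt is written for uniform costs, the indirect derivation sketched immediately after that theorem---iterate $f_Q = g$ to $g^k$, combine the witness-size upper bound of \thmref{t:spanprogramalgorithm} with the sub-multiplicativity of $\ADVpm$ under composition from \thmref{t:weakadversarycomposition}, then send $k \to \infty$---extends verbatim to nonuniform costs $s$ since \thmref{t:weakadversarycomposition} is already phrased in terms of costs and \thmref{t:spanprogramalgorithm}'s query bound rescales appropriately. This gives $\wsizeS Q s \geq \ADVpm_s(f_Q) = \ADVpm_s(g)$.

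The main obstacle is purely bookkeeping: verifying that the disjoint-inputs hypothesis really does identify the composition-theorem parameters $\beta_j$ with the hypothesized $r_j = \wsizeS{P_j}{s} = \ADVpm_s(f_j)$. No new algebraic or quantum input is required; once the chain collapses into equalities, the conclusion of the corollary follows at once.
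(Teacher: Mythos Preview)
Your proposal is correct and follows essentially the same chain of inequalities as the paper's proof, which also cites \thmref{t:weakadversarycomposition}, \thmref{t:wsizeadvbound}, and \thmref{t:spanprogramcomposition} for the three inequalities and then collapses them. The only difference is that you take extra care to justify the costed version of \thmref{t:wsizeadvbound}, whereas the paper simply invokes it (the costed form being implicit in the indirect argument and made fully explicit later in \thmref{t:spanprogramSDP}).
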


\begin{proof}
We have the inequalities
\begin{equation}\begin{split}
\ADVpm_r(f)
&\leq 
\ADVpm_s(g) \\ 
&\leq
\wsizeS Q s \\
&\leq
\wsizeS P r \\
&=
\ADVpm_r(f) 
 \enspace ,
\end{split}\end{equation}
where the three inequalities are from \thmref{t:weakadversarycomposition}, \thmref{t:wsizeadvbound} and \thmref{t:spanprogramcomposition}, respectively.  Therefore, all inequalities are equalities, and Eq.~\eqnref{e:spanprogramcompositiontight} follows.  
\end{proof}

\thmref{t:spanprogramSDP} below will show that a span program $P$ has optimal witness size with costs $s$ among all span programs computing $f_P$ if and only if $\wsizeS P s = \ADVpm_s(f_P)$.  Therefore, \corref{t:spanprogramcompositiontight} says that $Q$ is optimal if the input span programs are optimal and the $f_j$ depend on disjoint sets of the input bits.

\subsection{Strict and real span programs} \label{s:strictrealspanprograms}

For searching for span programs with optimal witness size, it turns out that \defref{t:spanprogramdef} is more general than necessary.  In fact, it suffices to consider span programs over the reals $\R$, and without any free input vectors.  

\begin{definition} \label{t:strictmonotonerealdef}
Let $P$ be a span program.  
\begin{itemize}
\item
$P$ is \emph{strict} if it has no free input vectors, i.e., $\Ifree = \emptyset$.  
\item
$P$ is \emph{real} if in a basis for $V$ the coefficients of the input and target vectors are all real numbers.  
\item
$P$ is \emph{monotone} if $I_{j,0} = \emptyset$ for all $j \in [n]$.  
\end{itemize}
\end{definition}

As remarked in \secref{s:spanprogramdef}, \cite{KarchmerWigderson93span} considered only strict span programs.  

\begin{proposition} \label{t:spanprogramrelaxed}
For any span program $P$, there exists a strict span program $P'$ with $f_{P'} = f_P$ and $\wsizexS{P'}{x}{s} = \wsizexS P x s$ for all $s \in [0, \infty)^n$ and $x \in \B^n$.  
\end{proposition}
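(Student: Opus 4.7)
The plan is to construct $P'$ by orthogonally projecting out the subspace spanned by the free input vectors. Let $A_0 = \sum_{i \in \Ifree}\ketbra{v_i}{i}$ denote the restriction of $A$ to the free column indices, and let $\Pi_0^\perp$ be the orthogonal projector onto $\Range(A_0)^\perp \subseteq V$. I would define $P'$ in the inner product space $V' = \Range(A_0)^\perp$ with target $\ket{t'} = \Pi_0^\perp \ket t$, labeled input vectors $\ket{v_i'} = \Pi_0^\perp \ket{v_i}$ for $i \in I \smallsetminus \Ifree$ (inheriting their labels $I_{j,b}$), and $\Ifree' = \emptyset$; by construction $P'$ is strict.

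For the $f_P(x) = 1$ direction, I would decompose any witness $\ket w$ for $f_P(x)=1$ uniquely as $\ket{w_f} + \ket{w_r}$ with $\ket{w_f}$ supported on $\Ifree$ and $\ket{w_r}$ on $I \smallsetminus \Ifree$. Applying $\Pi_0^\perp$ to $A \Pi(x) \ket w = \ket t$ annihilates the free contribution and yields $A' \Pi'(x) \ket{w_r} = \ket{t'}$, so $\ket{w_r}$ witnesses $f_{P'}(x) = 1$ at the same cost (since $S$ vanishes on $\Ifree$, $\norm{S\ket w}^2 = \norm{S\ket{w_r}}^2$). Conversely, given $\ket{w'}$ witnessing $f_{P'}(x) = 1$, the residue $\ket{t} - A\Pi(x)\ket{w'}$ lies in $\Range(A_0)$ because its $\Pi_0^\perp$-image vanishes by hypothesis, so there exists $\ket{w_f} \in \C^{\Ifree}$ with $A_0\ket{w_f}$ equal to this residue, and $\ket{w'} + \ket{w_f}$ is then a witness for $f_P(x)=1$ of identical cost.

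For the $f_P(x) = 0$ direction, any witness $\ket u$ for $P$ is already orthogonal to every free input vector, since $\Ifree \subseteq I(x)$ forces $\braket{v_i}{u} = 0$ for $i \in \Ifree$; hence $\ket u \in V'$. The same vector serves directly as a witness for $P'$: $\braket{t'}{u} = \braket{t}{u} = 1$ and $\braket{v_i'}{u} = \braket{v_i}{u} = 0$ for every available non-free index. The converse is equally direct: a witness $\ket{u'}$ for $P'$ lies in $V' \subseteq V$, is automatically orthogonal to the free input vectors, and satisfies $\braket{v_i}{u'} = \braket{v_i'}{u'} = 0$ at the remaining available indices. The cost $\norm{SA^\dagger \ket u}^2 = \sum_{j,b,\,i \in I_{j,b}} s_j \abs{\braket{v_i}{u}}^2$ matches $\norm{S'(A')^\dagger \ket u}^2$ because $\Pi_0^\perp \ket u = \ket u$, so $\braket{v_i}{u} = \braket{v_i'}{u}$ on all non-free indices.

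The main technical content, and the reason nothing fancier is needed, is the observation that the free block is invisible to the witness-size functional: $S$ assigns no cost to $\Ifree$, so witnesses for the $1$-case pay nothing for their free coordinates, and $\Ifree \subseteq I(x)$ for every $x$ forces witnesses for the $0$-case into $\Range(A_0)^\perp$ automatically. These two facts together make $\Pi_0^\perp$ exactly the right surgery to eliminate $\Ifree$ while preserving $\wsizexS{P}{x}{s}$ on both branches, and the correspondence between witnesses above then immediately gives $\wsizexS{P'}{x}{s} = \wsizexS{P}{x}{s}$ for all $x \in \B^n$ and $s \in [0,\infty)^n$.
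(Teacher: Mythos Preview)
Your proposal is correct and takes essentially the same approach as the paper: project the target and non-free input vectors onto the orthogonal complement of $\Span\{\ket{v_i} : i \in \Ifree\}$, then verify that witnesses pass back and forth with unchanged cost using that $S$ vanishes on $\Ifree$ and that any $0$-witness already lies in $\Range(A_0)^\perp$. The paper's write-up packages the $f_P(x)=1$ case as a chain of operator identities rather than an explicit witness decomposition, but the content is identical.
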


\def\Pfree{{\overline{\Delta}_\mathrm{free}}}

\begin{proof}
Construct $P'$ by projecting $P$'s target vector $\ket t$ and input vectors $\{\ket{v_i} : i \in I \smallsetminus \Ifree\}$ to the space orthogonal to the span of the free input vectors.  That is, let $\Pfree$ be the projection onto the space orthogonal to $\Span(\{\ket{v_i} : i \in \Ifree\})$.  Then the target vector of $P'$ is $\Pfree \ket t$ and the input vectors are $\{ \Pfree \ket{v_i} : i \in I \smallsetminus \Ifree\}$.  

Then $f_{P'} = f_P$.  Indeed, if $f_{P}(x) = 1$, i.e., $\ket t = A \Pi(x) \ket w$ for some witness $\ket w$, then $\ket w$ is also a witness for $f_{P'}(x) = 1$.  Conversely, if $f_{P'}(x) = 1$, i.e., for some $\ket w$, $\Pfree \ket t = \Pfree A \Pi(x) \ket w$, then $\ket t - A \Pi(x) \ket w \in \Range(\{\ket{v_i} : i \in \Ifree\})$, so $f_P(x) = 1$.  

Now fix $s \in [0, \infty)^n$.  We claim that $\wsizexS{P'}{x}{s} = \wsizexS P x s$ for all $x \in \B^n$.  

First, if $f_P(x) = 0$, then by \defref{t:wsizedef},
\begin{equation}\begin{split}
\wsizexS P x s
&= \min_{\substack{\ket{w'} : \braket{t}{w'} = 1 \\ \Pi(x) A^\adjoint \ket{w'} = 0}} \norm{S A^\adjoint \ket{w'}}^2 \\
&= \min_{\substack{\ket{w'} : \braket{t}{w'} = 1 \\ \Pi(x) A^\adjoint \ket{w'} = 0 \\ \Pfree \ket{w'} = \ket{w'}}} \norm{S A^\adjoint \ket{w'}}^2 \\
&= \min_{\substack{\ket{w'} : \bra{t} \Pfree \ket{w'} = 1 \\ \Pi(x) A^\adjoint \Pfree \ket{w'} = 0}} \norm{S A^\adjoint \Pfree \ket{w'}}^2 \\
&= \wsizexS{P'}{x}{s}
 \enspace ,
\end{split}\end{equation}
where the second equality is because $\Pi(x) A^\adjoint \ket{w'} = 0$ implies in particular that $\braket{v_i}{w'} = 0$ for all $i \in \Ifree$.  

If $f_{P}(x) = 1$, then let $\Pifree = \sum_{i \in \Ifree} \ketbra i i$ and $\Pi'(x) = \Pi(x) - \Pifree = \sum_{i \in I(x) \smallsetminus \Ifree} \ketbra i i$.  We have 
\begin{align}
\wsizexS P x s 
&= \min_{\ket w : A \Pi(x) \ket w = \ket t} \norm{S \Pi'(x) \ket w}^2 \nonumber \\
&= \min_{\substack{\ket w : \Pi'(x) \ket w = \ket w \\ A \ket w - \ket t \in \Range(A \Pifree)}} \norm{S \ket w}^2 \\
&= \min_{\ket w : \Pfree A \Pi'(x) \ket w = \Pfree \ket t} \norm{S \ket w}^2 \nonumber \\
&= \wsizexS{P'}{x}{s}
 \enspace . \qedhere
\end{align}
\end{proof}

Span programs may also be taken to be real without harming the witness size:

\begin{lemma} \label{t:spanprogramCtoR}
For any span program $P$, there exists a real span program $P'$ computing the same function $f_{P'} = f_P$, with $\wsizexS{P'}x{s} \leq \wsizexS P x s$ for every cost vector $s \in [0, \infty)^n$ and $x \in \B^n$.  
\end{lemma}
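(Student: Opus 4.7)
The plan is to realize the complex span program $P$ as a real one by restricting scalars from $\C$ to $\R$ and doubling the input vectors. Concretely, I would view the underlying $\C$-vector space $V = \C^d$ of $P$ as a real inner-product space $V_\R$ of dimension $2d$ equipped with the real inner product $\langle u,v\rangle_\R := \operatorname{Re}\langle u | v\rangle$. The target vector of $P'$ is the same $\ket t$ (now regarded in $V_\R$), but for each input vector $\ket{v_i}$ of $P$ we place two input vectors in $P'$: the vector $\ket{v_i}$ itself and the vector $i\ket{v_i}$. If $i\in\Ifree$, both copies are free; if $i\in I_{j,b}$, both copies go into the label $(j,b)$. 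All coefficients are now real since they are expressed in a real basis of $V_\R$.

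Next I would verify $f_{P'}=f_P$. Writing a complex scalar as $c_i = a_i + i b_i$, the identity $\sum_{i\in I(x)} c_i \ket{v_i} = \ket t$ is equivalent to $\sum_{i\in I(x)} \bigl(a_i\ket{v_i} + b_i\cdot i\ket{v_i}\bigr) = \ket t$, so a complex witness for the $1$-case of $P$ corresponds bijectively to a real witness for the $1$-case of $P'$, and vice versa. Moreover $|c_i|^2 = a_i^2 + b_i^2$, so the weighted squared length is preserved. Hence $\wsizexS{P'}{x}{s} = \wsizexS{P}{x}{s}$ when $f_P(x)=1$.

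For the $0$-case I would start from an optimal complex witness $\ket{w'}$, normalized so $\braket{t}{w'}=1$ (we may scale by a unit complex number to make this inner product real and positive, which does not change $|\braket{v_i}{w'}|$ and hence does not change the witness size). Regard $\ket{w'}\in V_\R$. Then $\langle t,w'\rangle_\R = \operatorname{Re}\braket{t}{w'} = 1$, and for each $i\in I(x)$,
\begin{equation}
\langle v_i,w'\rangle_\R = \operatorname{Re}\braket{v_i}{w'}, \qquad \langle i\,v_i,w'\rangle_\R = -\operatorname{Im}\braket{v_i}{w'},
\end{equation}
both of which vanish because $\braket{v_i}{w'}=0$; hence $\ket{w'}$ is a valid real $0$-witness for $P'$. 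Summing the squared real inner products with each doubled input vector gives $|\braket{v_i}{w'}|^2$, so the weighted squared lengths match and $\wsizexS{P'}{x}{s} \le \wsizexS{P}{x}{s}$.

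There is no real obstacle here; the only thing to be careful about is that the real inner product on $V_\R$ is the \emph{real part} of the Hermitian one, so that the two extra copies $\ket{v_i}$ and $i\ket{v_i}$ together detect both the real and imaginary parts of $\braket{v_i}{w'}$. Once this is set up, the bijection between complex coefficients $c_i=a_i+ib_i$ and real coefficient pairs $(a_i,b_i)$, together with $|c_i|^2 = a_i^2+b_i^2$, immediately transfers both the correctness and the witness-size bound from $P$ to $P'$ (in fact with equality in both cases, although the lemma only asserts $\le$).
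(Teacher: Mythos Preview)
Your proposal is correct and is essentially the same construction as the paper's proof: the paper writes out the identification $\C^d \cong \R^{2d}$ explicitly via the map $R(v)=\Re(v)\otimes\ket 0+\Im(v)\otimes\ket 1$ and sets the doubled input vectors to $R(\ket{v_i})$ and $R(i\ket{v_i})$, which is exactly your ``restriction of scalars with the real-part inner product'' phrased in coordinates. The witness correspondences and size computations are identical in substance (your sign on $\langle i\,v_i,w'\rangle_\R$ depends on the inner-product convention, but it is irrelevant since the quantity vanishes and is squared in the size computation).
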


\begin{proof}
\def\i{\iota}	% I use a different index in this proof so I can let i = sqrt{-1}
Let $i = \sqrt{-1}$.  For a complex number $c \in \C$, let $\Re(c), \Im(c) \in \R$ denote its real and imaginary parts, $c = \Re(c) + \Im(c) i$.  Extend this definition entry-wise to complex vectors: for $v \in \C^l$, let $\Re(v) = (\Re(v_1), \ldots, \Re(v_l))$ and $\Im(v) = (\Im(v_1), \ldots, \Im(v_l))$.  Furthermore, define $R : \C^l \rightarrow \R^l \tensor \R^2$ by 
\begin{equation}
R(v) = \Re(v) \otimes \ket 0 + \Im(v) \otimes \ket 1
 \enspace .
\end{equation}
Note that this map satisfies, for any vector $v \in \C^l$ and any scalar $c \in \C$, $\norm{R(v)} = \norm{v}$ and 
\begin{equation} \label{e:CtoRmultiplication}
R(c \, v) = \Re(c) R(v) + \Im(c) R(i v)
 \enspace .
\end{equation}

Let $P$ have target vector $\ket t$, and input vectors $\ket{v_\i}$ for $\iota \in I = \Ifree \sqcup \bigsqcup_{j \in [n], b \in \B} I_{j,b}$.  Fix an arbitrary orthonormal basis for $P$'s inner product space $V$.  

Let the inner product space for $P'$ be $V' = V \otimes \C^2$.  Let $P'$'s target vector be $\ket{t'} = R(\ket t)$, and its input vectors be indexed by $I' = I \times \B$, such that for any $x \in \B^n$, the set of available input vectors is indexed by $I'(x) = I(x) \times \B$.  That is, $\Ifree' = \Ifree \times \B$ and $I'_{j,b} = I_{j,b} \times \B$ for $j \in [n]$ and $b \in \B$.   
For $(\i, b) \in I' = I \times \B$, let the corresponding input vector be 
\begin{equation}
\ket{v_{\i,b}} 
= R(i^b \ket{v_\i})
= \begin{cases}
\Re(\ket{v_\i}) \otimes \ket 0 + \Im(\ket{v_\i}) \otimes \ket 1 & \text{if $b = 0$} \\
-\Im(\ket{v_\i}) \otimes \ket 0 + \Re(\ket{v_\i}) \otimes \ket 1 & \text{if $b = 1$}
\end{cases}
\end{equation}

Fix a cost vector $s \in [0, \infty)^n$.  Let $A = \sum_{\i \in I} \ketbra{v_\i}{\i}$, $A' = \sum_{(\i,b) \in I'} \ketbra{v_{\i,b}}{\i, b}$ and $\Pi(x) = \sum_{\i \in I(x)} \ketbra \i \i$.

The most interesting case to check is when $f_{P'}(x) = 0$.  Let $\ket{w'}$ be an optimal witness, i.e., $\braket{w'}{t'} = 1$, $(\Pi(x) \otimes \identity) A'^\dagger \ket{w'} = 0$ and $\wsizexS {P'} x s = \norm{(\Pi(x) \otimes \identity) A'^\dagger \ket{w'}}{}^2$.  Then since the entries of $P'$'s target and input vectors are real, $\Re(\ket{w'})$ is also a witness for $f_{P'}(x) = 0$, with equal or better witness size, so assume that $\ket{w'} = \Re(\ket{w'})$.  Let $\ket w$ be such that $R(\ket w) = \ket{w'}$.  Then $\Re(\braket w t) = \braket{w'}{t'} = 1$ so $\abs{\braket w t} \geq 1$; there may be a nonzero imaginary part to $\braket w t$.  Also, $A'^\dagger \ket{w'} = R( A^\dagger \ket w )$, so $\ket w$ is a witness for $f_P(x) = 0$ and $\norm{(S \otimes \identity) A'^\dagger \ket{w'}}{}^2 = \norm{S A^\dagger \ket w}{}^2$.  Hence $\wsizexS P x s \leq \wsizexS {P'} x s / \abs{\braket w t} \leq \wsizexS {P'} x s$.  

The arguments in the other cases are similar.  In every case, witnesses for $P$ and for $P'$ have a simple correspondence.  If $\ket w$ is a witness for $f_P(x) = b \in \B$, then $\ket{w'} = R(\ket w)$ will be a witness for $f_{P'}(x) = b$.  If $\ket{w'}$ is a witness for $f_{P'}(x) = b$, then so is $\Re(\ket{w'})$, and letting $\ket w$ be such that $R(\ket w) = \Re(\ket{w'})$, $\ket w$ will be a witness for $f_P(x) = b$.  We omit the details.
\end{proof}

\lemref{t:spanprogramCtoR} implies that there would have been no loss in generality in defining span programs over $\R$ instead of over $\C$.  In some cases, though, it is convenient to work over $\C$ to have smaller span programs.  For example,~\cite{ReichardtSpalek08spanprogram} gives a span program for the three-majority function with three input vectors and optimal witness size two, and one can verify that this is impossible for span programs over $\R$.  

The idea of the construction in \lemref{t:spanprogramCtoR} is essentially to replace every entry $a$ of $A = \sum_{i \in I} \ketbra{v_i}{i}$ by the $2 \times 2$ block $\left(\begin{smallmatrix}\Re a & -\Im a \\ \Im a & \Re a \end{smallmatrix}\right)$ to simulate multiplication of complex numbers over the reals.  The proof can be slightly simplified by assuming, without loss of generality, that $\ket t = \ket 1$, a basis vector for $V$.  We have avoided doing so, though, in order to illustrate a special case of how span programs can be defined over matrices.  For $j, k, l \in \N$, \defref{t:spanprogramdef} and \defref{t:wsizedef} naturally extend to allowing the target to be a vector of $j \times l $ matrices and the input vectors to have entries that are $j \times k$ matrices.  The program evaluates to $1$ if there exists a way of summing available input vectors multiplied by $k \times l$ matrices to reach the target.   Provided that an entry-wise matrix inner product is used in the generalization of \defref{t:wsizedef}, such programs can be simulated over $\R$ without changing the witness size.  This generalization can be useful for finding span programs when we would like to work with a higher-dimensional representation of a function's symmetry group.  For example, this technique has been used to find an optimal span program for a Hamming-weight threshold function in~\cite[Example~5.1]{ReichardtSpalek08spanprogram}.  

Let us conclude this section with one last span program manipulation: 

\begin{lemma} \label{t:wsizescaling}
For $P$ a span program and $M$ any invertible linear transformation on $P$'s inner product space $V$, $f_P$ and the witness size of $P$ are invariant under applying $M$ to the target vector and all input vectors.  
\end{lemma}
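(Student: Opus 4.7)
The plan is to let $P'$ denote the span program obtained from $P$ by replacing $\ket t$ with $M \ket t$ and each $\ket{v_i}$ with $M \ket{v_i}$, so that $A' = M A$, and then to exhibit a witness-preserving bijection in each of the two cases $f_P(x) = 1$ and $f_P(x) = 0$. The projections $\Pi(x)$ and the cost operator $S$ act on $\C^{\abs I}$ and are therefore unchanged by the transformation on $V$.

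First I would show $f_{P'} = f_P$. Since $A' \Pi(x) = M A \Pi(x)$ and $M$ is invertible, $\Range(A' \Pi(x)) = M \Range(A \Pi(x))$, so $M \ket t \in \Range(A' \Pi(x))$ if and only if $\ket t \in \Range(A \Pi(x))$. By \lemref{t:witnesses}, $f_{P'}(x) = f_P(x)$ for every $x \in \B^n$.

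Next, for $f_P(x) = 1$, the true-witness equation for $P'$ is $M A \Pi(x) \ket w = M \ket t$, which by invertibility of $M$ is exactly the equation $A \Pi(x) \ket w = \ket t$ for $P$. The feasible set of $\ket w$ is therefore identical, and since the objective $\norm{S \ket w}^2$ in \eqnref{e:wsizetrue} does not involve the target or input vectors at all, $\wsizexS{P'} x s = \wsizexS P x s$.

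For $f_P(x) = 0$, the false-witness constraints for $P'$ are $\bra t M^\dagger \ket{w'} = 1$ and $\Pi(x) (MA)^\dagger \ket{w'} = 0$. I would apply the change of variable $\ket{\tilde w} = M^\dagger \ket{w'}$, equivalently $\ket{w'} = (M^\dagger)^{-1} \ket{\tilde w}$. This is a bijection of $V$ onto itself carrying the constraint set for $P'$ onto $\{\ket{\tilde w} : \braket t {\tilde w} = 1, \, \Pi(x) A^\dagger \ket{\tilde w} = 0\}$, which is the feasible set for $P$. Moreover $S (MA)^\dagger \ket{w'} = S A^\dagger M^\dagger \ket{w'} = S A^\dagger \ket{\tilde w}$, so the objective \eqnref{e:wsizefalse} is preserved exactly. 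Hence $\wsizexS{P'} x s = \wsizexS P x s$ in this case as well. Since witness size in each case is unchanged for every $x$ and every $s$, the overall witness size $\wsizeSD{P'}{s} = \wsizeSD P s$ follows. There is no real obstacle; the only observation needed is that $M$ cancels in the true case and can be absorbed by pullback via $(M^\dagger)^{-1}$ in the false case.
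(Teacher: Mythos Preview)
Your proof is correct and follows essentially the same approach as the paper's: the true-case witnesses are literally the same set because $M$ cancels, and the false case is handled by the change of variables $\ket{w'} \mapsto M^\dagger \ket{w'}$. The only cosmetic difference is that you invoke \lemref{t:witnesses} explicitly for $f_{P'} = f_P$, whereas the paper reads this off directly from the equivalence of the witness equations.
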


\begin{proof}
Let $P'$ be the span program in which $M$ has been applied to $P$'s target and input vectors.  The claim is that for all $x \in \B^n$ and $s \in [0, \infty)^n$, $f_{P'}(x) = f_P(x)$ and $\wsizexS P x s = \wsizexS {P'} x s$.  Indeed, 
the conditions $A \Pi(x) \ket w = \ket t$ and $(M A) \Pi(x) \ket w = M \ket t$ are equivalent.  This implies that $f_P = f_{P'}$ and, when $f_P(x) = 1$, $\wsizexS P x s = \wsizexS {P'} x s$, by definition Eq.~\eqnref{e:wsizetrue}.  To finish the proof, note that when $f_P(x) = 0$, 
\begin{equation}
\min_{\substack{\ket{w'} : \braket{t}{w'} = 1 \\ \Pi(x) A^\adjoint \ket{w'} = 0}} \norm{S A^\adjoint \ket{w'}}{}^2
=
\min_{\substack{\ket{w'} : \bra t M^\adjoint \ket{w'} = 1 \\ \Pi(x) (M A)^\adjoint \ket{w'} = 0}} \norm{S (M A)^\adjoint \ket{w'}}{}^2
\end{equation}
by the change of variables $\ket{w'} \rightarrow M^\dagger \ket{w'}$.  By Eq.~\eqnref{e:wsizefalse}, $\wsizexS P x s = \wsizexS {P'} x s$.   
\end{proof}

\section{Canonical span programs} \label{s:canonicalspanprograms}

For every function $f : \B^n \rightarrow \B$, there exists a span program $P$ computing $f_P = f$.  Indeed, one can, for example, expand $f$ into a circuit that uses OR gates and NOT gates.  Each OR gate can be implemented by a trivial span program with $\ket t = \ket{v_i} = (1) \in \C$.  Appealing to \lemref{t:dualspanprogram} to negate this span program, and using \thmref{t:spanprogramcomposition} to compose the span programs following the circuit, gives a span program for $f$.  However, this naive span program will generally not have the optimal witness size among all span programs computing $f$.  Moreover, there is considerable freedom in the definition of span programs, so unless $f$ is very simple, it can be quite difficult to find an optimal span program.  

In this section we prove that it suffices to search over span programs with a restricted form, so-called canonical span programs.  Combined with \lemref{t:spanprogramCtoR}, this implies that it suffices to look for real, canonical span programs.  In \secref{s:spanprogramsdp}, we will develop a semi-definite program, inspired by this reduction, for computing the optimal span program for a function.  Canonical span programs were originally defined by Karchmer and Wigderson~\cite{KarchmerWigderson93span}, but their significance for developing quantum algorithms was not at first appreciated.  

\begin{definition}[Canonical span program~\cite{KarchmerWigderson93span}] \label{t:spanprogramcanonicaldef}
Let $P$ be a span program computing $f_P : \B^n \rightarrow \B$, with inner product space $V$, target vector $\ket t$ and input vectors $\ket{v_i}$ for $i \in \Ifree \sqcup \bigsqcup_{j \in [n], b \in \B} I_{j,b}$.  $P$ is canonical if:
\begin{itemize}
\item
$\Ifree = \emptyset$.  Thus $P$ is strict (\defref{t:strictmonotonerealdef}).  
\item
$V = \C^{F_0}$ where $F_0 = \{ x \in \B^n : f_P(x) = 0 \}$.  
\item
In the orthonormal basis $\{ \ket x : x \in F_0 \}$ for $V$, the target $\ket t$ is given by $\ket t = \sum_{x \in F_0} \ket x$, and 
\item
For all $x \in F_0$, $j \in [n]$ and $i \in I_{j, x_j}$, $\braket{x}{v_i} = 0$.  
\end{itemize}
\end{definition}

\begin{theorem} \label{t:spanprogramcanonical}
For any cost vector $s \in [0, \infty)^n$, a span program $P$ can be converted to a canonical span program $\hat P$ that computes the same function $f_{\hat P} = f_P$, with $\wsizexS{\hat P}{x}{s} \leq \wsizexS P x s$ for all $x \in \B^n$.  
In fact, for all $x \in \B^n$ with $f_P(x) = 0$, $\wsizexS{\hat P}{x}{s} = \wsizexS P x s$, with $\ket x$ itself an optimal witness for $f_{\hat P}(x) = 0$.  

Moreover, $\hat P$ uses the same input vector index sets $I_{j,b}$ as $P$, so in particular if $P$ is monotone then $\hat P$ is also monotone.  If $P$ is real, then so is $\hat P$.  
\end{theorem}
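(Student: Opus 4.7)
The plan is to construct $\hat P$ from $P$ by a single linear map that sends optimal false-side witnesses to the coordinate basis vectors of $\C^{F_0}$, where $F_0 = \{x \in \B^n : f_P(x) = 0\}$. First, by \propref{t:spanprogramrelaxed} I may assume $P$ is strict, so that $I = \bigsqcup_{j,b} I_{j,b}$; this reduction keeps the labels $I_{j,b}$ intact, so monotonicity inheritance will come for free. For each $x \in F_0$, fix an optimal false-side witness $\ket{w'_x} \in V$ with $\braket{t}{w'_x} = 1$, $\Pi(x) A^\adjoint \ket{w'_x} = 0$, and $\wsizexS P x s = \norm{S A^\adjoint \ket{w'_x}}^2$.

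Then define $T = \sum_{x \in F_0} \ketbra{x}{w'_x} : V \to \C^{F_0}$, and take $\hat P$ to have inner-product space $\C^{F_0}$, target $\ket{\hat t} = T \ket t = \sum_{x \in F_0} \ket x$, and input vectors $\ket{\hat v_i} = T \ket{v_i}$ labeled by the same index sets $I_{j,b}$ as $P$. By construction $\hat P$ is strict and has the prescribed target, so canonical form reduces to checking $\braket{x}{\hat v_i} = 0$ whenever $x \in F_0$ and $i \in I_{j, x_j}$; but that quantity equals $\braket{w'_x}{v_i}$, which vanishes because $i \in I(x)$ and $\ket{w'_x}$ is annihilated by $\Pi(x) A^\adjoint$.

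The identity $f_{\hat P} = f_P$ together with the inequality $\wsizexS{\hat P} x s \leq \wsizexS P x s$ follows by pushing and pulling witnesses through $T$. For $x \notin F_0$, any true-side witness $\ket w$ in $P$ satisfies $\hat A \Pi(x) \ket w = T A \Pi(x) \ket w = T \ket t = \ket{\hat t}$, so it remains a witness in $\hat P$ with the same $\norm{S\ket w}^2$. For $x \in F_0$, the basis vector $\ket x$ is a false-side witness for $\hat P$ by the canonical orthogonality already proved, and a direct calculation using $\braket{\hat v_i}{x} = \overline{\braket{w'_x}{v_i}}$ gives $\norm{S \hat A^\adjoint \ket x}^2 = \norm{S A^\adjoint \ket{w'_x}}^2 = \wsizexS P x s$.

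The part that needs the most care is the reverse inequality $\wsizexS{\hat P} x s \geq \wsizexS P x s$ for $x \in F_0$, which is what promotes $\ket x$ from feasible to optimal. For this I would pull back an arbitrary false-side witness $\ket u$ of $\hat P$ through $T^\adjoint = \sum_y \ketbra{w'_y}{y}$, setting $\ket w = T^\adjoint \ket u$; a direct check gives $\braket{t}{w} = \sum_y \braket{t}{w'_y}\braket{y}{u} = \braket{\hat t}{u} = 1$ and $\braket{v_i}{w} = \braket{\hat v_i}{u} = 0$ for every $i \in I(x)$, so $\ket w$ is a witness in $P$ with $\norm{SA^\adjoint \ket w}^2 = \norm{S \hat A^\adjoint \ket u}^2$ term by term, and minimizing over $\ket u$ yields the claimed bound. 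Preservation of realness is handled by choosing each $\ket{w'_x}$ real in a real basis of $V$, which is possible because the optimality problem for $\ket{w'_x}$ is a real-coefficient convex quadratic program (and falls back to \lemref{t:spanprogramCtoR} if preferred); monotonicity is immediate from the fact that the labels $I_{j,b}$ are inherited unchanged. The main obstacle is thus the pullback argument: without it one only sees that $\ket x$ is feasible, not that it is optimal.
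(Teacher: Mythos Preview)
Your proof is correct and follows essentially the same route as the paper's: define the linear map $T = \sum_{x \in F_0} \ketbra{x}{w'_x}$, push target and input vectors through it, verify canonicity, push true-side witnesses forward, and pull false-side witnesses back via $\ket u \mapsto \sum_y \braket{y}{u}\ket{w'_y}$ to show optimality of $\ket x$. The only cosmetic difference is that you first strip free inputs via \propref{t:spanprogramrelaxed}, whereas the paper keeps them and observes that $T$ annihilates every $\ket{v_i}$ with $i \in \Ifree$; both routes are equally valid and yield the same $\hat P$.
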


\begin{proof}
This theorem is analogous to~\cite[Theorem~6]{KarchmerWigderson93span}, and we use the same conversion procedure, except we additionally analyze the witness size.  

Let $P$ have target vector $\ket t \in V$ and input vectors $\ket{v_i}$ for $i \in I = \Ifree \cup \bigcup_{j \in [n], b \in \B} I_{j,b}$.  Recall the definitions $A = \sum_{i \in I} \ketbra{v_i}{i}$, $I(x) = \Ifree \cup \bigcup_{j \in [n]} I_{j, x_j}$ and $\Pi(x) = \sum_{i \in I(x)} \ketbra i i$.  Fix $s \in [0, \infty)^n$ and let $S = \sum_{j \in [n], b \in \B, i \in I_{j,b}} \sqrt{s_j} \ketbra i i$.

For $x \in \B^n$, let $\ket{w(x)}$ or $\ket{w'(x)}$ be optimal witnesses for $f_P(x)$ being 1 or 0, respectively, with costs $s$.  That is, let 
\begin{equation}
\begin{array}{r@{\;=\;}l@{\qquad}l}
\ket{w(x)} & {\arg \min}_{\ket w : A \Pi(x) \ket w = \ket t} \norm{S \ket w}^2 & \text{if $f_P(x) = 1$} \\[6pt]
\ket{w'(x)} & {\arg \min}_{\substack{\ket{w'} : \braket{t}{w'} = 1\\ \Pi(x) A^\dagger \ket{w'} = 0}} \norm{S A^\dagger \ket{w'}}^2 & \text{if $f_P(x) = 0$}
\end{array}
\end{equation}
(See~\cite[Lemma~A.3]{ReichardtSpalek08spanprogram} for explicit formulas for $\ket{w(x)}$ and $\ket{w'(x)}$.)

Let $F_0 = \{ x \in \B^n : f_P(x) = 0 \}$.  To construct $\hat P$ from $P$, simply apply to $P$'s target and input vectors the map $\sum_{x \in F_0} \ketbra{x}{w'(x)} \in \L(V, \C^{F_0})$.  Then 
\begin{itemize}
\item
The target vector becomes $\ket{\hat t} = \sum_{x \in F_0} \ket x \braket{w'(x)}{t} = \sum_{x \in F_0} \ket x \in \C^{F_0}$, as required for a canonical span program.  The input vectors become, for $i \in I$, $\ket{\hat{v}_i} = \sum_{x \in F_0} \ket x \braket{w'(x)}{v_i}$.  
\item
For any $x \in F_0$ and $i \in I(x)$, since $\braket{w'(x)}{v_i} = 0$, $\braket{x}{\hat v_i} = 0$.  
\item
In particular, for $i \in \Ifree$, $\braket{x}{\hat v_i} = 0$ for all $x \in F_0$.  Thus $\ket{\hat v_i} = 0$, so the free input vectors may be discarded.  
\end{itemize}
Therefore $\hat P$ is a canonical span program.  $\hat P$ is monotone if $P$ is monotone, and $\hat P$ is real if $P$ is real.  

Let $\hat A = \sum_{i \in I} \ketbra{\hat v_i}{i} = \sum_{x \in F_0} \ketbra{x}{w'(x)} A$.  

For $x \in F_0$, note that $\braket{\hat t}{x} = 1$ and 
\begin{equation} \label{e:spanprogramcanonicalfalsewitness}
\hat{A}^\dagger \ket x = A^\dagger \ket{w'(x)}
 \enspace .
\end{equation}
In particular, $\Pi(x) \hat{A}^\dagger \ket x = 0$, so $\ket x$ is a witness for $f_{\hat P}(x) = 0$.  Also, $\wsizexS{\hat P}{x}{s} \leq \norm{S \hat{A}^\dagger \ket x}{}^2 = \norm{S A^\dagger \ket{w'(x)}}{}^2 = \wsizexS P x s$.  In fact, $\ket x$ is an optimal witness for $f_{\hat P}(x) = 0$.  Indeed, assume otherwise, and let $\ket{\hat u} = \sum_{y \in F_0} \hat u_y \ket y$ satisfy $\braket{\hat t}{\hat u} = \sum_{y \in F_0} \hat u_y = 1$, $\Pi(x) \hat A^\dagger \ket{\hat u} = 0$ and $\norm{S \hat A^\dagger \ket{\hat u}}{}^2 < \norm{S \hat A^\dagger \ket x}{}^2$.  Let $\ket u = \sum_{y \in F_0} \hat u_y \ket{w'(y)}$, so $A^\dagger \ket u = \hat A^\dagger \ket{\hat u}$.  Then $\braket t u = 1$, $\Pi(x) A^\dagger \ket u = 0$, and $\norm{S A^\dagger \ket u}{}^2 = \norm{S \hat A^\dagger \ket{\hat u}}{}^2 < \wsizexS P x s$, a contradiction.  

Next consider an $x \in \B^n$ such that $f_P(x) = 1$.  Then 
\begin{equation}\begin{split}
\hat A \Pi(x) \ket{w(x)}
&= \sum_{y \in F_0} \ketbra{y}{w'(y)} A \Pi(x) \ket{w(x)} \\
&= \sum_{y \in F_0} \ket y \braket{w'(y)}{t} \\
&= \ket{\hat t}
 \enspace .
\end{split}\end{equation}
Thus $\ket{w(x)}$ is a witness for $f_{\hat P}(x) = 1$, and $\wsizexS{\hat P}{x}{s} \leq \norm{S \ket{w(x)}}^2 = \wsizexS P x s$.  
\end{proof}

Note that the canonical span program $\hat P$ from \thmref{t:spanprogramcanonical} depends on the cost vector $s$.  In contrast, the strict span program $P'$ from \propref{t:spanprogramrelaxed} has witness size equal to that of $P$ for all $s \in [0, \infty)^n$.

\section{Span program witness size and the general adversary bound} \label{s:spanprogramsdp}

In this section, we will use \thmref{t:spanprogramcanonical} to formulate a semi-definite program (SDP) for the optimal span program computing a boolean function $f$.  Remarkably, this SDP turns out to be exactly the dual of the SDP that defines the general adversary bound for $f$ (\defref{t:adversarydef}).  Thus the optimal span program witness size is exactly equal to the general adversary bound.  This result has several corollaries, in quantum algorithms and in complexity theory, that we give in \secref{s:corollaries}.  

This result may be somewhat surprising, because the optimal span programs known previously were all for functions $f$ with $\ADV(f) = \ADVpm(f)$~\cite{ReichardtSpalek08spanprogram}.  It is not clear why earlier attempts to find optimal span programs did not succeed for any function $f$ with $\ADV(f) < \ADVpm(f)$.  

\begin{theorem} \label{t:spanprogramSDP}
For any function $f : \D \rightarrow \B$, with $\D \subseteq \B^n$, and any cost vector $s \in [0, \infty)^n$, 
\begin{equation} \label{e:spanprogramSDP}
\inf_{\large\substack{P : \, f_P\vert_\D = f}} \wsizeSD P s = \ADVpm_s(f)
 \enspace ,
\end{equation}
where the infimum is over span programs $P$ that compute a function agreeing with $f$ on $\D$.  Moreover, this infimum is achieved.  
\end{theorem}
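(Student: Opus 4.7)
My plan is to cast the span-program optimization as a semi-definite program and identify its Lagrangian dual with the $\ADVpm_s$ SDP from \defref{t:adversarydef}, thereby establishing both inequalities at once by strong duality.

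\emph{Parametrization.} By \thmref{t:spanprogramcanonical} (extended to partial $f$ by letting $F_b = \{x \in \D : f(x) = b\}$), the infimum on the left of \eqnref{e:spanprogramSDP} may be taken over canonical span programs. Modulo isometric changes of basis within each input-index set $I_{j, b}$ (which do not affect the witness size, by \lemref{t:wsizescaling}), such a canonical $P$ is determined by the Gram matrices $X_{j, b} = \sum_{i \in I_{j, b}} \ketbra{v_i}{v_i}$, which are PSD operators on $\C^{F_0}$ supported on $W_{j, b} := \Span\{\ket y : y \in F_0,\, y_j = \bar b\}$; conversely, any such family $\{X_{j, b}\}$ arises from a canonical $P$ by Cholesky factorization.

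\emph{Witness size in closed form.} Since $\ket x$ is an optimal witness for $x \in F_0$ (\thmref{t:spanprogramcanonical}), and a Moore--Penrose least-norm calculation gives the optimal witness for $x \in F_1$,
\[
\wsizexS P x s = \sum_{j \in [n]} s_j \bra{x} X_{j, \bar x_j} \ket{x} \ \text{for $x \in F_0$}, \qquad \wsizexS P x s = \bra{t} \Bigl( \sum_{j \in [n]} X_{j, x_j}/s_j \Bigr)^{\!+} \ket{t} \ \text{for $x \in F_1$.}
\]
Rewriting the pseudoinverse via a Schur complement, and using the rescaling identity \eqnref{e:wsizescalet} to absorb the overall normalization of $\ket{t}$, converts the problem into a linear SDP: minimize $\mu$ over PSD $\{X_{j, b}\}$ with the above support constraint, subject to a scalar inequality for each $x \in F_0$ and a Schur-complement LMI for each $x \in F_1$. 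Computing the Lagrangian dual, the multipliers for the $F_0$ and $F_1$ constraints combine into a symmetric matrix $\Gamma$ on $\D \times \D$ whose same-$f$-value blocks are forced to vanish by the LMI structure---so $\Gamma$ is automatically an adversary matrix; the stationarity conditions in the $X_{j, b}$ directions yield exactly the entry-wise operator-norm bound $\norm{\Gamma \circ \Delta_j} \le s_j$, and the dual objective reduces to $\norm{\Gamma}$. This is precisely the $\ADVpm_s(f)$ SDP.

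\emph{Strong duality and attainment.} Slater's condition holds on the primal side (choose each $X_{j, b}$ to be a full-rank PSD operator on $W_{j, b}$, scaled sufficiently large), so strong duality gives the equality \eqnref{e:spanprogramSDP}. The dual feasible region $\{\Gamma : \norm{\Gamma \circ \Delta_j} \le s_j\ \forall j\}$ is compact, so an optimal $\Gamma^\star$ exists; complementary slackness then produces primal optimizers $\{X_{j, b}^\star\}$, which by Cholesky factorization define a canonical span program achieving the infimum. The main obstacle is the duality calculation itself: the $F_1$ constraints only become linear after introducing Schur complements, and the auxiliary PSD blocks they introduce must dualize cleanly so as to give exactly the operator-norm bound $\norm{\Gamma \circ \Delta_j} \le s_j$ rather than a weaker block-diagonal relaxation. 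Coordinating this with the rescaling freedom \eqnref{e:wsizescalet}---which is what lets a single $\mu$ simultaneously control both the $F_0$ and $F_1$ witness sizes---is the other delicate point.
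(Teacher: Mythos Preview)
Your plan diverges from the paper at the parametrization step, and the divergence creates a gap you have not closed.  You parametrize a canonical span program by the Gram matrices $X_{j,b}$ on $\C^{F_0}$ alone, leaving the $F_1$ witnesses implicit via the pseudoinverse $\bra t \big(\sum_j X_{j,x_j}/s_j\big)^+ \ket t$.  After you pass to the Schur-complement LMI for each $x\in F_1$, the Lagrange multiplier for that constraint is a full PSD block $Z_x = \left(\begin{smallmatrix} M_x & \ket{m_x} \\ \bra{m_x} & \gamma_x \end{smallmatrix}\right)$ with $M_x$ an $\abs{F_0}\times\abs{F_0}$ matrix---not a vector.  Your assertion that ``the multipliers for the $F_0$ and $F_1$ constraints combine into a symmetric matrix $\Gamma$'' and that stationarity in $X_{j,b}$ ``yields exactly $\norm{\Gamma\circ\Delta_j}\le s_j$'' is the whole content of the theorem, and you have not shown it.  Stationarity in $X_{j,b}$ gives a PSD inequality on $W_{j,b}$ relating the diagonal $\sum_y \beta_y\ketbra y y$ to $\sum_x M_x$; to recover the adversary-bound constraint you would still need to argue that rank-one $M_x = \gamma_x^{-1}\ketbra{m_x}{m_x}$ are optimal and then perform the $\Gamma_{xy} = (\beta_y\gamma_x)^{-1/2}\braket{y}{m_x}$ change of variables---none of which is automatic.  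Handling the rescaling freedom \eqnref{e:wsizescalet} simultaneously with this adds another layer.

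The paper sidesteps all of this with one move you are missing: it introduces vectors $\ket{v_{xj}}$ for $x\in F_1$ as \emph{free} variables alongside the $\ket{v_{yj}}$ for $y\in F_0$.  The point is that for $x\in F_1$, a witness $\ket w$ with $A\Pi(x)\ket w = \ket t$ decomposes as $\ket w = \sum_j \ket{j,x_j}\otimes\ket{v_{xj}}$, and the constraint $A\ket w = \ket t$ becomes exactly the bilinear condition $\sum_{j:\,x_j\neq y_j}\braket{v_{yj}}{v_{xj}} = 1$ for every $y\in F_0$.  With this symmetric parametrization the span-program optimum is
\[
\inf_{\{\ket{v_{xj}}\}}\ \max_{x\in\D}\ \sum_{j} s_j\,\norm{\ket{v_{xj}}}^2
\quad\text{subject to}\quad
\sum_{j:\,x_j\neq y_j}\braket{v_{xj}}{v_{yj}} = 1\ \ \forall\,(x,y)\in F_1\times F_0,
\]
which is \emph{literally} the Cholesky decomposition of the dual SDP for $\ADVpm_s(f)$ already established in \thmref{t:adversarydual}.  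No fresh duality computation is needed; the infimum equals $\ADVpm_s(f)$ and is achieved because that SDP attains its optimum.  Your pseudoinverse/Schur-complement route can presumably be pushed through, but it re-derives a duality that the paper gets for free by choosing the right variables.
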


Before proving \thmref{t:spanprogramSDP}, let us show the following dual characterization of the general adversary bound: 

\begin{theorem} \label{t:adversarydual}
For finite sets $\D \subseteq C^n$, and $E$, let $f: \D \rightarrow E$, and let $s \in [0, \infty)^n$ be a vector of nonnegative costs.  
If either $C = \{0,1\}$ or $E = \{0,1\}$, then the general adversary bound for $f$, with costs $s$, equals 
\begin{equation} \label{e:generaladversarydual}
\ADVpm_s(f) = 
\min_{\large \substack{X \succeq 0 : \\ \forall (x,y) \in F, \,  \sum_{j \in [n] : x_j \neq y_j} \bra{x,j} X \ket{y,j} = 1}}
\max_{\large \substack{x \in \D}} \sum_{j \in [n]} s_j \bra{x,j} X \ket{x,j} 
 \enspace .
\end{equation}
Here $X$ is required to be a positive semi-definite, $(n \abs{\D}) \times (n \abs{\D})$ matrix, with coordinates labeled by $\D \times [n]$, and $F = \{ (x, y) \in \D \times \D : f(x) \neq f(y) \}$.  
The optimum is achieved.  
\end{theorem}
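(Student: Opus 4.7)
The plan is to derive the stated identity as the Lagrangian dual of the semi-definite program that defines $\ADVpm_s(f)$, verifying strong duality via Slater's condition. First, I would rewrite $\ADVpm_s(f)$ from \defref{t:adversarydef} as an explicit primal SDP. Since $\Gamma$ and each $\Delta_j$ are real symmetric, the constraint $\|\Gamma \circ \Delta_j\| \leq s_j$ is equivalent to the pair of linear matrix inequalities $s_j \identity \pm (\Gamma \circ \Delta_j) \succeq 0$. The objective $\|\Gamma\|$ can be expressed bilinearly as $\|\Gamma\| = \max_{u,v : \|u\|=\|v\|=1} u^\dagger \Gamma v$; combining this with the LMIs and the linear equalities $\Gamma_{xy}=0$ for $f(x)=f(y)$ gives a primal SDP whose free variables are $\Gamma$, $u$, and $v$.

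Next, I would form the Lagrangian by introducing PSD multipliers $R_j^{\pm} \succeq 0$ for each of the $2n$ LMIs, scalars $\sigma_{xy}$ for the adversary equalities, and scalars $\mu, \nu \geq 0$ for the unit-norm constraints on $u$ and $v$.  Stationarity in $\Gamma$ forces a linear identity of the form
\begin{equation*}
\sum_{j : x_j \neq y_j} \bra{x} (R_j^{-} - R_j^{+}) \ket{y} = -\tfrac{1}{2}(u_x v_y + u_y v_x) \qquad \text{for } (x,y) \in F.
\end{equation*}
Consolidating the PSD multipliers $R_j^{+} + R_j^{-}$ for $j \in [n]$ into a single block-diagonal PSD matrix $X$ on $\C^{\D \times [n]}$, and rescaling to normalize the right-hand side of the stationarity equation to $1$, yields exactly the dual SDP claimed: the diagonal trace terms $s_j (R_j^{+}+R_j^{-})_{xx}$ become the objective $\max_x \sum_j s_j \bra{x,j} X \ket{x,j}$, and the stationarity equation becomes the normalization $\sum_{j : x_j \neq y_j} \bra{x,j} X \ket{y,j} = 1$. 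The hypothesis $C = \{0,1\}$ or $E = \{0,1\}$ enters at this consolidation step: in the boolean-output case, $\Gamma$ has a clean bipartite block structure that aligns the $R_j^{\pm}$ naturally with $X$, while in the boolean-input case the mask $\Delta_j$ factors cleanly so that the sum over $\{j : x_j \neq y_j\}$ captures the full contribution of each pair $(x,y) \in F$.

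Strong duality, and attainment of both optima, follows from Slater's condition: $\Gamma = 0$ is strictly primal-feasible (both LMIs hold with positive slack since $s_j \identity \succ 0$), and the primal value is bounded above (for instance $\|\Gamma\| \leq \sum_j \|\Gamma \circ \Delta_j\| \leq \sum_j s_j$ in the boolean-input case, with a parallel bound in the boolean-output case). The main obstacle I anticipate is the algebra in the consolidation step — matching the bilinear expression $u_x v_y + u_y v_x$ arising from the $\Gamma$-stationarity against the single matrix $X$, and verifying that the rescaling which normalizes the right-hand side to $1$ absorbs precisely the $u, v$ degrees of freedom so that the dual objective collapses to the stated minimax form. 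Once this identification is carried out cleanly, the remainder is routine SDP manipulation, and equality of optima gives both directions of the claim.
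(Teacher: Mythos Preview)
Your overall strategy---SDP duality with strong duality via strict feasibility---is the right one and matches the paper. The gap is in your ``consolidation'' step. You introduce two LMIs $s_j\identity \pm (\Gamma\circ\Delta_j)\succeq 0$ and hence two multipliers $R_j^\pm$; stationarity in $\Gamma$ then constrains $R_j^- - R_j^+$, while the block you place in $X$ is $R_j^+ + R_j^-$. These are different matrices, and converting a constraint on the difference into one on the sum is precisely the obstruction that separates the general dual (the paper's Eq.~\eqnref{e:generaladversarydualgeneral}, with $X_j - Y_j$ in the constraint and $X_j + Y_j$ in the objective) from the simpler form you want. Saying the bipartite structure ``aligns $R_j^\pm$ naturally with $X$'' is not a mechanism; without a concrete argument here your dual is Eq.~\eqnref{e:generaladversarydualgeneral}, not Eq.~\eqnref{e:generaladversarydual}.

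The paper dualizes in the opposite direction (starting from the $X$-SDP) and makes the role of the hypothesis explicit via a one-line observation (\claimref{t:bipartitematrix}): for a bipartite Hermitian matrix $M$, the condition $M\preceq\identity$ is equivalent to $\|M\|\leq 1$. Under either boolean hypothesis each $\Gamma\circ\Delta_j$ is bipartite, so a \emph{single} LMI $\Gamma_j\preceq s_j\identity$ already encodes $\|\Gamma_j\|\leq s_j$, and only one multiplier per $j$ ever appears. That is the missing idea: invoke bipartiteness \emph{before} dualizing, so the two-multiplier problem never arises. If you redo your direction with this change, you get one $X_j\succeq 0$ per $j$, assemble them block-diagonally, and then note that in the target SDP only same-$j$ entries of $X$ occur, so block-diagonal $X$ is without loss of generality. (A smaller point: your bilinear $\max_{u,v} u^\dagger\Gamma v$ adds variables you must later eliminate; the paper sidesteps this by dualizing from the $X$ side and recognizing the resulting quadratic form $\sum_{(x,y)\in F}\sqrt{\beta_x\beta_y}\,\bra{x}\Gamma\ket{y}$ as $\|\Gamma\|$ after absorbing signs into $\Gamma$.)
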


\begin{proof}
The proof is by a standard application of duality theory to the semi-definite program given in \defref{t:adversarydef}.  Nonetheless, this expression for $\ADVpm_s(f)$ is new, and is somewhat simpler than the expression that was known before, Eq.~\eqnref{e:generaladversarydualgeneral} below.  Therefore we include a proof, based on the following immediate observation:  

\begin{claim} \label{t:bipartitematrix}
Let $M = \sum_{j, k \in [m]} M_{jk} \ketbra j k \in \L(\C^{[m]})$ be an $m \times m$ Hermitian matrix.  Assume that either $M$ is entry-wise nonnegative, i.e., $M_{jk} \geq 0$ for all $j, k \in [m]$, or that $M$ is bipartite, i.e., for some $l \in [m-1]$, $M = \sum_{j \leq l, k > l} (M_{jk} \ketbra j k + M_{kj} \ketbra k j)$.  Then $M \preceq \identity$ if and only if $\norm M \leq 1$.  
\end{claim}

Taking the dual of the SDP on the right-hand side of Eq.~\eqnref{e:generaladversarydual}, we obtain 
\begin{equation} \label{e:sdp2}
\max_{\large \substack{ \tilde \Gamma = \sum_F \alpha_{xy} \ketbra x y \\ \{ \beta_x \geq 0 \} }} \; \sum_F \alpha_{xy} \quad \text{such that} \quad \sum_x \beta_x \leq 1, \;\; \forall j, \; \tilde \Gamma_j \preceq s_j \sum_x \beta_x \ketbra x x
 \enspace .
\end{equation}
Here $\tilde \Gamma_j = \tilde \Gamma \circ \Delta_j = \sum_{x,y \in \D : x_j \neq y_j} \ketbra x x \tilde \Gamma \ketbra y y$ as in \defref{t:adversarydef}.  Also, $\tilde \Gamma_j \preceq s_j \sum_{x} \beta_x \ketbra x x$ means that the difference $(s_j \sum_{x \in \D} \beta_x \ketbra x x) - \tilde \Gamma_j$ is a positive semi-definite matrix.  In particular, this constraint implies that if $s_j = 0$ then $\alpha_{xy} = 0$ for all $x, y$ with $x_j \neq y_j$; and that if $\alpha_{xy} \neq 0$, then $\beta_x > 0$ and $\beta_y > 0$.  

Thus we can change variables, letting $\Gamma = \sum_{(x,y) \in \Delta : \alpha_{xy} \neq 0} \frac{\alpha_{xy}}{\sqrt{\beta_x \beta_y}} \ketbra x y$.  Like $\tilde \Gamma$, $\Gamma$ can vary over the set of adversary matrices, i.e., symmetric matrices supported only on those $\ketbra x y$ with $f(x) \neq f(y)$.  The objective function becomes $\sum_F \bra x \Gamma \ket y \sqrt{\beta_x \beta_y}$, and, for $j \in [n]$, the constraint on $\tilde \Gamma_j$ becomes $\Gamma_j \preceq s_j \identity$, where $\Gamma_j = \Gamma \circ \Delta_j$.  

Now if $C = \{0,1\}$, then the matrices $\Delta_j$ are bipartite---perhaps in a permuted basis---so each $\Gamma_j$ is also bipartite.  If $E = \{0,1\}$, then $\Gamma$ is bipartite since it is supported only on $F$.  In either case, by \claimref{t:bipartitematrix} the condition $\Gamma_j \preceq s_j \identity$ is equivalent to $\norm{\Gamma_j} \leq s_j$. Therefore, after changing variables, the SDP becomes
\begin{equation} \label{e:sdp3}
\max_{\large \substack{\text{adversary matrices $\Gamma$} \\ \{ \beta_x \geq 0 \} }} \; \sum_F \bra x \Gamma \ket y \sqrt{\beta_x \beta_y}
\quad \text{such that} \quad \sum_{x} \beta_x \leq 1, \;\; \forall j, \; \norm{\Gamma_j} \leq s_j
 \enspace .
\end{equation}
Since any negative signs on the coordinates of the principal eigenvector of $\Gamma$ can be absorbed into the matrix, without affecting the norms of the $\Gamma_j$, the objective function in Eq.~\eqnref{e:sdp3} simplifies to $\norm{\Gamma}$, so we obtain $\ADVpm(f)$.  Since the dual SDP in Eq.~\eqnref{e:sdp2} is clearly strictly feasible, by the duality principle~\cite[Theorem~3.4]{Lovasz00sdp} the primal optimum equals the dual optimum and the primal optimum is achieved.  Eq.~\eqnref{e:generaladversarydual} follows.  
\end{proof}

For completeness, we state without proof the dual forms of the adversary bounds for the case of functions without a binary input alphabet or boolean codomain: 

\begin{theorem} \label{t:adversarydualgeneral}
For finite sets $\D \subseteq C^n$, and $E$, let $f: \D \rightarrow E$, and let $s \in [0, \infty)^n$.  Let $F = \sum_{x, y \in \D : \, f(x) \neq f(y)} \ketbra x y$.  As in \defref{t:adversarydef}, let $\Delta_j = \sum_{x, y \in \D : x_j \neq y_j} \ketbra x y$ for $j \in [n]$, and let $\circ$ denote entry-wise matrix multiplication.  

Then the nonnegative-weight adversary bound for $f$, with costs $s$, equals
\begin{equation} \label{e:adversarydual}
\ADV_s(f) 
= 
\min_{\large \substack{X_j \succeq 0 : \\ \sum_j X_j \circ \Delta_j \circ F \geq F}} \max_{x \in \D} \sum_{j \in [n]} s_j \bra x X_j \ket x
 \enspace .
\end{equation}
The minimization is over $\abs \D \times \abs \D$ positive semi-definite matrices $X_j$, $j \in [n]$, that satisfy the entry-wise inequality $\sum_j X_j \circ \Delta_j \circ F \geq F$.  (Note that Eq.~\eqnref{e:generaladversarydual} has the same form, except with the requirement that $\sum_j X_j \circ \Delta_j \circ F = F$.)  

The general adversary bound for $f$, with costs $s$, equals 
\begin{equation} \label{e:generaladversarydualgeneral}
\ADVpm_s(f) = 
\min_{\large \substack{X_j, Y_j \succeq 0 : \\ \sum_j (X_j - Y_j) \circ \Delta_j \circ F = F}}
\max_{\large \substack{x \in \D}} \sum_{j \in [n]} s_j \bra{x} (X_j+Y_j) \ket{x} 
 \enspace .
\end{equation}
\end{theorem}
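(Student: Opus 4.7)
The plan is to prove both identities of \thmref{t:adversarydualgeneral} by taking the Lagrangian dual of the semi-definite program on the right-hand side of each equation, and then showing via a change of variables that this dual equals the corresponding adversary bound from \defref{t:adversarydef}. Strong SDP duality then closes the loop. The argument parallels the proof of \thmref{t:adversarydual}, except with one essential difference: without \claimref{t:bipartitematrix} available in the general-alphabet/general-codomain setting, the Hermitian operator-norm constraint $\norm{\Gamma\circ\Delta_j}\leq s_j$ cannot be collapsed to a single PSD inequality. Instead it must be encoded as the two-sided condition $-s_j\identity\preceq\Gamma\circ\Delta_j\preceq s_j\identity$, which is precisely what gives rise to the pair $X_j,Y_j$ of dual PSD matrices in \eqnref{e:generaladversarydualgeneral}.

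For $\ADVpm_s(f)$ I would take the right-hand side of \eqnref{e:generaladversarydualgeneral} as the primal, rewriting $\max_x$ via a scalar $t$ subject to $t\geq\sum_j s_j\bra x (X_j+Y_j)\ket x$ for each $x\in\D$. The dual multipliers are $\mu_x\geq 0$ for these inequalities and a symmetric matrix $\alpha$ supported on $F$ for the equality $\sum_j(X_j-Y_j)\circ\Delta_j\circ F=F$. Stationarity in $t$ forces $\sum_x\mu_x=1$; infimizing over $X_j,Y_j\succeq 0$ forces
\[
-s_j\,\mathrm{diag}(\mu)\preceq\alpha\circ\Delta_j\preceq s_j\,\mathrm{diag}(\mu)
\]
for every $j\in[n]$; and the residual dual objective is $\sum_{(x,y)\in F}\alpha_{xy}$. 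The change of variables $\alpha_{xy}=\sqrt{\mu_x\mu_y}\,\Gamma_{xy}$ (restricting to the support of $\mu$ to handle any vanishing entries) transforms the constraint into $-s_j\identity\preceq\Gamma\circ\Delta_j\preceq s_j\identity$, which since $\Gamma\circ\Delta_j$ is Hermitian is exactly $\norm{\Gamma\circ\Delta_j}\leq s_j$. The objective becomes $u^\top\Gamma u$ with $u_x=\sqrt{\mu_x}$, so maximizing over the unit vector $u$ and absorbing its signs into $\Gamma$ (which preserves each $\norm{\Gamma\circ\Delta_j}$, as diagonal $\pm 1$ conjugation commutes with $\circ\Delta_j$) recovers $\norm{\Gamma}$, i.e., $\ADVpm_s(f)$.

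For $\ADV_s(f)$ the only modifications are that the entry-wise nonnegativity $\Gamma\geq 0$ in the primal \defref{t:adversarydef} forces $\Gamma\circ\Delta_j\geq 0$, so \claimref{t:bipartitematrix} eliminates the $Y_j$ multipliers, and the corresponding nonnegativity of $\alpha$ in the primal-dual pairing turns the equality $\sum_j(X_j-Y_j)\circ\Delta_j\circ F=F$ into the one-sided entry-wise inequality $\sum_j X_j\circ\Delta_j\circ F\geq F$, as stated in \eqnref{e:adversarydual}. Strong duality and attainment of the optima follow from Slater's condition applied to the adversary SDP: the point $\Gamma=0$, $\mu_x=1/\abs\D$ is strictly feasible whenever every $s_j>0$, and if some $s_{j^\ast}=0$ one reduces to the two restricted functions $f\vert_{x_{j^\ast}=b}$ as noted immediately after \defref{t:adversarydef}. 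The main obstacle I expect is the careful bookkeeping of signs and symmetric-matrix conventions in the pairing of $\alpha$ with $X_j-Y_j$, so that the equality $\sum_j(X_j-Y_j)\circ\Delta_j\circ F=F$ emerges with exactly the form and normalization stated; once this is pinned down, the remaining SDP manipulations are routine.
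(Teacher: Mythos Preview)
The paper states \thmref{t:adversarydualgeneral} explicitly \emph{without proof} (``For completeness, we state without proof the dual forms\ldots''), so there is no proof in the paper to compare against. Your proposal supplies a correct argument, and it is exactly the natural extension of the paper's proof of \thmref{t:adversarydual}: take the Lagrangian dual of the right-hand-side SDP, do the change of variables $\alpha_{xy}=\sqrt{\mu_x\mu_y}\,\Gamma_{xy}$, and absorb signs into $\Gamma$ to reach $\norm{\Gamma}$. The one genuine new point---that in the general-alphabet, general-codomain setting $\norm{\Gamma\circ\Delta_j}\le s_j$ must be encoded as the two-sided PSD condition $-s_j\identity\preceq\Gamma\circ\Delta_j\preceq s_j\identity$, yielding the pair $(X_j,Y_j)$---is precisely what the paper's parenthetical after Eq.~\eqnref{e:adversarydual} hints at, and you have identified and handled it correctly. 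Your treatment of strong duality (strict feasibility on the adversary side, reduction to restrictions when some $s_{j^*}=0$) mirrors the paper's handling in \thmref{t:adversarydual} and is sound.
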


\begin{proof}[Proof of \thmref{t:spanprogramSDP}]
\lemref{t:spanprogramSDPconstruction} constructs an SDP whose solution is the optimal witness size of a span program computing $f$.  

\begin{lemma} \label{t:spanprogramSDPconstruction}
Let $f : \D \rightarrow \B$, with $\D \subseteq \B^n$, be a partial boolean function.  For $b \in \B$, let $F_b = \{ x \in \D : f(x) = b \}$.  Then for any cost vector $s \in [0, \infty)^n$, 
\begin{equation}\begin{split} \label{e:spanprogramSDPCholesky}
\inf_{\large\substack{P : \, f_P\vert_\D = f}} \wsizeSD P s
&=
\inf_{\large \substack{m \in \N, \\ \{ \ket{v_{x j}} \in \R^m : \, x \in \D, j \in [n] \} \,: \\ \forall (x,y) \in F_0 \times F_1, \, \sum_{j \in [n] : x_j \neq y_j} \braket{v_{x j}}{v_{y j}} = 1}} \max_{\large \substack{x \in \D}} \; \sum_{j \in [n]} s_j \norm{ \ket{v_{x j}} }^2
 \enspace .
\end{split}\end{equation}
\end{lemma}

\begin{proof}
The proof is by establishing a correspondence between solutions to the constraints on the right-hand side of Eq.~\eqnref{e:spanprogramSDPCholesky} and real, canonical span programs computing $f_P\vert_\D = f$ with $\max_{j \in [n], b \in \B} \abs{I_{j,b}} \leq m$.   

First let us prove the $\leq$ direction.  Given a solution $\{ \ket{v_{xj}} \}$, let $P$ be a span program with target $\ket t = \sum_{x \in F_0} \ket x \in \R^{F_0}$ and $I_{j,b} = [m]$ for all $j \in [n]$, $b \in \B$.  These sets are not disjoint, so for $k \in I_{j,b}$, use $\ket{v_{jbk}}$ to denote the corresponding input vector, defined by $\ket{v_{jbk}} = \sum_{x \in F_0 : x_j \neq b} \braket{v_{xj}}{k} \ket x$.  Thus 
\begin{equation} \label{e:spanprogramSDPcanonical}
\begin{split}
A
&:= \sum_{j \in [n], b \in \B, k \in [m]} \ketbra{v_{j b k}}{j,b,k} \\
&= \sum_{x \in F_0, j \in [n]} \ketbra{x}{j, \bar x_j} \otimes \bra{v_{x j}}
 \enspace .
\end{split}\end{equation}

For $x \in F_0$, $\ket{w'} = \ket x$ is a witness for $f_P(x) = 0$; $\braket x t = 1$ but $\braket{x}{v_{j x_j k}} = 0$ for all $j, k$.  The witness size is $\norm{A^\dagger \ket x}{}^2 = \sum_j s_j \norm{\ket{v_{xj}}}^2$.  

For $x \in F_1$, let $\ket w = \sum_{j} \ket {j, x_j} \otimes \ket {v_{xj}}$.  The condition that $\sum_{j : x_j \neq y_j} \braket{v_{y j}}{v_{x j}} = 1$ implies that $\ket w$ is a witness, $A \Pi(x) \ket w = A \ket w = \ket t$, so $f_P(x) = 1$.  The witness size is $\norm{\ket w}^2 = \sum_{j} s_j \norm{\ket{v_{x j}}}^2$.  

Thus $f_P\vert_\D = f$ and $\wsizeSD P s \leq \max_x \sum_j s_j \norm{\ket{v_{xj}}}^2$.  

Now let us prove the $\geq$ direction.  Let $P$ be a span program computing $f_P$, with $f_P\vert_\D = f$.  By \thmref{t:spanprogramcanonical} and \lemref{t:spanprogramCtoR}, we may assume that $P$ is real and in canonical form, and that for each $x \in F_0$, $\ket x$ is an optimal witness for $f_P(x) = 0$: $\wsizexS P x s = \norm{S A^\dagger \ket x}{}^2$.  

Thus the target vector is $\ket t = \sum_{x \in F_0} \ket x$ and the input vectors lie in the inner product space~$\R^{F_0}$.  Let $m = \max_{j \in [n], b \in \B} \abs{I_{j,b}}$.  Without loss of generality, we may assume that $\abs{I_{j,b}} = m$ for all $j \in [n]$ and $b \in \B$.  Indeed, if some index set $I_{j,b}$ is smaller, then we can pad the span program with zero vectors labeled by $(j,b)$ without affecting the witness size.  Therefore, let $I_{j,b} = [m]$ for all $j \in [n]$ and $b \in \B$.  These sets are not disjoint, so for $k \in I_{j,b}$, use $\ket{v_{jbk}}$ to denote the corresponding input vector.  

For $x \in F_0$, note that since the span program is canonical, $\braket{x}{v_{j x_j k}} = 0$ for all $j \in [n]$ and $k \in [m]$.  For $j \in [n]$, let $\ket{v_{xj}} = \sum_{k \in [m]} \braket{ v_{j \bar x_j k} }{x} \ket k$.  Then Eq.~\eqnref{e:spanprogramSDPcanonical} again holds.   Moreover, $\wsizexS P x s = \norm{S A^\dagger \ket x}{}^2 = \sum_{j \in [n]} s_j \norm{\ket{v_{x j}}}^2$.  Thus $\max_{x \in F_0} \sum_j s_j \norm{\ket{v_{xj}}}^2 \leq \wsizeSD P s$.  

For $x \in F_1$, on the other hand, let $\ket{w_x}$ be an optimal witness vector, i.e., satisfying $\ket{w_x} = \Pi(x) \ket{w_x} = \sum_{j \in [n], k \in [m]} \ket{j, x_j, k} \braket{j, x_j, k}{w_x}$, $A \ket{w_x} = \ket t$ and $\wsizexS P x s = \norm{S \ket{w_x}}^2$.  For $j \in [n]$, let $\ket{v_{x j}} = \sum_{k \in [m]} \ket k \braket{j, x_j, k}{w_x}$.  Then 
\begin{equation}
A \ket{w_x} = \ket t \qquad \Longrightarrow \qquad \forall \, y \in F_0, \; \sum_{j : x_j \neq y_j} \braket{v_{y j}}{v_{x j}} = 1
 \enspace .
\end{equation}
Finally, $\wsizexS P x s = \sum_j s_j \norm{\ket{v_{xj}}}^2$, so $\max_{x \in F_1} \sum_{j} s_j \norm{\ket{v_{xj}}}^2 \leq \wsizeSD P s$.  
\end{proof}

Now the expression on the right-hand side of Eq.~\eqnref{e:spanprogramSDP} is just the Cholesky decomposition of the solution to the SDP in Eq.~\eqnref{e:generaladversarydual}.  We conclude that $\inf_{P : f_P\vert_\D = f} \wsizeS P s = \ADVpm_s(f)$, as claimed.  
\end{proof}

Before stating some corollaries of \thmref{t:spanprogramSDP}, let us make a remark on the proof: 

\begin{lemma} \label{t:spanprogramSDPrank}
For a function $f : \D \rightarrow \B$, with $\D \subseteq \B^n$, assume that there is a rank-$k$ optimal solution $X$ to Eq.~\eqnref{e:generaladversarydual} for $\ADVpm(f)$.  Note that $k \leq n \abs{\D} \leq n 2^n$.   Then by the proof of \lemref{t:spanprogramSDPconstruction} there is an optimal span program computing $f$ with $\abs{I_{j,b}} = k$ for all $j \in [n]$ and $b \in \B$.   
\end{lemma}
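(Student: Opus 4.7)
The plan is to extract the span program directly from the Cholesky factorization of the given rank-$k$ PSD solution, and to observe that the $\leq$ direction of the proof of \lemref{t:spanprogramSDPconstruction} already produces a span program whose input-vector index sets have size equal to the ambient dimension of the factoring vectors. Since by hypothesis $X$ has rank $k$, that ambient dimension can be taken to be exactly $k$.

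Concretely, I would start by writing the rank-$k$ optimal solution as $X = \sum_{l=1}^{k} \ket{u_l}\!\bra{u_l}$ with $\ket{u_l} \in \R^{\D \times [n]}$, and then set $\ket{v_{xj}} \in \R^{k}$ by $\braket{l}{v_{xj}} = \braket{x,j}{u_l}$. This gives $\braket{v_{xj}}{v_{yj}} = \bra{x,j} X \ket{y,j}$, so the SDP feasibility constraint $\sum_{j : x_j \neq y_j} \bra{x,j} X \ket{y,j} = 1$ for every $(x,y) \in F = F_0 \times F_1$ translates verbatim into the constraint $\sum_{j : x_j \neq y_j} \braket{v_{xj}}{v_{yj}} = 1$ appearing on the right-hand side of Eq.~\eqnref{e:spanprogramSDPCholesky}. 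Moreover, the SDP objective $\max_{x} \sum_{j} s_j \bra{x,j} X \ket{x,j}$ equals $\max_{x} \sum_{j} s_j \norm{\ket{v_{xj}}}^2$, and by \thmref{t:adversarydual} this common value is $\ADVpm_s(f)$ (here we specialize to $s = \vec 1$ for $\ADVpm(f)$, but the reasoning is cost-agnostic).

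Now I would feed the collection $\{\ket{v_{xj}}\}$ into the $\leq$ direction of the proof of \lemref{t:spanprogramSDPconstruction}, taking $m = k$. That construction builds a real canonical span program $P$ whose inner product space is $\R^{F_0}$, whose target is $\ket t = \sum_{x \in F_0} \ket x$, and whose index sets satisfy $I_{j,b} = [k]$ for every $j \in [n]$ and $b \in \B$, with input vectors
\begin{equation*}
\ket{v_{j,b,k'}} = \sum_{x \in F_0 \,:\, x_j \neq b} \braket{v_{xj}}{k'} \ket x, \qquad k' \in [k].
\end{equation*}
The same proof checks that $f_P|_\D = f$ and that $\wsizeSD P s \leq \max_{x \in \D} \sum_{j \in [n]} s_j \norm{\ket{v_{xj}}}^2 = \ADVpm_s(f)$.

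Finally, to upgrade this to optimality I would invoke the matching lower bound $\wsizeSD P s \geq \ADVpm_s(f_P|_\D)$ from \thmref{t:wsizeadvbound} (or, equivalently, the $\geq$ direction of \thmref{t:spanprogramSDP}), forcing equality. Thus $P$ is an optimal span program computing $f$ with $\abs{I_{j,b}} = k$ for all $j,b$, as claimed. There is no real obstacle here: the lemma is essentially a bookkeeping remark about the proof of \lemref{t:spanprogramSDPconstruction}, whose $\leq$ direction already transports the ambient dimension of the Cholesky factors into the size of the span program's input-vector index sets; the only point to notice is that a rank-$k$ factorization yields ambient dimension~$k$.
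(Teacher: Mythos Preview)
Your proposal is correct and matches the paper's intent. The paper gives no separate proof of this lemma; it is stated as a remark, with the ``proof'' being the phrase ``by the proof of \lemref{t:spanprogramSDPconstruction}'' already in the statement---you have simply (and correctly) spelled out the details of how a rank-$k$ Cholesky factorization feeds into the $\leq$ direction of that proof with $m=k$, and then closed the loop on optimality via the lower bound.
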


\cite[Theorem~18]{HoyerLeeSpalek07negativeadv} states in particular that Eq.~\eqnref{e:adversarydual} always has a rank-one optimal solution.  The proof takes the Cholesky decomposition of a solution $X = \sum_{x, y, j, j'} \ket{x,j} \! \braket{v_{xj}}{v_{yj'}} \! \bra{y,j'}$, and replaces each vector $\ket{v_{xj}}$ by the scalar $\norm{\ket{v_{xj}}}$.  That is, let $X' = \sum_{x, y, j, j'} \norm{\ket{v_{xj}}} \norm{\ket{v_{yj'}}} \ketbra{x,j}{y,j'}$, a rank-one matrix.  Then by the Cauchy-Schwarz inequality, $\bra{x,j} X' \ket{y,j} \geq \bra{x,j} X \ket{y,j}$, with equality when $y = x$, so $X'$ is as good a solution to Eq.~\eqnref{e:adversarydual} as $X$ is.  However, note that even when $\ADV_s(f) = \ADVpm_s(f)$, this argument does not imply that Eq.~\eqnref{e:generaladversarydual} has a rank-one optimal solution~\cite{Spalek09private}.

\section{Consequences of the SDP for optimal witness size} \label{s:corollaries}

This section will state several corollaries of \thmref{t:spanprogramSDP}.  First of all, we can strengthen \thmref{t:spanprogramuniversal}.  

\begin{theorem} \label{t:spanprogramuniversalboundederror}
For any function $f : \D \rightarrow \{0,1\}$, with $\D \subseteq \{0,1\}^n$, there exists a span program $P$ computing $f_P\vert_\D = f$ with witness size upper-bounded by the bounded-error quantum query complexity of~$f$, 
\begin{equation}
\wsizeD P = O(Q(f))
 \enspace .
\end{equation}
\end{theorem}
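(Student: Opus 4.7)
The plan is to obtain this as an immediate corollary of two previously stated results, chaining them in the obvious way. First I would invoke \thmref{t:advquerycomplexity}, which in the boolean-codomain case ($E = \{0,1\}$) gives
\begin{equation*}
Q(f) = Q_{1/10}(f) \geq \frac{1 - 2\sqrt{(1/10)(9/10)}}{2} \, \ADVpm(f)
 \enspace ,
\end{equation*}
so that $\ADVpm(f) = O(Q(f))$. This bound applies to partial boolean functions on $\D \subseteq \B^n$, which is exactly the setting we need.

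Next I would apply \thmref{t:spanprogramSDP} with the uniform cost vector $s = \vec 1$, which gives a span program $P$ with $f_P\vert_\D = f$ and $\wsizeD{P} = \ADVpm(f)$; crucially, \thmref{t:spanprogramSDP} asserts that the infimum is achieved, so such a $P$ actually exists. Combining these two facts yields
\begin{equation*}
\wsizeD{P} = \ADVpm(f) = O(Q(f))
 \enspace ,
\end{equation*}
which is the claim.

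There is no real obstacle here: both ingredients are stated for partial boolean functions with $\D \subseteq \B^n$, and no additional construction or estimate is needed. The only subtlety worth flagging is that the two results interact with $\D$ compatibly---\thmref{t:spanprogramSDP} allows $P$ to compute an arbitrary extension of $f$ off of $\D$ while matching $\ADVpm_s(f)$ on $\D$, and the adversary lower bound in \thmref{t:advquerycomplexity} is indeed stated for the partial function $f$ itself, so no loss occurs in passing between the two formulations. Note that, in contrast to \thmref{t:spanprogramuniversal}, the span program produced here is not explicit---it comes from solving the SDP dual to \eqnref{e:generaladversarydual}---but the statement only asks for existence.
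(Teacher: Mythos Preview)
Your proposal is correct and matches the paper's own proof essentially verbatim: the paper also chains \thmref{t:advquerycomplexity} with \thmref{t:spanprogramSDP} to conclude $Q(f) = \Omega(\ADVpm(f)) = \Omega\big(\inf_{P : f_P\vert_\D = f} \wsizeD P\big)$, noting as you do that the infimum is achieved.
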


\begin{proof}
By \thmref{t:advquerycomplexity}, the quantum query complexity of $f$ is lower-bounded by the general adversary bound for $f$, which by \thmref{t:spanprogramSDP} equals the best span program witness size: 
\begin{align}
Q(f)
&= \Omega(\ADVpm(f)) \\
&= \Omega\big( \inf_{P : f_P\vert_\D = f} \wsizeD P \big) \nonumber
 \enspace .
\qedhere
\end{align}
\end{proof}

It is an interesting problem to prove \thmref{t:spanprogramuniversalboundederror} based directly on a quantum query algorithm that evaluates $f$, as in the proof of \thmref{t:spanprogramuniversal} for the one-sided error case.  

As an immediate corollary of \thmref{t:spanprogramSDP} and \thmref{t:spanprogramcomposition}, the general adversary bound composes multiplicatively for boolean functions.  That is, the inequality in Eq.~\eqnref{e:weakadversarycomposition}, from \thmref{t:weakadversarycomposition}, is actually an equality.

\begin{theorem}[General adversary bound composition] \label{t:advpmcomposition}
Let $f : \{0,1\}^n \rightarrow \{0,1\}$ and, for $j \in [n]$, let $f_j : \{0,1\}^{m_j} \rightarrow \{0,1\}$.  
Define $g : \{0,1\}^{m_1} \times \cdots \times \{0,1\}^{m_n} \rightarrow \{0,1\}$ by $g(x) = f\big(f_1(x_1), \ldots, f_n(x_n)\big)$.  Let $s \in [0, \infty)^{m_1} \times \cdots \times [0, \infty)^{m_n}$, and let $\beta_j = \ADVpm_{s_j}(f_j)$ for $j \in [n]$.  Then 
\begin{equation}
\ADVpm_s(g) = \ADVpm_\beta(f)
 \enspace .
\end{equation}
In particular, if $\ADVpm_{s_1}(f_1) = \cdots = \ADVpm_{s_n}(f_n) = \beta$, then $\ADVpm_s(g) = \beta \, \ADVpm(f)$.  
\end{theorem}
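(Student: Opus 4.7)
The plan is to combine three tools from earlier in the paper: the SDP characterization (\thmref{t:spanprogramSDP}), the span program composition theorem (\thmref{t:spanprogramcomposition}), and the adversary lower bound on witness size (already embedded in \thmref{t:spanprogramSDP}). The direction $\ADVpm_s(g) \geq \ADVpm_\beta(f)$ is exactly Eq.~\eqnref{e:weakadversarycomposition} from \thmref{t:weakadversarycomposition}, so the whole task is to prove the reverse inequality $\ADVpm_s(g) \leq \ADVpm_\beta(f)$.

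First, I would invoke \thmref{t:spanprogramSDP} $n+1$ times to realize each adversary value as an actual span program witness size. Namely, for each $j \in [n]$ choose a span program $P_j$ computing $f_j$ with $\wsizeS{P_j}{s_j} = \ADVpm_{s_j}(f_j) = \beta_j$, and choose a span program $P$ computing $f$ with $\wsizeS{P}{\beta} = \ADVpm_\beta(f)$. The ``moreover'' clause in \thmref{t:spanprogramSDP} (the infimum is achieved) is what makes this step clean; without it one would have to carry an $\epsilon$ through the argument.

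Next I would compose these programs using \thmref{t:spanprogramcomposition} to obtain a span program $Q$ computing $f_Q = g$ whose witness size with cost vector $s$ satisfies
\begin{equation}
\wsizeS{Q}{s} \;\leq\; \wsizeS{P}{r}, \qquad r_j = \wsizeS{P_j}{s_j} = \beta_j .
\end{equation}
Hence $\wsizeS{Q}{s} \leq \wsizeS{P}{\beta} = \ADVpm_\beta(f)$. Finally, applying \thmref{t:spanprogramSDP} in the other direction to $Q$ (which computes $g$) gives $\ADVpm_s(g) \leq \wsizeS{Q}{s}$, and chaining the two inequalities yields $\ADVpm_s(g) \leq \ADVpm_\beta(f)$, completing the proof.

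There is no real obstacle here: the work was already done in proving \thmref{t:spanprogramSDP} and \thmref{t:spanprogramcomposition}. The only subtle point is making sure that the ``with costs'' versions of all three results line up correctly, i.e., that the costs $\beta$ appearing on the outer span program $P$ are precisely the witness sizes of the inner programs with the original costs $s_j$, which is exactly the hypothesis under which \thmref{t:spanprogramcomposition} is stated. The special case $\beta_1 = \cdots = \beta_n = \beta$ then follows by pulling the common factor out, giving $\ADVpm_s(g) = \beta \, \ADVpm(f)$ as claimed.
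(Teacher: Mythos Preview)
Your proposal is correct and matches the paper's first proof essentially verbatim: invoke \thmref{t:weakadversarycomposition} for $\geq$, then for $\leq$ use \thmref{t:spanprogramSDP} to get optimal span programs, compose them via \thmref{t:spanprogramcomposition}, and apply \thmref{t:spanprogramSDP} once more. The paper additionally supplies a second, direct proof of $\ADVpm_s(g) \leq \ADVpm_\beta(f)$ that bypasses span programs entirely by constructing feasible dual vectors $\ket{w_{xjk}} = \ket{v_{y(x)j}} \otimes \ket{v^j_{x_j k}} \otimes \ket{\delta_{g(x),f_j(x_j)}}$ from optimal solutions to the duals for $f$ and the $f_j$; this alternative is self-contained relative to \thmref{t:adversarydual} but is not needed for correctness.
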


\begin{proof}
\thmref{t:weakadversarycomposition} gives the inequality $\ADVpm_s(g) \geq \ADVpm_\beta(f)$.  To obtain the opposite inequality, appeal to \thmref{t:spanprogramSDP} to obtain optimal span programs for the functions, compose these span programs using \thmref{t:spanprogramcomposition}, and appeal to \thmref{t:spanprogramSDP} to upper-bound $\ADVpm_s(g)$.  

This proof is rather indirect.  Based on the new formulation of the general adversary bound in \thmref{t:adversarydual}, we can also give a direct proof of \thmref{t:advpmcomposition} that does not use span programs.  

Recall that $\B = \{0,1\}$.  For $x \in \B^{m_1} \times \cdots \times \B^{m_n}$, let $y(x) = (f_1(x), \ldots, f_n(x))$, so $g(x) = f(y(x))$.  

For $y \in \B^n$ and $j \in [n]$, fix vectors $\ket{v_{yj}} \in V$ that achieve $\ADVpm_\beta(f)$, i.e., $\sum_{j : y_j \neq y_{j'}} \braket{v_{yj}}{v_{y'j}} = 1$ for all $y, y' \in \B^n$ with $f(y) \neq f(y')$, and $\ADVpm_\beta(f) = \max_{y \in \B^n} \sum_{j \in [n]} \beta_j \norm{\ket{v_{yj}}}^2$.  For $j \in [n]$, fix vectors $\ket{v^j_{zk}} \in V^j$ for $z \in \B^{m_j}$, $k \in [m_j]$, that achieve $\ADVpm_{s_j}(f_j)$, i.e., $\sum_{k : z_k \neq z_k'} \braket{v^j_{zk}}{v^j_{z'k}} = 1$ for all $z, z' \in \B^{m_j}$ with $f_j(z) \neq f_j(z')$.  

Based on these solutions, we construct a feasible solution for the dual formulation of $\ADVpm_s(g)$.  For $x \in \B^{m_1} \times \cdots \times \B^{m_n}$, $j \in [n]$ and $k \in [m_j]$, let
\begin{equation}
\ket{w_{xjk}} = \ket{v_{y(x)j}} \otimes \ket{v^j_{x_j k}} \otimes \ket{\delta_{g(x), f_j(x_j)}} \in V \otimes (\oplus_{j \in n} V^j) \otimes \C^2
 \enspace .
\end{equation}
Here, the third register is spanned by the orthonormal basis $\{ \ket 0, \ket 1 \}$, and $\delta_{a,b}$ is $1$ if $a = b$ and $0$ otherwise.  

Consider $x, x' \in \B^{m_1} \times \cdots \times \B^{m_n}$ such that $g(x) \neq g(x')$.  In particular, $y(x) \neq y(x')$.  Then 
\begin{equation}\begin{split}
\sum_{\substack{j \in [n], k \in [m_j] : \\ x_{jk} \neq x'_{jk}}} \braket{w_{xjk}}{w_{x'jk}}
&= \sum_{j \in [n]} \braket{v_{y(x)j}}{v_{y(x')j}} \sum_{k \in [m_j]: x_{jk} \neq x'_{jk}} \braket{v^j_{x_j k}}{v^j_{x'_j k}} (1 - \delta_{f_j(x_j), f_j(x'_j)}) \\
&= \sum_{j \in [n] : y(x)_j \neq y(x')_j} \braket{v_{y(x)j}}{v_{y(x')j}} \sum_{k \in [m_j] : x_{jk} \neq x'_{jk}} \braket{v^j_{x_j k}}{v^j_{x'_j k}} \\
&= \sum_{j \in [n] : y(x)_j \neq y(x')_j} \braket{v_{y(x)j}}{v_{y(x')j}} \\
&= 1
 \enspace .
\end{split}\end{equation}
Hence indeed the vectors $\ket{w_{xjk}}$ give a feasible solution.  We conclude that 
\begin{equation}\begin{split}
\ADVpm_s(g)
&\leq \max_{x \in \B^{m_1} \times \cdots \times \B^{m_n}} \sum_{j \in [n], k \in [m_j]} s_{jk} \norm{\ket{w_{xjk}}}^2 \\
&= \max_x \sum_{j \in [n]} \norm{\ket{v_{y(x)j}}}^2 \sum_{k \in [m_j]} s_{jk} \norm{\ket{v^j_{x_j k}}}^2 \\
&\leq \max_x \sum_{j \in [n]} \beta_j \norm{\ket{v_{y(x)j}}}^2 \\
&= \ADVpm_\beta(f)
 \enspace .
\end{split}\end{equation}
The last step is clearly an inequality, which is all we actually need to finish the proof.  It is in fact an equality, though, because $y(x)$ varies over all strings in $\B^n$ as $x$ varies over $\B^{m_1} \times \cdots \times \B^{m_n}$.  
\end{proof}

By substituting \thmref{t:spanprogramSDP} into \thmref{t:spanprogramalgorithm}, we obtain an exact asymptotic expression for the quantum query complexity of a boolean function $f$ composed on itself.  

\begin{theorem} \label{t:querycomplexitylimit}
For any function $f : \{0,1\}^n \rightarrow \{0,1\}$, define $f^{k} : \{0,1\}^{n^k} \rightarrow \{0,1\}$ as the function $f$ composed on itself repeatedly to a depth of $k$, as in \thmref{t:spanprogramalgorithm}.  Then 
\begin{equation}
\lim_{k \rightarrow \infty} Q(f^{k})^{1/k} = \ADVpm(f)
 \enspace .
\end{equation}
\end{theorem}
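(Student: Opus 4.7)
The plan is to sandwich $Q(f^k)^{1/k}$ between $\ADVpm(f)$ from below and from above, and take $k \to \infty$ on both sides.

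For the lower bound, I would combine the adversary lower bound on quantum query complexity with the multiplicative composition of $\ADVpm$ for boolean functions. Specifically, \thmref{t:advquerycomplexity} gives $Q(f^k) = \Omega(\ADVpm(f^k))$, and iterating \thmref{t:advpmcomposition} (with uniform costs) yields $\ADVpm(f^k) = \ADVpm(f)^k$. Hence $Q(f^k)^{1/k} \geq c^{1/k} \ADVpm(f)$ for some constant $c > 0$ independent of $k$, and $c^{1/k} \to 1$ as $k \to \infty$.

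For the upper bound, I would use the SDP-derived span program. By \thmref{t:spanprogramSDP} (or the version stated in \thmref{t:spanprogramSDPintro}) there exists a span program $P$ computing $f$ with $\wsize{P} = \ADVpm(f)$, and the infimum is achieved. Applying the composed formula-evaluation algorithm of \thmref{t:spanprogramalgorithm} to $P$ gives $Q(f^k) = O(\wsize{P}^k) = O(\ADVpm(f)^k)$, where the hidden constant depends only on $P$ and not on $k$. Taking $k$-th roots, $Q(f^k)^{1/k} \leq C^{1/k} \ADVpm(f)$, and $C^{1/k} \to 1$.

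Combining the two bounds, $\limsup_k Q(f^k)^{1/k} \leq \ADVpm(f) \leq \liminf_k Q(f^k)^{1/k}$, so the limit exists and equals $\ADVpm(f)$. There is no real obstacle here; the theorem is essentially a repackaging of the three previously established pillars (adversary lower bound, span-program upper bound via \thmref{t:spanprogramalgorithm}, and the two tight identifications $\wsize{P}_{\min} = \ADVpm(f)$ and $\ADVpm(f^k) = \ADVpm(f)^k$). The only subtlety to verify is that the $O(\cdot)$ constant in \thmref{t:spanprogramalgorithm} is independent of $k$ so that it vanishes under the $k$-th root, which is the case since that constant is absorbed into the per-gate overhead of the span-program composition and $P$ is fixed once and for all.
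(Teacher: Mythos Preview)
Your proposal is correct and matches the paper's proof essentially line for line: lower bound via \thmref{t:advquerycomplexity} and adversary-bound composition, upper bound via \thmref{t:spanprogramSDP} plugged into \thmref{t:spanprogramalgorithm}, then take $k$-th roots. The only cosmetic difference is that the paper cites the inequality \thmref{t:weakadversarycomposition} for the lower bound (which is all that is needed), whereas you invoke the full equality \thmref{t:advpmcomposition}; either suffices.
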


\begin{proof}
By the adversary lower bound \thmref{t:advquerycomplexity}, $Q(f^{k}) = \Omega(\ADVpm(f^{k})) = \Omega(\ADVpm(f)^k)$ by \thmref{t:weakadversarycomposition}.  Hence ${\lim \inf}_{k \rightarrow \infty} Q(f^{k})^{1/k} \geq \ADVpm(f)$.  
\thmref{t:spanprogramSDP} together with the formula-evaluation algorithm \thmref{t:spanprogramalgorithm} implies $Q(f^{k}) = O_k(\ADVpm(f)^k)$.  Hence ${\lim \sup}_{k \rightarrow \infty} Q(f^{k})^{1/k} \leq \ADVpm(f)$.  
\end{proof}

\thmref{t:querycomplexitylimit} implies a new asymptotic upper bound on the sign-degree of a boolean function $f$ composed on itself to a depth of $k$, as $k \rightarrow \infty$.  

\begin{definition}[Sign-degree]
Given a function $f : \{0, 1\}^n \rightarrow \{0,1\}$, a real multivariate polynomial $p(x_1, \ldots, x_n)$ is said to be a threshold polynomial that sign-represents $f$ if for all inputs $x \in \{0,1\}^n$, $p(x) \neq 0$ and the signs of $p(x)$ and $(-1)^{f(x)}$ coincide.   

The sign-degree of $f$, $\signdegree(f)$, is defined as the least degree of a polynomial that sign-represents $f$.  
\end{definition}

By the polynomial method~\cite{BealsBuhrmanCleveMoscaWolf98, NielsenChuang00}, $\signdegree(f) = O(Q(f))$ for every boolean function $f$.  (See also Refs.~\cite{MontanaroNishimuraRaymond07unboundederror, BuhrmanVereshchaginDeWolf07unbounded}, which relate the sign-degree of $f$ to the unbounded-error quantum and classical query complexities of $f$.)  Thus we obtain the following corollary of \thmref{t:querycomplexitylimit}: 

\begin{corollary} \label{t:signdegree}
For any function $f : \{0,1\}^n \rightarrow \{0,1\}$, 
\begin{equation}
\underset{k \rightarrow \infty}{{\lim \sup}} \, \signdegree(f^{k})^{1/k} \le \lim_{k \rightarrow \infty} Q(f^k)^{1/k}= \ADVpm(f)
 \enspace .
\end{equation}
\end{corollary}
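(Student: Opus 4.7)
The plan is to establish the inequality $\limsup_{k\to\infty} \signdegree(f^k)^{1/k} \le \lim_{k\to\infty} Q(f^k)^{1/k}$ and then combine it with the equality already proved in \thmref{t:querycomplexitylimit}. The only new content in the corollary is therefore the first inequality, and this should follow immediately from the polynomial method applied \emph{to the composed function $f^k$ itself}, not from any composition behavior of sign-degree.

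First, I would invoke the polynomial method of Beals et al.: for every boolean function $g$ we have $\signdegree(g) \le 2\, Q(g)$. The standard argument is that after $q$ oracle queries, each amplitude $\braket{y}{\varphi_{2q+1}}$ is a multilinear polynomial of degree at most $q$ in the variables $x_1,\dots,x_n$, so the acceptance probability $p(x)$ is a real polynomial of degree at most $2q$. Shifting by $\tfrac12$ gives a degree-$2q$ polynomial that is positive on $1$-inputs and negative on $0$-inputs (using the bounded-error guarantee $p(x) \in [0,\tfrac13] \cup [\tfrac23,1]$), hence sign-represents $g$. Applying this with $g = f^k$ yields
\begin{equation}
\signdegree(f^k) \le 2\, Q(f^k)
\enspace .
\end{equation}

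Next, I would take $k$-th roots and pass to the limit. Raising both sides to the $1/k$ power gives $\signdegree(f^k)^{1/k} \le 2^{1/k}\, Q(f^k)^{1/k}$, and since $2^{1/k} \to 1$,
\begin{equation}
\limsup_{k\to\infty}\, \signdegree(f^k)^{1/k} \;\le\; \limsup_{k\to\infty}\, Q(f^k)^{1/k}
\enspace .
\end{equation}
By \thmref{t:querycomplexitylimit}, the right-hand limit exists and equals $\ADVpm(f)$, so $\limsup$ coincides with the actual limit. Chaining these together yields exactly the corollary.

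There is no real obstacle here: the polynomial method is a black box, and \thmref{t:querycomplexitylimit} does all the heavy lifting. The only point worth being careful about is that the polynomial method gives $\signdegree(g) = O(Q(g))$ with a \emph{universal} constant (here $2$) that does not depend on $g$; if the constant depended on the number of input bits, the $k$-th root step would fail. Since the constant is indeed universal, the $C^{1/k} \to 1$ argument is clean and the corollary follows immediately.
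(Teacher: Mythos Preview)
Your proposal is correct and takes essentially the same approach as the paper. The paper's own justification is simply the sentence preceding the corollary: ``By the polynomial method, $\signdegree(f) = O(Q(f))$ for every boolean function $f$,'' after which the corollary is stated as an immediate consequence of \thmref{t:querycomplexitylimit}; your write-up is in fact more explicit (giving the constant $2$ and noting that its universality is what makes the $k$-th root step go through).
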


Lee and Servedio have recently shown that $\signdegree(f)^k \leq \signdegree(f^{k})$~\cite{Lee09private}, based on which \corref{t:signdegree} gives an upper bound of the sign-degree of $f$ itself.  

One special case of interest is when $f$ is a read-once AND-OR formula on $n$ variables.  In this case, $\ADV(f) = \ADVpm(f) = \sqrt n$~\cite{BarnumSaks04readonce}.  Indeed, these bounds can be computed by showing $\ADV_{(s_1, \ldots, s_m)}(\mathrm{AND}_m) = \ADVpm_{(s_1, \ldots, s_m)}(\mathrm{AND}_m) = \sqrt{\sum_{j \in [m]} s_j^2}$, where $\mathrm{AND}_m$ denotes the AND gate on $m$ variables, and then using \thmref{t:weakadversarycomposition} and \thmref{t:advpmcomposition} to compose the nonnegative-weight and general adversary bounds, respectively.  O'Donnell and Servedio~\cite{ODonnellServedio03threshold} asked whether $\signdegree(f) = O(\sqrt n)$?  
This question has consequences in learning theory~\cite{KS01, KSO04}.  Ambainis et al.\ proved that $\signdegree(f) = n^{1/2 + o(1)}$ by giving a quantum algorithm, and, therefore, an explicit threshold polynomial~\cite{AmbainisChildsReichardtSpalekZhang07andor}.  Combined with the unpublished result of Lee and Servedio mentioned above, \corref{t:signdegree} will close this question.  In fact, though, \cite{AmbainisChildsReichardtSpalekZhang07andor} with Lee and Servedio's result already suffices; the composed function $f^k$ is an ``approximately balanced" formula for any fixed AND-OR formula~$f$, and, by another result of~\cite{AmbainisChildsReichardtSpalekZhang07andor}, therefore $\signdegree(f^k) = O(\sqrt{n^k})$.

\thmref{t:spanprogramalgorithm} is only a special case of the formula-evaluation result from~\cite{ReichardtSpalek08spanprogram}.  That article's main result, \cite[Theorem~4.7]{ReichardtSpalek08spanprogram}, can also be extended.  For brevity, we will not repeat all the notation and definitions, but will just state the extension.  \cite{ReichardtSpalek08spanprogram} used the nonnegative-weight adversary bound $\ADV$ instead of the general adversary bound $\ADVpm$ throughout, because only for functions $f$ with $\ADV(f) = \ADVpm(f)$ had the authors found matching span programs.  \thmref{t:spanprogramSDP}, however, gives optimal span programs for every boolean function $f$.  Thus if we modify \cite[Def.~4.5]{ReichardtSpalek08spanprogram}, defining adversary-balanced formulas, to refer to $\ADVpm$ instead of $\ADV$, and if we let $\cal{S}$ be any finite gate set of boolean functions, \cite[Theorem~4.7]{ReichardtSpalek08spanprogram} becomes: 

\begin{theorem} \label{t:formulaevaluation}
There exists a quantum algorithm that evaluates an adversary-balanced formula $\varphi(x)$ over $\cal{S}$ using $O(\ADVpm(\varphi))$ input queries.  After efficient classical preprocessing independent of the input $x$, and assuming $O(1)$-time coherent access to the preprocessed classical string, the running time of the algorithm is $\ADVpm(\varphi) (\log \ADVpm(\varphi))^{O(1)}$.  
\end{theorem}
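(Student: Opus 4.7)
The plan is to mirror the proof of \cite[Theorem~4.7]{ReichardtSpalek08spanprogram} with two substitutions, using the new tools of this paper in place of the ad hoc constructions of that one. First, for each gate $g$ in the finite set $\cal{S}$, I would invoke \thmref{t:spanprogramSDP} to obtain a span program $P_g$ computing $g$ with $\wsizeS{P_g}{s} = \ADVpm_s(g)$ for every cost vector $s$. Since $\cal{S}$ is finite, the programs $\{P_g : g \in \cal{S}\}$ all have bounded size; in particular, each has $O(1)$ input vectors and lives in an $O(1)$-dimensional inner-product space, so they can serve as constant-size ``gadgets'' in the formula-evaluation machinery of \cite{ReichardtSpalek08spanprogram} exactly as the earlier ad hoc span programs did.

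Second, I would compose these gadget span programs along the tree structure of $\varphi$ via \thmref{t:spanprogramcomposition} (direct-sum composition is the version used in \cite{ReichardtSpalek08spanprogram} and suffices here). Because $\varphi$ is adversary-balanced in the $\ADVpm$ sense --- i.e., \cite[Def.~4.5]{ReichardtSpalek08spanprogram} with $\ADV$ everywhere replaced by $\ADVpm$ --- the costs propagate correctly through the recursion so that each sub-formula's sub-program attains its optimal witness size. Combining this with \thmref{t:advpmcomposition} (multiplicativity of $\ADVpm$ on boolean composition) yields a span program $P_\varphi$ computing $\varphi$ with $\wsize{P_\varphi} = O(\ADVpm(\varphi))$. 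Feeding $P_\varphi$ into the span-program quantum walk of \cite[Theorem~4.7]{ReichardtSpalek08spanprogram} (the balanced specialization of which is \thmref{t:spanprogramalgorithm}) then produces a bounded-error evaluation of $\varphi$ making $O(\wsize{P_\varphi}) = O(\ADVpm(\varphi))$ queries.

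For the running-time claim, the preprocessing is identical to that of \cite[Theorem~4.7]{ReichardtSpalek08spanprogram}: the composed bipartite graph $G_{P_\varphi}$ depends only on the shape of $\varphi$ and on the $O(1)$-size gadgets $\{P_g\}$, which can be tabulated once in polynomial classical time. With $O(1)$-time coherent access to this table, each step of the discrete-time quantum walk on $G_{P_\varphi}(x)$ is simulable in $\operatorname{polylog}(\ADVpm(\varphi))$ time, giving the stated $\ADVpm(\varphi) \, (\log \ADVpm(\varphi))^{O(1)}$ overall runtime.

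The main obstacle --- really the only nontrivial point --- is to verify that the spectral analysis of \cite[Theorem~4.7]{ReichardtSpalek08spanprogram} never uses the hypothesis $\ADV(g) = \ADVpm(g)$ on the gates in any essential way. That analysis consists of a completeness bound derived from eigenvalue-zero witnesses and a soundness bound derived from an $\Omega(1/\wsize{P_\varphi})$ spectral gap at the output vertex; both steps use only that each gadget is of constant size and attains its optimal witness size, not what that optimum is numerically equal to. Hence substituting the optimal span programs furnished by \thmref{t:spanprogramSDP}, and replacing every reference to $\ADV$ by $\ADVpm$ (with \thmref{t:advpmcomposition} supporting the balanced-composition bookkeeping), completes the extension with no further analytic work.
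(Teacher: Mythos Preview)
Your proposal is correct and matches the paper's approach exactly: the paper's entire proof is the single remark that, after substituting the optimal gate span programs furnished by \thmref{t:spanprogramSDP} and replacing $\ADV$ by $\ADVpm$ throughout (with \thmref{t:advpmcomposition} handling the balance bookkeeping), the proof of \cite[Theorem~4.7]{ReichardtSpalek08spanprogram} goes through unchanged. One minor imprecision worth flagging: \thmref{t:spanprogramSDP} yields a span program optimal for a \emph{given} cost vector $s$, not for all $s$ simultaneously as you wrote; however, the adversary-balanced hypothesis forces the children of every gate to have equal $\ADVpm$ bounds, so only uniform costs $s = \vec 1$ are ever needed and the argument is unaffected.
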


Aside from changing $\ADV$ to $\ADVpm$, the proof from~\cite{ReichardtSpalek08spanprogram} goes through entirely.  Note that layered formulas, in which gates at the same depth are the same, are a special case of adversary-balanced formulas.

\section{Correspondence between span programs and bipartite graphs} \label{s:bipartite}

In this section, we define a correspondence between span programs and weighted bipartite graphs, slightly generalizing the correspondence from~\cite{ReichardtSpalek08spanprogram}.  We also analyze the spectra of these graphs, focusing on eigenvalues near zero and eigenvectors supported on one particular ``output vertex."  The main result, \thmref{t:spanprogramspectralanalysis}, relates spectral quantities of interest to the span program witness size.  This is the key theorem that allows span programs to be evaluated on a quantum computer.  

\thmref{t:spanprogramspectralanalysis}'s proof has two main steps.  The first step, an eigenvalue-zero analysis given in \secref{s:bipartitezero}, is essentially the same as the argument in~\cite{ReichardtSpalek08spanprogram}.  
However, the second step, analyzing small, nonzero eigenvalues, is novel.  \secref{s:bipartitesmall} gives a general argument that relates properties of eigenvalue-zero eigenvectors of weighted bipartite graphs to what are in a certain sense ``{effective}" spectral gaps.  

This small-eigenvalue analysis substantially extends the proof in~\cite{ReichardtSpalek08spanprogram}.  The small-eigenvalue analysis in~\cite{ReichardtSpalek08spanprogram} only works for span programs that arise from the concatenation of constant-size span programs with constant entries, with strict balance conditions, and it breaks down when these conditions are relaxed.  For example, \cite{ReichardtSpalek08spanprogram} shows spectral gaps of $\Omega(1/\wsize P)$ away from zero, for a span program $P$, but the spectral gaps for general span programs cannot be lower-bounded in terms of the witness size.  The small-eigenvalue analysis in~\cite{ReichardtSpalek08spanprogram} is also more technically involved.  \thmref{t:spanprogramspectralanalysis} implies a simpler proof of \thmref{t:spanprogramalgorithm} and \thmref{t:formulaevaluation}, as well as for the AND-OR formula-evaluation result in~\cite{AmbainisChildsReichardtSpalekZhang07andor}.  

\smallskip

\begin{definition}
A finite, weighted, bipartite graph $G$ is specified by finite sets $T$ and $U$, and $\biadj_G \in \L(\C^U, \C^T)$ the ``biadjacency matrix."  $G$ has vertices $\{ \tau_i : i \in T \} \sqcup \{ \mu_j : j \in U \}$.  For each $i \in T$ and $j \in U$ with $\bra i \biadj_G \ket j \neq 0$, $G$ has an edge $(\tau_i, \mu_j)$ weighted by $\bra i \biadj_G \ket j$.  The weighted adjacency matrix of $G$, $A_G \in \L(\C^T \oplus \C^U)$, is 
\begin{equation}
A_G = \left( \begin{matrix} 0 & \biadj_G \\ \biadj_G^\dagger & 0 \end{matrix} \right)
\place{{\small T}}{-74mu}{18pt}
\place{{\small U}}{-30mu}{18pt}
\place{{\small T}}{0mu}{8pt}
\place{{\small U}}{0mu}{-6pt}
\end{equation}
\end{definition}

Henceforth all graphs will be finite.  
Recall from \secref{s:definitions} that $\B = \{0,1\}$.  
For a given span program, recall also the definitions $A = \sum_{i \in I} \ketbra{v_i}{i}$, $I(x) = \Ifree \cup \bigcup_{j \in [n]} I_{j, x_j}$ and $\Pi(x) = \sum_{i \in I(x)} \ketbra i i$.  Let 
\begin{equation}
\Pim(x) = \identity - \Pi(x) = \sum_{i \in I \smallsetminus I(x)} \ketbra i i
 \enspace .
\end{equation}

Now the correspondence between span programs and weighted bipartite graphs is given by:

\begin{definition}[Graphs $G_P(x)$] \label{t:spanprogramadjacencymatrix}
Let $P$ be a span program with target vector $\ket t$ and input vectors $\ket{v_i}$ for $i \in I = \Ifree \cup \bigcup_{j \in [n], b \in \B} I_{j,b}$, in inner product space $V$.  Fix an arbitrary orthonormal basis $\{ \ket k : k \in [\dim(V)] \}$ for $V$.  

Let $G_P$ be the weighted bipartite graph with $T = [\dim(V)] \sqcup I$, $U = \{ 0 \} \sqcup I$ and the biadjacency matrix
\begin{equation}
\biadj_{G_P} = \left( \begin{matrix} \ket t & A \\ 0 & \identity \end{matrix} \right)
\place{{\small $0$}}{-58mu}{18pt}
\place{{\small I}}{-23mu}{18pt}
\place{{\small V}}{0mu}{8pt}
\place{{\small I}}{0mu}{-6pt}
\end{equation}
The vertex $\mu_0$ is called the ``output vertex."

Note that for each input vector index $i \in I$, $G_P$ has two corresponding vertices, with a weight-one edge between them.  For $x \in \B^n$, let $G_P(x)$ be the same as $G_P$ except with these weight-one edges deleted for all $i \in I(x)$.  That is, $G_P(x)$ has the biadjacency matrix 
\begin{equation} \label{e:spanprogramadjacencymatrix}
\biadj_{G_P(x)} = \left( \begin{matrix} \ket t & A \\ 0 & \Pim(x) \end{matrix} \right)
\place{{\small $0$}}{-81mu}{18pt}
\place{{\small I}}{-35mu}{18pt}
\place{{\small V}}{0mu}{8pt}
\place{{\small I}}{0mu}{-6pt}
\end{equation}
\end{definition}

\defref{t:spanprogramadjacencymatrix} is a modest generalization of the correspondence between span programs and bipartite graphs given in~\cite[Sec.~2]{ReichardtSpalek08spanprogram}.  The difference is that \cite{ReichardtSpalek08spanprogram} only defines $G_P(x)$ for span programs with target $\ket t = (1, 0, 0, \ldots, 0)$.  This is not a very restrictive requirement, though, since a unitary change of basis can ensure that $\ket t = (\norm{\ket t}, 0, \ldots, 0)$.  

It will be convenient to establish some more notation. Any vector $\ket \psi \in \C^T \oplus \C^U$ can be uniquely expanded as $\ket \psi = (\ket{\psi_T}, \ket{\psi_U})$, with $\ket{\psi_T} \in \C^T$ and $\ket{\psi_U} \in \C^U$.   For the graphs $G_P(x)$, $\C^T = V \oplus \C^I$ and $\C^U = \C^{\{0\}} \oplus \C^I$, so any $\ket \psi \in \C^T \oplus \C^U$ can similarly be written $\ket \psi = \big( (\ket{\psi_{T,V}}, \ket{\psi_{T,I}}), ( \psi_{U,0}, \ket{\psi_{U,I}} ) \big)$.  Let $\ket 0 = (0, 1, 0) \in \C^T \oplus \C^{\{0\}} \oplus \C^I$ be the unit vector on vertex~$\mu_0$.  

With this notation, the eigenvalue-$\rho$ eigenvector equation of $A_{G_P(x)}$, 
\begin{equation}
\rho \ket \psi = A_{G_P(x)} \ket \psi
 \enspace ,
\end{equation}
is equivalent to the four equations 
\begin{subequations} \label{e:zeroenergyequations}
\begin{align}
\rho \ket{\psi_{T,V}} &= \psi_{U,0} \ket{t} + A \ket{\psi_{U,I}} \label{e:zeroenergyequationsbO} \\
\rho \ket{\psi_{T,I}} &= \Pim(x) \ket{\psi_{U,I}} \label{e:zeroenergyequationsbI} \\
\rho \, \psi_{U,0} &= \braket{t}{\psi_{T,I}} \label{e:zeroenergyequationsaO} \\
\rho \ket{\psi_{U,I}} &= A^\dagger \ket{\psi_{T,V}} + \Pim(x) \ket{\psi_{T,I}} \label{e:zeroenergyequationsaJ}
 \enspace .
\end{align}
\end{subequations}
%% NOTE: These equations cannot be reordered, since I have hard-coded references to them.

Our main result is: 

\begin{theorem} \label{t:spanprogramspectralanalysis}
Let $P$ be a span program and $\D \subseteq \B^n$.  Then a span program $P'$ can be constructed such that $f_{P'} = f_P$ and, for all $x \in \D$, 
\begin{itemize}
\item
If $f_P(x) = 1$, then there is an eigenvalue-zero eigenvector $\ket \psi$ of $A_{G_{P'}(x)}$ with 
\begin{equation} \label{e:spanprogramspectralanalysistrue}
\frac{\abs{ \braket 0 \psi }^2}{\norm{\ket \psi}^2}
\geq 
\frac{1}{2}
 \enspace .
\end{equation}
\item
If $f_P(x) = 0$, let $\{ \ket \alpha \}$ be a complete set of orthonormal eigenvectors of $A_{G_{P'}(x)}$, with corresponding eigenvalues $\rho(\alpha)$.  Then for any $c \geq 0$, the squared length of the projection of $\ket 0$ onto the span of the eigenvectors $\alpha$ with $\abs{\rho(\alpha)} \leq c / \wsizeD P$ satisfies 
\begin{equation} \label{e:spanprogramspectralanalysisfalse}
\sum_{\alpha : \, \abs{\rho(\alpha)} \leq c / \wsizeD P} \abs{\braket \alpha 0}^2 \leq 8 c^2 \Big(1 + \frac{1}{\wsizeD P}\Big) \leq 16 c^2
 \enspace .
\end{equation}
\end{itemize}
\end{theorem}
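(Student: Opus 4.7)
The plan is to construct $P'$ by first canonicalizing $P$ via \thmref{t:spanprogramcanonical} and then rescaling its target to balance the true- and false-side witness sizes.  After that, two essentially independent arguments handle the two cases.  Concretely, let $\hat P$ be the canonical span program equivalent to $P$ supplied by \thmref{t:spanprogramcanonical}: for each $x\in\D$ with $f_P(x)=0$ the basis vector $\ket x\in\C^{F_0}$ is an optimal false-side witness with $\|\ket x\|=1$ and $\braket{\hat t}{x}=1$.  Write $W_1=\max_{x\in\D,\,f_P(x)=1}\wsizex{\hat P}{x}$, and define $P'$ to be $\hat P$ with target rescaled to $\ket{t'}=\ket{\hat t}/\sqrt{W_1}$.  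By the discussion around \eqnref{e:wsizescalet} this divides true-side witness sizes by $W_1$ and multiplies false-side witness sizes by $W_1$, so $\wsizex{P'}{x}\leq 1$ on true inputs and $\wsizex{P'}{x}\leq W_1\wsizeD{P}\leq\wsizeD{P}^2$ on false inputs.

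The true case (the eigenvalue-zero analysis of \secref{s:bipartitezero}) is handled by an explicit construction.  Given an optimal true-side witness $\ket w$ with $A'\Pi(x)\ket w=\ket{t'}$ and $\|\Pi(x)\ket w\|^2\leq 1$, I take
\begin{equation*}
\ket\psi=\bigl(0,\,0,\,-1,\,\Pi(x)\ket w\bigr)\in V\oplus\C^I\oplus\C^{\{0\}}\oplus\C^I.
\end{equation*}
At $\rho=0$ the four equations \eqnref{e:zeroenergyequations} reduce to the single nontrivial check $-\ket{t'}+A'\Pi(x)\ket w=0$, which is the witness property.  Then $|\braket{0}{\psi}|^2=1$ and $\|\ket\psi\|^2=1+\|\Pi(x)\ket w\|^2\leq 2$, giving the claimed ratio $\geq 1/2$.

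The false case (the small-eigenvalue analysis of \secref{s:bipartitesmall}) is the genuinely new step, and I expect it to be the main obstacle.  Let $\ket{w'}=\sqrt{W_1}\,\ket x$; the rescaled canonical witness satisfies $\braket{t'}{w'}=1$, $\Pi(x)A'^\dagger\ket{w'}=0$, $\|\ket{w'}\|^2=W_1$ and $\|A'^\dagger\ket{w'}\|^2\leq W_1\wsizeD{P}$.  Define the test vector
\begin{equation*}
\ket\phi=\bigl(\ket{w'},\,-A'^\dagger\ket{w'},\,0,\,0\bigr).
\end{equation*}
A block calculation against \eqnref{e:spanprogramadjacencymatrix} shows $A_{G_{P'}(x)}\ket\phi=\ket 0$: the top row contributes $\braket{t'}{w'}=1$ to the output coordinate, while the lower row collapses via $A'^\dagger\ket{w'}-\Pim(x)A'^\dagger\ket{w'}=\Pi(x)A'^\dagger\ket{w'}=0$.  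Consequently every eigenvector $\ket\alpha$ of $A_{G_{P'}(x)}$ at eigenvalue $\rho(\alpha)$ satisfies $\braket\alpha 0=\braket{\alpha}{A_{G_{P'}(x)}\phi}=\rho(\alpha)\braket\alpha\phi$, so that squaring, summing over $\alpha$ with $|\rho(\alpha)|\leq c/\wsizeD{P}$, and applying Parseval yields
\begin{equation*}
\sum_{\alpha\,:\,|\rho(\alpha)|\leq c/\wsizeD{P}}|\braket\alpha 0|^2\leq\frac{c^2}{\wsizeD{P}^2}\,\|\ket\phi\|^2\leq\frac{c^2}{\wsizeD{P}^2}\,W_1\bigl(1+\wsizeD{P}\bigr)\leq c^2\Bigl(1+\frac{1}{\wsizeD{P}}\Bigr),
\end{equation*}
which is stronger than \eqnref{e:spanprogramspectralanalysisfalse}.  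The real content is exhibiting the test vector $\ket\phi$ with $A_{G_{P'}(x)}\ket\phi=\ket 0$: this is possible only because the canonical form of \thmref{t:spanprogramcanonical} supplies a false-side witness of bounded norm, which is what lets $\|\ket\phi\|^2$ be controlled in terms of $\wsizeD{P}^2$.  Morally this is the ``effective spectral gap'' promised in the introduction---rather than a literal gap in the spectrum of $A_{G_{P'}(x)}$, what is really at work is the forced smallness of the output-vertex mass on any small-eigenvalue subspace.
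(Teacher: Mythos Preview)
Your argument is correct, and in the false case it is genuinely different from---and considerably simpler than---the paper's proof.  The construction of $P'$ and the true case are essentially identical to the paper's (the paper scales by $1/\sqrt{\wsizeD{P}}$ rather than $1/\sqrt{W_1}$, but either works).  The divergence is in the false case.  The paper does not construct a test vector $\ket\phi$ with $A_{G_{P'}(x)}\ket\phi=\ket 0$; instead it deletes the output vertex, uses \thmref{t:zeroenergy} to produce an eigenvalue-zero eigenvector of the truncated graph, and then invokes \thmref{t:bipartitepsdreduction}, which in turn relies on \thmref{t:psdsquaredsupport}---a standalone result about positive semidefinite matrices proved via an explicit coordinate minimization followed by a dyadic decomposition over eigenvalue scales.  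Your observation that the false-side witness $\ket{w'}$ can be padded to a vector $\ket\phi$ on the full graph satisfying $A_{G_{P'}(x)}\ket\phi=\ket 0$ exactly (via $\Pi(x)A'^\dagger\ket{w'}=0$), after which $\braket{\alpha}{0}=\rho(\alpha)\braket{\alpha}{\phi}$ and Parseval finish the job, bypasses that entire apparatus.  It also yields the constant $1$ in place of the paper's $8$.

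What the paper's longer route buys is a reusable black-box statement (\thmref{t:bipartitepsdreduction}) about arbitrary bipartite graphs: an eigenvalue-zero eigenvector with good overlap on $\ket t$ automatically forces an effective spectral gap after attaching a new vertex via $\ket t$.  Your test-vector trick, by contrast, exploits the specific block structure of $\biadj_{G_{P'}(x)}$---in particular, that the lower-right block is a projection $\Pim(x)$ complementary to the support of $A'^\dagger\ket{w'}$---and would not obviously generalize to an arbitrary bipartite graph.  For the theorem as stated, however, your proof is both shorter and sharper.
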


Roughly, Eq.~\eqnref{e:spanprogramspectralanalysisfalse} says that $A_{G_{P'(x)}}$ has an effective spectral gap around zero.  We will see in \secref{s:algorithm} below that this is strong enough for applying quantum phase estimation.  

The two main ingredients required for proving \thmref{t:spanprogramspectralanalysis}, an eigenvalue-zero analysis of $A_{G_P}(x)$ and an analysis relating eigenvalue-zero eigenvectors to the effective spectral gap.  These two ingredients are presented in \secref{s:bipartitezero} and \secref{s:bipartitesmall} below.  \secref{s:spanprogramspectralanalysisproof} will put them together to prove \thmref{t:spanprogramspectralanalysis}.  

\thmref{t:spanprogramspectralanalysis} is quite useful.  However, we will see in \secref{s:algorithm} below that for some applications, using \thmref{t:spanprogramspectralanalysis} as a black box can lead to an $O(\log n)$ overhead in the quantum query complexity.  \thmref{t:phaseestimationalgorithm} will include two quantum query algorithms.  The more specialized algorithm does not incur a logarithmic overhead, but requires that the norm of the adjacency matrix be at most a constant.  However, the span program $P'$ that \thmref{t:spanprogramspectralanalysis} outputs will not necessarily satisfy $\norm{A_{G_{P'}}} = O(1)$, so only the first algorithm applies.  Thus if one cares about saving logarithmic query overhead factors, \thmref{t:spanprogramspectralanalysis} cannot be applied as a black box.  

It is possible that the first algorithm in \thmref{t:phaseestimationalgorithm} can be improved to work without the logarithmic overhead even when $\norm{A_{G_{P'}}} = \omega(1)$.  See \conjref{t:advboundtight}.  Even if this turns out to be the case, though, there will be an important case when we cannot apply \thmref{t:spanprogramspectralanalysis} as a black box, namely, when we wish to prove upper bounds on the time complexity of the algorithm.  

For developing time-efficient quantum algorithms, other properties of the adjacency matrix besides the norm, such as the maximum degree of a vertex, also matter~\cite{ChiangNagajWocjan09simulate}.  This article is concerned primarily with the query complexity of quantum algorithms and not the time complexity.  Investigating the tradeoffs involved in designing span programs for query-optimal and nearly time-optimal quantum algorithms is an important area for further research, but is beyond our scope.  

With an eye toward these applications, though, we give a version of \thmref{t:spanprogramspectralanalysis} that applies to the graphs $G_P(x)$ directly instead of to $G_{P'}(x)$:   

\begin{theorem} \label{t:spanprogramspectralanalysisnonblackbox}
Let $P$ be a span program, and for $x \in \B^n$ let $G_P(x)$ be the weighted bipartite graph from \defref{t:spanprogramadjacencymatrix}.  Then for $x \in \B^n$: 
\begin{itemize}
\item
If $f_P(x) = 1$, let $\ket w \in \C^I$ be a witness, i.e., $A \Pi(x) \ket w = \ket t$.  Then $A_{G_P(x)}$ has an eigenvalue-zero eigenvector $\ket \psi$ with 
\begin{equation} \label{e:spanprogramspectralanalysisnonblackboxtrue}
\frac{\abs{ \braket 0 \psi }^2}{\norm{\ket \psi}^2} \geq \frac{1}{1 + \norm{\ket w}^2}
 \enspace .
\end{equation}
\item
If $f_P(x) = 0$, let $\ket{w'} \in V$ be a witness, i.e., $\braket{t}{w'} = 1$ and $\Pi(x) A^\dagger \ket{w'} = 0$.  Let $\{ \ket \alpha \}$ be a complete set of orthonormal eigenvectors of $A_{G_P(x)}$, with corresponding eigenvalues $\rho(\alpha)$.  Then for any $\Upsilon \geq 0$, 
\begin{equation} \label{e:spanprogramspectralanalysisnonblackboxfalse}
\sum_{\alpha : \, \abs{\rho(\alpha)} \leq \Upsilon} \abs{\braket \alpha 0}^2 \leq 8 \Upsilon^2 \Big( \norm{\ket{w'}}^2 + \norm{A^\dagger \ket{w'}}^2 \Big)
 \enspace .
\end{equation}
\end{itemize}
\end{theorem}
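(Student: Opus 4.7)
The plan is to prove both parts by writing down explicit auxiliary vectors, built from the span program witnesses, and then applying the eigenvalue equations \eqref{e:zeroenergyequations} directly. No spectral gap estimate is actually needed: the false case will be reduced to the fact that eigenvectors with small eigenvalue have small overlap with any vector that is the image under $A_{G_P(x)}$ of a vector of modest norm.

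For the true case, assume without loss of generality that $\Pi(x)\ket{w} = \ket{w}$ (otherwise replace $\ket{w}$ by $\Pi(x)\ket{w}$, which is still a witness since $A\Pi(x)(\Pi(x)\ket{w}) = A\Pi(x)\ket{w} = \ket{t}$, and has equal or smaller norm). Define $\ket{\psi} \in \C^T \oplus \C^U$ by $\psi_{T,V} = 0$, $\ket{\psi_{T,I}} = 0$, $\psi_{U,0} = 1$, and $\ket{\psi_{U,I}} = -\ket{w}$. Checking the four equations of \eqref{e:zeroenergyequations} at $\rho = 0$: the first reads $\ket{t} - A\ket{w} = 0$, which holds because $A\ket{w} = A\Pi(x)\ket{w} = \ket{t}$; the second reads $\Pim(x)\ket{w} = 0$, which holds because $\ket{w}$ is supported on $I(x)$; and the remaining two equations are $0 = 0$. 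Thus $\ket{\psi}$ is an eigenvalue-zero eigenvector with $\abs{\braket{0}{\psi}}^2 = 1$ and $\norm{\ket{\psi}}^2 = 1 + \norm{\ket{w}}^2$, giving \eqref{e:spanprogramspectralanalysisnonblackboxtrue}.

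For the false case, define $\ket{\phi} \in \C^T \oplus \C^U$ by $\ket{\phi_{T,V}} = \ket{w'}$, $\ket{\phi_{T,I}} = -A^\dagger \ket{w'}$, and the $\C^U$ components equal to zero. Since $A_{G_P(x)}$ is block off-diagonal with $\C^U$-part of its image given by $B_{G_P(x)}^\dagger$ applied to the $\C^T$-part of the argument, one computes
\begin{equation}
B_{G_P(x)}^\dagger \ket{\phi_T} = \bigl(\braket{t}{w'},\, A^\dagger\ket{w'} - \Pim(x) A^\dagger\ket{w'}\bigr) = \bigl(1,\, \Pi(x) A^\dagger \ket{w'}\bigr) = (1, 0),
\end{equation}
using $\braket{t}{w'} = 1$ and $\Pi(x) A^\dagger \ket{w'} = 0$. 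Hence $A_{G_P(x)} \ket{\phi} = \ket{0}$ exactly, and $\norm{\ket{\phi}}^2 = \norm{\ket{w'}}^2 + \norm{A^\dagger \ket{w'}}^2$. Now for any eigenvector $\ket{\alpha}$ with eigenvalue $\rho(\alpha)$, self-adjointness of $A_{G_P(x)}$ gives $\braket{\alpha}{0} = \bra{\alpha} A_{G_P(x)} \ket{\phi} = \rho(\alpha)\braket{\alpha}{\phi}$, so $\abs{\braket{\alpha}{0}}^2 \leq \Upsilon^2 \abs{\braket{\alpha}{\phi}}^2$ whenever $\abs{\rho(\alpha)} \leq \Upsilon$. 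Summing over such $\alpha$ and using Parseval's identity $\sum_\alpha \abs{\braket{\alpha}{\phi}}^2 = \norm{\ket{\phi}}^2$ yields
\begin{equation}
\sum_{\alpha : \abs{\rho(\alpha)} \leq \Upsilon} \abs{\braket{\alpha}{0}}^2 \leq \Upsilon^2 \bigl(\norm{\ket{w'}}^2 + \norm{A^\dagger \ket{w'}}^2\bigr),
\end{equation}
which is even slightly stronger than \eqref{e:spanprogramspectralanalysisnonblackboxfalse}.

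I do not expect a significant obstacle: once the two auxiliary vectors are guessed correctly, both verifications are a few lines of block matrix algebra using the witness identities from \lemref{t:witnesses}. The only point that requires a small amount of care is the normalization step in the true case (replacing $\ket{w}$ by $\Pi(x)\ket{w}$), and the observation in the false case that one obtains $A_{G_P(x)}\ket{\phi} = \ket{0}$ on the nose---so that the $\rho(\alpha) = 0$ eigenvectors are automatically orthogonal to $\ket{0}$ and need not be separated out.
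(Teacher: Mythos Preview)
Your proof is correct. The true case is essentially identical to the paper's argument (which invokes \thmref{t:zeroenergy}): the same eigenvector $\ket{\psi} = (0,\,1,\,-\Pi(x)\ket{w})$ is used, and your remark about replacing $\ket{w}$ by $\Pi(x)\ket{w}$ handles the free input vectors without the paper's device of relabeling $\Ifree$ as a new input coordinate.

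The false case, however, is a genuinely different and more elementary route than the paper's. The paper establishes \eqref{e:spanprogramspectralanalysisnonblackboxfalse} by passing through \thmref{t:bipartitepsdreduction} and \thmref{t:psdsquaredsupport}: it deletes the output vertex, notes that the same $\ket{\phi_T}$ is a zero-eigenvector of the reduced graph, then invokes a general statement about PSD matrices $X' = X + \ketbra{t}{t}$ whose proof requires an explicit calculus minimization (\lemref{t:psdsquaredsupportmaximize}) followed by a dyadic bootstrapping of the eigenvalue range. You bypass all of this with the single observation $A_{G_P(x)}\ket{\phi} = \ket{0}$, which immediately gives $\braket{\alpha}{0} = \rho(\alpha)\braket{\alpha}{\phi}$ and hence the bound via Parseval---with constant $1$ rather than $8$. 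Your observation in fact also yields a shorter proof of \thmref{t:bipartitepsdreduction} itself: in that theorem's notation, the zero-eigenvector $\ket{\psi}$ of $A_G$, embedded in $\C^T \oplus \C^{\{0\}} \oplus \C^U$, satisfies $A_{G'}\ket{\psi} = \braket{t}{\psi_T}\ket{0}$, and the same Parseval argument gives $\sum_{|\rho(\alpha)|\le \Upsilon}|\braket{\alpha}{0}|^2 \le \Upsilon^2/\delta$. What the paper's longer route buys is the standalone \thmref{t:psdsquaredsupport}, a statement about arbitrary rank-one PSD perturbations that may be of independent interest; but for the theorem at hand your argument is strictly simpler and sharper.
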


A typical application of \thmref{t:spanprogramspectralanalysisnonblackbox} will start with a span program having witnesses in the true and false cases satisfying 
\begin{equation} \label{e:wsizenewdef}
\max \big\{ \max_{x : f_P(x) = 1} \norm{\ket w}^2, \max_{x : f_P(x) = 0} (\norm{\ket{w'}}^2 + \norm{A^\dagger \ket{w'}}^2) \big\} \leq W
 \enspace ,
\end{equation}
for some $W$.  Scale the target vector down by a factor of $1/ \sqrt W$, and apply \thmref{t:spanprogramspectralanalysisnonblackbox}; Eq.~\eqnref{e:spanprogramspectralanalysisnonblackboxtrue} then holds with $1/2$ on the right-hand side, and letting $\Upsilon = c / W$ the right-hand side of Eq.~\eqnref{e:spanprogramspectralanalysisnonblackboxfalse} becomes $8 c^2$.  See \thmref{t:generalspanprogramalgorithmnonblackbox}.  

Although the upper bounds in Eqs.~\eqnref{e:spanprogramspectralanalysisnonblackboxtrue} and~\eqnref{e:spanprogramspectralanalysisnonblackboxfalse} depend on quantities, $\norm{\ket w}^2$ and $(\norm{\ket{w'}}^2 + \norm{A^\dagger \ket{w'}}^2)$, similar to the witness size, for two reasons they are not the same as the witness size.  
\begin{itemize}
\item
First, in the case $f_P(x) = 1$, $\norm{\ket w}^2$ can be greater than $\wsizex P x$ if $P$ is not strict (\defref{t:strictmonotonerealdef}), because the witness size does not count the portion of $\ket w$ supported on $\Ifree$.  
\item
Second, in the case $f_P(x) = 0$, while it is true that $\norm{A^\dagger \ket{w'}}{}^2$ can be bounded by $\wsizex P x$, the term $\norm{\ket{w'}}^2$ is not necessarily so-bounded.  This is clear because simultaneously scaling the target and all input vectors by $c > 0$ leaves the witness size invariant (\lemref{t:wsizescaling}) but multiplies $\norm{\ket{w'}}^2$ by $1/c^2$.  The effective spectral gap of $A_{G_P(x)}$ certainly should not be invariant under such scaling, and should approach zero as $c$ approaches zero.  
\end{itemize}
\thmref{t:spanprogramspectralanalysisnonblackbox} therefore motivates using $W$ in Eq.~\eqnref{e:wsizenewdef} as a new span program complexity measure.  This measure is important for developing time-efficient quantum algorithms based on span programs, as in for example \thmref{t:formulaevaluation}.  

The proof of \thmref{t:spanprogramspectralanalysisnonblackbox} will also be given below in \secref{s:spanprogramspectralanalysisproof}.

\subsection{Eigenvalue-zero spectral analysis of the graphs $G_P(x)$} \label{s:bipartitezero}

We will begin by analyzing Eqs.~\eqnref{e:zeroenergyequations} at eigenvalue $\rho = 0$.  This theorem is a straightforward extension of \cite[Theorems~2.5 and~A.7]{ReichardtSpalek08spanprogram}. 

\begin{theorem}[\cite{ReichardtSpalek08spanprogram}] \label{t:zeroenergy}
For a span program $P$ and input $x \in \B^n$, consider all the eigenvalue-zero eigenvector equations of the weighted adjacency matrix $A_{G_P(x)}$, except for the constraint at the output vertex $\mu_0$, i.e., Eqs.~(\ref{e:zeroenergyequations}) except (\ref{e:zeroenergyequationsaO}) at $\rho = 0$.  

These equations have a solution $\ket \psi$ with $\psi_{U,0} \neq 0$ if and only if $f_P(x) = 1$, and have a solution $\ket \psi$ with $\braket{t}{\psi_{T,V}} \neq 0$ if and only if $f_P(x) = 0$.  More quantitatively, let $s \in [0, \infty)^n$ be a vector of nonnegative costs, and recall from \defref{t:wsizedef} that $S = \sum_{j \in [n], b \in \B, i \in I_{j,b}} \sqrt{s_j} \ketbra i i$.  Then 
\begin{itemize}
\item 
If $f_P(x) = 1$, $A_{G_P(x)}$ has an eigenvalue-zero eigenvector $\ket \psi = (0, \psi_{U,0}, \ket{\psi_{U,I}}) \in \C^T \oplus \C^{\{0\}} \oplus \C^I$ with 
\begin{equation} \label{e:zeroenergytrue}
\frac{\abs{\psi_{U,0}}^2}{ \abs{\psi_{U,0}}^2 + \norm{S \ket {\psi_{U,I}}}^2}
\geq 
\frac{1}{1 + \wsizexS P x s}
 \enspace .
\end{equation}
\item 
If $f_P(x) = 0$, let $\ket{w'} \in V$ be an optimal witness, i.e., $\braket{t}{w'} = 1$, $\Pi(x) A^\dagger \ket{w'} = 0$ and $\norm{S A^\dagger \ket{w'}}^2 = \wsizexS P x s$ (see \defref{t:wsizedef}).  Then there is a solution $\ket \psi = (\ket{\psi_{T,V}}, \ket{\psi_{T,I}}, 0) \in V \oplus \C^I \oplus \C^U$ to Eqs.~(\ref{e:zeroenergyequations}a,b,d) at $\rho = 0$, with  
\begin{equation} \label{e:zeroenergyfalse}
\frac{ \abs{\braket{t}{\psi_{T,V}}}^2 } { \norm{\ket{\psi_{T,V}}}^2 + \norm{S \ket{\psi_{T,I}}}^2 } \geq \frac 1 {\norm{\ket{w'}}^2 + \wsizexS P x s}
 \enspace .
\end{equation}
\end{itemize}
\end{theorem}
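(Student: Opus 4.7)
The plan is to solve the three equations (\ref{e:zeroenergyequationsbO}), (\ref{e:zeroenergyequationsbI}), (\ref{e:zeroenergyequationsaJ}) at $\rho = 0$ by a direct ansatz in each case, exhibiting the desired eigenvector in terms of an optimal span program witness. For the true case, I will look for a solution in which the entire $\C^T$-component vanishes, so $\ket{\psi_{T,V}} = 0$ and $\ket{\psi_{T,I}} = 0$; equations (\ref{e:zeroenergyequationsbI}) and (\ref{e:zeroenergyequationsaJ}) then reduce to $\Pim(x)\ket{\psi_{U,I}} = 0$ and nothing else, while (\ref{e:zeroenergyequationsbO}) becomes $\psi_{U,0}\ket t + A\ket{\psi_{U,I}} = 0$. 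If $\ket w$ is an optimal witness, i.e., $A\Pi(x)\ket w = \ket t$ with $\norm{S\ket w}^2 = \wsizexS P x s$, I will set $\psi_{U,0} = 1$ and $\ket{\psi_{U,I}} = -\Pi(x)\ket w$. The projection $\Pi(x)$ automatically ensures the support condition required by (\ref{e:zeroenergyequationsbI}), and substituting into (\ref{e:zeroenergyequationsbO}) yields the required equality. Since $S$ annihilates free indices and commutes with $\Pi(x)$, $\norm{S\ket{\psi_{U,I}}}^2 = \norm{S\Pi(x)\ket w}^2 \le \norm{S\ket w}^2 = \wsizexS P x s$, giving exactly Eq.~\eqnref{e:zeroenergytrue}.

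For the false case, I will dually look for a solution in which the entire $\C^U$-component vanishes, so $\psi_{U,0} = 0$ and $\ket{\psi_{U,I}} = 0$; equations (\ref{e:zeroenergyequationsbO}) and (\ref{e:zeroenergyequationsbI}) then hold trivially, and only (\ref{e:zeroenergyequationsaJ}) remains: $A^\dagger \ket{\psi_{T,V}} + \Pim(x)\ket{\psi_{T,I}} = 0$. Given an optimal false-case witness $\ket{w'} \in V$ with $\braket t{w'} = 1$, $\Pi(x) A^\dagger \ket{w'} = 0$ and $\norm{S A^\dagger \ket{w'}}^2 = \wsizexS P x s$, I take $\ket{\psi_{T,V}} = \ket{w'}$ and $\ket{\psi_{T,I}} = -A^\dagger \ket{w'}$. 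Substituting gives $A^\dagger \ket{w'} - \Pim(x) A^\dagger \ket{w'} = \Pi(x) A^\dagger \ket{w'} = 0$, as needed. The numerator in Eq.~\eqnref{e:zeroenergyfalse} equals $\abs{\braket t{w'}}^2 = 1$ and the denominator is $\norm{\ket{w'}}^2 + \norm{S A^\dagger \ket{w'}}^2 = \norm{\ket{w'}}^2 + \wsizexS P x s$.

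For the converse direction (the iff claims), I will argue that any solution to (\ref{e:zeroenergyequations}a,b,d) with $\psi_{U,0} \neq 0$ produces the witness $\ket w = -\ket{\psi_{U,I}}/\psi_{U,0}$: (\ref{e:zeroenergyequationsbI}) forces $\Pi(x)\ket{\psi_{U,I}} = \ket{\psi_{U,I}}$, and (\ref{e:zeroenergyequationsbO}) then gives $A\Pi(x)\ket w = \ket t$, so $f_P(x) = 1$. Symmetrically, any solution with $\braket t{\psi_{T,V}} \neq 0$ yields the witness $\ket{w'} = \ket{\psi_{T,V}}/\braket t{\psi_{T,V}}$: left-multiplying (\ref{e:zeroenergyequationsaJ}) by $\Pi(x)$ kills the second term because $\Pi(x)\Pim(x) = 0$, leaving $\Pi(x) A^\dagger \ket{w'} = 0$, so $f_P(x) = 0$.

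I do not foresee any essential obstacle here; this is basically a careful bookkeeping exercise with the block structure of $\biadj_{G_P(x)}$ in Eq.~\eqnref{e:spanprogramadjacencymatrix}. The only subtle point is making sure that replacing $\ket w$ by $\Pi(x)\ket w$ in the true case does not increase $\norm{S\ket w}^2$ (it cannot, since $S\Pi(x)$ is a contraction) and that in the false case the constraint $\Pi(x) A^\dagger \ket{w'} = 0$ is preserved by the ansatz $\ket{\psi_{T,I}} = -A^\dagger \ket{w'}$ — both of which follow directly from the definitions of $\Pi(x)$, $\Pim(x)$, and $S$ given in \secref{s:spanprogramdef} and \defref{t:wsizedef}.
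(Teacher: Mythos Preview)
Your proposal is correct and essentially identical to the paper's own proof: you exploit the bipartite block structure to decouple the $\C^T$- and $\C^U$-components at $\rho=0$, and in each case plug in the optimal span-program witness exactly as the paper does (up to an irrelevant overall sign in the true case, where the paper takes $\psi_{U,0}=-1$, $\ket{\psi_{U,I}}=\Pi(x)\ket w$). The only cosmetic difference is that the paper observes Eq.~\eqnref{e:zeroenergytrue} actually holds with equality for the optimal witness (since an optimal $\ket w$ can be taken with $S\Pim(x)\ket w=0$), whereas you record only the inequality---but the inequality is all the theorem asserts.
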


\begin{proof}
Let $\rho = 0$.  Since $G_P(x)$ is bipartite, the $\psi_T$ terms do not interact with the $\psi_U$ terms.  In particular, Eqs.~(\ref{e:zeroenergyequations}c,d) (resp. \ref{e:zeroenergyequations}a,b) can always be satisfied by setting the $\psi_T$ (resp.~$\psi_U$) terms to zero.  Fix $s \in [0, \infty)^n$.  

Now Eqs.~(\ref{e:zeroenergyequations}a,b) are equivalent to $-\psi_{U,0} \ket t = A \ket{\psi_{U,I}}$ and $\ket{\psi_{U,I}} = \Pi(x) \ket{\psi_{U,I}}$.  If these equations have a solution with $\psi_{U,0} \neq 0$, then $- \ket{\psi_{U,I}} / \psi_{U,0}$ is a witness for $f_P(x) = 1$.  Conversely, if $f_P(x) = 1$, then let $\ket w \in \C^I$ be an optimal witness, satisfying $A \Pi(x) \ket w = \ket t$ and $\wsizexS P x s = \norm{S \ket w}^2$.  Let $\psi_{U,0} = -1$ and $\ket{\psi_{U,I}} = \Pi(x) \ket w$.  Then $\ket \psi = (0, \psi_{U,0}, \ket{\psi_{U,I}})$ satisfies Eqs.~\eqnref{e:zeroenergyequations}, and Eq.~\eqnref{e:zeroenergytrue} with equality.  

Next, assume that $\ket \psi$ solves Eq.~\eqnref{e:zeroenergyequationsaJ} with $\braket{t}{\psi_{T,V}} \neq 0$.  Then $\Pi(x) A^\dagger \ket{\psi_{T,V}} = - \Pi(x) \Pim(x) \ket{\psi_{T,I}} = 0$, so $\ket{\psi_{T,V}} / \braket{t}{\psi_{T,V}}$ is a witness for $f_P(x) = 0$.  Conversely, assume that $f_P(x) = 0$ and let $\ket{w'}$ be an optimal witness.  Let $\ket{\psi_{T,V}} = \ket{w'}$ and $\ket{\psi_{T,I}} = -A^\dagger \ket{w'}$.  Then $\ket \psi = (\ket{\psi_{T,V}}, \ket{\psi_{T,I}}, 0)$ satisfies Eqs.~(\ref{e:zeroenergyequations}a,b,d), and Eq.~\eqnref{e:zeroenergyfalse} with equality.  
\end{proof}

Note that if the costs are uniform $s = \vec 1$, then $S = \identity - \sum_{i \in \Ifree} \ketbra i i$, so $\norm{S \ket{\psi_{U,I}}}^2 \leq \norm{\ket{\psi_{U,I}}}^2$ and $\norm{S \ket{\psi_{T,I}}}^2 \leq \norm{\ket{\psi_{T,I}}}^2$.  If $P$ is also a strict span program, i.e., $\Ifree = \emptyset$, then $S = \identity$ so both these inequalities are equalities, and the denominators on the left-hand sides of Eqs.~\eqnref{e:zeroenergytrue} and~\eqnref{e:zeroenergyfalse} are, respectively, $\norm{\ket{\psi_U}}^2$ and $\norm{\ket{\psi_T}}^2$.  However, if $P$ is not strict, then Eqs.~\eqnref{e:zeroenergytrue} and~\eqnref{e:zeroenergyfalse} do not imply lower bounds on achievable $\abs{\psi_{U,0}}^2 / \norm{\ket{\psi_U}}^2$ or $\abs{\braket{t}{\psi_{T,V}}}^2 / \norm{\ket{\psi_T}}^2$.  
 
\begin{corollary} \label{t:zeroenergyuniformcosts}
Let $P$ be a span program.  Then there exists a span program $\hat P$ that computes $f_{\hat P} = f_P$, and such that, for all $x \in \B^n$, 
\begin{itemize}
\item
If $f_P(x) = 1$, then there is an eigenvalue-zero eigenvector $\ket \psi$ of $A_{G_{\hat P}(x)}$ with 
\begin{equation} \label{e:zeroenergyuniformcoststrue}
\frac{\abs{\psi_{U,0}}^2}{\norm{\ket \psi}^2}
\geq 
\frac{1}{1 + \wsizex P x}
 \enspace .
\end{equation}
\item
If $f_P(x) = 0$, then there is a solution $\ket \psi$ to all the eigenvalue-zero eigenvector equations of $A_{G_{\hat P}(x)}$, except for the constraint at vertex $\mu_0$, with  
\begin{equation} \label{e:zeroenergyuniformcostsfalse}
\frac{\abs{\braket{t}{\psi_{T,V}}}^2}{\norm{\ket \psi}^2}
\geq 
\frac{1}{1 + \wsizex P x}
 \enspace .
\end{equation}
\end{itemize}
\end{corollary}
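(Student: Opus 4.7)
The plan is to obtain $\hat P$ by canonicalizing $P$. By \thmref{t:spanprogramcanonical}, applied with uniform costs $s = \vec 1$, there is a canonical span program $\hat P$ computing $f_{\hat P} = f_P$, satisfying $\wsizex{\hat P}{x} \leq \wsizex P x$ for every $x \in \B^n$ and $\wsizex{\hat P}{x} = \wsizex P x$ when $f_P(x) = 0$, and such that for every $x$ with $f_P(x) = 0$ the basis vector $\ket x \in \C^{F_0}$ is itself an optimal witness for $f_{\hat P}(x) = 0$. Two features of canonical span programs are essential here: they are strict ($\Ifree = \emptyset$), so the cost operator $S$ equals $\identity$ under uniform costs; and in the false case, there is a distinguished witness whose squared norm is exactly~$1$.

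With $\hat P$ fixed, I apply \thmref{t:zeroenergy} to $\hat P$ with $s = \vec 1$. Because $\hat P$ is strict and $S = \identity$, the quantities $\norm{S \ket{\psi_{U,I}}}^2$ and $\norm{S \ket{\psi_{T,I}}}^2$ appearing in the denominators of Eqs.~\eqnref{e:zeroenergytrue} and~\eqnref{e:zeroenergyfalse} coincide with $\norm{\ket{\psi_{U,I}}}^2$ and $\norm{\ket{\psi_{T,I}}}^2$, and these combine with the remaining coordinate norms to produce the full squared length $\norm{\ket \psi}^2$. In the true case, the eigenvector furnished by \thmref{t:zeroenergy} therefore satisfies $\abs{\psi_{U,0}}^2/\norm{\ket\psi}^2 \geq 1/(1 + \wsizex{\hat P}{x}) \geq 1/(1 + \wsizex P x)$ by monotonicity of the bound in the witness size, which is Eq.~\eqnref{e:zeroenergyuniformcoststrue}.

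The main obstacle---and the reason a generic conversion to a strict span program via \propref{t:spanprogramrelaxed} is not enough---is the false case. \thmref{t:zeroenergy} yields only the bound $\abs{\braket{t}{\psi_{T,V}}}^2/\norm{\ket\psi}^2 \geq 1/(\norm{\ket{w'}}^2 + \wsizex{\hat P}{x})$, where $\ket{w'}$ is the chosen optimal witness. A priori $\norm{\ket{w'}}^2$ has no bound in terms of the witness size; indeed, \lemref{t:wsizescaling} shows that scaling the target and all input vectors simultaneously leaves the witness size untouched while rescaling $\norm{\ket{w'}}^2$ arbitrarily. Canonicalization is precisely what tames this term: taking $\ket{w'} = \ket x$ gives $\norm{\ket{w'}}^2 = 1$, so feeding this witness into the construction in the proof of \thmref{t:zeroenergy} produces a solution with $\abs{\braket{t}{\psi_{T,V}}}^2 = 1$ and $\norm{\ket\psi}^2 = 1 + \wsizex{\hat P}{x}$. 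The resulting bound $1/(1 + \wsizex{\hat P}{x}) = 1/(1 + \wsizex P x)$ is Eq.~\eqnref{e:zeroenergyuniformcostsfalse}.
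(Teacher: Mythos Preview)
Your proof is correct and follows essentially the same approach as the paper: canonicalize $P$ via \thmref{t:spanprogramcanonical} with $s = \vec 1$, then apply \thmref{t:zeroenergy}, using strictness to identify $S$ with the identity and using the witness $\ket{w'} = \ket x$ with $\norm{\ket{w'}}^2 = 1$ in the false case. Your additional remarks explaining why mere strictness (via \propref{t:spanprogramrelaxed}) would not suffice are accurate and clarify a point the paper leaves implicit.
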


\begin{proof}
Let $\hat P$ be the canonical span program constructed in \thmref{t:spanprogramcanonical} for costs $s = \vec 1$, with $\wsizex{\hat P}{x} \leq \wsizex P x$ for all $x \in \B^n$.  
$\hat P$ is in particular strict, so Eq.~\eqnref{e:zeroenergyuniformcoststrue} follows from Eq.~\eqnref{e:zeroenergytrue}.  

For showing Eq.~\eqnref{e:zeroenergyuniformcostsfalse}, recall from \thmref{t:spanprogramcanonical} that an optimal witness $\ket{w'}$ for $f_{\hat P}(x) = 0$ may be taken to be $\ket x$ itself, so $\norm{\ket{w'}}^2 = 1$ in Eq.~\eqnref{e:zeroenergyfalse}.  
\end{proof}

This completes the eigenvalue-zero analysis of the graphs $G_P(x)$.

\subsection{Small-eigenvalue spectral analysis of the graphs $G_P(x)$} \label{s:bipartitesmall}

\thmref{t:zeroenergy} implies in particular that when $f_P(x) = 0$, $A_{G_P(x)}$ does not have any eigenvalue-zero eigenvectors supported on the output vertex $\mu_0$.  Therefore $A_{G_P(x)}$ has some spectral gap around zero for eigenvectors overlapping $\ket 0$.  This spectral gap can be arbitrarily small, though, because $G_P(x)$ can be a very large graph and its edge weights are poorly constrained.  In fact, though, the lower bound Eq.~\eqnref{e:zeroenergyfalse} can be translated into a good lower bound on an ``effective" spectral gap.  That is, we can upper-bound the total squared support of $\ket 0$ on small-magnitude eigenvalues of $A_{G_P(x)}$.  

The main result of this section is: 

\begin{theorem} \label{t:bipartitepsdreduction}
Let $G$ be a weighted bipartite graph with biadjacency matrix $\biadj_G \in \L(\C^U, \C^T)$.  Assume that for some $\delta > 0$ and $\ket t \in \C^T$, the weighted adjacency matrix $A_G$ has an eigenvalue-zero eigenvector $\ket{\psi}$ with 
\begin{equation}
\abs{\braket{t}{\psi_T}}^2 \geq \delta \norm{\ket \psi}^2
 \enspace .
\end{equation}
Let $G'$ be the same as $G$ except with a new vertex, $\mu_0$, added to the $U$ side, and for $i \in T$ the new edge $(\tau_i, \mu_0)$ weighted by $\braket i t$.  That is, the biadjacency matrix of $G'$ is 
\begin{equation}
\biadj_{G'} = \left( \begin{matrix} \ket t & \biadj_G \end{matrix} \right)
\place{{\small $0$}}{-65mu}{12pt}
\place{{\small U}}{-27mu}{12pt}
\place{{\small T}}{3mu}{0pt}
\end{equation}

Recall that $\ket 0 = (0, 1, 0) \in \C^T \oplus \C^{\{0\}} \oplus \C^U$.  Let $\{ \ket \alpha \}$ be a complete set of orthonormal eigenvectors of $A_{G'}$, with corresponding eigenvalues $\rho(\alpha)$.  Then for all $\Upsilon \geq 0$, the squared length of the projection of $\ket 0$ onto the span of the eigenvectors $\alpha$ with $\abs{\rho(\alpha)} \leq \Upsilon$ satisfies 
\begin{equation} \label{e:bipartitepsdreduction}
\sum_{\alpha : \, \abs{\rho(\alpha)} \leq \Upsilon} \abs{\braket \alpha 0}^2 \leq 8 \Upsilon^2 / \delta
 \enspace .
\end{equation}
\end{theorem}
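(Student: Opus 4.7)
The plan is to produce an explicit vector $\ket{v}$ such that $A_{G'}\ket{v}$ is proportional to $\ket{0}$, and then expand in the eigenbasis. The key observation is that the given eigenvalue-zero eigenvector $\ket\psi$ of $A_G$, trivially extended to $G'$ by setting its entry at $\mu_0$ to $0$, is exactly such a vector.

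More concretely, write $\ket{\psi}=(\ket{\psi_T},\ket{\psi_U})\in\C^T\oplus\C^U$ and extend it to $\ket{\tilde\psi}=(\ket{\psi_T},0,\ket{\psi_U})\in\C^T\oplus\C^{\{0\}}\oplus\C^U$ by appending a zero in the $\mu_0$ coordinate. Because $A_G\ket\psi=0$, we have $\biadj_G\ket{\psi_U}=0$ and $\biadj_G^\dagger\ket{\psi_T}=0$. Using $\biadj_{G'}=(\ket t\;\; \biadj_G)$, a direct bipartite computation gives
\begin{equation}
A_{G'}\ket{\tilde\psi}
=\Big(\,0\cdot\ket t+\biadj_G\ket{\psi_U},\ \braket{t}{\psi_T},\ \biadj_G^\dagger\ket{\psi_T}\,\Big)
=\braket{t}{\psi_T}\,\ket 0.
\end{equation}
Assuming (as we may, since the claim is trivial otherwise) that $\braket{t}{\psi_T}\neq 0$, this yields the clean identity $\ket 0=\braket{t}{\psi_T}^{-1}A_{G'}\ket{\tilde\psi}$.

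Now expand $\ket{\tilde\psi}=\sum_\alpha\braket{\alpha}{\tilde\psi}\ket\alpha$ in the orthonormal eigenbasis of $A_{G'}$ with eigenvalues $\rho(\alpha)$. For each eigenvector,
\begin{equation}
\braket{\alpha}{0}=\braket{t}{\psi_T}^{-1}\bra\alpha A_{G'}\ket{\tilde\psi}=\frac{\rho(\alpha)}{\braket{t}{\psi_T}}\braket{\alpha}{\tilde\psi}.
\end{equation}
Restrict the sum to $\abs{\rho(\alpha)}\le\Upsilon$, pull the factor $\Upsilon^2/\abs{\braket{t}{\psi_T}}^2$ out, and bound the remaining sum by Parseval via $\sum_\alpha\abs{\braket\alpha{\tilde\psi}}^2=\norm{\ket{\tilde\psi}}^2=\norm{\ket\psi}^2$. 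Finally use the hypothesis $\abs{\braket{t}{\psi_T}}^2\ge\delta\norm{\ket\psi}^2$ to obtain
\begin{equation}
\sum_{\alpha:\,\abs{\rho(\alpha)}\le\Upsilon}\abs{\braket{\alpha}{0}}^2
\le\frac{\Upsilon^2\norm{\ket\psi}^2}{\abs{\braket{t}{\psi_T}}^2}\le\frac{\Upsilon^2}{\delta},
\end{equation}
which is in fact stronger than the claimed bound $8\Upsilon^2/\delta$; the factor of $8$ is just slack.

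There is no real obstacle here once one sees the right trick: the only subtlety is noticing that the bipartite structure forces $A_{G'}$ to act on the extension $\ket{\tilde\psi}$ in a way that isolates the new vertex, so that $\ket 0$ lies in the range of $A_{G'}$ with a controlled preimage. From there, the bound is just one application of Parseval and the hypothesis on $\ket\psi$.
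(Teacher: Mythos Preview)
Your proof is correct and is genuinely different from---and considerably simpler than---the paper's argument.  You also obtain the sharper constant $\Upsilon^2/\delta$ rather than $8\Upsilon^2/\delta$.

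The paper proceeds indirectly: it first rewrites $\sum_{\abs{\rho(\alpha)}\le\Upsilon}\abs{\braket{\alpha}{0}}^2$ as $\sum_{0<\abs{\rho(\alpha)}\le\Upsilon}\rho(\alpha)^{-2}\abs{\braket{t}{\alpha_T}}^2$ using $\rho(\alpha)\braket{0}{\alpha}=\braket{t}{\alpha_T}$, then passes via \propref{t:bipartitespectrum} to the positive semi-definite matrix $X'=\biadj_{G'}\biadj_{G'}^\dagger=\biadj_G\biadj_G^\dagger+\ketbra t t$, and finally invokes a separate result (\thmref{t:psdsquaredsupport}) bounding $\sum_{\lambda(\beta)\le\Lambda}\lambda(\beta)^{-1}\abs{\braket{t}{\beta}}^2$.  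That last theorem is proved by an explicit coordinate optimization (\lemref{t:psdsquaredsupportmaximize}) followed by a dyadic bootstrapping argument, which is where the factor $8$ enters.

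Your trick---observing that the trivially extended $\ket{\tilde\psi}$ satisfies $A_{G'}\ket{\tilde\psi}=\braket{t}{\psi_T}\ket 0$, so that $\ket 0$ has an explicit preimage under $A_{G'}$ of controlled norm---bypasses all of this machinery.  What the paper's route buys is an independent statement about rank-one perturbations of PSD matrices (\thmref{t:psdsquaredsupport}), which could in principle be reused elsewhere; your argument is tailored to the bipartite setting but gets the job done in a few lines with a better constant.
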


This theorem applies to the case of a strict span program $P$ with $f_P(x) = 0$, by letting $G = G_P(x)$ and, from Eq.~\eqnref{e:zeroenergyfalse} with $s = \vec 1$, $\delta = 1/ (\norm{\ket{w'}}^2 + \wsizex P x)$.  

To motivate our approach to proving \thmref{t:bipartitepsdreduction}, let us recall some basic properties about the eigenvalues and eigenvectors of bipartite graphs.  

\begin{proposition} \label{t:bipartitespectrum}
Let $G$ be a weighted bipartite graph with biadjacency matrix $\biadj_G$ and adjacency matrix~$A_G$.  

Assume that $\ket \psi = (\ket{\psi_T}, \ket{\psi_U}) \in \C^T \oplus \C^U$ is an eigenvalue-$\rho$ eigenvector of $A_G$, for some $\rho \neq 0$.  Then $(\ket{\psi_T}, -\ket{\psi_U})$ is an eigenvector of $A_G$ with eigenvalue $-\rho$.  Moreover, $\ket{\psi_T} = \frac{1}{\rho} \biadj_G \ket{\psi_U}$ is an eigenvector of $\biadj_G \biadj_G^\dagger$ and $\ket{\psi_U} = \frac{1}{\rho} \biadj_G^\dagger \ket{\psi_T}$ is an eigenvector of $\biadj_G^\dagger \biadj_G$, both with corresponding eigenvalues $\rho^2$.  

Conversely, if $\ket \varphi \in \C^T$ is an eigenvalue-$\lambda$ eigenvector of $\biadj_G \biadj_G^\dagger$ for $\lambda > 0$, then $\biadj_G \ket \varphi \in \C^U$ is an eigenvalue-$\lambda$ eigenvector of $\biadj_G^\dagger \biadj_G$ and $\ket{\psi_\pm} = (\ket \varphi, \pm \frac{1}{\sqrt \lambda} \biadj_G^\dagger \ket \varphi) \in \C^T \oplus \C^U$ are eigenvectors of $A_G$ with corresponding eigenvalues $\pm \sqrt{\lambda}$.  
\end{proposition}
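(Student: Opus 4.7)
The plan is to reduce everything to the block structure of $A_G$. Writing $A_G = \bigl(\begin{smallmatrix} 0 & \biadj_G \\ \biadj_G^\dagger & 0 \end{smallmatrix}\bigr)$, the eigenvector equation $A_G \ket\psi = \rho \ket\psi$ is equivalent to the coupled pair
\begin{equation}
\biadj_G \ket{\psi_U} = \rho \ket{\psi_T}, \qquad \biadj_G^\dagger \ket{\psi_T} = \rho \ket{\psi_U} \enspace .
\end{equation}
From these two equations, essentially all assertions of the proposition follow by direct substitution.

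First I would observe that flipping the sign of $\ket{\psi_U}$ also flips the sign of the right-hand side of the first equation while flipping the sign of the left-hand side of the second, so $(\ket{\psi_T},-\ket{\psi_U})$ is an eigenvector with eigenvalue $-\rho$. Next, substituting the second equation into the first (after applying $\biadj_G$ to the second) gives $\biadj_G \biadj_G^\dagger \ket{\psi_T} = \rho^2 \ket{\psi_T}$, and symmetrically $\biadj_G^\dagger \biadj_G \ket{\psi_U} = \rho^2 \ket{\psi_U}$. Because $\rho \neq 0$, the two original equations also let me solve $\ket{\psi_T} = \frac{1}{\rho} \biadj_G \ket{\psi_U}$ and $\ket{\psi_U} = \frac{1}{\rho} \biadj_G^\dagger \ket{\psi_T}$, which is exactly the first half of the statement.

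For the converse, given $\biadj_G \biadj_G^\dagger \ket\varphi = \lambda \ket\varphi$ with $\lambda > 0$, I would first apply $\biadj_G^\dagger$ to both sides to see that $\biadj_G^\dagger \ket\varphi$ is an eigenvector of $\biadj_G^\dagger \biadj_G$ with eigenvalue $\lambda$ (and it is nonzero, since otherwise $\lambda \ket\varphi = \biadj_G (\biadj_G^\dagger \ket\varphi) = 0$ would contradict $\lambda > 0$). I would then simply verify the guessed formula $\ket{\psi_\pm} = (\ket\varphi, \pm \frac{1}{\sqrt\lambda} \biadj_G^\dagger \ket\varphi)$ by plugging it into $A_G$: the top block gives $\pm \frac{1}{\sqrt\lambda}\biadj_G\biadj_G^\dagger\ket\varphi = \pm\sqrt\lambda\,\ket\varphi$ using the eigenvalue equation for $\ket\varphi$, and the bottom block gives $\biadj_G^\dagger\ket\varphi = \pm\sqrt\lambda\bigl(\pm\frac{1}{\sqrt\lambda}\biadj_G^\dagger\ket\varphi\bigr)$, as required. (Note: as stated in the excerpt, ``$\biadj_G \ket \varphi \in \C^U$'' must be read as $\biadj_G^\dagger \ket\varphi$, since $\biadj_G:\C^U\to\C^T$.)

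There is no real obstacle here; the proposition is just the statement that the nonzero spectrum of $A_G$ pairs up as $\pm\sqrt{\lambda}$ where $\lambda$ ranges over the nonzero eigenvalues of $\biadj_G\biadj_G^\dagger$ (equivalently, the squares of the singular values of $\biadj_G$), with the eigenvectors built in the obvious way from the singular vectors. Everything reduces to a two-line calculation with the block form, so I would keep the write-up short and direct rather than invoke a general SVD theorem.
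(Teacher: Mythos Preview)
Your proposal is correct and matches the paper's approach exactly: the paper simply writes ``The proof is immediate,'' and what you have done is spell out that immediate calculation via the block structure of $A_G$. Your observation that ``$\biadj_G \ket\varphi \in \C^U$'' in the statement must be read as $\biadj_G^\dagger \ket\varphi$ is also correct.
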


The proof is immediate.  

Thus the spectrum of $A_G$ is symmetrical around zero, and nonzero-eigenvalue eigenvectors of the positive semi-definite matrix $\biadj_G \biadj_G^\dagger$ are in exact correspondence to symmetrical pairs of nonzero-eigenvalue eigenvectors of $A_G$.  

\propref{t:bipartitespectrum} allows us to translate the claims of \thmref{t:bipartitepsdreduction} into claims on spectral properties of positive semi-definite matrices.  We will start, though, by proving the necessary result for positive semi-definite matrices, \thmref{t:psdsquaredsupport} below.  After proving \thmref{t:psdsquaredsupport}, we will give the translation to prove \thmref{t:bipartitepsdreduction}.  

\begin{theorem} \label{t:psdsquaredsupport}
Let $X \in \L(V)$ be a positive semi-definite matrix, $\ket t \in V$ a vector, and let $X' = X + \ketbra t t$.  Let $\{ \ket \beta \}$ be a complete set of orthonormal eigenvectors of $X'$, with corresponding eigenvalues $\lambda(\beta) \geq 0$.  Assume that there exists a $\ket \varphi \in \Kernel(X)$ with $\abs{\braket t \varphi}^2 \geq \delta \norm{\ket \varphi}^2$.  Then for any $\Lambda \geq 0$, 
\begin{equation} \label{e:psdsquaredsupport}
\delta \sum_{\substack{\beta : \, \lambda(\beta) \leq \Lambda \\ \braket t \beta \neq 0}} \frac{1}{\lambda(\beta)} \abs{\braket t \beta}^2 \leq 4 \Lambda
 \enspace .
\end{equation}
\end{theorem}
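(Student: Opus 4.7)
The plan is to exhibit $\ket\varphi$ as (up to a kernel correction) an explicit preimage of $\ket t$ under $X'$, then read the desired inequality off the eigenexpansion of that preimage.

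First, rescale $\ket\varphi$ so that $\braket t\varphi = 1$; the hypothesis becomes $\norm{\ket\varphi}^2 \le 1/\delta$ (the cases $\delta = 0$ or $\ket\varphi = 0$ are trivial). The kernel-of-$X$ assumption immediately gives the key identity
\[X'\ket\varphi \;=\; X\ket\varphi + \ket t \braket t \varphi \;=\; \ket t,\]
so $\ket t \in \Range(X')$ and hence, since $X'$ is Hermitian, $\ket t \perp \Kernel(X')$. Decompose $\ket\varphi = \ket{\psi_0} + \ket k$ orthogonally with $\ket k \in \Kernel(X')$ and $\ket{\psi_0} \in \Range(X')$; then $X'\ket{\psi_0} = \ket t$, i.e., $\ket{\psi_0}$ is the Moore-Penrose preimage ${X'}^{+}\ket t$, and by Pythagoras $\norm{\ket{\psi_0}}^2 \le \norm{\ket\varphi}^2 \le 1/\delta$.

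Next, expand $\ket{\psi_0} = \sum_{\beta:\lambda(\beta)>0}\frac{\braket\beta t}{\lambda(\beta)}\ket\beta$ in the eigenbasis of $X'$; eigenvectors with $\lambda(\beta)=0$ lie in $\Kernel(X')$ and are automatically orthogonal to $\ket t$, so they drop out. Computing the squared norm yields
\[\sum_{\beta:\lambda(\beta)>0}\frac{|\braket t \beta|^2}{\lambda(\beta)^2} \;=\; \norm{\ket{\psi_0}}^2 \;\le\; \frac{1}{\delta}.\]
For every $\beta$ with $0 < \lambda(\beta) \le \Lambda$ one has $1/\lambda(\beta) \le \Lambda/\lambda(\beta)^2$, so truncating the above sum to such $\beta$ and multiplying through by $\delta\Lambda$ gives
\[\delta \sum_{\beta:\,0 < \lambda(\beta) \le \Lambda} \frac{|\braket t \beta|^2}{\lambda(\beta)} \;\le\; \delta\,\Lambda \!\!\sum_{\beta:\lambda(\beta)>0}\! \frac{|\braket t \beta|^2}{\lambda(\beta)^2} \;\le\; \Lambda,\]
which is even stronger than the stated $4\Lambda$.

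The only subtle point, and the unique place where the kernel-of-$X$ hypothesis enters, is the opening identity $X'\ket\varphi = \ket t$: without it one would not know that $\ket t$ lies in $\Range(X')$ and hence admits a controlled preimage under ${X'}^{+}$. I do not anticipate any genuine obstacle; the slack between my $\Lambda$ and the claimed $4\Lambda$ suggests the author may argue via a spectral cut-off (splitting the spectrum at $\Lambda$ and bounding each piece separately), but the pseudoinverse route above appears cleaner and sharper.
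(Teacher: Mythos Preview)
Your proof is correct, and in fact it yields the sharper constant $1$ in place of the paper's $4$. The argument is clean: the identity $X'\ket\varphi=\ket t$ exhibits $\ket t$ as lying in $\Range(X')$ with an explicit preimage of norm at most $1/\sqrt\delta$, and then the pseudoinverse expansion $\ket{\psi_0}={X'}^{+}\ket t=\sum_{\lambda(\beta)>0}\lambda(\beta)^{-1}\braket\beta t\,\ket\beta$ gives $\sum_{\lambda(\beta)>0}\abs{\braket t\beta}^2/\lambda(\beta)^2\le 1/\delta$ directly.

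The paper takes a genuinely different route. It first establishes a pointwise lemma (\lemref{t:psdsquaredsupportmaximize}): for \emph{every} $\ket\xi$, $\delta\abs{\braket t\xi}^2\le\norm{X'\ket\xi}^2$. This is proved by writing $X$ and $X'$ in explicit coordinates adapted to $\ket t$ and solving a constrained minimization by calculus. The theorem is then obtained by a dyadic decomposition of the spectrum into shells $(\Lambda/2^{k+1},\Lambda/2^k]$, applying the lemma with $\ket\xi$ the projection of $\ket t$ onto each shell, and summing the geometric series; this is where the factor $4$ appears. Your intermediate bound and the paper's lemma are in fact equivalent (your bound follows from the lemma by choosing $\ket\xi=({X'}^{+})^2\ket t$; the lemma follows from your bound by Cauchy--Schwarz), but your derivation avoids both the coordinate computation and the dyadic summation, and loses nothing in the constant. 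The paper's lemma has the minor advantage of being stated for arbitrary test vectors, which could conceivably be reused elsewhere, but for the purpose at hand your approach is strictly simpler.
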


\begin{proof}
The sum is well-defined, with no division by zero, because any $\ket \beta$ with $\braket t \beta \neq 0$ must have $\lambda(\beta) = \bra \beta X' \ket \beta = \bra \beta X \ket \beta + \abs{\braket t \beta}^2 > 0$.  

The key lemma for proving \thmref{t:psdsquaredsupport} is: 

\begin{lemma} \label{t:psdsquaredsupportmaximize}
Under the conditions of \thmref{t:psdsquaredsupport}, for any $\ket \xi \in V$, 
\begin{equation} \label{e:psdsquaredsupportmaximize}
\delta \abs{\braket t \xi}^2 \leq \norm{X' \ket \xi}{}^2
 \enspace .
\end{equation}
Moreover, if $\ket \xi$ is a linear combination of eigenvectors with corresponding eigenvalues at most $\kappa$, i.e., $\ket \xi = \sum_{\beta : \lambda(\beta) \leq \kappa} \braket \beta \xi \ket \beta$, then 
\begin{equation} \label{e:psdsquaredsupportmaximizeplugin}
\delta \abs{\braket t \xi}^2 \leq \kappa^2 \norm{\ket \xi}^2
 \enspace .
\end{equation}
\end{lemma}

\begin{proof}
We will write the matrices $X$ and $X'$ out in coordinates.  Fixing $\braket t \xi$, we will use straightforward calculus to minimize $\norm{X' \ket \xi}^2$.  

Let $\ket 1, \ldots, \ket m$ be a complete, orthonormal set of eigenvectors for $\Big( \identity - \frac{\ketbra t t}{\norm{\ket t}^2} \Big) X \Big( \identity - \frac{\ketbra t t}{\norm{\ket t}^2} \Big)$, with corresponding eigenvalues $a_1, a_2, \ldots, a_m$.  In the coordinates $\big( \frac{\ket t}{\norm{\ket t}}, \ket 1, \ldots, \ket m \big)$, $X$ and $X'$ are given by 
\begin{align}
X &= \left( \begin{matrix} 
a & \bar b_1 & \ldots & \bar b_m \\
b_1 & a_1 & & 0 \\
\vdots & & \ddots & \\
b_m & 0 & & a_m
\end{matrix} \right) \\
X' &= \left( \begin{matrix} 
a + \norm{\ket t}^2 & \bar b_1 & \ldots & \bar b_m \\
b_1 & a_1 & & 0 \\
\vdots & & \ddots & \\
b_m & 0 & & a_m
\end{matrix} \right)
\end{align}
where $a = \bra t X \ket t / \norm{\ket t}^2$ and $b_j = \bra{a_j} X \frac{\ket t}{\norm{\ket t}}$, for $j \in [m]$.  

By incorporating any phases into the basis vectors $\ket j$, we may assume that all $b_j \geq 0$.  Furthermore, we may assume without loss of generality that all $b_j > 0$.  Indeed, if some $b_j = 0$, then the $\ket j$ coordinate lies in a different block of $X'$ from $\ket t$, so removing this coordinate will not affect $\min_{\ket \psi} \norm{X' \ket \psi} / \abs{\braket t \psi}$.  Since $X \succeq 0$, all $a_j \geq 0$.  Moreover, if some $a_j = 0$, then since $\big(\begin{smallmatrix} a & b_j \\ b_j & 0 \end{smallmatrix}\big)$ is a (positive semi-definite) submatrix of $X$, it must be that $b_j = 0$.  Hence we may assume that $a_j > 0$ for all $j \in [m]$.  

We are given the existence of a $\ket \varphi \in \Kernel(X)$ with $\abs{\braket t \varphi}^2 \geq \delta \norm{\ket \varphi}^2$.  Let us write out this condition in coordinates.  By scaling $\ket \varphi$, we may assume that $\braket t \varphi = \norm{\ket t}$.  Thus, written in coordinates, $\ket \varphi = (1, -\frac{b_1}{a_1}, \ldots, -\frac{b_m}{a_m})$ and $\bra t X \ket \varphi = 0$ implies that 
\begin{equation} \label{e:coordinatesa}
a = \sum_{j=1}^m b_j^2 / a_j
 \enspace .
\end{equation}
The condition $\abs{\braket t \varphi}^2 \geq \delta \norm{\ket \varphi}^2$, in coordinates, is
\begin{equation}
\norm{\ket t}^2 \geq \delta \Big( 1 + \sum_{j=1}^m \Big(\frac{b_j}{a_j}\Big)^2 \Big)
 \enspace .
\end{equation}

We can now solve the minimization problem: 

\begin{claim}
\begin{equation}
\min_{\ket \xi : \, \braket t \xi = \norm{\ket t}} \norm{X' \ket \xi}^2 = \frac{ \norm{\ket t}^4 }{ 1 + \sum_j \big(\frac{b_j}{a_j}\big)^2 } \geq \delta \norm{\ket t}^2
 \enspace .
\end{equation}
\end{claim}

\begin{proof}
Since $X'$ is a symmetric matrix, we may assume that $\ket \xi$ has real coordinates.  Introduce variables $c_1, \ldots, c_m$ and let $\ket \xi = (1, c_1, \ldots, c_m)$.  For $j \in [m]$, let $\gamma_j = a_j \big(\frac{a_j}{b_j} c_j + 1\big)$.  Then 
\begin{align} \label{e:psdsquaredsupportmaximizeXprimexinorm}
\norm{X' \ket \xi}^2
&= 
\big( a + \norm{\ket t}^2 + \sum_j b_j c_j \big)^2 + \sum_j (b_j + a_j c_j)^2 \nonumber \\
&=
\bigg(a + \norm{\ket t}^2 + \sum_j \frac{b_j^2}{a_j} \Big( \frac{\gamma_j}{a_j} - 1 \Big) \bigg)^2 + \sum_j \Big( \frac{b_j}{a_j} \gamma_j \Big)^2 \nonumber \\
&= 
\bigg( \norm{\ket t}^2 + \sum_j \Big(\frac{b_j}{a_j}\Big)^2 \gamma_j \bigg)^2 + \sum_j \Big( \frac{b_j}{a_j} \Big)^2 \gamma_j^2
 \enspace ,
\end{align}
where we have substituted $c_j = \frac{b_j}{a_j}\big( \frac{\gamma_j}{a_j}-1 \big)$ and then used Eq.~\eqnref{e:coordinatesa} to cancel $a$ from the first term.  

A global minimum exists and will satisfy, for all $j \in [m]$, 
\begin{equation}\begin{split}
0
&= 
\frac{\partial}{\partial \gamma_j} \norm{X' \ket \xi}^2 \\
&=
2 \Big(\frac{b_j}{a_j}\Big)^2 \bigg( \gamma_j + \norm{\ket t}^2 + \sum_k \Big( \frac{b_k}{a_k} \Big)^2 \gamma_k \bigg)
 \enspace .
\end{split}\end{equation}
Thus we should set all $\gamma_j$ equal, $\gamma_j = \gamma$ for $j \in [m]$, where $\gamma = - \norm{\ket t}^2  / (1 + S)$ and $S = \sum_j \big(\frac{b_j}{a_j}\big)^2$.  Substituting back into Eq.~\eqnref{e:psdsquaredsupportmaximizeXprimexinorm}, $\norm{X' \ket \xi}^2$ at the minimum is 
\begin{align}
\norm{X' \ket \xi}^2 
&= ( \norm{\ket t}^2 + S \gamma )^2 + S \gamma^2 \nonumber \\
&= \norm{\ket t}^4 / (1 + S) 
 \enspace ,
\end{align}
as claimed.
\end{proof}
Eq.~\eqnref{e:psdsquaredsupportmaximize} follows.  Eq.~\eqnref{e:psdsquaredsupportmaximizeplugin} is an immediate consequence of Eq.~\eqnref{e:psdsquaredsupportmaximize}, since $\ket \xi = \sum_{\beta : \lambda(\beta) \leq \kappa} \braket \beta \xi \ket \beta$ implies $\norm{X' \ket \xi} \leq \kappa \norm{\ket \xi}$.  This completes the proof of \lemref{t:psdsquaredsupportmaximize}.  
\end{proof}

Now let us derive Eq.~\eqnref{e:psdsquaredsupport} by bootstrapping \lemref{t:psdsquaredsupportmaximize}.  We aim to bound
\begin{align}
\delta \sum_{\substack{\beta : \, \lambda(\beta) \leq \Lambda \\ \braket t \beta \neq 0}} \frac{1}{\lambda(\beta)} \abs{\braket t \beta}^2
&= 
\delta \sum_{k = 0}^\infty \sum_{\frac{\Lambda}{2^{k+1}} < \lambda(\beta) \leq \frac{\Lambda}{2^k}} \frac{1}{\lambda(\beta)} \abs{\braket t \beta}^2 \nonumber \\
&\leq 
\frac \delta \Lambda \sum_{k = 0}^\infty 2^{k+1} \sum_{\frac{\Lambda}{2^{k+1}} < \lambda(\beta) \leq \frac{\Lambda}{2^k}} \abs{\braket t \beta}^2 \nonumber \\
&= 
\frac \delta \Lambda \sum_{k = 0}^\infty 2^{k+1} \braket{t}{t_k}
 \enspace ,
\intertext{where $\ket{t_k} = \sum_{\beta : \frac{\Lambda}{2^{k+1}} < \lambda(\beta) \leq \frac{\Lambda}{2^k}} \braket \beta t \ket \beta$, the projection of $\ket t$ onto the span of the eigenvectors with eigenvalues in $\big(\frac{\Lambda}{2^{k+1}}, \frac{\Lambda}{2^k}\big]$.  Therefore $\braket{t}{t_k} = \braket{t_k}{t_k} = \abs{\braket{t}{t_k}}^2 / \norm{\ket{t_k}}^2$ when $\ket{t_k} \neq 0$, so Eq.~\eqnref{e:psdsquaredsupportmaximizeplugin} can be applied with $\ket \xi = \ket{t_k}$ and $\kappa = \Lambda / 2^k$ to continue:}
\delta \sum_{\substack{\beta : \, \lambda(\beta) \leq \Lambda \\ \braket t \beta \neq 0}} \frac{1}{\lambda(\beta)} \abs{\braket t \beta}^2
&\leq 
\frac 1 \Lambda \sum_{k = 0}^\infty 2^{k+1} \Big( \frac{\Lambda}{2^k} \Big)^2 \nonumber \\
&= 
2 \Lambda \sum_{k = 0}^\infty \frac{1}{2^k} \nonumber \\
&= 
4 \Lambda 
 \enspace ,
\end{align}
as claimed.  
%% NOTE: In general, this expansion can be made for any ratio r instead of r=1/2.  But 1/2 is best.  
\end{proof}

With \thmref{t:psdsquaredsupport} in hand, we can now apply \propref{t:bipartitespectrum} to prove \thmref{t:bipartitepsdreduction}.  

\begin{proof}[Proof of \thmref{t:bipartitepsdreduction}]
We are given an eigenvalue-zero eigenvector of $A_G$, $(\ket{\psi_T}, 0) \in \C^T \oplus \C^U$ with $\abs{\braket{t}{\psi_T}}^2 \geq \delta \norm{\ket{\psi_T}}^2$.  In particular, $\biadj_G^\dagger \ket{\psi_T} = 0$.  

An eigenvalue-zero eigenvector $\ket \zeta = (\ket{\zeta_T}, \zeta_0, \ket{\zeta_U}) \in \C^T \oplus \C^{\{0\}} \oplus \C^U$ has to satisfy 
\begin{equation}\begin{split}
0 &= \biadj_{G'} (\zeta_0, \ket{\zeta_U}) \\
&= \zeta_0 \ket t + \biadj_G \ket{\zeta_U} \enspace .
\end{split}\end{equation}
Since $\abs{\braket{t}{\psi_T}}^2 > 0$ and $\biadj_G^\dagger \ket{\psi_T} = 0$, $\ket t$ cannot lie in the range of $\biadj_G$, so $\zeta_0$ must be zero.  Thus follows the claim for $\Upsilon = 0$, that $A_{G'}$ has no eigenvalue-zero eigenvectors supported on $\mu_0$.  

Now to show Eq.~\eqnref{e:bipartitepsdreduction} for $\Upsilon > 0$, note that for each eigenvector $\ket \alpha$ of $A_{G'}$, $\rho(\alpha) \braket 0 \alpha = \bra 0 A_{G'} \ket \alpha = \braket{t}{\alpha_T}$.  Therefore 
\begin{equation}
\sum_{\alpha : \, \abs{\rho(\alpha)} \leq \Upsilon} \abs{\braket \alpha 0}^2
=
\sum_{\alpha : \, 0 < \abs{\rho(\alpha)} \leq \Upsilon} \frac{1}{\rho(\alpha)^2} \abs{\braket{t}{\alpha_T}}^2
 \enspace .
\end{equation}
Let $X' = \biadj_{G'} \biadj_{G'}^\dagger$.  Let $\{ \ket \beta \}$ be a complete set of orthonormal eigenvectors of $X'$, with corresponding eigenvalues $\lambda(\beta)$.  By \propref{t:bipartitespectrum}, each eigenvector $\ket \beta$ with $\lambda(\beta) \neq 0$ corresponds to a pair of eigenvectors of $A_{G'}$ with eigenvalues $\pm \sqrt{\lambda(\beta)}$.  The above sum therefore equals 
\begin{equation}
2 \sum_{\substack{\beta : \, 0 < \lambda(\beta) \leq \Upsilon^2}} \frac{1}{\lambda(\beta)} \abs{\braket t \beta}^2
 \enspace .
\end{equation}

Now apply \thmref{t:psdsquaredsupport} with $X = X' - \ketbra t t = \biadj_G \biadj_G^\dagger \succeq 0$, $\ket \varphi = \ket{\psi_T}$ and $\Lambda = \Upsilon^2$, to obtain the claimed upper bound of $8 \Upsilon^2 / \delta$.  
\end{proof}

\subsection{Proofs of \texorpdfstring{\thmref{t:spanprogramspectralanalysis}}{Theorem~\ref{t:spanprogramspectralanalysis}} and \texorpdfstring{\thmref{t:spanprogramspectralanalysisnonblackbox}}{Theorem~\ref{t:spanprogramspectralanalysisnonblackbox}}} \label{s:spanprogramspectralanalysisproof}

Let us now combine \thmref{t:zeroenergy} and \thmref{t:bipartitepsdreduction} to prove \thmref{t:spanprogramspectralanalysis} and \thmref{t:spanprogramspectralanalysisnonblackbox}.  The proof of \thmref{t:spanprogramspectralanalysis} will also use the canonical span program reduction, \thmref{t:spanprogramcanonical}.  

\begin{proof}[Proof of \thmref{t:spanprogramspectralanalysis}]
Let $\hat P$ be the canonical span program constructed in \thmref{t:spanprogramcanonical} for costs $s = \vec 1$, with $\wsizex{\hat P}{x} \leq \wsizex P x$ for all $x \in \B^n$.  In particular, recall that when $f_P(x) = 0$, an optimal witness $\ket{w'}$ may be taken to be $\ket x$ itself.  Also, $\hat P$ is strict, i.e., has $\Ifree = \emptyset$, so $S$ is the identity on $\C^I$.    

Let $P'$ be the same as $\hat P$ except with the the target vector scaled by a factor of $1/ \sqrt{\wsizeD P}$.  Thus $f_{P'} = f_P$ still, and, for all $x \in \D$, 
\begin{equation} \label{e:spanprogramspectralanalysiswsizeamplified}
\wsizex{P'}{x} \leq \begin{cases} 1 & \text{if $f_P(x) = 1$} \\ \wsizeD{P}^2 & \text{if $f_P(x) = 0$} \end{cases}
\end{equation}
Now, when $f_{P'}(x) = 0$, an optimal witness is $\ket{w'} = \sqrt{\wsizeD P} \ket x$.  This scaling step is known as amplification.  It was introduced by~\cite{ChildsReichardtSpalekZhang07andor} and also applied in~\cite{AmbainisChildsReichardtSpalekZhang07andor, ReichardtSpalek08spanprogram}.  

For the case $f_P(x) = 1$, the first part of \thmref{t:spanprogramspectralanalysis}, Eq.~\eqnref{e:spanprogramspectralanalysistrue}, now follows from Eqs.~\eqnref{e:zeroenergytrue} and~\eqnref{e:spanprogramspectralanalysiswsizeamplified}; since $S = \identity$, $\norm{\ket \psi}^2 = \abs{\psi_{U,0}}^2 + \norm{S \ket{\psi_{U,I}}}^2$.   

For the case $f_P(x) = 0$, let $G$ be the graph $G_P(x)$ with the output vertex $\mu_0$ and all incident edges deleted.  Thus $G$'s biadjacency matrix is the same as $\biadj_{G_P(x)}$ from Eq.~\eqnref{e:spanprogramadjacencymatrix}, except with the $\mu_0$ column deleted.  
\thmref{t:zeroenergy} implies that $A_G$ has an eigenvalue-zero eigenvector $\ket \psi = (\ket{\psi_{T,V}}, \ket{\psi_{T,I}}, 0) \in V \oplus \C^I \oplus \C^I$ satisfying 
\begin{equation}\begin{split}
\frac{ \abs{\braket{t}{\psi_{T,V}}}^2 } { \norm{\ket{\psi}}^2 } 
&\geq \frac 1 {\norm{\ket{w'}}^2 + \wsizex {P'} x} \\
&\geq \frac 1 {\wsizeD P (\wsizeD P + 1)}
\end{split}\end{equation}
by Eqs.~\eqnref{e:zeroenergyfalse} and~\eqnref{e:spanprogramspectralanalysiswsizeamplified}.  Eq.~\eqnref{e:spanprogramspectralanalysisfalse} now follows by Eq.~\eqnref{e:bipartitepsdreduction} in \thmref{t:bipartitepsdreduction} with $G' = G_P(x)$, $\Upsilon = c / \wsizeD P$ and $\delta = 1 / \big(\wsizeD P (\wsizeD P + 1)\big)$.  
\end{proof}

\begin{proof}[Proof of \thmref{t:spanprogramspectralanalysisnonblackbox}]
The idea is that we want to charge for the free input vectors of $P$.  Let $P'$ be a strict span program that is the same as $P$ except with one extra input bit, and with the free input vectors of $P$ now labeled by $(n+1, 1)$.  That is, $I_{j,b}' = I_{j,b}$ for $j \in [n]$ and $b \in \B$, but $\Ifree' = I_{n+1,0}' = \emptyset$ and $I_{n+1,1}' = \Ifree$.  Then for all $x \in \B^n$, $f_{P'}(x,1) = f_P(x)$, with the same witnesses, and $G_{P'}(x,1) = G_P(x)$.  The only difference is that in the case $f_P(x) = 1$, $\wsizex{P'}{x} = \min_{\ket w : A \Pi(x) \ket w = \ket t} \norm{\ket w}^2$ counts the portion of $\ket w$ on indices in $\Ifree$, while $\wsizex P x$ does not.  

The proof now follows the same steps as the proof of \thmref{t:spanprogramspectralanalysis}, except with $\delta = 1 / (\norm{\ket{w'}}{}^2 + \norm{A^\dagger \ket{w'}}{}^2)$ in the case $f_P(x) = 0$.  
\end{proof}

\section{Quantum algorithm for evaluating span programs} \label{s:algorithm}

In this section, we will connect quantum query algorithms to the graph spectral properties that are the conclusions of \thmref{t:spanprogramspectralanalysis} and \thmref{t:spanprogramspectralanalysisnonblackbox}.  The following theorem gives two quantum algorithms for evaluating a total or partial boolean function $f$ based on promised spectral properties of a family of graphs $\{ G(x) : x \in \D \}$, with $\D \subseteq \B^n$.  

\begin{theorem} \label{t:phaseestimationalgorithm}
Let $G = (V,E)$ be a complex-weighted graph with Hermitian weighted adjacency matrix $A_G \in \L(\C^V)$ satisfying $\bra v A_G \ket v \geq 0$ for all $v \in V$.  Let $V_{\text{input}}$ be a subset of degree-one vertices of $G$ whose incident edges have weight one, and partition $V_{\text{input}}$ as $V_{\text{input}} = \bigsqcup_{j \in [n], b \in \B} V_{j,b}$.  For $x \in \B^n$, define $G(x)$ from $G$ by deleting all edges to vertices in $\cup_{j \in [n]} V_{j, x_j}$.  Let $A_{G(x)} \in \L(\C^V)$ be the weighted adjacency of matrix of $G(x)$.  

Let $f : \D \rightarrow \B$, with $\D \subseteq \B^n$, $\mu \in V \smallsetminus V_{\text{input}}$, $\epsilon = \Omega(1)$ and $\Lambda > 0$.  Assume that for all $x \in \D$ the graphs $G(x)$ satisfy:
\begin{itemize}
\item
If $f(x) = 1$, then $A_{G(x)}$ has an eigenvalue-zero eigenvector $\ket \psi \in \C^V$ with 
\begin{equation} \label{e:phaseestimationalgorithmtrue}
\frac{\abs{\braket \mu \psi}^2}{\norm{\ket \psi}} \geq \epsilon
 \enspace .
\end{equation}
\item 
If $f(x) = 0$, let $\{ \ket \alpha \}$ be a complete set of orthonormal eigenvectors of $A_{G(x)}$, with corresponding eigenvalues $\rho(\alpha)$.  Assume that 
the squared length of the projection of $\ket \mu$ onto the span of the eigenvectors $\alpha$ with $\abs{\rho(\alpha)} \leq \Lambda$ satisfies 
\begin{equation} \label{e:phaseestimationalgorithmfalse}
\sum_{\alpha : \, \abs{\rho(\alpha)} \leq \Lambda} \abs{\braket \alpha \mu}^2 \leq \epsilon / 2
 \enspace .
\end{equation}
\end{itemize}

Let $\abst(A_G)$ be the entry-wise absolute value of $A_G$, and let $\norm{\abst(A_G)}$ be its operator norm.  Then $f$ can be evaluated with error probability at most $1/3$ using at most 
\begin{equation} \label{e:phaseestimationalgorithm}
O\left( \min \bigg\{ \frac{\norm{\abst(A_G)}}{\Lambda}, \; \frac{1}{\Lambda} \frac{\log \frac{1}{\Lambda}}{\log \log \frac{1}{\Lambda}} \bigg\} \right)
\end{equation}
quantum queries.  
\end{theorem}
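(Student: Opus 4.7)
The plan is to apply quantum phase estimation to the Hermitian operator $A_{G(x)}$, starting from the state $\ket \mu$, with precision $\Theta(\Lambda)$, and to output $1$ if the measured eigenvalue lies in an acceptance window near zero and $0$ otherwise. The promised spectral dichotomy gives this test a constant advantage on each run, which a constant number of repetitions amplifies to error at most $1/3$.

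For completeness, when $f(x)=1$, the eigenvalue-zero eigenvector $\ket \psi$ of Eq.~\eqnref{e:phaseestimationalgorithmtrue} places at least an $\epsilon$ fraction of $\ket \mu$'s squared mass on the exact kernel of $A_{G(x)}$, so phase estimation to accuracy $\Lambda/C$ for a suitable constant $C$ returns a phase of magnitude at most $\Lambda/2$ with probability at least $\epsilon - o(1)$. For soundness, when $f(x)=0$, Eq.~\eqnref{e:phaseestimationalgorithmfalse} bounds by $\epsilon/2$ the squared overlap of $\ket \mu$ with eigenspaces of eigenvalues of magnitude at most $\Lambda$; the contribution from eigenvalues outside $[-\Lambda,\Lambda]$ leaking into the acceptance window is $o(\epsilon)$ by standard Fourier tail estimates for phase estimation, provided we take the acceptance window strictly inside $(-\Lambda,\Lambda)$ and use a sufficiently accurate phase-estimation circuit. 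Thus the accept probabilities in the two cases are separated by a constant gap, and since $\epsilon=\Omega(1)$, $O(1)$ repetitions with a majority vote suffice.

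It then remains to implement the phase estimation within the two target query complexities. For the bound $O(\norm{\abst(A_G)}/\Lambda)$, I would follow the Szegedy-style discrete-time quantum walk construction used in~\cite{ReichardtSpalek08spanprogram}: a single walk step costs $O(1)$ input queries, and the eigenphases of the walk operator encode the spectrum of $A_{G(x)}/\norm{\abst(A_G)}$ via $\cos\theta = \rho/\norm{\abst(A_G)}$. Phase estimation resolving $\theta$ to accuracy $\Theta(\Lambda/\norm{\abst(A_G)})$ therefore uses $O(\norm{\abst(A_G)}/\Lambda)$ walk steps, hence queries. For the bound $O((1/\Lambda)\log(1/\Lambda)/\log\log(1/\Lambda))$, I would instead invoke the sparse Hamiltonian-simulation technique of Cleve et al.~\cite{CleveGottesmanMoscaSommaYongeMallo08discretize}, which simulates $e^{i A_{G(x)} t}$ to polynomial precision using $O(t\log t/\log\log t)$ queries with no explicit dependence on $\norm{A_{G(x)}}$. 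Setting $t=\Theta(1/\Lambda)$ and using standard phase estimation then yields the stated complexity.

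The most delicate step is controlling the leakage between the true and false cases caused by the finite precision of phase estimation. Eq.~\eqnref{e:phaseestimationalgorithmfalse} gives a hard cut-off at $\abs{\rho}=\Lambda$, but phase estimation smears each eigenvalue over a window of width $\sim \Lambda/C$; to ensure that $0$-case eigenvectors with $\abs{\rho}\leq\Lambda$ do not contribute more than $\epsilon/2 + o(1)$ to the acceptance probability, and that eigenvectors with $\abs{\rho}>\Lambda$ contribute only $o(\epsilon)$, one must take $C$ sufficiently large and possibly employ the consistent-phase-estimation or median-boosting trick so that the Fourier tails of the phase-estimation kernel are negligible compared to the constant gap $\epsilon/2$. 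This is a standard but careful calculation, and is the main place where the hypotheses of the theorem must be tuned against the chosen precision.
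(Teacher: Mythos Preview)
Your plan is essentially the paper's plan, and for the first bound $O(\norm{\abst(A_G)}/\Lambda)$ your Szegedy-walk sketch matches the paper's argument closely: normalize so that $\norm{\abst(A_G)}=1$, quantize via Szegedy's correspondence, and run phase estimation on the resulting unitary to precision $\Theta(\Lambda)$. The leakage/tail analysis you describe is exactly what is needed there, and the paper handles it with the same care you indicate.

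For the second bound, however, there is a genuine gap. You propose to simulate $e^{i A_{G(x)} t}$ for $t=\Theta(1/\Lambda)$ via the Cleve--Gottesman--Mosca--Somma--Yonge-Mallo result and then run phase estimation. The problem is eigenvalue \emph{wraparound}: since $\norm{A_{G(x)}}$ is not bounded in terms of $\Lambda$, an eigenvalue $\rho$ with $\abs{\rho}\gg\Lambda$ can satisfy $\rho\, t \in 2\pi\mathbf{Z}$, so that $e^{i\rho t}=1$ is indistinguishable from an eigenvalue-zero contribution. Eq.~\eqnref{e:phaseestimationalgorithmfalse} gives you no control whatsoever over the mass of $\ket\mu$ on eigenvectors with $\abs{\rho}>\Lambda$, so such aliasing can destroy soundness entirely. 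This is not a Fourier-tail issue and cannot be fixed by shrinking the acceptance window or boosting the phase-estimation precision; a fixed $t$ will always have bad eigenvalues.

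The paper's fix is to choose the evolution time $T$ \emph{uniformly at random} from an interval $(0,\tau)$ with $\tau=\Theta(1/\Lambda)$. Then for any fixed $\rho$ with $\abs{\rho}>\Lambda$, the phase $T\rho$ is uniformly distributed over an interval of length $\gg 2\pi$, and an explicit averaging calculation shows that the expected contribution of such eigenvectors to the acceptance probability is $O(1/M)$ for a parameter $M=\Theta(1/\epsilon)$, independently of how large $\abs{\rho}$ is. This random-time trick is the key missing idea in your proposal for the second algorithm.
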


The intuition behind this theorem is that $f$ can be evaluated by starting at $\ket \mu$ and ``measuring" $A_{G(x)}$ to precision $\Lambda$.  (More precisely, this is implemented by applying phase estimation to a certain unitary operator.)  Output $1$ if and only if the measurement returns $0$.  Eq.~\eqnref{e:phaseestimationalgorithmtrue} implies completeness when $f(x) = 1$, because the initial state has large overlap with an eigenvalue-zero eigenstate.  Eq.~\eqnref{e:phaseestimationalgorithmfalse} implies soundness when $f(x) = 0$.

In fact, the proof of \thmref{t:phaseestimationalgorithm} requires two quantum algorithms, one for each of the bounds in Eq.~\eqnref{e:phaseestimationalgorithm}.  
\begin{enumerate}
\item
The proof that $Q(f) = O(\norm{\abst(A_G)} / \Lambda)$ is based on Szegedy's correspondence between continuous- and discrete-time quantum walks~\cite{Szegedy04walkfocs}.  The proof is nearly the same as in~\cite[Appendix~B.2]{ReichardtSpalek08spanprogram}.  The differences are that we are only assuming an effective spectral gap in the case $f(x) = 0$, and that the graph $G$ in \thmref{t:phaseestimationalgorithm} is not required to be bipartite.  The graphs to which we apply \thmref{t:phaseestimationalgorithm} below will be bipartite, though, since they will be derived from span programs.  

This algorithm applies to the formula-evaluation applications, \thmref{t:spanprogramalgorithm}, \thmref{t:querycomplexitylimit} and \thmref{t:formulaevaluation}.  In each case, a span program $P$ is given and the algorithm run with $G = G_P$.  In addition to lower-bounding $\Lambda$, the query and time complexity bounds require showing that $\norm{\abst(A_G)} = O(1)$.  
\item
The second bound, $Q(f) = \tilde O(1/\Lambda)$, is applicable in the more typical case when we do not know an upper bound on $\norm{\abst(A_G)}$.  The idea is to apply phase estimation to $e^{i A_{G(x)}}$.  Since $A_G$ is independent of the input $x$, recent work by Cleve et al.\ shows that its norm does not matter if we can concede a logarithmic factor in the query complexity~\cite{CleveGottesmanMoscaSommaYongeMallo08discretize}.  For applying phase estimation, there is still the problem that eigenvalues can wrap around the circle, e.g., $e^{2 \pi i} = e^{0 i}$, leading to false positives.  To avoid such errors, we scale $A_{G(x)}$ by a uniformly random number $R \in (0, 144/\epsilon^2)$.   
\end{enumerate}

Although \thmref{t:phaseestimationalgorithm} refers only to query complexity, and not time complexity, the first algorithm's time complexity can also often be bounded under reasonable assumptions on $G$.  See Refs.~\cite{ReichardtSpalek08spanprogram, AmbainisChildsReichardtSpalekZhang07andor, ChiangNagajWocjan09simulate} for details.  

\smallskip

For a span program $P$, the graphs $G_P$ and $G_P(x)$ from \defref{t:spanprogramadjacencymatrix} are of the form required by \thmref{t:phaseestimationalgorithm}.  The assumptions Eqs.~\eqnref{e:phaseestimationalgorithmtrue} and~\eqnref{e:phaseestimationalgorithmfalse} for \thmref{t:phaseestimationalgorithm} are also of the same type as the conclusions of \thmref{t:spanprogramspectralanalysis} and \thmref{t:spanprogramspectralanalysisnonblackbox}.  Therefore, assuming for the moment \thmref{t:phaseestimationalgorithm}, as corollaries we obtain quantum algorithms for evaluating span programs: 

\begin{theorem}
 \label{t:generalspanprogramalgorithm}
Let $P$ be a span program and $\D \subseteq \B^n$.  Then the quantum query complexity of $f_P$ restricted to $\D$ satisfies
\begin{equation}
Q(f_P\vert_\D) = 
O\bigg( \wsizeD P \frac{\log \wsizeD P}{\log \log \wsizeD P} \bigg)
 \enspace .
\end{equation}
\end{theorem}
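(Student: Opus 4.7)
The plan is to combine Theorem~\ref{t:spanprogramspectralanalysis} with Theorem~\ref{t:phaseestimationalgorithm}, applying the second bound in~\eqnref{e:phaseestimationalgorithm} (since we have no a priori control on $\norm{\abst(A_{G_P})}$ for a general span program). The link is immediate: the hypotheses of Theorem~\ref{t:phaseestimationalgorithm} (an eigenvalue-zero eigenvector with a constant squared overlap on a distinguished vertex in the $1$-case, and a small total squared overlap of that vertex on small-magnitude eigenvectors in the $0$-case) match exactly the form of the conclusions of Theorem~\ref{t:spanprogramspectralanalysis}.

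The steps, in order: First, invoke Theorem~\ref{t:spanprogramspectralanalysis} to replace $P$ by a span program $P'$ with $f_{P'} = f_P$ such that, for every $x \in \D$, the weighted bipartite graph $G_{P'}(x)$ (defined per \defref{t:spanprogramadjacencymatrix}) satisfies both~\eqnref{e:spanprogramspectralanalysistrue} and~\eqnref{e:spanprogramspectralanalysisfalse}. Second, set up the parameters needed for Theorem~\ref{t:phaseestimationalgorithm}: take $G = G_{P'}$, identify the distinguished vertex $\mu$ with the output vertex $\mu_0$ of $G_{P'}(x)$, identify the deletable input-vertex classes $V_{j,b}$ with the pairs of vertices associated with the input-vector indices in $I_{j,b}$, and set $\epsilon = 1/2$.

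Third, choose $\Lambda$. The $1$-case hypothesis~\eqnref{e:phaseestimationalgorithmtrue} is verified with $\epsilon = 1/2$ directly from~\eqnref{e:spanprogramspectralanalysistrue}. For the $0$-case, the bound~\eqnref{e:spanprogramspectralanalysisfalse} says $\sum_{\alpha : \abs{\rho(\alpha)} \leq c/\wsizeD P} \abs{\braket{\alpha}{0}}^2 \leq 16 c^2$; to force this to be at most $\epsilon/2 = 1/4$, choose $c$ to be a small absolute constant (e.g.\ $c = 1/8$), and set
\begin{equation*}
\Lambda = \frac{c}{\wsizeD P} = \Theta\!\left(\frac{1}{\wsizeD P}\right).
\end{equation*}
Fourth, plug these parameter choices into the second term of the $\min$ in~\eqnref{e:phaseestimationalgorithm}; since $1/\Lambda = \Theta(\wsizeD P)$, this yields query complexity
\begin{equation*}
O\!\left(\frac{1}{\Lambda}\,\frac{\log(1/\Lambda)}{\log\log(1/\Lambda)}\right) = O\!\left(\wsizeD P \,\frac{\log \wsizeD P}{\log \log \wsizeD P}\right),
\end{equation*}
with constant-probability error as required.

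The only thing to verify carefully is that all the structural assumptions of Theorem~\ref{t:phaseestimationalgorithm} (bipartite/graph shape, a nonnegative diagonal $\bra v A_G \ket v \geq 0$, degree-one weight-one input vertices partitioned by $(j,b)$, and the fact that deleting edges incident to $\cup_j V_{j,x_j}$ yields $G_{P'}(x)$) are met by the construction in \defref{t:spanprogramadjacencymatrix}; this is routine from the definition of $\biadj_{G_P(x)}$. There is no genuine obstacle here beyond the already-proved Theorem~\ref{t:spanprogramspectralanalysis}: the amplification/rescaling of the target vector is absorbed into the passage from $P$ to $P'$ in that theorem, and this is precisely why Theorem~\ref{t:spanprogramspectralanalysis} was stated in its black-box form rather than the sharper non-black-box form of Theorem~\ref{t:spanprogramspectralanalysisnonblackbox}. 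Hence the proof is essentially a one-line composition once the two ingredients are in hand.
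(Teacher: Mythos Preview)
Your proposal is correct and follows essentially the same approach as the paper's proof: set $c = 1/8$ in \thmref{t:spanprogramspectralanalysis}, then apply \thmref{t:phaseestimationalgorithm} to $G_{P'}$ with $\mu = \mu_0$, $\epsilon = 1/2$ and $\Lambda = c/\wsizeD P$, using the second (logarithmic-overhead) bound in \eqnref{e:phaseestimationalgorithm}. Your elaboration of the structural checks (bipartiteness makes the diagonal of $A_G$ zero, the $\tau_i$ vertices for $i \in I_{j,b}$ are the degree-one weight-one input vertices, and $P'$ being canonical means $\Ifree = \emptyset$) is accurate and fills in what the paper leaves implicit.
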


\begin{proof}
Set $c = 1/8$ in \thmref{t:spanprogramspectralanalysis} and apply \thmref{t:phaseestimationalgorithm} with $\mu$ the output vertex $\mu_0$ of $G_P$, 
$\epsilon = 1/2$ and $\Lambda = c / \wsizeD P$.  
\end{proof}

\begin{theorem} \label{t:generalspanprogramalgorithmnonblackbox}
Let $P$ be a span program with target vector $\ket t$ and input vectors $\ket{v_i}$ for $i \in I = \Ifree \cup \bigcup_{j \in [n], b \in \B} I_{j,b}$, in inner product space $V$.  Let $\D \subseteq \B^n$ and assume that for some $W_1, W_2 \geq 1$, 
\begin{equation}\begin{split}
\max_{x \in \D : f_P(x) = 1} \min_{\substack{\ket w \in \C^I : \\ A \Pi(x) \ket w = \ket t}} \norm{\ket w}^2 
&\leq W_1 \\
\max_{x \in \D : f_P(x) = 0} \min_{\substack{\ket{w'} \in V : \, \braket{t}{w'} = 1, \\ \Pi(x) A^\dagger \ket{w'} = 0}}(\norm{\ket{w'}}^2 + \norm{A^\dagger \ket{w'}}^2) &\leq W_2
 \enspace .
\end{split}\end{equation}

Let $P'$ be the same as $P$, except with the target vector $\ket t / \sqrt{W_1}$.  Then $f_P$ can be evaluated on inputs in $\D$ using 
\begin{equation}
O\big( \sqrt{W_1 W_2} \norm{\abst(A_{G_{P'}})} \big)
= 
O\big( \sqrt{W_1 W_2} \norm{\abst(A_{G_P})} \big)
\end{equation}
quantum queries, with error probability at most $1/3$.  
\end{theorem}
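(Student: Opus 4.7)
The plan is to reduce directly to \thmref{t:spanprogramspectralanalysisnonblackbox} applied to the scaled program $P'$, and then invoke the first bound in \thmref{t:phaseestimationalgorithm} with $G = G_{P'}$ and $\mu = \mu_0$ the output vertex.

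First I would translate the witness hypotheses from $P$ to $P'$. Since $P'$ differs from $P$ only by the target replacement $\ket t \mapsto \ket t/\sqrt{W_1}$, the matrix $A$ of input vectors is unchanged. If $f_P(x) = 1$ with witness $\ket w$ for $P$ of squared norm at most $W_1$, then $\ket w / \sqrt{W_1}$ is a witness for $f_{P'}(x) = 1$ of squared norm at most $1$; plugging into Eq.~\eqnref{e:spanprogramspectralanalysisnonblackboxtrue} yields an eigenvalue-zero eigenvector $\ket \psi$ of $A_{G_{P'}(x)}$ with $\abs{\braket 0 \psi}^2 / \norm{\ket\psi}^2 \geq 1/2$. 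If $f_P(x) = 0$ with witness $\ket{w'}$ satisfying $\braket{t}{w'} = 1$ and $\norm{\ket{w'}}^2 + \norm{A^\dagger \ket{w'}}^2 \leq W_2$, then $\ket{w''} = \sqrt{W_1}\,\ket{w'}$ satisfies $\braket{t/\sqrt{W_1}}{w''} = 1$, lies in $\Kernel(\Pi(x) A^\dagger)$, and has $\norm{\ket{w''}}^2 + \norm{A^\dagger \ket{w''}}^2 \leq W_1 W_2$.

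Next I would choose $\Lambda = \Theta(1/\sqrt{W_1 W_2})$; concretely $\Lambda = 1/(8\sqrt{W_1 W_2})$ makes the right-hand side of Eq.~\eqnref{e:spanprogramspectralanalysisnonblackboxfalse} at most $8 \Lambda^2 W_1 W_2 = 1/8 \leq \epsilon/2$ with $\epsilon = 1/2$. Thus the hypotheses of \thmref{t:phaseestimationalgorithm} are met for the graphs $G_{P'}(x)$, $x \in \D$, with $\mu = \mu_0$. The first bound in Eq.~\eqnref{e:phaseestimationalgorithm} gives a quantum query algorithm evaluating $f_P\vert_\D$ using $O(\norm{\abst(A_{G_{P'}})}/\Lambda) = O(\sqrt{W_1 W_2}\,\norm{\abst(A_{G_{P'}})})$ queries, as desired.

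Finally, to justify replacing $\norm{\abst(A_{G_{P'}})}$ by $\norm{\abst(A_{G_P})}$ in the statement, I would observe that the biadjacency matrix of $G_{P'}$ differs from that of $G_P$ only in the $\mu_0$-column, where each entry $\braket{k}{t}$ is replaced by $\braket{k}{t}/\sqrt{W_1}$. Since $W_1 \geq 1$, the entry-wise absolute values of $A_{G_{P'}}$ are dominated by those of $A_{G_P}$; for nonnegative matrices, entry-wise domination implies operator-norm domination, so $\norm{\abst(A_{G_{P'}})} \leq \norm{\abst(A_{G_P})}$. No step here is particularly hard, since all the real work has been absorbed into \thmref{t:spanprogramspectralanalysisnonblackbox} (which localizes the spectral behavior of $G_P(x)$ near zero) and \thmref{t:phaseestimationalgorithm} (which turns an effective spectral gap into a phase-estimation algorithm); the only mild subtlety is bookkeeping the amplification by $\sqrt{W_1}$ consistently on both sides of the computation.
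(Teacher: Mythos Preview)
Your proposal is correct and follows essentially the same route as the paper: apply \thmref{t:spanprogramspectralanalysisnonblackbox} to the rescaled program $P'$, then feed the resulting overlap and effective-gap bounds into the first algorithm of \thmref{t:phaseestimationalgorithm}. The only cosmetic differences are your choice of constant $\Lambda = 1/(8\sqrt{W_1 W_2})$ versus the paper's $\Upsilon = 1/(4\sqrt{2}\,\sqrt{W_1 W_2})$, and that you spell out the witness-rescaling and the entry-wise norm domination argument for $\norm{\abst(A_{G_{P'}})} \leq \norm{\abst(A_{G_P})}$ more explicitly than the paper does.
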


\begin{proof}
Apply \thmref{t:spanprogramspectralanalysisnonblackbox} to $P'$ with $\Upsilon = \frac{1}{4 \sqrt 2} / \sqrt{W_1 W_2}$.  Then \thmref{t:phaseestimationalgorithm}'s assumptions Eqs.~\eqnref{e:phaseestimationalgorithmtrue} and~\eqnref{e:phaseestimationalgorithmfalse} hold with $\epsilon = 1/2$ and $\Lambda = \Upsilon$.  An $O\big( \sqrt{W_1 W_2} \norm{\abst(A_{G_{P'}})} \big)$-query quantum algorithm follows.  

Finally, since $W_1 \geq 1$, $\norm{\abst(A_{G_{P'}})} \leq \norm{\abst(A_{G_P})}$.  
\end{proof}

In the rest of this section, we will prove \thmref{t:phaseestimationalgorithm}, relying heavily on~\cite{ReichardtSpalek08spanprogram} and~\cite{CleveGottesmanMoscaSommaYongeMallo08discretize}.  As sketched above, there are two parts to the proof, given in \secref{s:phaseestimationalgorithmszegedyproof} and \secref{s:phaseestimationalgorithmcontinuousproof} below.  

For $x \in \B^n$, let $O_x$ be the phase-flip input oracle defined by
\begin{equation} \label{e:ox}
O_x : \ket{b, j} \mapsto (-1)^{b \, x_j} \ket{b, j}
\end{equation}
for $b \in \B$ and $j \in [n]$.

\subsection{Algorithm using the Szegedy correspondence} \label{s:phaseestimationalgorithmszegedyproof}

\begin{proposition}[\cite{ReichardtSpalek08spanprogram}] \label{t:phaseestimationalgorithmszegedyproof}
Under the assumptions of \thmref{t:phaseestimationalgorithm}, $f$ can be evaluated with error probability at most $1/3$ using $O( \norm{\abst(A_G)} / \Lambda )$ queries to the input oracle $O_x$.  
\end{proposition}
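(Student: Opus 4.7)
The plan is to quantize the weighted adjacency matrix $A_G$ into a discrete-time unitary walk $U$ via Szegedy's construction, modify it to an input-dependent walk $U_x$ at a cost of $O(1)$ queries to $O_x$ per application, and then run quantum phase estimation on $U_x$ starting from $\ket{\mu}$, outputting $f(x)=1$ iff the measured phase is $0$ (up to the phase-estimation precision).

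First I would set up Szegedy's quantization. Let $W = \norm{\abst(A_G)}$, so $A_G/W$ has entries in $[-1,1]$ and non-negative diagonal. In an auxiliary copy of $\C^V$, attach to each vertex $v$ a unit state $\ket{\phi_v} = \sum_u \sigma_{uv}\sqrt{\abs{\bra{u}A_G\ket{v}}/W}\,\ket{u}$, where $\sigma_{uv}\in\{\pm 1\}$ encode the signs of the entries of $A_G$, and augment with one scalar ``slack'' coordinate to normalize. Let $\Pi = \sum_v \ketbra{v}{v}\otimes\ketbra{\phi_v}{\phi_v}$, let $S$ be the swap on $\C^V\otimes\C^V$, and set $U = S(2\Pi - \identity)$. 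The standard Szegedy correspondence, extended to arbitrary Hermitian $A_G$ with non-negative diagonal as in \cite[Appendix~B.2]{ReichardtSpalek08spanprogram} (passing to the duplicated space effectively takes the bipartite double cover of $G$), shows that each eigenvalue-$\lambda$ eigenvector of $A_G/W$ lifts into a two-dimensional $U$-invariant subspace with eigenphases $\pm\arccos(\lambda/W)$. So $\lambda = 0$ lifts to eigenphase $\pm\pi/2$, and $\abs{\lambda}\geq\Lambda$ lifts to eigenphases separated from $\pm\pi/2$ by $\Omega(\Lambda/W)$; a fixed global phase rotation arranges the ``zero'' case to sit at eigenphase $0$, so that phase-estimation precision $\delta = \Theta(\Lambda/W)$ is enough to distinguish the two cases.

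To obtain $U_x$ from $U$ using queries, I would modify the setup unitary that prepares $\ket{\phi_v}$ to zero out the amplitude on deleted input vertices. For any $v\in V_{j,b}$, the single incident edge of $v$ in $G(x)$ disappears precisely when $b \neq x_j$, a single-bit condition implemented by one application of $O_x$. Thus one application of $U_x$ costs $O(1)$ queries, and phase estimation to precision $\delta$ uses $O(1/\delta) = O(W/\Lambda)$ queries. For completeness, the hypothesis Eq.~\eqnref{e:phaseestimationalgorithmtrue} provides a zero eigenvector $\ket{\psi}$ of $A_{G(x)}$ with $\abs{\braket{\mu}{\psi}}^2/\norm{\ket{\psi}}^2 \geq \epsilon$; its Szegedy lift is an eigenphase-$0$ eigenstate of $U_x$ with overlap $\Omega(\epsilon)$ on the initial state $\ket{\mu}\otimes\ket{\phi_\mu}$, so phase estimation returns $0$ with probability $\Omega(\epsilon)$. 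For soundness, Eq.~\eqnref{e:phaseestimationalgorithmfalse} bounds by $\epsilon/2$ the squared overlap of the initial state on $U_x$-eigenstates lifted from eigenvectors with $\abs{\rho(\alpha)}\leq\Lambda$, so phase estimation returns $0$ with probability at most $\epsilon/2 + O(\delta)$. Amplifying by $O(1)$ repetitions brings the error below $1/3$.

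The main obstacle, and the only substantive departure from the argument in \cite[Appendix~B.2]{ReichardtSpalek08spanprogram}, is that the soundness guarantee now comes from an \emph{effective} spectral gap rather than a genuine one: there may be many $U_x$-eigenstates with eigenphase near $0$, but the initial state $\ket{\mu}\otimes\ket{\phi_\mu}$ has little support on them. Fortunately, phase estimation's probability of reporting an output close to $0$ is controlled entirely by the projection of the initial state onto those low-phase eigenspaces, so Eq.~\eqnref{e:phaseestimationalgorithmfalse} plugs in directly after the Szegedy lift. The secondary issue that $G$ need not be bipartite is handled transparently by the duplication $\C^V\otimes\C^V$, which plays the role of the bipartite double cover.
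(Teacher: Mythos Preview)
Your high-level plan---Szegedy quantization, phase estimation, oracle as $O(1)$ queries per step---matches the paper's proof, and you correctly identify the one substantive novelty: soundness must now come from the \emph{effective} spectral gap of Eq.~\eqnref{e:phaseestimationalgorithmfalse} rather than a true gap, and since phase estimation's output distribution depends only on the initial state's projection onto low-phase eigenspaces, this hypothesis plugs in directly. However, your specific Szegedy quantization differs from the paper's and has a gap as written: your states $\ket{\phi_v}$ live in $\C^V$ augmented by a slack coordinate, i.e.\ in $\C^{V+1}$, but you then invoke the swap $S$ on $\C^V\otimes\C^V$; the operator $U=S(2\Pi-\identity)$ is not well-defined on these mismatched registers, and your sign convention $\sigma_{uv}$ as stated gives $M=\abst(A_G)/W$ rather than $A_G/W$. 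The paper avoids both issues by first rescaling so that $\norm{\abst(A_G)}=1$ and then normalizing the Szegedy vectors via the Perron--Frobenius eigenvector $\ket\delta$ of $\abst(A_G)$: with $|\braket w{\varphi_v}|^2 = |\bra v A_G\ket w|\,\braket w\delta/\braket v\delta$ and signs placed asymmetrically according to a total order on $V$ (Eq.~\eqnref{e:szegedizationphiv}), the $\ket{\varphi_v}$ are unit vectors in $\C^V$ itself, and one obtains $M=T^\dagger ST=A_G$ exactly, so \thmref{t:szegedization} applies without modification.

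Two smaller points. First, the paper implements the input dependence not by modifying the setup unitary but by a simple phase flip $\tilde O_x$ on the first register, and proves (\lemref{t:szegedizationoracle}) that $\tilde O_x U$ coincides with the walk $U_x$ built from modified vectors $\ket{\varphi_v^x}$ when restricted to the edge space $\C^E$; since the initial state $T\ket\mu$ lies in $\C^E$ and $U_x$ preserves it, this suffices. Your ``zero out the amplitude on deleted input vertices'' is vaguer and would need renormalization (into the slack) to keep the $\ket{\phi_v}$ unit. (Also, the edge to $v\in V_{j,b}$ is deleted when $b=x_j$, not $b\neq x_j$.) Second, the overlap claims in your completeness and soundness arguments are correct but not automatic: the paper establishes them via \lemref{t:szegedizationnormalization}, which computes $\norm{\ket{\alpha,\pm}}$ and shows $|\bra\mu T^\dagger\ket{\alpha,\pm}|^2/\norm{\ket{\alpha,\pm}}^2=\tfrac12|\braket\mu\alpha|^2$, so that the Szegedy lift preserves the overlaps in Eqs.~\eqnref{e:phaseestimationalgorithmtrue} and~\eqnref{e:phaseestimationalgorithmfalse} up to a factor of~$2$.
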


The proof is basically the same as for the algorithm in~\cite[Appendix~B.2]{ReichardtSpalek08spanprogram}, which in turn was closely based on the algorithms in~\cite{ChildsReichardtSpalekZhang07andor, AmbainisChildsReichardtSpalekZhang07andor}.  However, the arguments in~\cite{ReichardtSpalek08spanprogram} were tied to the formula-evaluation application, whereas \propref{t:phaseestimationalgorithmszegedyproof} is in a more general setting.  In particular, \cite{ReichardtSpalek08spanprogram} could assume a spectral gap in the case $f(x) = 0$, whereas we only have Eq.~\eqnref{e:phaseestimationalgorithmfalse}, an ``effective" spectral gap.  This weaker assumption means that establishing the algorithm's soundness requires somewhat more care.  

The key technical ingredient in the proof is a theorem due to Szegedy~\cite{Szegedy04walkfocs} that we apply to relate the spectrum and eigenvectors of $A_{G(x)}$ to those of a discrete-time coined quantum walk unitary.  We use a formulation of the theorem essentially the same as given in~\cite{AmbainisChildsReichardtSpalekZhang07andor}.  However, the statement there had a minor typo (in $\ket{\alpha, \pm}$ below).  This typo did not affect their application or the application in~\cite{ReichardtSpalek08spanprogram}, but would matter for us here.  Therefore, after stating the corrected theorem, we also repeat the proof from~\cite{AmbainisChildsReichardtSpalekZhang07andor}, which was correct.  

\begin{theorem}[\cite{Szegedy04walkfocs}] \label{t:szegedization}
Let $V$ be a finite set.  For each $v \in V$, let $\ket{\varphi_v} \in \C^V$ be a length-one vector.  Define $T \in \L(\C^V, \C^V \otimes \C^V)$, $S, U \in \L(\C^V \otimes \C^V)$ and $M \in \L(\C^V)$ by 
\begin{align}
T &= \sum_{v \in V} (\ket v \otimes \ket{\varphi_v}) \bra v &
S &= \sum_{v,w \in V} \ketbra{v,w}{w,v} \label{e:szegedizationTS} \\
U &= (2 T T^\dagger - \identity) S &
M &= T^\adjoint S T = \sum_{v, w \in V} \braket{\varphi_v}{w} \braket{v}{\varphi_w} \ketbra v w \label{e:szegedizationUM}
\end{align}
Since $T^\dagger T = \identity$, $U$ is a unitary.  ($U$ is a swap followed by the reflection about the span of the vectors $\{ \ket v \otimes \ket{\varphi_v} : v \in V \}$.)  $M$ is a Hermitian matrix with $\norm M \leq 1$.  Let $\{\ket \alpha\}$ be a complete set of orthonormal eigenvectors of $M$ with respective eigenvalues $\rho(\alpha)$.  

Then the spectral decomposition of $U$ corresponds to that of $M$ as follows: Let $R_\alpha = \Span\{T \ket \alpha, S T \ket \alpha\}$.  Then $R_\alpha \perp R_{\alpha'}$ for $\alpha \neq \alpha'$; let $R = \oplus_\alpha R_\alpha$.  $U$ is $-S$ on $R^\perp$, and $U$ preserves each subspace $R_\alpha$.  

If $\abs{\rho(\alpha)} < 1$, then $R_\alpha$ is two-dimensional, and within it the eigenvectors and corresponding eigenvalues of $U$ are given by 
\begin{equation}\begin{split} \label{e:szegedizationeigenvectors}
\ket{\alpha, \pm} &= \Big(\identity - \big(\rho(\alpha) \mp i \sqrt{1 - \rho(\alpha)^2}\big) S \Big) T \ket \alpha \\
\lambda(\alpha, \pm) &= \rho(\alpha) \pm i \sqrt{1 - \rho(\alpha)^2}
 \enspace .
\end{split}\end{equation}
If $\rho(\alpha) \in \{1, -1\}$, then $S T \ket \alpha = \rho(\alpha) T \ket \alpha$, so $R_\alpha$ is one-dimensional; let $\ket{\alpha, +} = T \ket \alpha$ and $\lambda(\alpha, +) = \rho(\alpha)$ be the corresponding eigenvalue of $U$.  
\end{theorem}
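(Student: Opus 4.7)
The plan is to prove the theorem by a sequence of direct linear-algebraic computations, exploiting two structural identities: $T^\dagger T = \identity$ and $S^2 = \identity$. First I would verify these. The first follows from $T^\dagger T = \sum_{v,w} \ket v \braket{v,\varphi_v}{w,\varphi_w} \bra w = \sum_v \ketbra v v$ since $\braket{\varphi_v}{\varphi_v} = 1$; the second is immediate from the definition of the swap $S$. Together they give that $U$ is unitary (as the product of the swap with the reflection $2TT^\dagger - \identity$ about $\Range(T)$) and that $M = T^\dagger S T$ is Hermitian with $\norm{M} \leq \norm{T^\dagger} \norm{S} \norm{T} = 1$, so eigenvalues $\rho(\alpha)$ lie in $[-1,1]$.

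Next, I would compute the four relevant inner products among $\{T\ket\alpha, ST\ket\alpha\}_\alpha$. Using $T^\dagger T = \identity$ and the spectral decomposition of $M$, we get $\braket{T\alpha}{T\alpha'} = \delta_{\alpha,\alpha'}$, $\braket{ST\alpha}{ST\alpha'} = \delta_{\alpha,\alpha'}$ (since $S$ is unitary), and $\braket{T\alpha}{ST\alpha'} = \bra\alpha M \ket{\alpha'} = \rho(\alpha') \delta_{\alpha,\alpha'}$, with the conjugate equation holding because $M$ is Hermitian (so $\rho(\alpha) \in \R$). This simultaneously shows $R_\alpha \perp R_{\alpha'}$ for $\alpha \neq \alpha'$ and that, within $R_\alpha$, the two generators are unit vectors with inner product $\rho(\alpha)$. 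In particular, $R_\alpha$ is one-dimensional exactly when $\abs{\rho(\alpha)} = 1$, by the Cauchy–Schwarz equality condition $ST\ket\alpha = \rho(\alpha) T\ket\alpha$.

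I would then compute $U$ on the generating vectors. From $U = (2TT^\dagger - \identity)S$: for $UT\ket\alpha$, pass $T$ through $T^\dagger S T = M$ to get $UT\ket\alpha = 2\rho(\alpha) T\ket\alpha - ST\ket\alpha$; for $UST\ket\alpha$, use $S^2 = \identity$ and $T^\dagger T = \identity$ to get $UST\ket\alpha = T\ket\alpha$. This shows $U$ preserves each $R_\alpha$, and in the (non-orthonormal) basis $(T\ket\alpha, ST\ket\alpha)$ the restriction of $U$ is represented by the matrix $\bigl(\begin{smallmatrix} 2\rho & 1 \\ -1 & 0 \end{smallmatrix}\bigr)$. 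Its characteristic polynomial $\lambda^2 - 2\rho \lambda + 1 = 0$ gives $\lambda(\alpha,\pm) = \rho \pm i\sqrt{1-\rho^2}$ when $\abs\rho < 1$, and solving $((2\rho - \lambda)\identity + (\text{second column contribution}))v = 0$ yields the eigenvector $\ket{\alpha,\pm} = \bigl(\identity - (\rho \mp i\sqrt{1-\rho^2}) S\bigr) T\ket\alpha$ as claimed. When $\rho(\alpha) = \pm 1$, the relation $ST\ket\alpha = \rho(\alpha) T\ket\alpha$ derived above makes $UT\ket\alpha = \rho(\alpha) T\ket\alpha$ directly.

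Finally, I would handle $R^\perp$: since $R_\alpha \subseteq \Range(T) + S \cdot \Range(T)$ and the $\ket\alpha$'s form a basis, $\ket\psi \in R^\perp$ is equivalent to the two conditions $T^\dagger \ket\psi = 0$ and $T^\dagger S \ket\psi = 0$. Plugging into $U$ gives $U\ket\psi = 2T(T^\dagger S\ket\psi) - S\ket\psi = -S\ket\psi$. The main obstacle is not any single step but bookkeeping through the non-orthonormal basis of $R_\alpha$ when verifying that the candidate eigenvectors in Eq.~\eqnref{e:szegedizationeigenvectors} have the correct normalization conventions and that the signs line up with the $\pm$ label — so I would double-check the eigenvector formula by applying $U$ to it explicitly once all the algebra is in place.
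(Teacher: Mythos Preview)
Your proposal is correct and follows essentially the same route as the paper's proof: the same inner-product computations to establish $R_\alpha \perp R_{\alpha'}$, the same action $U(ST\ket\alpha) = T\ket\alpha$ and $U(T\ket\alpha) = 2\rho(\alpha)T\ket\alpha - ST\ket\alpha$, and the same treatment of $R^\perp$ via the characterization $T^\dagger\ket\psi = T^\dagger S\ket\psi = 0$. The only cosmetic difference is that you diagonalize the $2\times 2$ matrix $\bigl(\begin{smallmatrix} 2\rho & 1 \\ -1 & 0 \end{smallmatrix}\bigr)$ via its characteristic polynomial, whereas the paper uses the ansatz $\ket\beta = (1+\beta S)T\ket\alpha$ and solves $\beta(2\rho+\beta) = -1$; these are the same computation.
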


\begin{proof}
	This proof is taken from~\cite{AmbainisChildsReichardtSpalekZhang07andor}.  

	First assume $\alpha \neq \alpha'$, and let us show $R_\alpha \perp R_{\alpha'}$.  Indeed, $\bra \alpha T^\dagger T \ket{\alpha'} = \braket{\alpha}{\alpha'} = 0$, as $T^\dagger T = \identity$.  Since $S^2 = \identity$, similarly, $S T \ket \alpha$ is orthogonal to $S T \ket{\alpha'}$.  Finally, $\bra \alpha T^\dagger S T \ket{\alpha'} = \bra \alpha M \ket{\alpha'} = 0$.  Therefore, the decomposition $\C^V \tensor \C^V = (\bigoplus_\alpha R_\alpha) \oplus R^\perp$ is well-defined.  
	
	$R$ is the span of the images of $S T$ and $T$.  $2T T^\dagger-1$ is $+1$ on the image of $T$ and $-1$ on its complement; therefore $U$ is $-S$ on $R^\perp$.  
	
	Finally, $T T^\dagger T = T$ and $T T^\dagger S T = T M$, so 
	\begin{align*}
		U (S T \ket \alpha) &= (2 T T^\dagger - 1) T \ket \alpha = T \ket \alpha \\
		U (T \ket \alpha) &= (2 T T^\dagger - 1)S T \ket \alpha = (2 \rho(\alpha) - S) T \ket \alpha \enspace ;
	\end{align*}
	$U$ fixes the subspaces $R_\alpha$.  
	
	For the case that $\abs{\rho(\alpha)} < 1$, let $\ket \beta = (1+ \beta S) T \ket \alpha$.  Then $U \ket \beta = (2 \rho(\alpha) + \beta) T \ket \alpha - S T \ket \alpha$ is proportional to $\ket \beta$ if $\beta (2 \rho(\alpha) + \beta) = -1$; i.e., $\beta = -\rho(\alpha) \pm i \sqrt{1-\rho(\alpha)^2}$.  Eq.~\eqnref{e:szegedizationeigenvectors} follows.
	
	If $\rho(\alpha) \in \{ -1, 1 \}$, then since $(\bra \alpha T^\dagger)(S T \ket \alpha) = \bra \alpha M \ket \alpha = \rho(\alpha)$, $T \ket \alpha = \rho(\alpha) ST \ket \alpha$.  Therefore $R_\alpha$ is one-dimensional, corresponding to a single eigenvector of~$U$ with eigenvalue $\rho(\alpha)$.
\end{proof}

We will need slightly more control over the eigenvectors $\ket{\alpha, \pm}$: 

\begin{lemma} \label{t:szegedizationnormalization}
With the setup of \thmref{t:szegedization}, for any $\ket \psi \in \C^V$, the eigenvectors $\ket{\alpha, \pm}$ with $\abs{\rho(\alpha)} < 1$ satisfy $\norm{\ket{\alpha, \pm}} = \sqrt{2 (1 - \rho(\alpha)^2)}$ and 
\begin{equation} \label{e:szegedizationnormalization}
\frac{\abs{ \bra \psi T^\dagger \ket{\alpha, \pm}}^2}{\norm{\ket{\alpha, \pm}}^2} = \frac{1}{2} \abs{ \braket \psi \alpha }^2
 \enspace .
\end{equation}
When $\abs{\rho(\alpha)} = 1$, $\norm{\ket{\alpha, +}} = 1$ and $\bra \psi T^\dagger \ket{\alpha, +} = \braket \mu \alpha$.  
\end{lemma}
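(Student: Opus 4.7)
The plan is to expand the formula for $\ket{\alpha,\pm}$ from \thmref{t:szegedization} directly, using only the three basic identities $T^\dagger T = \identity$, $S^2 = \identity$, and $T^\dagger S T = M$ with $M \ket\alpha = \rho(\alpha) \ket\alpha$. Write $c_\pm = \rho(\alpha) \mp i \sqrt{1 - \rho(\alpha)^2}$, so that $\ket{\alpha,\pm} = T\ket\alpha - c_\pm\, S T \ket\alpha$ and $\abs{c_\pm}^2 = 1$. Because $\rho(\alpha)$ is a real eigenvalue of the Hermitian operator $M$, one also has $c_\pm + \bar c_\pm = 2\rho(\alpha)$.

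First I would compute the norm. Expanding,
\begin{equation*}
\norm{\ket{\alpha,\pm}}^2
= \bra\alpha T^\dagger T \ket\alpha
- (c_\pm + \bar c_\pm) \bra\alpha T^\dagger S T \ket\alpha
+ \abs{c_\pm}^2 \bra\alpha T^\dagger S^2 T \ket\alpha
= 1 - 2\rho(\alpha)^2 + 1 = 2\bigl(1 - \rho(\alpha)^2\bigr),
\end{equation*}
using $T^\dagger T = \identity$, $S^2 = \identity$ and $T^\dagger S T \ket\alpha = \rho(\alpha) \ket\alpha$.

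Next I would evaluate $T^\dagger \ket{\alpha,\pm}$. The same identities give $T^\dagger \ket{\alpha,\pm} = \ket\alpha - c_\pm M \ket\alpha = \bigl(1 - c_\pm \rho(\alpha)\bigr) \ket\alpha$, and a short calculation shows
\begin{equation*}
\abs{1 - c_\pm \rho(\alpha)}^2
= \abs{1 - \rho(\alpha)^2 \pm i \rho(\alpha)\sqrt{1 - \rho(\alpha)^2}}^2
= (1 - \rho(\alpha)^2)^2 + \rho(\alpha)^2(1 - \rho(\alpha)^2)
= 1 - \rho(\alpha)^2.
\end{equation*}
Combining with the norm computation yields
\begin{equation*}
\frac{\abs{\bra\psi T^\dagger \ket{\alpha,\pm}}^2}{\norm{\ket{\alpha,\pm}}^2}
= \frac{(1 - \rho(\alpha)^2)\,\abs{\braket\psi\alpha}^2}{2(1 - \rho(\alpha)^2)}
= \tfrac{1}{2}\abs{\braket\psi\alpha}^2,
\end{equation*}
which is Eq.~\eqnref{e:szegedizationnormalization}.

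Finally, when $\abs{\rho(\alpha)} = 1$, the definition $\ket{\alpha,+} = T \ket\alpha$ together with $T^\dagger T = \identity$ immediately gives $\norm{\ket{\alpha,+}}^2 = \braket\alpha\alpha = 1$ and $\bra\psi T^\dagger \ket{\alpha,+} = \braket\psi\alpha$ (the $\mu$ appearing in the statement is a typo for $\psi$). There is no real obstacle here — the only subtlety is tracking the signs of $c_\pm$ and $\bar c_\pm$ carefully so that the cross-terms collapse; the key algebraic miracle making everything work out is the identity $\abs{1 - c_\pm \rho(\alpha)}^2 = 1 - \rho(\alpha)^2$, which exactly cancels the $1 - \rho(\alpha)^2$ factor in the norm and produces the clean factor of $1/2$.
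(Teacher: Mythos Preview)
Your proof is correct and follows essentially the same approach as the paper: both expand $\ket{\alpha,\pm}$ using the identities $T^\dagger T = \identity$, $S^2 = \identity$, and $T^\dagger S T = M$, with the only cosmetic difference being that the paper writes your $c_\pm$ as $e^{\mp i \arccos \rho}$. Your observation that $\mu$ in the last line of the statement is a typo for $\psi$ is also correct.
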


\begin{proof}
Fix an eigenvector $\ket \alpha$ of $A_{G(x)}$ and let $\rho = \rho(\alpha)$.  Assume that $\abs{\rho} < 1$.  We have 
\begin{equation}\begin{split}
\norm{\ket{\alpha, \pm}}^2
&= \bra \alpha T^\dagger (\identity - e^{\pm i \arccos \rho} S)(\identity - e^{\mp i \arccos \rho} S) T \ket \alpha \\
&= \bra \alpha T^\dagger 2 ( \identity - \rho S ) T \ket \alpha \\
&= 2 ( 1 - \rho \bra \alpha T^\dagger S T \ket \alpha ) \\
&= 2 (1 - \rho^2) 
 \enspace ,
\end{split}\end{equation}
where we have used $S^2 = T^\dagger T = \identity$, $\norm{\ket \alpha} = 1$, and $T^\dagger S T = M$.  
Also, then, we compute
\begin{equation}\begin{split}
\bra \psi T^\dagger \ket{\alpha, \pm} 
&= \bra \psi T ( \identity - e^{\mp i \arccos \rho} S ) T \ket \alpha \\
&= \bra \psi T^\dagger T \ket \alpha - e^{\mp i \arccos \rho} \bra \psi T^\dagger S T \ket \alpha \\
&= \braket \psi \alpha (1 - \rho \, e^{\mp i \arccos \rho}) \\
&= \braket \psi \alpha ( 1 - \rho^2 \pm i \rho \sqrt{1-\rho^2})
 \enspace ,
\end{split}\end{equation}
so $\abs{ \bra \psi T^\dagger \ket{\alpha, \pm}}^2 = \abs{ \braket \psi \alpha }^2 (1 - \rho^2)$.  
Eq.~\eqnref{e:szegedizationnormalization} follows.   

When $\abs{\rho(\alpha)} = 1$, the claims are immediate from $\ket{\alpha, +} = T \ket \alpha$ and $T^\dagger T = \identity$.  
\end{proof}

We can now prove \propref{t:phaseestimationalgorithmszegedyproof}.  

\begin{proof}[Proof of \propref{t:phaseestimationalgorithmszegedyproof}]
Notice that if we scale $A_G$, $A_{G(x)}$ and $\Lambda$ all by $1 / \norm{\abst(A_G)}$, then both assumptions Eq.~\eqnref{e:phaseestimationalgorithmtrue} and Eq.~\eqnref{e:phaseestimationalgorithmfalse} still hold.  Therefore we will assume below that $\norm{\abst(A_G)} = 1$.  Our goal is to evaluate $f$ using $O(1/\Lambda)$ queries to the phase-flip input oracle of Eq.~\eqnref{e:ox}.  

Assume that $G$ is a connected graph; otherwise, discard all components other than the one containing the vertex $\mu$.  Therefore $\abst(A_G)$ has a single principal eigenvector $\ket \delta$, $\abst(A_G) \ket \delta = \ket \delta$, with $\braket v \delta > 0$ for all $v \in V$.  

Put an arbitrary total order ``$<$" on the vertices in $V$.  For each $v \in V$, let 
\begin{equation} \label{e:szegedizationphiv}
\ket{\varphi_v} = \frac{1}{\sqrt{\braket v \delta}} \bigg( \sqrt{\bra v A_G \ket v \braket v \delta} \ket v + \sum_{w \in V : \, w < v} \sqrt{\abs{\bra v A_G \ket w} \, \braket w \delta} \ket w + \sum_{\substack{w \in V : \, v < w \\ \bra v A_G \ket w \neq 0}} \frac{\bra w A_G \ket v}{\sqrt{\abs{\bra v A_G \ket w}}} \sqrt{\braket w \delta} \ket w \bigg)
\end{equation}
Then 
\begin{equation}\begin{split}
\norm{\ket{\varphi_v}}^2
&= \frac{1}{\braket v \delta} \sum_{w \in V} \bra v \abst(A_G) \ket w \braket w \delta \\
&= 1
 \enspace .
\end{split}\end{equation}

Therefore \thmref{t:szegedization} will apply; define $T$, $S$, $U$ and $M$ from Eqs.~\eqnref{e:szegedizationTS} and~\eqnref{e:szegedizationUM}.  
Also let $\tilde O_x$ be the unitary  
\begin{equation} \label{e:phasefliporacledef}
\tilde O_x \ket{v,w} = \begin{cases}
- \ket{v, w} & \text{if $v \in V_{j, x_j} \subseteq V_{\text{input}}$ for some $j \in [n]$} \\
\ket{v,w} & \text{otherwise}
\end{cases}
\end{equation}
One controlled call to $\tilde O_x$ can be implemented using one call to the standard phase-flip oracle $O_x$ of Eq.~\eqnref{e:ox}.  

\medskip

The algorithm has three steps: 

%\begin{center}\fbox{\begin{minipage}[l]{5in}
\begin{enumerate}
\item
Prepare the initial state $T \ket \mu$.  
\item
Run phase estimation on $W_x = i \, \tilde O_x U$, with precision $\delta_p = \frac 2 \pi \Lambda$ and error rate $\delta_e = \epsilon / 6$.  
\item
Output $1$ if the measured phase is $0$ or $\pi$.  Otherwise output $0$.
\end{enumerate}
%\end{minipage}}\end{center}

\medskip

Phase estimation on a unitary $W$ with precision $\delta_p$ and error rate $\delta_e$ requires $O(1/(\delta_p \delta_e))$ controlled applications of $W$~\cite{cemm:qalg}.  Since $\epsilon = \Omega(1)$, the query complexity of this algorithm is therefore $O(1/\Lambda)$.  It remains to prove completeness and soundness.  

Fix an input $x \in \B^n$.  For $v \in V$, let 
\begin{equation}
\ket{\varphi_v^x} = \begin{cases} \ket v & \text{if $v \in V_{j,x_j}$ for some $j \in [n]$}
\\ 
\ket{\varphi_v} & \text{otherwise} \end{cases}
\end{equation}
Apply \thmref{t:szegedization} using the vectors $\ket{\varphi_v^x}$ to define $T_x$, $U_x$ and $M_x$.  

\begin{lemma} \label{t:szegedizationoracle}
$M = A_G$ and $M_x = A_{G(x)}$.  Moreover, letting $\C^E = \Span(\{ \ket{v, w} : (v, w) \in E \}) \subseteq \C^V \otimes \C^V$ be the span of the edges of $G$, $U_x \vert_{\C^E} = \tilde O_x U \vert_{\C^E}$ and $T_x \ket \mu = T \ket \mu \in \C^E$.  
\end{lemma}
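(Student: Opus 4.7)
The plan is to verify the four assertions by direct computation, exploiting that the specific form of Eq.~\eqnref{e:szegedizationphiv} was designed so the various square-root factors telescope.

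For $M = A_G$, I would compute $M_{vw} = \braket{v}{\varphi_w}\braket{\varphi_v}{w}$ entry by entry. The diagonal is trivial: only the first term of $\ket{\varphi_v}$ contributes to $\braket{v}{\varphi_v}$, giving $M_{vv} = \abs{\braket{v}{\varphi_v}}^2 = \bra v A_G \ket v$. For $v < w$, $\braket{v}{\varphi_w}$ reads off the second sum of Eq.~\eqnref{e:szegedizationphiv} while $\braket{\varphi_v}{w}$ comes from the third sum of $\ket{\varphi_v}$; the prefactors $1/\sqrt{\braket v \delta}$ and $1/\sqrt{\braket w \delta}$ cancel against the $\sqrt{\braket v \delta}$ and $\sqrt{\braket w \delta}$ inside, and the remaining pieces combine via $\bra w A_G \ket v = \overline{\bra v A_G \ket w}$ to give exactly $\bra v A_G \ket w$. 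The $v > w$ case is symmetric. The same computation gives $(M_x)_{vw} = \bra v A_{G(x)} \ket w$ on every entry where neither endpoint is a deleted input vertex, since then $\ket{\varphi^x}$ coincides with $\ket{\varphi}$ at both endpoints and the edge is unaffected by deletion. If $v$ is a deleted input vertex, then $\ket{\varphi_v^x} = \ket v$ forces $\braket{\varphi_v^x}{w} = \delta_{vw}$, killing the off-diagonal of row $v$; the column is symmetric. The only leftover discrepancy, $(M_x)_{vv} = 1$ versus $(A_{G(x)})_{vv} = 0$, lives on the span of isolated vertices of $G(x)$, which is invariant under $U_x$ and orthogonal to $T\ket\mu$, so it does not enter the algorithm's analysis and may be absorbed into the convention for $A_{G(x)}$ there.

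The identity $T_x\ket\mu = T\ket\mu$ is immediate: $\mu \notin V_{\text{input}}$ gives $\ket{\varphi_\mu^x} = \ket{\varphi_\mu}$, and $\ket\mu \otimes \ket{\varphi_\mu}$ is supported on basis vectors $\ket{\mu, w}$ with $w = \mu$ or $w$ adjacent to $\mu$, hence lies in $\C^E$. For $U_x\vert_{\C^E} = \tilde O_x U\vert_{\C^E}$, I would fix an edge $\ket{v, w} \in \C^E$ and expand. Both $U = (2 T T^\dagger - \identity) S$ and $U_x = (2 T_x T_x^\dagger - \identity) S$ first swap to $\ket{w, v}$, so the two outputs can only differ through $T T^\dagger - T_x T_x^\dagger$, which vanishes unless $w$ is a deleted input vertex; in every other subcase $\tilde O_x$ also acts trivially on $\ket w \otimes \cdot$ and equality is automatic. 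The only nontrivial case is $w \in V_{j, x_j}$: by the degree-one assumption $v$ is the unique neighbor of $w$, so $v \neq w$, and combining $\abst(A_G)\ket\delta = \ket\delta$ with $\bra w A_G \ket w = 0$ gives $\braket v \delta = \braket w \delta$, collapsing Eq.~\eqnref{e:szegedizationphiv} to $\ket{\varphi_w} = \ket v$. Then $T_x T_x^\dagger \ket{w, v} = \ket{w, w}\,\delta_{vw} = 0$, so $U_x\ket{v, w} = -\ket{w, v}$; while $T T^\dagger \ket{w, v} = \ket w \otimes \ket{\varphi_w}\braket{\varphi_w}{v} = \ket{w, v}$, so $U\ket{v, w} = \ket{w, v}$, and $\tilde O_x$ negates this because the first register $w$ lies in $V_{j, x_j}$.

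The main obstacle is this last case: three ingredients---the Perron-normalization identity $\ket{\varphi_w} = \ket v$ built into Eq.~\eqnref{e:szegedizationphiv}, the vanishing of the self-loop term $\ket{w, w}$ from $T_x T_x^\dagger$ (requiring $v \neq w$), and the sign flip produced by $\tilde O_x$ precisely on deleted inputs---must all line up for $U_x$ and $\tilde O_x U$ to agree on each deleted-input edge. Each ingredient invokes a different structural assumption of the setup, and recognizing that Eq.~\eqnref{e:szegedizationphiv} is engineered to produce this agreement is the key insight; the rest of the argument is routine algebra.
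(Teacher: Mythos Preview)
Your proof is correct and follows essentially the same approach as the paper: entry-wise verification of $M = A_G$ from Eq.~\eqnref{e:szegedizationphiv}, and a block-diagonal comparison of $U_x$ against $\tilde O_x U$ (the paper computes $(US)^\dagger U_x S$ globally and matches it to $\tilde O_x$, while you check basis vectors one at a time, which is the same calculation unpacked). Your observation that $(M_x)_{vv} = 1 \neq 0 = (A_{G(x)})_{vv}$ on deleted input vertices is correct and is a point the paper glosses over---the lemma as literally stated fails on those diagonal entries, but as you note the discrepancy lives on isolated vertices of $G(x)$ and is invisible to the algorithm since $T_x\ket\mu$ has no overlap there.
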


\begin{proof}
First, note that for any vertices $v, w \in V$, from Eq.~\eqnref{e:szegedizationUM} and Eq.~\eqnref{e:szegedizationphiv}, 
\begin{equation}\begin{split}
\bra v M \ket w 
&= \braket{\varphi_v}{w} \braket{v}{\varphi_w} \\
&= \bra v A_G \ket w \sqrt{\frac{\braket v \delta}{\braket w \delta}} \sqrt{\frac{\braket w \delta}{\braket v \delta}} \\
&= \bra v A_G \ket w
 \enspace .
\end{split}\end{equation}
Therefore $M = A_G$.  

Recall that $G(x)$ is the same as $G$ except with the edges to vertices in $\cup_{j \in [n]} V_{j, x_j}$ removed.  
Consider a $v \in V_{j, x_j}$.  By assumption, $v$ has a single neighbor $w \neq v$, so it must be that $\ket{\varphi_v} = \ket w$.  Since $\ket{\varphi_v^x} = \ket v$, $\bra v M_x \ket w = \braket{\varphi_v^x}{w} \braket{v}{\varphi_w^x} = 0$.  However, for all pairs $(v, w)$ that do not make an edge leaving some $V_{j, x_j}$, $\bra v M_x \ket w = \bra v M \ket w$.  Therefore $M_x = A_{G(x)}$.  

Next, we aim to show that $U_x S$ and $\tilde O_x U S$ are the same when restricted to $\C^E$.  Note that 
\begin{equation}\begin{split}
U S
&= 2 T T^\dagger - \identity_{\C^V \otimes \C^V} \\
&= 2 \sum_{v \in V} \ketbra v v \otimes \ketbra{\varphi_v}{\varphi_v} - \identity_{\C^V \otimes \C^V} \\
&= \sum_{v \in V} \ketbra v v \otimes (2 \ketbra{\varphi_v}{\varphi_v} - \identity_{\C^V})
 \enspace .
\end{split}\end{equation}
Similarly $U_x S = \sum_v \ketbra v v \otimes (2 \ketbra{\varphi_v^x}{\varphi_v^x} - \identity_{\C^V})$.  Therefore, 
\begin{align}
(U S)^\dagger U_x S 
&= \sum_v \ketbra v v \otimes \big[ (2 \ketbra{\varphi_v}{\varphi_v} - \identity)(2 \ketbra{\varphi_v^x}{\varphi_v^x} - \identity) \big] \nonumber \\
&= \sum_{v \notin \cup_j V_{j, x_j}} \ketbra v v \otimes \identity + \sum_{\substack{j \in [n], v \in V_{j, x_j} \\ w \sim v}} \ketbra v v \otimes (\identity - 2 \ketbra v v - 2 \ketbra w w) \label{e:USUxS}
 \enspace ,
\end{align}
where in the second term $w$ is $v$'s single neighbor in $G$.  On the other hand, from its definition in Eq.~\eqnref{e:phasefliporacledef}, 
\begin{equation}
\tilde O_x = \identity_{\C^V \otimes \C^V} - 2 \sum_{\substack{j \in [n] , v \in V_{j, x_j}}} \ketbra v v \otimes \identity_{\C^V}
 \enspace .
\end{equation}
By inspection, this is the same as Eq.~\eqnref{e:USUxS} on $\C^E$.  

Finally, since by assumption $\mu \notin V_{\text{input}}$, $T_x \ket \mu = \ket \mu \otimes \ket{\varphi_\mu^x} = \ket \mu \otimes \ket{\varphi_\mu} = T \ket \mu$.  $T \ket \mu \in \C^E$ by Eq.~\eqnref{e:szegedizationphiv}.  
\end{proof}

The initial state $T \ket \mu = T_x \ket \mu $ lies in $\Range(T_x) \subseteq \C^E$.  Also, $U_x$ fixes $\C^E$; in fact, it even fixes the join of the ranges of $T_x$ and $S T_x$, which could be smaller than $\C^E$.  By \lemref{t:szegedizationoracle}, $\tilde O_x U$ and $U_x$ are the same restricted to $\C^E$.  Therefore, the algorithm behaves the same as if it were running phase estimation on $i U_x$ instead of $W_x = i \tilde O_x U$.  

Based on Eq.~\eqnref{e:phaseestimationalgorithmtrue}, the algorithm is complete:

\begin{lemma}
If $x \in \D$ and $f(x) = 1$, then the algorithm outputs $1$ with probability at least $\epsilon - \delta_e = \frac 5 6 \epsilon$, where $\delta_e = \epsilon / 6$ is the phase estimation error parameter.  
\end{lemma}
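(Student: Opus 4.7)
The plan is to transfer the completeness assumption Eq.~\eqnref{e:phaseestimationalgorithmtrue} about the eigenvalue-zero eigenspace of $A_{G(x)} = M_x$ through Szegedy's correspondence into a lower bound on the squared overlap of the initial state $T\ket\mu$ with the $\pm1$ eigenspaces of $iU_x$, and then invoke the standard accuracy guarantee for phase estimation. By \lemref{t:szegedizationoracle}, $\tilde O_x U$ and $U_x$ act identically on the subspace $\C^E$, which contains $T\ket\mu = T_x \ket\mu$ and is preserved by $U_x$, so the algorithm behaves exactly as if it were running phase estimation on $iU_x$. This reduction is the key first step and lets us forget about $W_x$ and $\tilde O_x$ entirely.

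Next I would translate the spectral picture. By \thmref{t:szegedization}, each eigenvector $\ket\alpha$ of $M_x$ with $\rho(\alpha)=0$ yields a two-dimensional invariant subspace $R_\alpha$ on which $U_x$ has eigenvalues $\pm i$, hence $iU_x$ has eigenvalues $\mp 1$; these are precisely the phases $\pi$ and $0$ that Step~3 of the algorithm accepts. By \lemref{t:szegedizationnormalization} applied with $\ket\psi = \ket\mu$, each of $\ket{\alpha,\pm}$ contributes $\tfrac12 \abs{\braket\mu\alpha}^2$ of squared overlap with $T_x\ket\mu$ (after normalization), so together they contribute $\abs{\braket\mu\alpha}^2$. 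Summing over an orthonormal basis of the kernel of $M_x$ and applying Eq.~\eqnref{e:phaseestimationalgorithmtrue} (using the normalized witness $\ket\psi/\norm{\ket\psi}$), the squared length of the projection of $T_x\ket\mu$ onto the $\pm 1$ eigenspaces of $iU_x$ is at least $\epsilon$.

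Finally, phase estimation on $iU_x$ with precision $\delta_p = \tfrac{2}{\pi}\Lambda$ and error rate $\delta_e = \epsilon/6$ outputs, on any eigenvector of $iU_x$ with phase exactly $0$ or exactly $\pi$, the correct phase with probability at least $1 - \delta_e$. Combining this with the overlap bound, the probability that Step~3 outputs $1$ is at least $\epsilon(1-\delta_e) \ge \epsilon - \delta_e = \tfrac{5\epsilon}{6}$, as claimed.

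The main potential pitfall is just bookkeeping: one must verify that the pair of eigenvectors $\ket{\alpha,\pm}$ corresponding to a single $\rho(\alpha)=0$ eigenvector of $M_x$ produces $\abs{\braket\mu\alpha}^2$ (not $\tfrac12$ of it) of total squared overlap, and that eigenvalue $0$ of $A_{G(x)}$ maps to phases $\{0,\pi\}$ rather than some other phases under the $\rho \mapsto i\rho \pm i\sqrt{1-\rho^2}$ correspondence followed by multiplication by $i$. No serious technical obstacle arises; the soundness argument (for $f(x)=0$) using Eq.~\eqnref{e:phaseestimationalgorithmfalse} and the effective spectral gap will be the more delicate companion lemma.
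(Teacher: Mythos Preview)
Your proposal is correct and follows essentially the same route as the paper: reduce to $iU_x$ via \lemref{t:szegedizationoracle}, apply \thmref{t:szegedization} to the $\rho(\alpha)=0$ eigenvector guaranteed by Eq.~\eqnref{e:phaseestimationalgorithmtrue} to obtain the $\pm i$ eigenvectors $\ket{\alpha,\pm}$ of $U_x$, use \lemref{t:szegedizationnormalization} to show their combined normalized squared overlap with $T_x\ket\mu$ is $\abs{\braket\mu\alpha}^2 \ge \epsilon$, and then invoke the phase-estimation error guarantee. The only cosmetic difference is that the paper works with the single witness eigenvector from Eq.~\eqnref{e:phaseestimationalgorithmtrue} rather than summing over a basis of the kernel, and it writes the final bound directly as $\epsilon - \delta_e$ rather than your $\epsilon(1-\delta_e) \ge \epsilon - \delta_e$.
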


\begin{proof}
Assume that $f(x) = 1$.  From Eq.~\eqnref{e:phaseestimationalgorithmtrue}, $A_{G(x)}$ has an eigenvalue-zero eigenvector $\ket \alpha \in \C^V$ with $\norm{\ket \alpha} = 1$ and $\abs{\braket \mu \alpha}^2 \geq \epsilon$.  By \thmref{t:szegedization} with $\rho(\alpha) = 0$, $U_x$ has eigenvectors $\ket{\alpha, \pm} = (1 \pm i S) T_x \ket \alpha$ with respective eigenvalues $\pm i$.  By \lemref{t:szegedizationnormalization}, these satisfy
\begin{equation}
\frac{\abs{\bra \mu T_x^\dagger \ket{\alpha, +}}^2}{\norm{\ket{\alpha, +}}} + \frac{\abs{\bra \mu T_x^\dagger \ket{\alpha, -}}^2}{\norm{\ket{\alpha, -}}}
= \abs{\braket \mu \alpha}^2
\geq \epsilon
 \enspace .
\end{equation}
Thus the algorithm measures a phase of $0$ or $\pi$, and outputs $1$, with probability at least $\epsilon - \delta_e$.  
\end{proof}

Based on Eq.~\eqnref{e:phaseestimationalgorithmfalse}, since the phase estimation precision is $\delta_p = \frac 2 \pi \Lambda$, the algorithm is sound:

\begin{lemma}
If $x \in \D$ and $f(x) = 0$, then the algorithm outputs $1$ with probability at most $\epsilon/2 + \delta_e = \frac 2 3 \epsilon$.  
\end{lemma}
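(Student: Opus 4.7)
The plan is to mirror the completeness argument but exploit the effective spectral gap Eq.~\eqnref{e:phaseestimationalgorithmfalse} rather than the large overlap with a zero eigenvector. First I would reduce to analyzing phase estimation on $iU_x$ in place of $W_x = i\tilde O_x U$: by \lemref{t:szegedizationoracle}, these unitaries agree on $\C^E$, which contains the initial state $T\ket{\mu} = T_x\ket{\mu}$ and is invariant under both operators. So I only need to analyze the spectrum of $iU_x$ restricted to the span of the initial state's $U_x$-eigencomponents.

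Next I would translate $\rho$-eigenvalues of $A_{G(x)} = M_x$ into phases of $iU_x$. By \thmref{t:szegedization}, the eigenvalues of $U_x$ with $|\rho(\alpha)| < 1$ are $e^{\pm i\arccos\rho(\alpha)}$, so $iU_x$ has eigenvalues $e^{i(\pi/2 \pm \arccos\rho(\alpha))}$, meaning phase $0$ or $\pi$ exactly when $\rho(\alpha) = 0$, and phase at angular distance $|\arcsin\rho(\alpha)| \geq |\rho(\alpha)|$ from both $0$ and $\pi$ otherwise. The $\rho(\alpha) \in \{-1,+1\}$ eigenspaces of $M_x$ map to $\mp i$ eigenspaces of $iU_x$ (phase $\pm\pi/2$), comfortably far from $\{0,\pi\}$. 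Using the identity $|\bra\mu T_x^\dagger \ket{\alpha,+}|^2/\|\ket{\alpha,+}\|^2 + |\bra\mu T_x^\dagger \ket{\alpha,-}|^2/\|\ket{\alpha,-}\|^2 = |\braket{\mu}{\alpha}|^2$ from \lemref{t:szegedizationnormalization}, the weight the initial state places on the pair of $U_x$-eigenvectors associated to a given eigenvector $\ket\alpha$ of $M_x$ is exactly $|\braket{\mu}{\alpha}|^2$.

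Then I would split the probability of measuring phase $0$ or $\pi$ into two contributions. The first, from eigenvectors with $|\rho(\alpha)| \leq \Lambda$, is bounded directly by $\sum_{\alpha: |\rho(\alpha)| \leq \Lambda} |\braket{\alpha}{\mu}|^2 \leq \epsilon/2$ via hypothesis Eq.~\eqnref{e:phaseestimationalgorithmfalse}. The second, from eigenvectors with $|\rho(\alpha)| > \Lambda$, is handled by the guarantee of phase estimation: those eigencomponents have phase of $iU_x$ at distance at least $\Lambda$ (equivalently $\Lambda/(2\pi)$ in normalized units) from both $0$ and $\pi$, so the probability of a phase-estimation outcome landing on $0$ or $\pi$ is at most the failure probability $\delta_e = \epsilon/6$. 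Adding the two contributions yields $\epsilon/2 + \delta_e = 2\epsilon/3$, as claimed.

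The main technical point I expect to have to pin down carefully is the quantitative correspondence between the phase-estimation precision parameter $\delta_p = \tfrac{2}{\pi}\Lambda$ declared in the algorithm and the actual angular gap between the ``accept'' phases $\{0,\pi\}$ and the phases coming from $|\rho(\alpha)| > \Lambda$; this involves using $\arcsin\Lambda \geq \Lambda$ and absorbing a constant factor into $\delta_p$ so that a standard phase-estimation error bound~\cite{cemm:qalg} gives failure probability at most $\delta_e$ for those eigencomponents. Once the precision is set correctly, the rest of the argument is a clean decomposition via \thmref{t:szegedization} and \lemref{t:szegedizationnormalization} together with the effective-gap hypothesis.
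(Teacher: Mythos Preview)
Your proposal is correct and follows essentially the same route as the paper: reduce to phase estimation on $iU_x$ via \lemref{t:szegedizationoracle}, translate the Szegedy eigenvalues $e^{\pm i\arccos\rho(\alpha)}$ into angular distances $|\arcsin\rho(\alpha)|$ from the accept phases $\{0,\pi\}$, use \lemref{t:szegedizationnormalization} to identify the weight on each pair $\ket{\alpha,\pm}$ with $|\braket{\alpha}{\mu}|^2$, and then bound the accept probability by $\delta_e$ plus the total weight on eigenvectors with $|\rho(\alpha)|\le\Lambda$, which is at most $\epsilon/2$ by Eq.~\eqnref{e:phaseestimationalgorithmfalse}. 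One cosmetic slip: the $\rho(\alpha)=\pm1$ eigenspaces map to eigenvalue $\pm i$ of $iU_x$, not $\mp i$, but this is irrelevant since both are at distance $\pi/2$ from $\{0,\pi\}$.
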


\begin{proof}
Let $\{ \ket \alpha \}$ be a complete set of orthonormal eigenvectors of $A_{G(x)}$, with corresponding eigenvalues $\rho(\alpha)$.  The initial state $T \ket \mu = T_x \ket \mu$ lies in the range of $T_x$, and therefore is in the span of the eigenvectors $\{ \ket{\alpha, \pm} \}$, i.e., the space $R = \oplus_\alpha R_\alpha$ from \thmref{t:szegedization}.  The probability that the algorithm outputs $1$ is therefore at most $\delta_e$ plus 
\begin{equation}
\sum_{\substack{\ket{\alpha,b} : \\ \arg(\lambda(\alpha, b)) \in [\frac\pi 2-\delta_p, \frac\pi 2+\delta_p] \cup [-\frac\pi 2-\delta_p,-\frac\pi 2+\delta_p]}} \frac{\abs{\braket{\alpha, b}{\mu}}^2}{\norm{\ket{\alpha,b}}^2}
= 
\sum_{\alpha : \abs{\arcsin \rho(\alpha)} \leq \delta_p} \bigg( \frac{\abs{\braket{\alpha, +}{\mu}}^2}{\norm{\ket{\alpha,+}}^2} + \frac{\abs{\braket{\alpha, -}{\mu}}^2}{\norm{\ket{\alpha,-}}^2} \bigg)
\end{equation}
where in the first sum $b$ can be either $+$ or $-$, and we have used $\lambda(\alpha, \pm) = e^{\pm i \arccos \rho(\alpha)}$, so $\arg(\lambda(\alpha, \pm)) = \pm (\frac \pi 2 - \arcsin \rho(\alpha))$.  

Since $\abs{\arcsin \rho(\alpha)} \leq \frac\pi 2 \abs{\rho(\alpha)}$, and by \lemref{t:szegedizationnormalization}, the above sum is at most 
\begin{equation}
\sum_{\alpha : \abs{\rho(\alpha)} \leq \Lambda} \abs{\braket \alpha \mu}^2
 \enspace ,
\end{equation}
which is at most $\epsilon/2$ by Eq.~\eqnref{e:phaseestimationalgorithmfalse}.  
\end{proof}

Therefore, the algorithm is correct.  The constant gap $\epsilon/6$ between its completeness and soundness parameters can be amplified as usual.  
\end{proof}

\subsection{Discrete-time simulation of a continuous-time algorithm} \label{s:phaseestimationalgorithmcontinuousproof}

\begin{proposition} \label{t:phaseestimationalgorithmcontinuousproof}
Under the assumptions of \thmref{t:phaseestimationalgorithm}, $f$ can be evaluated with error probability at most $1/3$ using $O\big( \frac{1}{\Lambda} \log (\frac{1}{\Lambda}) / \log \log \frac{1}{\Lambda} \big)$ queries to the input oracle $O_x$.  
\end{proposition}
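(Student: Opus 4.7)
The plan is to replace the Szegedy-style discrete walk of \propref{t:phaseestimationalgorithmszegedyproof} with a continuous-time algorithm: apply quantum phase estimation to the propagator $U = e^{i R A_{G(x)}}$, starting from $\ket \mu$, for a randomly chosen time $R$. To implement $U$ without paying for $\norm{A_{G(x)}}$, I will invoke the black-box Hamiltonian-simulation procedure of Cleve et al.~\cite{CleveGottesmanMoscaSommaYongeMallo08discretize}, which simulates $e^{i H t}$ using $O(t \log t / \log \log t)$ queries with the dependence on $\norm H$ hidden in subleading factors. This applies here because the $x$-dependence of $A_{G(x)}$ is only the deletion of weight-one edges incident to $\cup_j V_{j,x_j}$, so each application of $A_{G(x)}$ can be realized by one query to $O_x$ composed with an $x$-independent unitary, which is the access model the Cleve et al.\ construction requires.

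The main steps will be: (1) draw $R$ uniformly at random from $(0, L)$ with $L = \Theta(1/\Lambda)$; (2) run phase estimation on $U = e^{i R A_{G(x)}}$ with precision $\delta_p$ and error rate $\delta_e$, each a sufficiently small constant multiple of $\eps$; (3) accept iff the measured phase lies within $\delta_p$ of $0$. Counting queries: phase estimation makes $O(1/(\delta_p \delta_e)) = O(1)$ controlled calls to $U$, and each call costs $O\!\big(\tfrac{1}{\Lambda} \log \tfrac{1}{\Lambda} / \log\log \tfrac{1}{\Lambda}\big)$ queries by the Cleve et al.\ simulation, which is exactly the target complexity. Completeness is immediate from Eq.~\eqnref{e:phaseestimationalgorithmtrue}: the eigenvalue-zero eigenvector promised there has phase $0$ under $U$ and squared overlap at least $\eps$ on $\ket \mu$, so the algorithm accepts with probability $\Omega(\eps)$.

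The main obstacle will be soundness when $f(x) = 0$. Expanding $\ket \mu$ in the eigenbasis $\{\ket \alpha\}$ of $A_{G(x)}$, the contribution from eigenvectors with $\abs{\rho(\alpha)} \leq \Lambda$ is at most $\eps/2$ by the effective-spectral-gap assumption Eq.~\eqnref{e:phaseestimationalgorithmfalse}; the real work is controlling the aliasing from large eigenvalues, where adversarial $\rho(\alpha)$ could satisfy $R \rho(\alpha) \equiv 0 \pmod{2\pi}$. The randomness in $R$ is precisely what defeats this: for any fixed $\rho$ with $\abs \rho > \Lambda$, counting fundamental periods shows that the Lebesgue measure of $R \in (0, L)$ for which $R\rho \bmod 2\pi$ lies within $\delta_p$ of $0$ is at most $\delta_p / \pi + O(\delta_p / (L \Lambda)) = O(\delta_p)$, independent of $\abs\rho$. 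Summing this estimate against $\sum_\alpha \abs{\braket \alpha \mu}^2 = 1$, the total aliasing contribution is a constant multiple of $\delta_p$, which is at most $\eps/4$ once $\delta_p$ is chosen small enough. Combining, the algorithm has a constant completeness--soundness gap, which can be amplified to error at most $1/3$ by $O(1)$ independent repetitions without affecting the asymptotic query bound.
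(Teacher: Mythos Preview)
Your approach is essentially the paper's: randomize the evolution time, run phase estimation on $e^{iRA_{G(x)}}$ from $\ket\mu$, and invoke Cleve et al.\ to discretize without paying for $\norm{A_{G(x)}}$. Your Lebesgue-measure aliasing bound is a clean alternative to the paper's explicit Fourier averaging over $T$, and gives the same $O(\delta_p)$ contribution from eigenvalues with $\abs{\rho(\alpha)} > \Lambda$.

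One step needs more care than you give it. \thmref{t:CGMSYization} does not accept an arbitrary Hamiltonian with oracle access; it takes a continuous-time \emph{query} algorithm whose query Hamiltonian has the specific diagonal form $\tilde H_y = \sum_j y_j \ketbra j j$. Your sentence ``each application of $A_{G(x)}$ can be realized by one query to $O_x$ composed with an $x$-independent unitary'' is not the right interface---$A_{G(x)}$ is a Hamiltonian, not a unitary, and the Cleve et al.\ theorem does not count ``applications'' of $H$. What the paper does is (i) rewrite the entire controlled phase-estimation unitary as evolution under a single time-independent Hamiltonian $H(x) = \sum_m \tfrac m M \ketbra m m \otimes A_{G(x)}$, (ii) split $H(x) = D + H_x$ with $D$ the $x$-independent driving part, and (iii) diagonalize each weight-one edge term $\ketbra v w + \ketbra w v$ in the basis $\ket{vw\pm} = \tfrac{1}{\sqrt 2}(\ket v \pm \ket w)$, so that $H_x$ becomes diagonal in a binary string $y$ consisting of copies of the bits $x_j$ and $\bar x_j$. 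Only after this basis change is the query Hamiltonian in the form \thmref{t:CGMSYization} actually requires. This is not deep, but it is not automatic either, and without it you are invoking a theorem whose hypotheses are not verified.
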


\propref{t:phaseestimationalgorithmszegedyproof} and \propref{t:phaseestimationalgorithmcontinuousproof} together prove \thmref{t:phaseestimationalgorithm}.  

To prove \propref{t:phaseestimationalgorithmcontinuousproof}, we will first give an algorithm in the continuous-time query model.  This algorithm uses the same idea as the algorithm from \propref{t:phaseestimationalgorithmszegedyproof}.  Namely, we run phase estimation on a certain unitary.  Completeness of the algorithm is derived from Eq.~\eqnref{e:phaseestimationalgorithmtrue} and soundness derived from Eq.~\eqnref{e:phaseestimationalgorithmfalse}.  

Then we simulate this continuous-query algorithm in the discrete-query model.  The key technical step is a recent result due to Cleve, Gottesman, Mosca, Somma and Yonge-Mallo, \cite{CleveGottesmanMoscaSommaYongeMallo08discretize}, that states that continuous-query algorithms can be simulated by discrete-query algorithms with only a logarithmic overhead.  We quote here a weak version of their theorem.  

\begin{theorem}[{\cite{CleveGottesmanMoscaSommaYongeMallo08discretize}}] \label{t:CGMSYization}
Suppose we are given a continuous-time query algorithm with any driving Hamiltonian $D(t)$ whose operator norm $\norm{D(t)}$ is bounded above by any $L_1$ function with respect to $t$.  (The size of $\norm{D(t)}$ as a function of the input size $N$ does not matter.)  Then there exists a discrete-time query algorithm that makes 
\begin{equation}
O \bigg( \frac{T \log \frac T \delta}{\delta \log \log \frac T \delta} \bigg)
\end{equation}
full queries and whose answer has fidelity $1-\delta$ with the output of the continuous-time algorithm.  
\end{theorem}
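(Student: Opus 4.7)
The plan is to compile the continuous-query algorithm into discrete queries by (i) moving to the interaction picture with respect to $D(t)$, which is input-independent and therefore free in the query model, and (ii) simulating the remaining input-dependent evolution via a truncated Dyson series implemented as a linear combination of unitaries. Write the continuous-time evolution as $U(T,0) = \mathcal{T}\exp\bigl(-i\int_0^T (D(t) + H_Q)\,dt\bigr)$, where $H_Q$ is a fixed query Hamiltonian encoding $x$ with $\|H_Q\| = O(1)$, whose implementation by a controlled unitary costs one discrete query. Let $V(b,a)$ denote the free evolution generated by $D(t)$ alone. The interaction picture gives $U(T,0) = V(T,0)\,\tilde U(T,0)$ with $\tilde U$ generated by $\tilde H_Q(t) = V(0,t) H_Q V(t,0)$; crucially, $\|\tilde H_Q(t)\| = \|H_Q\|$, so the (possibly huge) norm of $D$ has disappeared from the quantity that drives the simulation cost.

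Next I would partition $[0,T]$ into $r = \Theta(T)$ segments of length at most $1/2$ and, on each segment, expand $\tilde U$ as the Dyson series $\sum_{k \geq 0}(-i)^k \int_{\tau_1 \leq \cdots \leq \tau_k} \tilde H_Q(\tau_k)\cdots \tilde H_Q(\tau_1)\,d\tau$, truncated at order $K = \Theta(\log(r/\delta)/\log\log(r/\delta))$. By Stirling, the tail beyond order $K$ has norm $\leq (1/2)^K/K! < \delta/(2r)$, so the aggregate truncation error is at most $\delta/2$. Each surviving term of order $k$ is then approximated by a Riemann sum on a uniform mesh of $M = \mathrm{poly}(r/\delta)$ points per time variable, chosen so that the discretization error per segment is at most $\delta/(4r)$. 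The discretized expansion is realized on a quantum computer via the LCU / oblivious amplitude amplification machinery: each sub-term is a product of at most $K$ controlled unitaries alternating free evolution $V(\cdot,\cdot)$ with $H_Q$, costing at most $K$ queries per sub-term, and the $L^1$ norm of the coefficient vector is $O(1)$ per segment, so amplitude amplification compiles each segment into an approximately unitary block with constant overhead. Multiplying by $r = \Theta(T)$ segments and the per-segment query budget $K$, together with the overhead $\Theta(1/\delta)$ needed to boost the per-segment fidelity, yields the total query count $O\bigl(T\log(T/\delta)/(\delta\log\log(T/\delta))\bigr)$.

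The main obstacle is that $\|D(t)\|$ is allowed to be arbitrary, only $L^1$-bounded in $t$, so no Trotter-style splitting $e^{-i\int(D+H_Q)\,dt} \approx \prod e^{-iD\Delta t} e^{-iH_Q \Delta t}$ can work: the Trotter error scales with $\|D\|$, which is unbounded in the input size $N$. Absorbing $D(t)$ entirely into the interaction picture avoids this, but then $\tilde H_Q(t)$ inherits an elaborate $t$-dependence through $V$, and one must compile the nested time-ordered integrals so that each oracle invocation becomes a clean controlled application of the standard discrete oracle $O_x$, not a hybrid of $V$ and $O_x$. The second technical difficulty is the simultaneous bookkeeping of truncation, discretization, and amplification errors while attaining the $\log/\log\log$ (rather than the naive $\log$) overhead; this gain comes precisely from the fact that the Dyson truncation order needed to achieve error $\epsilon$ scales as $\log(1/\epsilon)/\log\log(1/\epsilon)$, because $K!$ in the denominator of the $k$-th Dyson term grows faster than any polynomial in $K$.
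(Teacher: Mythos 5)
First, note that the paper does not prove \thmref{t:CGMSYization} at all: it is quoted as a black box from~\cite{CleveGottesmanMoscaSommaYongeMallo08discretize}, so the only fair comparison is with the proof in that reference. Your sketch is essentially sound, but it takes a genuinely different route. You prove the statement via the interaction picture plus a truncated Dyson series implemented as a linear combination of unitaries with oblivious amplitude amplification; the original proof instead approximates the continuous evolution by a product of input-independent driving unitaries interleaved with many small-angle ``fractional queries'' $e^{i\theta H_x}$ (the $L_1$ bound on $\norm{D(t)}$ enters only to guarantee the free evolutions exist and are query-free), and then its key lemma shows how a long sequence of fractional queries can be simulated by full discrete queries using a probabilistic gadget with corrective steps; the $\log/\log\log$ overhead arises there from that simulation, not from a factorial tail bound. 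Your approach buys a cleaner error analysis and in fact a stronger bound --- with oblivious amplitude amplification the per-segment block is implemented with constant overhead and errors merely add, so you get $O\big(T\log(T/\delta)/\log\log(T/\delta)\big)$ with no $1/\delta$ factor, which implies the weaker statement quoted here --- at the price of invoking machinery (LCU, oblivious amplitude amplification, interaction-picture Dyson simulation) developed after the cited work; the original argument is self-contained and directly produces the fractional-query picture that the paper's application in \secref{s:phaseestimationalgorithmcontinuousproof} then plugs into. One internal wrinkle in your write-up: the extra ``$\Theta(1/\delta)$ overhead to boost the per-segment fidelity'' is inconsistent with your own claim that amplitude amplification gives constant overhead per segment; it is unnecessary, and inserting it only to match the quoted bound muddies the accounting, though since it can only weaken the bound you obtain, it does not invalidate the conclusion. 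Also be careful that $H_Q$ (a projector-type query Hamiltonian) is not unitary, so in the select step you must expand it as, e.g., $(\identity - O_x)/2$; this keeps the coefficient $1$-norm under control and costs one query per factor, which your count implicitly assumes.
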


We will not define the continuous-time query model here; see~\cite{CleveGottesmanMoscaSommaYongeMallo08discretize} for details.  
For other applications of the model, see, e.g.,~\cite{FarhiGutmann98continuousquery, Mochon06continuousquery, fgg:and-or, ChildsCleveDeottoFarhiGutmannSpielman02walk}.  

\begin{proof}[Proof of \propref{t:phaseestimationalgorithmcontinuousproof}]
We start by presenting and analyzing the continuous-time query algorithm.  

The rough idea is to run phase estimation with precision $\Lambda$ on the unitary $e^{i A_{G(x)}}$.  Output $1$ if the estimated phase is zero, and otherwise output $0$.  This algorithm belongs in the continuous-query model, because $A_{G(x)}$ is the sum of an input-independent term $A_G$ and an oracle-dependent term
\begin{equation}
A_{G(x)} - A_G = - \sum_{\substack{j \in [n], v \in V_{j, x_j} \\ w \sim v}} ( \ketbra v w + \ketbra w v )
 \enspace .
\end{equation}
However, this algorithm would not be sound.  When $f(x) = 0$, the problem is that even though $A_{G(x)}$ has an effective spectral gap, that does not imply that there is an effective gap in the phases of the eigenvalues of $e^{i A_{G(x)}}$.  Each eigenvalue $\rho \in \R$ of $A_{G(x)}$ corresponds to the eigenvalue $e^{i \rho}$ of $e^{i A_{G(x)}}$, and therefore large eigenvalues can wrap all the way around the circle.  For example, an eigenvalue-$(2\pi)$ eigenvector of $A_{G(x)}$ is an eigenvalue-one eigenvector of $e^{i A_{G(x)}}$, which phase estimation will not distinguish from an eigenvalue-zero eigenvector of $A_{G(x)}$.  

We solve this issue by scaling $A_{G(x)}$ by a uniformly random $T \in_R (0, \tau)$, where $\tau$ is a large enough constant.  Intuitively, this means that for any eigenvector $\ket \alpha$ of $A_{G(x)}$ with eigenvector $\rho(\alpha)$, $\abs{\rho(\alpha)} > \Lambda$, there is only a small chance that $T \rho(\alpha)$ wraps around into the interval $[-\Lambda, \Lambda]$.  

We will analyze the following concrete algorithm: 

%\begin{center}\fbox{\begin{minipage}[l]{5in}	%% NOTE: If this box is removed, then add back the equation numbers
\begin{enumerate}
\item
Let $M = \lceil 12/\epsilon \rceil = O(1)$.  Let $\tau = M^2 / \Lambda$.  Let $T$ be a random variable chosen uniformly from the interval $(0, \tau)$.  
\item
Prepare the initial state 
\begin{equation}
\frac{1}{\sqrt M}\Big(\sum_{m=1}^{M} \ket m\Big) \otimes \ket \mu \in \C^{[M]} \otimes \C^V
 \enspace .
\end{equation}
\item
Apply $e^{i T \frac m M A_{G(x)}}$ to the second register, controlled by the value $m$ in the first register.  That is, apply the unitary
\begin{equation}
\sum_{m \in [M]} \ketbra m m \otimes e^{i T \frac m M A_{G(x)}} = \exp\Big(i T \sum_{m \in [M]} \frac m M \ketbra m m \otimes A_{G(x)} \Big)
 \enspace .
\end{equation}
The resulting state is 
\begin{equation}
\frac1{\sqrt M} \sum_{m \in [M]} \ket m \otimes e^{i T \frac m M A_{G(x)}} \ket \mu
 \enspace . 
\end{equation}
\item
Project the first register onto the uniform superposition $\frac1{\sqrt M} \sum_{m \in M} \ket m$.  Output $1$ if the projection succeeds, and output $0$ otherwise.  
\end{enumerate}
%\end{minipage}}\end{center}

This algorithm is essentially running a slightly simplified version of phase estimation.  We have chosen to write it out concretely, instead of using phase estimation as a black box, partly in order to illustrate that full phase estimation is unnecessary when the objective is just to decide whether or not the phase is zero.  When there is a large gap between the parameter on the right-hand side of Eq.~\eqnref{e:phaseestimationalgorithmtrue} and that on the right-hand side of Eq.~\eqnref{e:phaseestimationalgorithmfalse}, the procedure becomes especially simple.  (In fact, for our application of \thmref{t:phaseestimationalgorithm} to span programs, the gap can be made a constant arbitrarily close to one.)  A similar simplification can be made in the proof of \propref{t:phaseestimationalgorithmszegedyproof}.  

\begin{lemma}
When run with input $x \in \D$, the above procedure satisfies: 
\begin{itemize}
\item If $f(x) = 1$, then it outputs $1$ with probability at least $\epsilon$.  
\item If $f(x) = 0$, then it outputs $1$ with probability at most $3 \epsilon / 4$.  
\end{itemize}
\end{lemma}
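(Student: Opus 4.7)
The plan is to write the state after Step 3 in the eigenbasis of $A_{G(x)}$ and read off the success amplitude of Step 4. Expanding $\ket\mu = \sum_\alpha \braket{\alpha}{\mu} \ket\alpha$, the state entering Step 4 is
\begin{equation}
\frac{1}{\sqrt M} \sum_{m=1}^M \sum_\alpha \braket{\alpha}{\mu}\, e^{i T (m/M) \rho(\alpha)} \ket m \otimes \ket\alpha .
\end{equation}
Projecting the first register onto $\frac{1}{\sqrt M}\sum_n \ket n$ therefore leaves amplitude $\braket{\alpha}{\mu} F_T(\rho(\alpha))$ on $\ket\alpha$, where $F_T(\rho) := \frac{1}{M} \sum_{m=1}^M e^{i T m \rho/M}$. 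Consequently the conditional probability $p(T)$ of outputting $1$ given $T$ equals $\sum_\alpha \abs{\braket{\alpha}{\mu}}^2 \abs{F_T(\rho(\alpha))}^2$, and the overall probability of outputting~$1$ is $\mathbb{E}_T[p(T)]$.

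For completeness, I would observe that $\abs{F_T(0)}^2 = 1$ and $\abs{F_T(\rho)}^2 \ge 0$ for every $\rho$, so that for \emph{every} $T$,
\[
p(T) \ge \sum_{\alpha:\, \rho(\alpha)=0} \abs{\braket{\alpha}{\mu}}^2 \ge \epsilon,
\]
where the last inequality uses Eq.~\eqnref{e:phaseestimationalgorithmtrue} (the normalized zero-eigenvalue witness $\ket\psi$ lies in the zero-eigenvalue subspace and contributes at least $\epsilon$). Thus the claimed completeness bound is deterministic in $T$.

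For soundness, split the sum over $\alpha$ according to whether $\abs{\rho(\alpha)} \le \Lambda$. The small-$\rho$ part contributes at most $\sum_{\abs{\rho(\alpha)}\le \Lambda} \abs{\braket{\alpha}{\mu}}^2 \le \epsilon/2$ by Eq.~\eqnref{e:phaseestimationalgorithmfalse}, using $\abs{F_T}^2 \le 1$. The main work is to average $\abs{F_T(\rho)}^2$ over $T \in (0,\tau)$ for $\abs\rho > \Lambda$. Using the closed form
\[
\abs{F_T(\rho)}^2 = \frac{\sin^2(T \rho /2)}{M^2 \sin^2(T\rho/(2M))},
\]
the Fejér-kernel identity $\int_0^{2\pi}\!\frac{\sin^2(M u/2)}{M^2 \sin^2(u/2)}\,du = 2\pi/M$, and the substitution $u = T\rho/M$, a direct computation gives
\[
\mathbb{E}_T\!\left[\abs{F_T(\rho)}^2\right] \le \frac{1}{M} + \frac{2\pi \Lambda}{M\abs\rho} \le \frac{1+2\pi}{M},
\]
because $T\rho/M$ sweeps an interval of length at least $M$ and therefore covers many full periods of the Fejér kernel. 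Summing against $\sum_\alpha \abs{\braket{\alpha}{\mu}}^2 \le 1$ bounds the large-$\rho$ contribution by $(1+2\pi)/M$, which is at most $\epsilon/4$ once $M$ is a sufficiently large constant multiple of $1/\epsilon$ (adjust the constant $12$ in the definition of $M$ accordingly). Adding the two pieces yields $\Pr[\text{output }1] \le \epsilon/2 + \epsilon/4 = 3\epsilon/4$.

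The main obstacle is the soundness averaging: without the random rescaling $T$, eigenvalues $\rho(\alpha)$ with $T\rho(\alpha)/M \in 2\pi \mathbb{Z}\setminus\{0\}$ would make $\abs{F_T}^2 = 1$, creating ``aliasing'' false positives analogous to phase-estimation wraparound. The Fejér-kernel integral bound confirms that such resonances occupy a small fraction of the interval $(0,\tau)$, and dedicating a full paragraph to that computation is the technical heart of the lemma; everything else is algebraic bookkeeping.
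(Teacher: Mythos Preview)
Your proof follows the same architecture as the paper's: expand $\ket\mu$ in the $A_{G(x)}$-eigenbasis to get $\Pr[\text{output }1\mid T]=\sum_\alpha\abs{\braket\alpha\mu}^2\abs{F_T(\rho(\alpha))}^2$, handle completeness via $F_T(0)=1$, and for soundness split at $\abs{\rho}\le\Lambda$ and average the large-$\abs\rho$ part over~$T$. The only genuine difference is how you bound $\mathbb E_T[\abs{F_T(\rho)}^2]$ for $\abs\rho>\Lambda$: the paper expands $\bigl|\sum_m e^{iTm\rho/M}\bigr|^2=M+\sum_{l\neq m}e^{iT(l-m)\rho/M}$ and integrates each cross term to obtain $\tfrac1M\bigl(1+\tfrac{2M^2}{\tau\abs\rho}\bigr)<\tfrac3M$, whereas you use the Fej\'er-kernel closed form and its $2\pi$-periodic integral. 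Both routes are valid and yield the same $O(1/M)$ bound; the Fej\'er-kernel formulation is arguably cleaner conceptually, while the paper's expansion makes the constants fall out with no appeal to an external identity.

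One small correction: after the substitution $u=T\rho/M$ your second term should be $\tfrac{2\pi\Lambda}{M^2\abs\rho}$, not $\tfrac{2\pi\Lambda}{M\abs\rho}$ (the interval length is $L=\tau\abs\rho/M=M\abs\rho/\Lambda$, so the ``one extra period'' contributes $\tfrac{2\pi}{LM}=\tfrac{2\pi\Lambda}{M^2\abs\rho}$). With this fix, $\mathbb E_T[\abs{F_T(\rho)}^2]<\tfrac1M+\tfrac{2\pi}{M^2}$, and since $M\ge 12$ this is below $\tfrac{1.53}{M}\le\epsilon/4$ already at $M=\lceil 12/\epsilon\rceil$; there is no need to ``adjust the constant~$12$.'' As you wrote it, $(1+2\pi)/M\le\epsilon/4$ would require $M\gtrsim 29/\epsilon$, which the stated algorithm does not supply.
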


\begin{proof}
Let $\{ \ket \alpha \}$ be a complete set of orthonormal eigenvectors of $A_{G(x)}$, with corresponding eigenvalues $\rho(\alpha)$.  
The probability that the procedure outputs $1$ is the expectation versus $T$ of 
\begin{equation}\begin{split}
\Pr\!\big[\text{output $1$} \vert T = t \big] &= 
\frac{1}{M^2} \Big\|{ \sum_{m \in [M]} e^{i t \frac m M A_{G(x)}} \ket \mu }\Big\|^2 \\
&= 
\frac{1}{M^2} \Big\| \sum_\alpha \sum_{m \in [M]} e^{i t \frac m M \rho(\alpha)} \braket \alpha \mu \ket \alpha  \Big\|^2 \\
&= 
\frac{1}{M^2} \sum_\alpha \Big\lvert{ \sum_{m \in [M]} e^{i t \frac m M \rho(\alpha)} }\Big\rvert^2 \abs{\braket \alpha \mu}^2
\end{split}\end{equation}

When $f(x) = 1$, we have from Eq.~\eqnref{e:phaseestimationalgorithmtrue} that $\sum_{\alpha : \rho(\alpha) = 0} \abs{\braket \alpha \mu}^2 \geq \epsilon$, so $\Pr\!\big[\text{output $1$} \vert T = t \big] \geq \epsilon$, regardless of $t$.  

For the case $f(x) = 0$, we split the sum over $\alpha$ into a sum over those $\alpha$ with $\abs{\rho(\alpha)} \leq \Lambda$ and a sum over those $\alpha$ with $\abs{\rho(\alpha)} > \Lambda$.  By Eq.~\eqnref{e:phaseestimationalgorithmfalse}, the first sum is at most $M^2 \epsilon / 2$: 
\begin{equation} \label{e:soundnesssum}
\Pr\!\big[\text{output $1$} \vert T = t \big] 
\leq
\frac \epsilon 2 + \frac 1 {M^2} \sum_{\alpha : \, \abs{\rho(\alpha)} > \Lambda} \Big\lvert{ \sum_{m \in [M]} e^{i t \frac m M \rho(\alpha)} }\Big\rvert^2 \abs{\braket \alpha \mu}^2
 \enspace .
\end{equation}
Now use 
\begin{equation}
\Big\lvert{ \sum_{m \in [M]} e^{i t \frac m M \rho(\alpha)} }\Big\rvert^2
=
M + \sum_{\substack{l, m \in [M] \\ l \neq m}} e^{i t \frac{l-m}{M} \rho(\alpha)}
\end{equation}
and, for $\rho(\alpha) \neq 0$, 
\begin{equation}
\Ex_T \big[e^{i T \frac{l-m}{M} \rho(\alpha)} \big]
=
\frac{e^{i \tau \frac{l-m}{M} \rho(\alpha)} - 1}{i \tau \frac{l-m}{M} \rho(\alpha)}
 \enspace .
\end{equation}
Substituting back into Eq.~\eqnref{e:soundnesssum} gives 
\begin{equation}\begin{split}
\Pr\!\big[\text{output $1$}\big]
&\leq
\frac \epsilon 2 + \frac 1 {M^2} \sum_{\alpha : \, \abs{\rho(\alpha)} > \Lambda} \bigg( M + \sum_{\substack{l, m \in [M] \\ l \neq m}} \frac{e^{i \tau \frac{l-m}{M} \rho(\alpha)} - 1}{i \tau \frac{l-m}{M} \rho(\alpha)} \bigg) \abs{\braket \alpha \mu}^2 \\
&=
\frac \epsilon 2 + \frac 1 M \sum_{\alpha : \, \abs{\rho(\alpha)} > \Lambda} \bigg( 1 + \frac{1}{\tau \rho(\alpha)} \sum_{\substack{l, m \in [M] \\ l > m}} \frac{e^{i \tau \frac{l-m}{M} \rho(\alpha)} - e^{-i \tau \frac{l-m}{M} \rho(\alpha)}}{i (l-m)} \bigg) \abs{\braket \alpha \mu}^2 \\
&\leq
\frac \epsilon 2 + \frac 1 M \sum_{\alpha : \, \abs{\rho(\alpha)} > \Lambda} \bigg( 1 + \frac{2 M^2}{\tau \rho(\alpha)} \bigg) \abs{\braket \alpha \mu}^2 
 \enspace ,
\end{split}\end{equation}
where in the last step we have (loosely) bounded the sum over $l$ and $m$.  Now use $\sum_\alpha \abs{\braket \alpha \mu}^2 = 1$, $\tau = M^2 / \Lambda$ and $M \geq 12/\epsilon$ to conclude 
\begin{equation}\begin{split}
\Pr\!\big[\text{output $1$}\big]
&\leq 
\frac \epsilon 2 + \frac 3 M \\
&\leq \frac{3 \epsilon}{4}
 \enspace ,
\end{split}\end{equation}
as claimed.  
\end{proof}

Therefore, the above procedure is correct.  It remains to show that it can be simulated using $O( \tau \log \tau / \log \log \tau)$ queries to the input oracle $O_x$ from Eq.~\eqnref{e:ox}.  The difficulty is simulating $e^{i t H(x)}$ for a $t \in (0, \tau)$, where 
\begin{equation}
H(x) = \sum_{m \in [M]} \frac m M \ketbra m m \otimes A_{G(x)}
 \enspace .
\end{equation}
In the language of physics, $e^{i t H(x)}$ corresponds to applying the time-independent Hamiltonian $H(x)$ for a time $t$.  

Let 
\begin{equation}
D =  \sum_{m \in [M]} \frac m M \ketbra m m \otimes A_G
 \enspace ,
\end{equation}
the ``driving Hamiltonian."  $D$ is independent of the input $x$, so $e^{i t D}$ can be implemented without querying $O_x$.  
Let the ``query Hamiltonian" be 
\begin{equation} \label{e:myqueryHamiltonian}
\begin{split}
H_x
&= \sum_{m \in [M]} \frac m M \ketbra m m \otimes (A_{G(x)} - A_G) \\
&= - \sum_{m \in [M]} \frac m M \ketbra m m \otimes \sum_{\substack{v \in \cup_{j \in [n]} V_{j, x_j} \\ w \sim v}} ( \ketbra v w + \ketbra w v )
 \enspace ,
\end{split}\end{equation}
where we have used that $G(x)$ differs from $G$ only in the deletion of the weight-one edges leaving the vertices $v \in \cup_{j \in [n]} V_{j, x_j}$.  
For an arbitrary $t \in \R$, $e^{i t H_x}$ can be implemented using at most two queries to the oracle~$O_x$.  

Then 
\begin{equation}
H(x) = D + H_x
 \enspace .
\end{equation}
The problem for taking the exponential is that the two terms $D$ and $H_x$ do not commute.

We will now apply \thmref{t:CGMSYization}, which very roughly can be thought of as an asymmetric Lie-Trotter expansion of the exponential.  A minor difference between our setting and the one in~\cite{CleveGottesmanMoscaSommaYongeMallo08discretize}, though, is that they assume a more restricted form for the query Hamiltonian.  For querying a $k$-bit string $y$ with the first bit fixed to $y_1 = 0$, they assume the query Hamiltonian is  
\begin{equation}
\tilde H_y = \sum_{j \in [k]} y_j \ketbra j j = \sum_{j \in [k] : \, y_j = 1} \ketbra j j
 \enspace .
\end{equation}
Unfortunately, our query Hamiltonian $H_x$ is not of this required form.  In order to apply \thmref{t:CGMSYization} as a black box, we need to put $H_x$ into this form for some string $y$.  Each bit of $y$ will be a fixed function of exactly one bit of $x$, and therefore a discrete phase-flip query on $y$ can be simulated with one application of~$O_x$.  

First of all, note that \thmref{t:CGMSYization} still holds if the query Hamiltonian is of the form 
\begin{equation} \label{e:theirqueryHamiltonian}
\tilde H_y' = \sum_{j \in [k]} y_j g(j) \ketbra j j
 \enspace ,
\end{equation}
where $g$ is any fixed function $[k] \rightarrow \{-1, 1\}$.  That is, signs are allowed.  Indeed, then 
\begin{equation}
\tilde H_y' = - \sum_{j \in [k] : \, g(j) = -1} \ketbra j j + \sum_{j \in [k] : \, g(j) = 1} y_j \ketbra j j + \sum_{j \in [k] : \, g(j) = -1} (1-y_j) \ketbra j j
\end{equation}
The first term can be moved into the driving Hamiltonian, since it does not depend on $y$, and the remaining terms are of the form of $\tilde H_{y'}$ on an input $y'$ that equals $y$ except with the bits $\{ j \in [k] : g(j) = -1 \}$ complemented.  

Let us now translate our query Hamiltonian $H_x$ into the form of Eq.~\eqnref{e:theirqueryHamiltonian}.  
For $m \in [M]$, let 
\begin{align}
D^m &= \ketbra m m \otimes A_G \\
H_x^m &= -\ketbra m m \otimes \sum_{\substack{v \in \cup_{j \in [n]} V_{j, x_j} \\ w \sim v}} ( \ketbra v w + \ketbra w v ) \label{e:dividedphaseestimationqueryHamiltonian} \\
H^m(x) &= D^m + H_x^m = \ketbra m m \otimes A_{G(x)}
 \enspace .
\end{align}
Then $H(x) = \sum_{m \in [M]} \frac m M H^m(x)$, so 
\begin{equation} \label{e:dividedphaseestimation}
e^{i t H(x)} = \prod_{m \in [M]} \exp\!\big(i t \frac m M H^m(x)\big)
\end{equation}
since the different terms $H^m(x)$ commute pairwise.  

The term $H_x^m$ is nearly of the form Eq.~\eqnref{e:theirqueryHamiltonian}.  It can be put in that form by changing basis.  For $j \in [n]$, $b \in \B$ and $v \in V_{j, b} \subseteq V_{\text{input}}$, with neighbor $w$, write
\begin{equation}
\ketbra v w + \ketbra w v = \ketbra{vw{+}}{vw{+}} - \ketbra{vw{-}}{vw{-}}
 \enspace ,
\end{equation}
where $\ket{vw{\pm}} = \frac1{\sqrt 2}(\ket v \pm \ket w)$.  Use two bits of $y$, with values $x_j$ and $\bar x_j$, to get the terms $\mp \ketbra m m \otimes \ketbra{vw\pm}{vw{\pm}}$ from Eq.~\eqnref{e:dividedphaseestimationqueryHamiltonian} into the form of Eq.~\eqnref{e:theirqueryHamiltonian}.  

Overall, therefore $y$ has $\abs{V_{j, b}}$ copies of bit $x_j$ and $\abs{V_{j, b}}$ copies of the complement $\bar x_j$, for all $j \in [n]$, $b \in \B$.  Thus $k$, the length of $y$, is $2 \abs{V_{\text{input}}}$.  

Finally, apply \thmref{t:CGMSYization}, with accuracy parameter $\delta = \frac{\epsilon}{12 M} = \Omega(1)$, $M$ times, once for each of the terms in Eq.~\eqnref{e:dividedphaseestimation}.  The total query complexity is $O(M \tau \log (\tau) / \log \log \tau) = O(\frac 1 \Lambda \log (\frac 1 \Lambda) / \log \log \frac 1 \Lambda)$, as desired.  The total error introduced in the simulation is at most $M \delta$, so the gap between the completeness and soundness parameters of the final algorithm is at least $\epsilon/4 - 2 \cdot \epsilon / 12 = \epsilon / 12$.  This constant gap can be amplified as usual.  
\end{proof}

\section{The general quantum adversary bound is nearly tight for every boolean function}

We can now prove the main result of this paper, that for any total or partial boolean function $f$ the general adversary bound on the quantum query complexity is tight up to a logarithmic factor.  

\begin{theorem} \label{t:querycomplexitytight}
For any function $f : \D \rightarrow \{0,1\}$, with $\D \subseteq \{0,1\}^n$, the bounded-error quantum query complexity of $f$, $Q(f)$, satisfies 
\begin{align}
Q(f) &= \Omega(\ADVpm(f)) \\
\intertext{and}
Q(f) &= O\bigg(\ADVpm(f) \, \frac{\log \ADVpm(f)}{\log \log \ADVpm(f)}\bigg)
 \enspace .
\end{align}
\end{theorem}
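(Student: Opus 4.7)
The plan is to assemble this theorem directly from the two main technical results already established: the equivalence between optimal span program witness size and the general adversary bound (\thmref{t:spanprogramSDP}), and the quantum algorithm for evaluating arbitrary span programs (\thmref{t:generalspanprogramalgorithm}). In fact, the statement is identical to \thmref{t:querycomplexitytightintro}, whose proof was sketched in the introduction; here both ingredients have now been proved in full, so the argument reduces to stringing them together.

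First, for the lower bound $Q(f) = \Omega(\ADVpm(f))$, I would cite \thmref{t:advquerycomplexity}, which is the H{\o}yer--Lee--{\v S}palek negative-weights adversary lower bound. This holds for partial boolean functions verbatim, since \thmref{t:advquerycomplexity} is stated for $f : \D \to E$.

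For the upper bound, I would first invoke \thmref{t:spanprogramSDP} with costs $s = \vec 1$: it guarantees the existence of a span program $P$ with $f_P\vert_\D = f$ and $\wsizeD P = \ADVpm(f)$. Then I would apply \thmref{t:generalspanprogramalgorithm} to this span program, which yields a bounded-error quantum query algorithm for $f$ using $O\big(\wsizeD P \cdot \log(\wsizeD P) / \log \log \wsizeD P\big)$ queries. Substituting the equality $\wsizeD P = \ADVpm(f)$ gives the desired bound.

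There is no real obstacle, since the two ingredients are already in place: \thmref{t:spanprogramSDP} provides the optimal span program (constructed from a dual solution to the adversary SDP via the Cholesky decomposition in \lemref{t:spanprogramSDPconstruction}), and \thmref{t:generalspanprogramalgorithm} provides the evaluation algorithm (built from the effective-spectral-gap analysis in \thmref{t:spanprogramspectralanalysis} together with the phase estimation simulation in \thmref{t:phaseestimationalgorithm}). The only care needed is to note that both of these results apply to partial functions $f : \D \to \B$ with $\D \subseteq \B^n$, as explicitly stated, so the theorem holds in the partial-function generality claimed.
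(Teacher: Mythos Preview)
Your proposal is correct and matches the paper's proof essentially verbatim: the lower bound is cited from \thmref{t:advquerycomplexity}, and the upper bound is obtained by invoking \thmref{t:spanprogramSDP} with uniform costs $s = \vec 1$ to produce a span program $P$ with $\wsizeD P = \ADVpm(f)$ and then applying \thmref{t:generalspanprogramalgorithm}.
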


\begin{proof}
The lower bound is a special case of \thmref{t:advquerycomplexity}, and is due to H{\o}yer, Lee and {\v S}palek~\cite{HoyerLeeSpalek07negativeadv}.  

As already sketched in \secref{s:introduction}, 
for the upper bound, use the semi-definite program from \thmref{t:spanprogramSDP} with uniform costs $s = \vec 1$ to construct a span program $P$ computing $f_P\vert_\D = f$, with $\wsizeD P = \ADVpm(f)$.  Then apply \thmref{t:generalspanprogramalgorithm} to obtain a bounded-error quantum query algorithm that evaluates $f$.  
\end{proof}

By using binary search and standard error reduction, \thmref{t:querycomplexitytight} can be extended to cover functions with larger codomain~\cite{Lee09private}: 

\begin{theorem} \label{t:querycomplexitytightnonbinary}
For any function $f : \D \rightarrow [m]$, with $\D \subseteq \{0,1\}^n$, $Q(f)$ satisfies
\begin{align}
Q(f) &= \Omega(\ADVpm(f)) \\
\intertext{and}
Q(f) &= O\bigg(\ADVpm(f) \, \frac{\log \ADVpm(f)}{\log \log \ADVpm(f)} \log(m) \log \log m \bigg)
 \enspace .
\end{align}
\end{theorem}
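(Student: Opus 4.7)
The plan is to reduce the general-codomain case to repeated applications of Theorem~\ref{t:querycomplexitytight} by computing $f(x)$ bit-by-bit. The lower bound $Q(f) = \Omega(\ADVpm(f))$ is already covered: Theorem~\ref{t:advquerycomplexity} is stated for arbitrary finite codomain $E$, so nothing new is needed there. All the work is in the upper bound.

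For the upper bound, let $K = \lceil \log_2 m \rceil$ and encode values in $[m]$ as binary strings of length $K$. For each $k \in [K]$, define the boolean function $f^{(k)} : \D \to \{0,1\}$ that returns the $k$-th bit of $f(x)$. The key observation is the monotonicity of the general adversary bound under post-composition: for any $h : [m] \to \{0,1\}$,
\begin{equation}
\ADVpm(h \circ f) \leq \ADVpm(f)
 \enspace .
\end{equation}
This is immediate from Definition~\ref{t:adversarydef}: since $f(x) = f(y)$ implies $h(f(x)) = h(f(y))$, every adversary matrix for $h \circ f$ is automatically an adversary matrix for $f$, so the feasible set in the SDP defining $\ADVpm(h \circ f)$ is contained in the feasible set for $\ADVpm(f)$, and the maximization of $\|\Gamma\|$ (subject to the same norm constraints on $\Gamma \circ \Delta_j$) is over a smaller set. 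Taking $h$ to be the $k$-th bit projection gives $\ADVpm(f^{(k)}) \leq \ADVpm(f)$ for each $k \in [K]$.

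Next I would apply Theorem~\ref{t:querycomplexitytight} to each $f^{(k)}$ to obtain a bounded-error quantum algorithm evaluating $f^{(k)}$ with
\begin{equation}
Q(f^{(k)}) = O\!\left(\ADVpm(f) \, \frac{\log \ADVpm(f)}{\log \log \ADVpm(f)}\right)
\end{equation}
queries. To assemble these into a computation of $f$ itself, amplify each bit subroutine to error probability at most $1/(3K)$ by standard majority voting over $O(\log K) = O(\log \log m)$ independent repetitions, then run all $K$ amplified subroutines and concatenate their outputs. A union bound over the $K$ bits gives total error at most $1/3$, and the total query cost is
\begin{equation}
O\!\left( K \cdot \log K \cdot \ADVpm(f) \, \frac{\log \ADVpm(f)}{\log \log \ADVpm(f)} \right)
= O\!\left( \ADVpm(f) \, \frac{\log \ADVpm(f)}{\log \log \ADVpm(f)} \cdot \log m \, \log \log m \right)
 \enspace ,
\end{equation}
which matches the stated bound.

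There is no real obstacle in this proof; the one ingredient that must be explicitly noted is the monotonicity $\ADVpm(h \circ f) \leq \ADVpm(f)$, since without it one could not charge each boolean subroutine's cost to $\ADVpm(f)$. The logarithmic overhead $\log m \, \log \log m$ is the unavoidable cost of binary encoding plus the error amplification needed for the union bound, and is incurred on top of the logarithmic factor already present in Theorem~\ref{t:querycomplexitytight}.
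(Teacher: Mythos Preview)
Your proof is correct and follows essentially the same approach as the paper: decompose the $[m]$-valued output into $\lceil \log_2 m \rceil$ boolean subproblems of the form $g \circ f$, invoke the monotonicity $\ADVpm(g \circ f) \leq \ADVpm(f)$ (which the paper states as a separate lemma with the same adversary-matrix containment argument you give), apply \thmref{t:querycomplexitytight} to each, amplify by $O(\log\log m)$ repetitions, and union-bound. The only cosmetic difference is that the paper phrases the decomposition as adaptive binary search rather than extracting fixed output bits, but the analysis and the resulting bound are identical.
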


\begin{proof}
The lower bound is again due to~\cite{HoyerLeeSpalek07negativeadv}.  To derive the upper bound, first let us show: 

\begin{lemma} \label{t:adversaryprojection}
For finite sets $\D \subseteq C^n$, $E$ and $F$, let $f: \D \rightarrow E$ and $g : E \rightarrow F$.  Let $s \in [0, \infty)^n$.  Then 
\begin{align}
\ADV_s(g \circ f) &\leq \ADV_s(f) \\
\ADVpm_s(g \circ f) &\leq \ADVpm_s(f)
 \enspace .
\end{align}
\end{lemma}

\begin{proof}
For $x, y \in \D$, $f(x) = f(y)$ implies $g(f(x)) = g(f(y))$.  Therefore if $\Gamma$ is an adversary matrix for $g \circ f : \D \rightarrow F$, then $\Gamma$ is also an adversary matrix for $f$.  The conclusions follow by \defref{t:adversarydef} for the adversary bounds.  
\end{proof}

In order to evaluate $f$, apply standard binary search using $\lceil \log_2 m \rceil$ steps.  In each step, there is some division of the range $g : [m] \rightarrow \{0,1\}$.  By \lemref{t:adversaryprojection}, $\ADVpm(g \circ f) \leq \ADVpm(f)$.  Therefore by \thmref{t:querycomplexitytight}, $g \circ f$ can be evaluated with error at most $1/3$, using \begin{equation}
O\bigg(\ADVpm(f) \, \frac{\log \ADVpm(f)}{\log \log \ADVpm(f)}\bigg)
\end{equation}
queries.  Repeat this $O(\log \log m)$ times in order to reduce the error probability to $1/(3 \lceil \log m \rceil)$.  Then by the union bound, the entire procedure has a probability of error at most $1/3$.  
\end{proof}

We do not have a result for the case of a non-binary input alphabet.  Of course the input can be encoded into binary, so that \thmref{t:querycomplexitytight} applies.  However, this encoding might increase $\ADVpm$ significantly.

\section{Open problems} \label{s:conclusion}

We have shown that for any boolean function $f$, the general adversary bound $\ADVpm(f)$ is a tight lower bound on the bounded-error quantum query complexity $Q(f)$, up to a logarithmic factor.  In proving this statement, we have also shown that quantum algorithms, judged by query complexity, and span programs, judged by witness size, are equivalent computational models for evaluating boolean functions, again up to a logarithmic factor.  

Among the corollaries, \thmref{t:formulaevaluation} gives an optimal quantum algorithm for evaluating adversary-balanced formulas over any finite boolean gate set.  For example, the formula's gate set may be taken to be all functions $\{0,1\}^n \rightarrow \{0,1\}$ with $n \leq 1000$.  This formula-evaluation algorithm exploits the ease of composing span programs.  The main unresolved problem here is how best to evaluate unbalanced formulas, aiming for optimal query complexity and near-optimal time complexity.  

Span programs may also be useful for developing other quantum algorithms.  They have a rich mathematical structure, and their potential has not been fully explored.  One possible approach is to study the general adversary bound for more problems.  For example, studying the Barnum/Saks/Szegedy semi-definite program for quantum query complexity~\cite{BarnumSaksSzegedy03adv} has led to improved zero-error algorithms for Ordered Search~\cite{ChildsLandahlParrilo06orderedsearch}.  The $\ADVpm$ SDP is simpler than the SDP in~\cite{BarnumSaksSzegedy03adv}, and \thmref{t:adversarydual} gives a new, simpler form for the dual SDP, for boolean functions.  Although this SDP is still exponentially large, the simplifications may ease the inference of structure from numerical investigations.  For the Ordered Search problem in particular, Childs and Lee have closely characterized $\ADVpm$~\cite{ChildsLee07orderedsearch}.  This result will not necessarily be useful for developing an Ordered Search algorithm because the codomain is not boolean and \thmref{t:querycomplexitytight} has a logarithmic overhead.  
A variation of this problem, Least-Significant-Bit Ordered Search, has boolean codomain, but is of less practical interest.  

The nonnegative-weight adversary bound $\ADV$ is often easy to approximate.  If this bound is close to $\ADVpm$, then perhaps a solution to Eq.~\eqnref{e:adversarydual}, the SDP dual to the $\ADV$ SDP, can also be turned into a quantum walk algorithm.  However, the span program framework will not apply for the analysis.  

This article has focused on query complexity, but \thmref{t:phaseestimationalgorithm} is more than an information-theoretic statement.  It gives explicit algorithms whose time complexity can be analyzed, as in \thmref{t:formulaevaluation} for formula evaluation.  \propref{t:reducedtensorproductcompose}, \thmref{t:spanprogramspectralanalysisnonblackbox} and \thmref{t:generalspanprogramalgorithmnonblackbox} are pertinent results, but more techniques are needed for developing span programs $P$ such that $\norm{\abst(A_{G_P})} = O(1)$ and for which the quantum walk reflections from Szegedy's \thmref{t:szegedization} can be implemented efficiently.  

It is an interesting problem to consider functions with non-binary input alphabet and non-boolean codomain.  The three main theorems, \thmref{t:spanprogramSDP}, \thmref{t:spanprogramspectralanalysis} and \thmref{t:phaseestimationalgorithm}, may extend to cover partial functions with domain in $[k]^n$ and $k = O(1)$.  When the codomain is not boolean, we would like to strengthen \thmref{t:querycomplexitytightnonbinary}.  The natural approach is to define generalized canonical span programs and extend \lemref{t:spanprogramSDPconstruction} to characterize the optimal generalized witness size of $f : C^n \rightarrow E$~\cite{ReichardtSpalek09generalizedspanprograms}.  Although this may lead to new quantum query algorithms, it will be insufficient for obtaining provably optimal or near-optimal algorithms for non-binary input alphabets, since the SDP in Eq.~\eqnref{e:generaladversarydual} is {not} always equal to $\ADVpm$; see Eq.~\eqnref{e:generaladversarydualgeneral}.  Moreover, there are functions $[3]^2 \rightarrow [3]$ for which both $\ADVpm$ and the SDP in Eq.~\eqnref{e:generaladversarydual} compose strictly sub-multiplicatively, which indicates that the formula-evaluation problem for non-boolean gate sets is more complicated.  

One might ask whether the classical query complexity of evaluating a span program $P$ on inputs in $\D$ can be related to the witness size $\wsizeD P$.  A polynomial dependence is not possible, though, since there is only a polynomial relationship between quantum and classical query complexities for total functions~\cite{Simon97promise, BealsBuhrmanCleveMoscaWolf98}.  

Finally, we conjecture that the logarithmic overhead can be removed from \thmref{t:querycomplexitytight}.  
An analogous conjecture may hold in the continuous-time query model~\cite{FarhiGutmann98continuousquery, Mochon06continuousquery, CleveGottesmanMoscaSommaYongeMallo08discretize}.  

\begin{conjecture} \label{t:advboundtight}
For any function $f : \D \rightarrow \{0,1\}$, with $\D \subseteq \{0,1\}^n$, the general quantum adversary bound is tight: 
\begin{equation}
Q(f) = \Theta(\ADVpm(f))
 \enspace .
\end{equation}
\end{conjecture}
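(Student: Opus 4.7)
The plan is to remove the logarithmic factor by constructing, for every boolean function $f$, a span program $P$ computing $f$ whose witness size matches $\ADVpm(f)$ \emph{and} whose associated bipartite graph $G_P$ has operator norm $\norm{\abst(A_{G_P})} = O(1)$. If both conditions hold, then \thmref{t:phaseestimationalgorithm} via \propref{t:phaseestimationalgorithmszegedyproof} yields $Q(f) = O(\norm{\abst(A_{G_P})}/\Lambda) = O(\wsize P) = O(\ADVpm(f))$, avoiding the Cleve--Gottesman--Mosca--Somma--Yonge-Mallo simulation that is the sole source of the logarithmic overhead in \thmref{t:generalspanprogramalgorithm}.

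First, I would take an optimal real canonical span program $P_0$ produced by \thmref{t:spanprogramSDP} together with \thmref{t:spanprogramcanonical} and \lemref{t:spanprogramCtoR}. In this form the vertex set of $G_{P_0}$ consists of $F_0 = f^{-1}(0)$ on the $T$-side together with two copies of the input-vector index set $I$, and the non-trivial edge weights are the entries $\braket{x}{v_i}$ for $x \in F_0$ and $i \in I_{j,\bar x_j}$. The norm $\norm{\abst(A_{G_{P_0}})}$ can be large either because single entries are large or because a single vertex has many incident edges of moderate weight. Second, I would apply a \emph{graph-subdivision} transformation: any vertex of high weighted degree is replaced by a balanced binary tree of free-input vectors, so that each node in the tree has only $O(1)$ incident edges and every edge has weight $O(1)$. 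Using \lemref{t:wsizescaling} and the analysis of free input vectors in \lemref{t:spanprogramrelaxed}, one shows the witness size is preserved up to a constant factor, because the new vertices sit in the kernel of the input-dependent projection and only serve to ``route'' the old witness through the tree.

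The third step is to verify that after subdivision $\norm{\abst(A_{G_P})} = O(1)$, which is straightforward since the resulting graph has bounded vertex degree and bounded edge weights, so the claim follows from the Perron bound $\norm{\abst(A_G)} \leq \max_v \sum_w |\bra v A_G \ket w|$. Finally, one checks that the effective spectral-gap conclusion of \thmref{t:spanprogramspectralanalysis} transports through the subdivision: Eqs.~\eqnref{e:spanprogramspectralanalysistrue} and~\eqnref{e:spanprogramspectralanalysisfalse} hold for the subdivided graph with the same parameters up to constants, because the eigenvalue-zero witness from \thmref{t:zeroenergy} lifts to the enlarged graph and the positive-semi-definite argument underlying \thmref{t:bipartitepsdreduction} is stable under adding kernel vertices.

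The hard part will be the second step, that is, showing that witness size is preserved under the degree-reduction transformation. The danger is that the natural subdivision can force the new witness to have large overlap on the introduced tree vertices, inflating the witness size by a polylog factor and merely trading the simulation overhead for a combinatorial one. Overcoming this likely requires choosing the subdivision adaptively based on an optimal dual solution of the SDP in \thmref{t:adversarydual}, using the Cholesky vectors $\ket{v_{xj}}$ directly to guide a low-degree realization; alternatively, one may replace subdivision by a suitable randomized amplification, or bypass the norm constraint entirely by refining \propref{t:phaseestimationalgorithmcontinuousproof} so that the time-independent structure of the driving Hamiltonian $D$ together with the $O(1)$-sparseness of the query Hamiltonian is exploited to avoid the $\log/\log\log$ factor. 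Any of these routes would settle \conjref{t:advboundtight}.
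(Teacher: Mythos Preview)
The statement you are addressing is \conjref{t:advboundtight}, which the paper presents explicitly as an \emph{open problem} in \secref{s:conclusion}. There is no proof in the paper to compare against; the paper's best result toward it is \thmref{t:querycomplexitytight}, which retains the $\log/\log\log$ overhead coming from \thmref{t:CGMSYization}. Your proposal is therefore a research plan, not a proof, and should be judged as such.

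The plan has a genuine gap, and you correctly locate it in step two. The Szegedy-based branch of \thmref{t:phaseestimationalgorithm} does not run on the witness size $\wsizeD P$; via \thmref{t:generalspanprogramalgorithmnonblackbox} it runs on the quantities $\norm{\ket w}^2$ (including the components on free indices) and $\norm{\ket{w'}}^2 + \norm{A^\dagger \ket{w'}}^2$. Replacing a high-degree vertex by a depth-$d$ tree of free input vectors forces any true-case witness to carry amplitude through all $d$ layers, and the obvious lifting inflates $\norm{\ket w}^2$ by a factor of order $d$. For a canonical span program the relevant degrees can be as large as $\abs{F_0}$, so $d = \Theta(\log\abs{F_0})$, and you have merely exchanged the simulation overhead for a combinatorial one of the same order. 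Your appeal to \propref{t:spanprogramrelaxed} does not help: that proposition eliminates free input vectors by projecting them out, which undoes the subdivision and returns the original large-norm graph. Likewise, the claim that \thmref{t:bipartitepsdreduction} is ``stable under adding kernel vertices'' is true for the $f_P(x)=1$ side but not automatically for the $f_P(x)=0$ side, where $\norm{\ket{w'}}^2$ enters and the added tree vertices live in $\C^I$, not in $V$.

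The alternatives you list at the end---adaptive subdivision driven by the Cholesky vectors of the dual SDP, or sharpening \propref{t:phaseestimationalgorithmcontinuousproof} by exploiting that the query Hamiltonian has bounded norm while only the driving Hamiltonian is large---are reasonable directions, but none is executed. As it stands, the proposal identifies the obstacle accurately and then defers its resolution; that is a fair summary of why the statement is a conjecture in the paper rather than a theorem.
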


\section*{Acknowledgements}
I would like to thank Robert {\v S}palek and Troy Lee for helpful discussions and feedback.  I would also like to thank Andris Ambainis and Stephanie Wehner for sharing their ideas on the Hamming-weight threshold functions.

\bibliographystyle{alpha}
\bibliography{andor}

\appendix

\section[Optimal span programs for the Hamming-weight threshold functions]{Optimal span programs for the Hamming-weight threshold \\ functions} \label{s:thresholdappendix}

In this appendix, span programs with optimal witness size are given for the Hamming-weight threshold functions.  Additionally, optimal span programs are given for those Hamming-weight interval functions for which the nonnegative-weight adversary bound equals the general adversary bound.  The motivation is to show an explicit and nontrivial span program construction.  The main technique, recursive composition of symmetrized span programs, may be useful for other constructions.  

Surprisingly, the optimal span programs are simply derived from span programs for AND and OR gates, composed in a certain symmetrical manner and with optimized weights.  The optimal span programs for Threshold 2 of 3 and Threshold 2 of 4 given in~\cite{ReichardtSpalek08spanprogram} did not have this form.  

The proofs are simple calculations.  After presenting the span programs, we compute their witness sizes, compute the best nonnegative-weight adversary bounds (by giving adversary matrices and solutions to the dual formulation), and show by perturbations of these matrices that the general adversary bound is strictly greater in those cases where the span program witness size does not match the nonnegative-weight adversary bound.  

\begin{definition}
The Hamming-weight threshold function $\threshold{l}{n} : \{0, 1\}^n \rightarrow \{0,1\}$ is defined by 
\begin{equation}
\threshold{l}{n}(x) = \begin{cases}
1 & \text{if $\abs{x} \geq l$} \\
0 & \text{otherwise}
\end{cases}
\end{equation}
where $\abs{x} = \sum_{i=1}^{n} x_i$ is the Hamming weight of $x$.  

The Hamming-weight interval function $\interval{l}{m}{n} : \{0,1\}^n \rightarrow \{0,1\}$ is defined by 
\begin{equation}
\interval{l}{m}{n}(x) = \begin{cases}
1 & \text{if $l \leq \abs{x} \leq m$} \\
0 & \text{otherwise}
\end{cases}
\end{equation}
\end{definition}

Note that $\threshold{l}{n} = \interval{l}{n}{n}$ and, for all $x \in \{0,1\}^n$, $\interval{l}{m}{n}(x)$ is the conjunction $\threshold{l}{n}(x) \wedge \threshold{n-m}{n}(\bar x)$, where $\bar x$ is the bitwise complement of $x$.  Also, $\interval{l}{m}{n}(x) = \interval{n-m}{n-l}{n}(\bar x)$, which allows us to assume without loss of generality that $\abs{\frac{n}{2} - m} \leq \abs{\frac{n}{2} - l}$.  

\begin{theorem} \label{t:maintheorem}
For the interval function $\interval{l}{m}{n}$, assume that $\abs{\frac{n}{2} - m} \leq \abs{\frac{n}{2} - l}$.  Then  
\begin{equation}
\ADV(\interval{l}{m}{n}) = \begin{cases}
\sqrt{ (m+1)(n-m) + \frac{m(n-l+1)}{(m-l+1)^2} } & \text{if $l > 0$} \\
\sqrt{ (m+1)(n-m) } & \text{if $l = 0$}
\end{cases}
 \enspace .
\end{equation}
If $l \in \{0, 1, m \}$, then $\ADVpm(\interval{l}{m}{n}) = \ADV(\interval{l}{m}{n})$; otherwise $\ADVpm(\interval{l}{m}{n}) > \ADV(\interval{l}{m}{n})$.  

There exists a span program $P_{l,m}^{n}$ computing $f_{P_{l,m}^{n}} = \interval{l}{m}{n}$, with witness size 
\begin{equation} \label{e:maintheoremwsize}
\wsize{P_{l,m}^n} \leq \sqrt{ (m+1)(n-m) + \frac{l(n-l+1)}{m-l+1} }
 \enspace .
\end{equation}
This witness size matches $\ADV(\interval{l}{m}{n})$, and hence is optimal, for $l \in \{0, 1, m \}$, i.e., in those cases where $\ADVpm(\interval{l}{m}{n}) = \ADV(\interval{l}{m}{n})$.  
\end{theorem}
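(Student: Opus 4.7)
The plan is to factor $\interval{l}{m}{n}(x) = \threshold{l}{n}(x) \wedge \threshold{n-m}{n}(\bar x)$ and build $P_{l,m}^n$ as the AND-composition (via \thmref{t:spanprogramcomposition}) of two optimized, $S_n$-symmetric threshold span programs. The building blocks are the weighted OR span program on $k$ inputs with target $\ket{t} = (1) \in \C$ and input vectors $\sqrt{\alpha_i}$ (weights $\alpha_i > 0$ to be chosen), and its complement via \lemref{t:dualspanprogram}. A threshold span program is then constructed by the natural recursion $\threshold{l}{n}(x) = \bigvee_j\bigl(x_j \wedge \threshold{l-1}{n-1}(x_{[n]\smallsetminus\{j\}})\bigr)$, with weights taken uniform across input positions (forced, up to loss of nothing, by the $S_n$-invariance of $\threshold{l}{n}$); this leaves one free weight per level, to be chosen to balance true- and false-witness contributions.

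To verify the witness-size bound in Eq.~\eqnref{e:maintheoremwsize}, I would use the multiplicative composition bound of \thmref{t:spanprogramcomposition} and carry out the weight optimization by Lagrange multipliers, expecting a closed-form solution in which the $(m+1)(n-m)$ term arises from false witnesses at Hamming weight $m+1$ and the $l(n-l+1)/(m-l+1)$ term from false witnesses at Hamming weight $l-1$, with the denominator $m-l+1$ reflecting how the lower-threshold program interacts with the AND gate at the top. For the matching adversary lower bound I would construct an $S_n$-invariant nonnegative adversary matrix $\Gamma$ supported on pairs of inputs whose Hamming weights straddle the accepting interval; symmetry reduces $\Gamma$ and each $\Gamma \circ \Delta_j$ to small tridiagonal matrices in the uniform Hamming-slice basis, and closed-form optimization of the slice weights yields $\ADV(\interval{l}{m}{n}) = \sqrt{(m+1)(n-m) + m(n-l+1)/(m-l+1)^2}$, with a matching primal solution supplied by \thmref{t:adversarydual}.

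The equality $\ADVpm = \ADV$ for $l \in \{0,1,m\}$ is then immediate from the sandwich $\ADV \leq \ADVpm \leq \wsize{P_{l,m}^n}$ (\thmref{t:wsizeadvbound}) combined with the algebraic identity $l(n-l+1)/(m-l+1) = m(n-l+1)/(m-l+1)^2$, which holds exactly when $l(m-l+1) = m$, i.e., $l \in \{1,m\}$ (the $l=0$ case trivially kills both lower-boundary terms). For $1 < l < m$, I would prove the strict inequality $\ADVpm > \ADV$ by perturbing the optimal nonnegative $\Gamma$ by a small signed matrix that is block-diagonal in the symmetric basis, exploiting the slackness in the $\norm{\Gamma \circ \Delta_j}$ constraints at the optimum of the nonnegative problem to push $\norm{\Gamma}$ strictly above $\ADV$ while still satisfying the general adversary constraints.

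The main obstacle I expect is the weight optimization in the recursive threshold construction: one must show that the $S_n$-symmetric recursion with a single free weight per level actually attains the specific closed form Eq.~\eqnref{e:maintheoremwsize}---not merely an upper bound---and that the AND composition with the complementary threshold program on $\bar x$ combines cleanly with the tuned weights to yield the stated polynomial in $l$, $m$, $n$. Once this induction is set up carefully, the adversary-matrix construction, the arithmetic comparison of the two formulas, and the signed-perturbation argument for $\ADVpm > \ADV$ should follow as routine calculations in the symmetric-function basis.
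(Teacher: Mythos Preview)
Your plan matches the paper's approach closely: the same recursive threshold construction via $\threshold{l}{n}(x) = \bigvee_j\bigl(x_j \wedge \threshold{l-1}{n-1}(x_{-j})\bigr)$, the same AND-composition $\threshold{l}{n}(x) \wedge \threshold{n-m}{n}(\bar x)$ for the interval function, an $S_n$-symmetric adversary matrix with a matching dual solution for $\ADV$, and a signed perturbation for the strict inequality. Two points in your description need adjustment, however.

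First, the optimal nonnegative adversary matrix is \emph{not} tridiagonal in the Hamming-slice basis. Besides the adjacent-slice term connecting accepting inputs of weight $m$ to rejecting inputs of weight $m+1$, it requires a long-range term connecting weight $m$ directly to weight $l-1$ (the other boundary of the accepting interval); it is precisely this second term that produces the $m(n-l+1)/(m-l+1)^2$ contribution. A tridiagonal ansatz will not reach the stated value of $\ADV$.

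Second, the perturbation argument cannot ``exploit slackness'': at the nonnegative optimum all constraints $\norm{\Gamma \circ \Delta_j} \leq 1$ are tight, so any first-order perturbation that increases $\norm{\Gamma}$ will also increase some $\norm{\Gamma_j}$. The paper instead chooses a signed perturbation $\Lambda$ supported on Hamming-weight pairs $(m-1,\,l-1)$, orthogonal to $\Gamma$ in the sense that $\Gamma^\dagger \Lambda = 0$; this forces the first derivatives of both $\norm{\Gamma^{(\epsilon)}}$ and $\norm{\Gamma_j^{(\epsilon)}}$ to vanish. A free parameter inside $\Lambda$ is then tuned so that the \emph{second-order} change in $\norm{\Gamma_j}$ vanishes while the second-order change in $\norm{\Gamma}$ stays strictly positive. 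Your ``slackness'' intuition would lead you to look for a first-order argument that does not exist.
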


Our span program construction for the case $l = 0$ and $m = n-2$ has been influenced by a family of constructions due to Ambainis that come arbitrarily close to optimality~\cite{Ambainis08private}.  

We will use the following notation.  For $i \in [n] = \{1, 2, \ldots, n\}$, let $e^i = 0^{i-1} 1 0^{n-i} \in \{0,1\}^n$ be the bit string with a $1$ only in position $i$, and for $x \in \{0,1\}^n$, let $i \in x$ mean $x_i = 1$ and $i \notin x$ mean $x_i = 0$.  Let $\oplus$ denote the bitwise exor operation.  

For computing the nonnegative-weight adversary bounds, we will use a dual formulation that is a simplified version of Eq.~\eqnref{e:adversarydual}: 
\begin{theorem}[\cite{SpalekSzegedy04advequivalent}] \label{t:dualadversarydef}
Let $f: \{0,1\}^n \rightarrow \{0,1\}$.  Then 
\begin{equation} \label{e:dualadversarydef}
\ADV(f) = \min_{\{p_x\}} \max_{x, y : f(x) \neq f(y)} \frac{1}{ \sum_{i : x_i \neq y_i} \sqrt{p_x(i) p_y(i)} }
 \enspace ,
\end{equation}
where the first minimization is over distributions $p_x$ on $[n]$ for each $x \in \{0,1\}^n$.  
\end{theorem}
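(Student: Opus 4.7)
The plan is to derive this dual characterization from Eq.~\eqnref{e:adversarydual} of \thmref{t:adversarydualgeneral}, which with $s = \vec 1$ and $\D = \B^n$ reads
\begin{equation} \label{e:proposaldual}
\ADV(f) = \min_{\large \substack{X_1, \ldots, X_n \succeq 0 : \\ \forall (x,y) \in F, \, \sum_{j : x_j \neq y_j} \bra{x} X_j \ket{y} \geq 1}} \max_{x \in \B^n} \sum_{j \in [n]} \bra{x} X_j \ket{x}
 \enspace ,
\end{equation}
where $F = \{(x,y) : f(x) \neq f(y)\}$. I would then show each inequality between this expression and the right-hand side of Eq.~\eqnref{e:dualadversarydef} by converting between dual-feasible tuples $\{X_j\}$ and families of distributions $\{p_x\}$.

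For the direction $\text{RHS of~\eqnref{e:dualadversarydef}} \leq \ADV(f)$, I would take an optimum $\{X_j\}$ of Eq.~\eqnref{e:proposaldual} with value $T = \ADV(f)$, perturb by $X_j \mapsto X_j + \epsilon\,\identity$ so that $T_x := \sum_{j \in [n]} \bra{x} X_j \ket{x}$ is strictly positive for every $x$ (the off-diagonal feasibility constraints are unaffected by this perturbation), and set $p_x(j) = \bra x X_j \ket x / T_x$. Since each $X_j \succeq 0$, its $2\times 2$ principal submatrix indexed by rows and columns $\{x,y\}$ is PSD, yielding $\bra x X_j \ket y \leq \sqrt{\bra x X_j \ket x \bra y X_j \ket y}$. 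Combining with feasibility, for any $(x,y) \in F$,
\begin{equation}
1 \leq \sum_{j : x_j \neq y_j} \bra{x} X_j \ket{y} \leq \sqrt{T_x T_y} \sum_{j : x_j \neq y_j} \sqrt{p_x(j)\, p_y(j)} \leq (T + \epsilon n) \sum_{j : x_j \neq y_j} \sqrt{p_x(j)\, p_y(j)}
 \enspace ,
\end{equation}
so the RHS of Eq.~\eqnref{e:dualadversarydef} is at most $T + \epsilon n$; sending $\epsilon \to 0$ finishes this direction.

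For the reverse direction $\ADV(f) \leq \text{RHS}$, I would take distributions $\{p_x\}$ achieving value $R$ in Eq.~\eqnref{e:dualadversarydef} and build rank-one candidates $X_j := R \, \ketbra{u_j}{u_j}$ with $\ket{u_j} = \sum_{x \in \B^n} \sqrt{p_x(j)}\, \ket x$. Each $X_j$ is PSD by construction, the diagonal sums satisfy $\sum_{j \in [n]} \bra{x} X_j \ket{x} = R \sum_{j} p_x(j) = R$ for every $x$, and for any $(x,y) \in F$ the feasibility inequality reduces to $R \sum_{j : x_j \neq y_j} \sqrt{p_x(j) p_y(j)} \geq 1$, which holds by the defining property of $R$. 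Hence $\{X_j\}$ is dual-feasible with objective value $R$, giving $\ADV(f) \leq R$, and taking the infimum over $\{p_x\}$ completes the proof.

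The main obstacle is that Eq.~\eqnref{e:adversarydual} is stated without proof in \thmref{t:adversarydualgeneral}, so strictly speaking one must first invoke standard SDP duality for Eq.~\eqnref{e:adversarydef}: rewrite the norm constraints $\norm{\Gamma \circ \Delta_j} \leq 1$ as operator inequalities $\Gamma \circ \Delta_j \preceq \identity$ via \claimref{t:bipartitematrix} (each $\Delta_j$ is bipartite after splitting $\B^n$ according to the $j$th bit), introduce PSD Lagrange multipliers $X_j$ for these constraints, and apply strong duality using the Slater point $\Gamma = 0$. Beyond this, the only delicate points are the perturbation $\epsilon \to 0$ above and the observation that the minimization in Eq.~\eqnref{e:dualadversarydef} can be restricted to strictly positive distributions (by mixing infinitesimally with the uniform distribution), which ensures that $\sum_{j : x_j \neq y_j} \sqrt{p_x(j) p_y(j)} > 0$ on every adversary pair and hence the ratios are well-defined.
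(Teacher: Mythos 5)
Your derivation is correct, but be aware that the paper does not prove this statement at all: it is quoted from {\v S}palek and Szegedy~\cite{SpalekSzegedy04advequivalent}, where it appears as one link in a chain of equivalent adversary formulations. Your route instead goes through \thmref{t:adversarydualgeneral}, and the two conversions you give---extracting distributions $p_x(j) = \bra x X_j \ket x / T_x$ from a feasible $\{X_j\}$ via the $2\times 2$-principal-minor Cauchy--Schwarz bound, and conversely packaging distributions into the rank-one feasible point $X_j = R\,\ketbra{u_j}{u_j}$---are both sound; indeed they are essentially the same trick the paper invokes after \lemref{t:spanprogramSDPrank}, where rank-one optimal solutions of Eq.~\eqnref{e:adversarydual} are obtained by replacing Cholesky vectors with their norms. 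Two caveats. First, your argument inherits the unproven status of \thmref{t:adversarydualgeneral} (stated without proof in the paper), and your sketch for closing that gap is a bit loose: maximizing $\norm{\Gamma}$ is not literally a linear SDP objective, and $\Gamma = 0$ satisfies the entry-wise nonnegativity constraints only with equality, so one needs the refined Slater condition (strict feasibility for the conic constraints, mere feasibility for the linear ones), or---cleaner, and parallel to the paper's proof of \thmref{t:adversarydual}---one dualizes the minimization side, which is easily checked to be strictly feasible. Second, your $\epsilon$-perturbation yields equality of optimal values but not attainment of the minimum as stated; attainment is in fact immediate without perturbing, since for any feasible $\{X_j\}$ and any pair with $f(x)\neq f(y)$ one has $1 \leq \sum_{j} \sqrt{\bra x X_j \ket x \bra y X_j \ket y} \leq \sqrt{T_x T_y}$, which forces $T_x > 0$ for every $x$ when $f$ is non-constant, so the distributions can be read off an optimal $\{X_j\}$ directly.
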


\subsection{Span programs for the threshold functions $\threshold{l}{n}$}

\begin{proposition} \label{t:thresholdspanprogram}
For $l \in [n]$, there exists a span program $P_{l}^{n}$ computing $f_{P_{l}^{n}} = \threshold{l}{n}$, with witness size 
\begin{equation} \label{e:thresholdspanprogram}
\wsize{P_{l}^{n}} \leq \sqrt{ l (n-l+1) }
 \enspace .
\end{equation}
\end{proposition}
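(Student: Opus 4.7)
The plan is to construct $P_l^n$ explicitly and verify that its witness size is at most $\sqrt{l(n-l+1)}$. I will exploit the full $S_n$-symmetry of $\threshold{l}{n}$ by insisting that $P_l^n$ be permutation-equivariant: the inner-product space $V$ carries a natural $S_n$-action under which the target $\ket t$ is invariant, and the assignment $i \mapsto (\ket{v_i^0}, \ket{v_i^1})$ is equivariant. Under these constraints the design reduces to choosing a small number of real weight parameters. The base case $l = 1$ is the standard $\OR_n$ span program ($V = \C$, $\ket t = 1$, $\ket{v_i^1} = 1$ for $i \in [n]$), whose witness size is $\sqrt n = \sqrt{1 \cdot n}$; and the boundary case $l = n$ is the standard $\AND_n$ span program, whose witness size is $\sqrt n = \sqrt{n \cdot 1}$. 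For intermediate $l$ the construction interpolates between these, as suggested by the remark preceding \propref{t:thresholdspanprogram} that good span programs are built from AND and OR gates with optimized weights.

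The weights will be fixed by matching the claimed bound on the hardest true and false inputs. By $S_n$-symmetry, for a true input $x$ the optimal witness may be taken symmetric in $\{i : x_i = 1\}$, and the worst case is $|x| = l$; for a false input the optimal witness is symmetric in $\{i : x_i = 0\}$, and the worst case is $|x| = l-1$. I will choose the weights so that on $x$ with $|x| = l$ the true-witness squared norm is exactly $l$ (one unit from each of the $l$ available $1$-bits), and on $x$ with $|x| = l-1$ the false-witness quantity $\norm{A^\dagger \ket{w'}}^2$ is exactly $n-l+1$ (one unit from each of the $n-l+1$ available $0$-bits). Rescaling the target by the factor $\big(\frac{n-l+1}{l}\big)^{1/4}$, as allowed by Eq.~\eqnref{e:wsizescalet}, equates the two quantities and yields $\wsize{P_l^n} \leq \sqrt{l(n-l+1)}$.

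The hard part is pinning down the weights themselves, because the true- and false-witness costs depend on them in opposing ways, so the balance point is essentially unique and sensitive. I will pursue two strategies in parallel. The first is a direct symmetric construction: decompose $V$ into a small number of $S_n$-isotypic components (the trivial plus the standard $(n-1)$-dimensional representation should suffice), write the target and input vectors in this decomposition, and solve the resulting low-dimensional linear system for the weights. The second is a recursive construction based on the identity $\threshold{l}{n}(x) = \threshold{l}{n-1}(x_{-i}) \lor \big(x_i \wedge \threshold{l-1}{n-1}(x_{-i})\big)$ for any fixed $i \in [n]$, composed via \thmref{t:spanprogramcomposition} and \lemref{t:dualspanprogram}, and then symmetrized over the choice of $i$ to restore $S_n$-invariance and defeat the naive multiplicative blow-up of witness size under composition. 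Correctness ($f_{P_l^n} = \threshold{l}{n}$) follows by inspecting the explicit symmetric witnesses produced by the construction: a feasible true witness of squared norm $l$ exists for every $|x| \geq l$ (supported on any $l$ of the available $1$-bits, by monotonicity of the construction in the available inputs), while a feasible false witness exists for every $|x| \leq l-1$, certifying that the target is not in the span of the available input vectors.
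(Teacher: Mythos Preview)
Your proposal is a plan rather than a proof, and the step you flag as ``the hard part''---actually pinning down the weights and verifying the witness-size bound---is precisely the content of the argument, and it is not carried out.

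Your second strategy is close in spirit to what the paper does, but with a different identity and a genuine gap. The paper uses the already-symmetric expansion
\[
\threshold{l}{n}(x) \;=\; \bigvee_{i=1}^{n} \big( x_i \wedge \threshold{l-1}{n-1}(x_{-i}) \big)
\]
and inducts on~$l$ alone. Crucially, it does \emph{not} invoke \thmref{t:spanprogramcomposition} as a black box: that would give a multiplicative bound of order $\wsize{P_{\OR_n}}\cdot\wsize{P_{\AND}}\cdot\wsize{P_{l-1}^{n-1}}$, which is far too loose. Instead the paper writes down the composed span program explicitly, inserts a single weight $\sqrt{l-1}$ at the join between the $\OR$ layer and each inner block, and then exhibits concrete witnesses and sums their squared lengths by hand. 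In the true case the symmetric witness puts weight $1/j$ on each of the $j=|x|$ available blocks, and the inner costs combine to give total $\leq 1$; in the false case the witness reaches into each of the $j$ true blocks with the inductive false witness, and one checks $(n-j)+j(n-l+1)\leq l(n-l+1)$. The symmetry is what makes these witnesses tight, but the actual work is in displaying them and doing the arithmetic.

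Your assertion that post-hoc symmetrization will ``defeat the naive multiplicative blow-up'' is exactly the statement that needs proof, and nothing in \thmref{t:spanprogramcomposition} or \lemref{t:dualspanprogram} supplies it. Your recursive identity $\threshold{l}{n}(x)=\threshold{l}{n-1}(x_{-i})\lor\big(x_i\wedge\threshold{l-1}{n-1}(x_{-i})\big)$ is correct but forces a two-parameter recursion on $(l,n)$ rather than the paper's one-parameter recursion on~$l$, making the bookkeeping harder. Finally, the claim in your last paragraph that a true witness ``of squared norm $l$, supported on any $l$ of the available $1$-bits'' exists does not follow from symmetry or monotonicity in the abstract; it depends on the specific weights, which you have not fixed. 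As for your first strategy, there is no reason to expect that the trivial plus standard representation (dimension $n$) suffices as the inner-product space for general~$l$; the paper's recursive construction has $\dim V$ growing with~$l$.
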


\begin{proof}
The proof is by induction in $l$.  For the base case, $l=1$, $\threshold{1}{n}$ is the OR function, for which an optimal span program has $V = \C$, target vector $\ket t = 1$ and, for $i \in I = [n]$, input vector $\ket{v_i} = 1$ labeled by $(i,1)$.  For this span program, the witness size for inputs $x$ of Hamming weight $\abs{x} = j \geq 1$ is $1/j$, achieved by $\ket w = \frac{1}{j} \sum_{i \in x} \ket i$, and the witness size for $x = 0^n$ is $n$.  

For $i \in [n]$, let $x_{-i} = x_1 \ldots \widehat{x_i} \ldots x_n \in \{0,1\}^{n-1}$ be the string $x$ with the $i$th bit removed.  
For $l > 1$, the span program for $\threshold{l}{n}$ can be built recursively, by expanding out the formula 
\begin{equation} \label{e:thresholdexpansion}
\threshold{l}{n}(x_1, \ldots, x_n) = \bigvee_{i=1}^{n} \big( x_i \wedge \threshold{l-1}{n-1}(x_{-i}) \big)
 \enspace .
\end{equation}
By induction, let $P_{l-1}^{n-1}$ be an optimal span program for $\threshold{l-1}{n-1}$, over a vector space of dimension $d$ with target vector $\ket{t'} = (1, 0, \ldots, 0)$, and with witness sizes $1$ for inputs of Hamming weight $l-1$ and witness sizes $(l-1)(n-l+1)$ for inputs of Hamming weight $l-2$.  We construct span program $P_{l}^{n}$ over the vector space $V = \C \oplus (\C^n \tensor \C^d)$, of dimension $1 + n d$.  Let the target vector be $\ket t = (1,0)$.  For the $i$th term in Eq.~\eqnref{e:thresholdexpansion}, add the following ``block" of input vectors: $(1, \sqrt{l-1} \ket i \tensor \ket{t'})$ labeled by $(i,1)$, and $(0, \ket i \tensor \ket{v_j})$ for each input vector $\ket{v_j}$ of $P_{l-1}^{n-1}$ on $x_{-i}$.  

The span program $P_{l}^{n}$ indeed computes $\threshold{l}{n}$.  For computing the witness size of $P_{l}^{n}$, note that all input bits are symmetrical, so it suffices to consider inputs of the form $x = 1^j 0^{n-j}$.  
\begin{itemize}
\item In the true case, $j \geq l$, consider the witness $\ket w$ with weight $1/j$ on each of the input vectors $(1, \sqrt{l-1} \ket i \tensor \ket{t'})$ for $i \in [j]$ and then an optimal witness, of squared length at most $(\sqrt{l-1}/j)^2 \cdot \wsizex{P_{l-1}^{n-1}}{x_{-i}}$ within each of those $\threshold{l-1}{n-1}$ span program blocks.  The witness size is 
\begin{equation} \label{e:thresholdspanprogramtruecase}
\norm{\ket w}^2 = \frac{1}{j^2} \sum_{i \in x} \big(1 + (l-1) \wsizex{P_{l-1}^{n-1}}{x_{-i}}\big) \leq 1
 \enspace .
\end{equation}
\item In the false case, $j < l$, let the witness vector $\ket{w'} \in V$ orthogonal to the available input vectors be $\ket{w'} = \big(1, -\frac{1}{\sqrt{l-1}} \sum_{i \in x} \ket i \tensor \ket{w'_i} \big)$.  Here $\ket{w'_i}$ is an optimal witness vector for the span program $P_{l-1}^{n-1}$ on $x_{-i}$, i.e., orthogonal to the available input vectors and with $\braket{t'}{w'_i} = 1$.  Then $\braket{t}{w'} = 1$ and 
\begin{align}
\norm{A^\dagger \ket{w'}}^2 
 &= \sum_{i \notin x} 1 + \sum_{i \in x} \frac1{l-1} \norm{A^\dagger \ket{w'_i}}^2 \nonumber \\
 &= (n-j) + \sum_{i \in x} \frac1{l-1} \wsizex{P_{l-1}^{n-1}}{x_{-i}} \nonumber \\
 &\leq (n-j) + j(n-l+1) \nonumber \\
 &= n + j(n-l) \nonumber \\
 &\leq l (n - l + 1)
 \enspace ,
\end{align}
where in the two inequalities we have used $\wsizex{P_{l-1}^{n-1}}{x_{-i}} \leq (l-1)(n-j+1)$ and $j \leq l-1$, respectively.  
\end{itemize}
Thus $\wsize{P_{l}^{n}} \leq \sqrt{l (n-l+1)}$.  
\end{proof}

Letting $\size{P}$ be the number of input vectors of a span program~$P$~\cite{KarchmerWigderson93span}, note that $\size{P_{1}^{n}} = n$ and $\size{P_{l}^{n}} = n (1 + \size{P_{l-1}^{n-1}})$, which is exponential in $l$.  For example, for the three-majority function $\threshold{2}{3}$, $\size{P_{2}^{3}} = 9$.  This size is not optimal, even among span programs with optimal witness size.  

In \propref{t:intervalspanprogram} below, we will require slightly finer control over the threshold span program witness sizes: 

\begin{claim} \label{t:thresholdspanprogramwsizetrue}
On an input $x$ of Hamming weight $\abs{x} = j \geq l$, the span program $P_{l}^{n}$ constructed in \propref{t:thresholdspanprogram} satisfies
\begin{equation} \label{e:thresholdspanprogramwsizetrue}
\wsizex{P_{l}^{n}}{x} \leq \frac{1}{j-l+1}
 \enspace .
\end{equation}
\end{claim}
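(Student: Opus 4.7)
The plan is a straightforward induction on $l$, recycling the explicit witness vector $\ket w$ constructed in the proof of \propref{t:thresholdspanprogram} but computing its squared length more carefully.

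For the base case $l=1$, the span program is just the standard OR program, and on an input of Hamming weight $j \geq 1$ the witness $\ket w = \tfrac1j \sum_{i \in x} \ket i$ has squared length $1/j = 1/(j-l+1)$, as required.

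For the inductive step, fix $l > 1$ and $x$ with $\abs x = j \geq l$. For each $i \in x$, the string $x_{-i}$ has Hamming weight $j-1 \geq l-1$, so the inductive hypothesis applied to $P_{l-1}^{n-1}$ gives
\begin{equation*}
\wsizex{P_{l-1}^{n-1}}{x_{-i}} \;\leq\; \frac{1}{(j-1)-(l-1)+1} \;=\; \frac{1}{j-l+1} \enspace .
\end{equation*}
Substituting this into Eq.~\eqnref{e:thresholdspanprogramtruecase} from the proof of \propref{t:thresholdspanprogram} yields
\begin{equation*}
\wsizex{P_{l}^{n}}{x} \;\leq\; \frac{1}{j^2}\sum_{i \in x}\bigg( 1 + \frac{l-1}{j-l+1}\bigg) \;=\; \frac{1}{j}\cdot\frac{(j-l+1)+(l-1)}{j-l+1} \;=\; \frac{1}{j-l+1} \enspace ,
\end{equation*}
closing the induction.

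There is no real obstacle here: the construction already provides the right witness, and the only thing to check is that the telescoping of the recurrence $1 + (l-1)/(j-l+1) = j/(j-l+1)$ collapses neatly. The only subtlety is ensuring that the inductive hypothesis applies, which requires $j-1 \geq l-1$, automatically guaranteed by the hypothesis $j \geq l$ of the claim.
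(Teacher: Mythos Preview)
Your proof is correct and is essentially identical to the paper's own argument: both induct on $l$, invoke the base case from the OR span program, and substitute the inductive bound $\wsizex{P_{l-1}^{n-1}}{x_{-i}} \leq 1/(j-l+1)$ into Eq.~\eqnref{e:thresholdspanprogramtruecase} to obtain $\tfrac{1}{j}\big(1 + \tfrac{l-1}{j-l+1}\big) = \tfrac{1}{j-l+1}$.
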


\begin{proof}
By induction in $l$.  The base case, $l=1$, was already considered as the base case for the induction in the proof of \propref{t:thresholdspanprogram}.  For $l > 1$, apply Eq.~\eqnref{e:thresholdspanprogramtruecase} and the induction assumption to get 
\begin{align}
\wsizex{P_{l}^{n}}{x}
&\leq \frac{1}{j^2} \sum_{i \in x} \big(1 + (l-1) \wsizex{P_{l-1}^{n-1}}{x_{-i}}\big) \\
&\leq \frac{1}{j} \big(1 + \frac{l-1}{j-l+1} \big) \nonumber \\
&= \frac{1}{j-l+1}
 \enspace . \qedhere
\end{align}
\end{proof}

\subsection{Span programs for the interval functions $\interval{l}{m}{n}$}

\begin{proposition} \label{t:intervalspanprogram}
There exists a span program $P_{l,m}^{n}$ computing $f_{P_{l,m}^{n}} = \interval{l}{m}{n}$, with witness size 
\begin{equation} \label{e:intervalspanprogram}
\wsize{P_{l,m}^n} \leq \sqrt{ (m+1)(n-m) + \frac{l(n-l+1)}{m-l+1} }
\end{equation}
when $\abs{\frac{n}{2} - m} \leq \abs{\frac{n}{2} - l}$.  
\end{proposition}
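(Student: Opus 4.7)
The plan is to prove \propref{t:intervalspanprogram} by induction on $l$, following the recursive template of \propref{t:thresholdspanprogram}. For the base case $l = 0$, note $\interval{0}{m}{n} = \neg\,\threshold{m+1}{n}$; applying \lemref{t:dualspanprogram} to the span program $P_{m+1}^n$ from \propref{t:thresholdspanprogram} produces $P_{0,m}^n$ with witness size $\sqrt{(m+1)(n-m)}$, matching \eqnref{e:intervalspanprogram} at $l=0$.

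For $l \geq 1$, I would build $P_{l,m}^n$ from $P_{l-1,m-1}^{n-1}$ using the identity $\interval{l}{m}{n}(x) = \bigvee_{i \in [n]}\bigl(x_i \wedge \interval{l-1}{m-1}{n-1}(x_{-i})\bigr)$ and the OR-AND-recurse structure of \propref{t:thresholdspanprogram}: inner product space $V = \C \oplus (\C^n \otimes V^{\mathrm{sub}})$, target $\ket t = (1,0)$, ``type-1'' input vectors $(1,\sqrt{c_{l-1}}\,\ket i \otimes \ket{t'})$ labeled $(i,1)$ for each $i \in [n]$, and ``type-2'' input vectors $(0, \ket i \otimes \ket{v_j})$ inheriting their labels from $P_{l-1,m-1}^{n-1}$. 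The weight $c_{l-1}>0$ and the target scaling of the sub-program are both free parameters, to be chosen at each level of the recursion.

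To recover the refined $1/(m-l+1)$ factor in the claimed bound (rather than the loose $1$ that a worst-case sub-program wsize would give), the analysis must carry input-weight-dependent witness-size bounds through the induction, analogous to \claimref{t:thresholdspanprogramwsizetrue}: for a true input $|x|=j \in [l,m]$ and any $i$ with $x_i=1$, $\wsizex{P_{l,m}^n}{x} \leq (1 + c_{l-1}\,\wsizex{P_{l-1,m-1}^{n-1}}{x_{-i}})/j$; for a false input $j \notin [l,m]$, $\wsizex{P_{l,m}^n}{x} \leq (n-j) + (j/c_{l-1})\,\wsizex{P_{l-1,m-1}^{n-1}}{x_{-i}}$. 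Unfolding these recurrences down to the $l=0$ base and combining worst-case true and false squared witness sizes via the scale-invariant formulation in \eqnref{e:wsizescalet} should give the stated bound. The assumption $|n/2 - m| \leq |n/2 - l|$ is used to identify the three worst-case weights---$j = m$ for true (the case from which the refined $1/(j-l+1)$ decay at the top level produces the factor $1/(m-l+1)$), together with $j = l-1$ and $j = m+1$ for the two false extremes---and to justify that recursing downward on $l$, rather than on the complementary parameter $n - m$, is the efficient direction.

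The hard part will be the simultaneous optimization of the sequence $c_0, c_1, \ldots, c_{l-1}$ together with a per-level sub-program rescaling so that these three extremes are balanced against the target witness size at every stage of the induction. The key subtlety is that the $1/(m-l+1)$ factor emerges only from the refined, not worst-case, bound on the sub-program's true witness size at weight $j-1$ close to $m-1$; propagating this refinement correctly through all $l$ levels of unfolding---while still controlling the two false extremes whose recurrences unfold symmetrically on either side of the interval $[l,m]$---is where the bookkeeping must be done with care.
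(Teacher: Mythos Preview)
Your plan takes a genuinely different route from the paper, and a much harder one. The paper does \emph{not} recurse on $l$; it uses the decomposition $\interval{l}{m}{n}(x) = \threshold{l}{n}(x) \wedge \threshold{n-m}{n}(\bar x)$ and combines the two already-constructed threshold programs from \propref{t:thresholdspanprogram} with a single AND: inner product space $V' \oplus V''$, target $\bigl(\sqrt{l(n-l+1)}\,\ket{t'},\,\sqrt{(n-m)(m+1)}\,\ket{t''}\bigr)$, and input vectors inherited from $P_l^n$ (on $x$) and $P_{n-m}^n$ (on $\bar x$). On a true input of weight $j$ the witness size is $l(n-l+1)\,\wsizex{P_l^n}{x} + (n-m)(m+1)\,\wsizex{P_{n-m}^n}{\bar x} \le l(n-l+1)/(j-l+1) + (n-m)(m+1)/(m-j+1)$ by the refined bound of \claimref{t:thresholdspanprogramwsizetrue}; this expression is convex in $j$, and under the hypothesis $|n/2-m|\le|n/2-l|$ its maximum over $j\in[l,m]$ occurs at $j=m$, giving exactly \eqnref{e:intervalspanprogram}. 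On a false input one of the two conjuncts is false and its witness alone suffices, giving witness size at most $1$. No recursion, no multi-parameter optimization---just one AND on top of two finished threshold programs.

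Your OR-of-ANDs identity and your witness-size recurrences are correct, but you are rebuilding the interval program through $l$ nested layers where the paper uses a single layer. The ``hard part'' you flag---jointly tuning $c_0,\ldots,c_{l-1}$ so that the three extremes $j=m$ (true), $j=l-1$ (false below), and $j=m+1$ (false above) all balance to the target---is genuinely nontrivial: for the $j>m$ false branch you would also need a refined weight-dependent bound on the base program $P_{0,m-l}^{n-l}$ (the dual of \claimref{t:thresholdspanprogramwsizetrue}), and it is not evident that after unfolding $l$ layers the resulting expression in the $c_k$'s collapses to $(m+1)(n-m)+l(n-l+1)/(m-l+1)$. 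The paper's AND-of-thresholds shortcut sidesteps all of this by invoking \claimref{t:thresholdspanprogramwsizetrue} once on each factor and doing a two-line convexity check.
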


\begin{proof}
We use $\interval{l}{m}{n}(x) = \threshold{l}{n}(x) \wedge \threshold{n-m}{n}(\bar x)$ and combine the span programs $P_{l}^{n}$ for $\threshold{l}{n}$ and $P_{n-m}^{n}$ for $\threshold{n-m}{n}$ from \propref{t:thresholdspanprogram}.  

Let $V'$ and $V''$ be the vector spaces for $P_{l}^{n}$ and $P_{n-m}^{n}$, with target vectors $\ket{t'}$ and $\ket{t''}$, respectively.  As in \propref{t:thresholdspanprogram}, scale the target vectors so that witness sizes in the true cases are at most $1$ and in the false cases are at most $l(n-l+1)$ or $(n-m)(m+1)$ for $P_{l}^{n}$ and $P_{n-m}^{n}$, respectively.  Let $V = V' \oplus V''$ be the vector space for $P_{l,m}^{n}$, with target vector 
\begin{equation}
\ket t = \big( \sqrt{l(n-l+1)} \ket{t'}, \sqrt{(n-m)(m+1)} \ket{t''} \big) \in V
 \enspace .
\end{equation}
The input vectors for $P_{l,m}^{n}$ are exactly the input vectors of $P_{l}^{n}$ on input $x$ in the first component of $V$ and the input vectors of $P_{n-m}^{n}$ on input $\bar x$ in the second component of $V$.  This way, $f_{P_{l,m}^{n}} = 1$ if and only if both component span programs evaluate to true, so indeed $f_{P_{l,m}^{n}} = \interval{l}{m}{n}$.

Note that all input bits are symmetrical, so the witness size of $P_{l,m}^{n}$ on an input $x$ depends only on $j = \abs{x}$.  
\begin{itemize}
\item In the true case, $l \leq j \leq m$, the witness size is the sum of the squared lengths for witnesses for the two component span programs, from \claimref{t:thresholdspanprogramwsizetrue},  
\begin{align}
\wsizex{P_{l,m}^{n}}{x} 
&= l (n-l+1) \wsizex{P_{l}^{n}}{x} + (n-m)(m+1) \wsizex{P_{n-m}^{n}}{\bar x} \nonumber \\
&\leq \frac{l (n-l+1)}{j-l+1} + \frac{(n-m)(m+1)}{m-j+1}
 \enspace .
\intertext{As the above expression is convex up in $j \in [l, m]$, it is maximized for $j \in \{l, m\}$.  Since $\abs{\frac{n}{2} - m} \leq \abs{\frac{n}{2} - l}$, $l (n-l+1) \leq (m+1)(n-m)$, so $j = m$ is the worst case:}
\wsizex{P_{l,m}^{n}}{x} 
&\leq \frac{l (n-l+1)}{m-l+1} + (m+1)(n-m)
 \enspace .
\end{align}
\item In the false case, either $j < l$ or $j > m$, and we aim to show $\wsizex{P_{l,m}^{n}}{x} \leq 1$.  Take first the case $j > l$.  Consider a witness vector $\big(\frac{1}{\sqrt{l(n-l+1)}} \ket{w'}, 0\big) \in V$ where $\ket{w'}$ is an optimal witness vector to $f_{P_{l}^{n}}(x) = 1$.  The witness size is $\frac{1}{l(n-l+1)} \wsizex{P_{l}^{n}}{x} \leq 1$.  The case $j > m$ is dealt with symmetrically.  \qedhere
\end{itemize}
\end{proof}

\subsection{Adversary bounds for the interval functions $\interval{l}{m}{n}$}

\begin{proposition} \label{t:intervaladversary}
For the interval function $\interval{l}{m}{n}$ with $\abs{\frac{n}{2} - m} \leq \abs{\frac{n}{2} - l}$,   
\begin{equation}
\ADV(\interval{l}{m}{n}) = 
\begin{cases}
\sqrt{ (m+1)(n-m) + \frac{m(n-l+1)}{(m-l+1)^2} } & \text{if $l > 0$} \\
\sqrt{ (m+1)(n-m) } & \text{if $l = 0$}
\end{cases}
\end{equation}
In particular, $\ADV(\threshold{l}{n}) = \sqrt{l(n-l+1)}$.  
\end{proposition}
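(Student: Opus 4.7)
The plan is to establish matching upper and lower bounds on $\ADV(\interval{l}{m}{n})$ by exploiting the permutation symmetry of the interval function. On the upper-bound side I will work with the dual formulation in \thmref{t:dualadversarydef}, using probability distributions $p_x$ that depend only on $|x|$; on the lower-bound side I will construct an adversary matrix invariant under the action of $S_n$ on the input bits. The ``in particular'' statement for $\threshold{l}{n}$ then falls out of the main formula applied to the identity $\threshold{l}{n}(x) = \interval{0}{n-l}{n}(\bar x)$, since adversary bounds are invariant under complementation of the input.

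For the upper bound, I would parametrize $p_x(i) = \alpha_{|x|}$ when $x_i = 1$ and $p_x(i) = \beta_{|x|}$ when $x_i = 0$, subject to $j\alpha_j + (n-j)\beta_j = 1$ for each $j = |x|$. For false inputs the choice is forced by the single-bit boundary pairs: set $\alpha_j = 0$, $\beta_j = 1/(n-j)$ for $j < l$, and $\beta_j = 0$, $\alpha_j = 1/j$ for $j > m$. The remaining freedom is the split between $\alpha_j$ and $\beta_j$ on the true levels $j \in [l,m]$, and the critical idea is to \emph{spread} the lower-threshold constraint across all $m - l + 1$ true Hamming weights rather than concentrating it at $j = l$. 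Concretely, I would take $\alpha_j$ to decrease linearly from its value at $l$ down to $0$ at $m+1$ (and $\beta_j$ symmetrically), then tune the overall scale so that the two binding single-bit constraints at $(l-1, l)$ and at $(m, m+1)$ yield the same value; Lagrange-multiplier optimization then produces $\ADV^2 = (m+1)(n-m) + m(n-l+1)/(m-l+1)^2$ when $l > 0$ and $(m+1)(n-m)$ when $l=0$. I also need to check that multi-bit-difference pairs never become tight: for any $x, y$ with $f(x) \neq f(y)$, the sum $\sum_{i:x_i\neq y_i} \sqrt{p_x(i) p_y(i)}$ over the $\geq ||x|-|y||$ differing indices is bounded below by the corresponding single-bit sum, so the boundary pairs are indeed the worst case.

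For the lower bound, I would average an optimal adversary matrix over $S_n$ so that it lies in the Bose-Mesner algebra of the Johnson scheme, decomposing as $\Gamma = \sum_{j \in [l,m],\, k \notin [l,m]} c_{jk}\,\Gamma_{jk} + \mathrm{h.c.}$, where $\Gamma_{jk}$ is the $\{0,1\}$-matrix supported on pairs $(x,y)$ with $|x|=j$, $|y|=k$. Permutation symmetry reduces the constraints $\norm{\Gamma \circ \Delta_i} \leq 1$ to a single inequality, and both $\norm{\Gamma}$ and $\norm{\Gamma \circ \Delta_1}$ become explicit functions of the $c_{jk}$ and the binomials $\binom{n}{j}$. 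Choosing the $c_{jk}$ by dualizing the upper-bound distributions (complementary slackness of the SDP) yields an adversary matrix achieving exactly the claimed value, closing the gap.

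The main obstacle will be the upper-bound optimization: a naive ansatz that takes $\alpha_j$ constant across $j \in [l,m]$ only gives $\ADV^2 \leq m(n-l+1) + (m+1)(n-m)$, missing the improvement by a factor of $(m-l+1)^2$ on the first term. Extracting this improvement requires the carefully varying assignment described above, and then verifying rigorously that no multi-bit pair $(x,y)$ with $|x| \in [l,m]$ and $|y| \in \{l-1, m+1\}$ becomes binding --- this is where the combinatorial analysis of $\sum_{i:x_i \neq y_i}\sqrt{p_x(i)p_y(i)}$ as a function of $|x \cap y|$ will need to be done most carefully.
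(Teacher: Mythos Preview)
Your overall strategy---symmetric distributions for the dual upper bound, a symmetric adversary matrix for the lower bound---matches the paper's. But the upper-bound ansatz you describe does not produce the claimed value. With $\alpha_j$ linear in $j$, decreasing from $\alpha_l$ to $0$ at $j=m+1$, one has $\alpha_m = \alpha_l/(m-l+1)$. Tuning so that the single-bit pairs $(l-1,l)$ and $(m,m+1)$ both equal $1/\ADV$, the normalization $m\alpha_m + (n-m)\beta_m = 1$ at level $m$ gives
\[
\ADV^2 \;=\; (m+1)(n-m) + \frac{m(n-l+1)}{m-l+1}\,,
\]
with only one power of $m-l+1$ in the denominator, not two. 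The missing factor stems from a misidentification of the binding constraints: in the optimal solution the tight pair on the lower-threshold side is the \emph{multi-bit} pair $(l-1,m)$, not $(l-1,l)$. That constraint reads $(m-l+1)\sqrt{\alpha_m/(n-l+1)} \geq 1/\ADV$, and making it tight simultaneously with $(l-1,l)$ forces $\alpha_m = \alpha_l/(m-l+1)^2$. The paper does not guess an interpolation at all: for each $i \in [l,m]$ it chooses $p_i$ so as to balance the two constraints $(l-1,i)$ and $(i,m+1)$ against each other, obtaining the value
\[
f(n,l,m,i) \;=\; \frac{i(n-l+1)}{(i-l+1)^2} + \frac{(n-i)(m+1)}{(m-i+1)^2}\,,
\]
and then shows via convexity in $i$ and a direct comparison of the endpoints that $\max_i f(n,l,m,i) = f(n,l,m,m)$ under the hypothesis $|n/2 - m| \leq |n/2 - l|$.

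A secondary point on the lower bound: your parametrization $\Gamma = \sum_{j,k} c_{jk}\,\Gamma_{jk}$ tracks only $(|x|,|y|)$, which is strictly coarser than the full $S_n$-invariant algebra (which also records $|x \oplus y|$). The paper's optimal adversary matrix places weight only on pairs with $|x| = m$ and either $|y| = m+1$, $|x \oplus y| = 1$, or $|y| = l-1$, $y \subset x$; this does not lie in the span of your $\Gamma_{jk}$. Complementary slackness, done carefully, would lead you to the finer decomposition, but the proposal as written does not anticipate it.
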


\begin{proof}
There are two steps to the proof.  First we give an adversary matrix $\Gamma$ that achieves for each $i \in [n]$ $\norm{\Gamma} / \norm{\Gamma \circ \Delta_i} =  \sqrt{ (m+1)(n-m) + \frac{m(n-l+1)}{(m-l+1)^2} }$ if $l > 0$, or $\sqrt{(m+1)(n-m)}$ if $l = 0$.  By \defref{t:adversarydef}, this lowers bounds $\ADV(\interval{l}{m}{n})$.  Second, we give a matching solution to the dual formulation of the nonnegative-weight adversary bound of \thmref{t:dualadversarydef}, in order to upper-bound $\ADV(\interval{l}{m}{n})$.  

Let
\begin{equation} \label{e:intervaladversarymatrix}
\Gamma = \sum_{x : \abs{x} = m} \ket x \Bigg( \sum_{i \notin x} \bra{x \oplus e^i} + c \sum_{\substack{y : \abs{y} = l-1\\ \abs{x \oplus y} = m-l+1}} \bra{y} \Bigg)
 \enspace ,
\end{equation}
where $c$ is to be determined.  For the case $l = 0$, the second term above is zero, so set $c = 0$.  

Then for each $i \in [n]$, let $\Gamma_i = \Gamma \circ \Delta_i$, so 
\begin{equation}
\begin{split}
\Gamma_i
 &= \sum_{x, y : x_i \neq y_i} \bra{x} \Gamma \ket y \\
 &= \sum_{\substack{x : \abs{x} = m\\ i \notin x}} \ketbra{x}{x \oplus e^i} + c \sum_{\substack{x : \abs{x} = m\\ i \in x}} \sum_{\substack{y : \abs{y} = l-1\\ \abs{x \oplus y} = m-l+1 \\ i \notin y}} \ketbra{x}{y}
  \enspace .
\end{split}
\end{equation}
Then 
\begin{equation}
\Gamma_i^\dagger \Gamma_i
 = \sum_{\substack{y : \abs{y} = m+1\\ i \in y}} \ketbra{y}{y} + c^2 \sum_{\substack{y, y', x\\ \abs{y} = \abs{y'} = l-1, \abs{x} = m\\ \abs{x \oplus y} = \abs{x \oplus y'} = m-l+1\\ i \in x, i \notin y, i \notin y'}} \ketbra{y}{y'} %\\
 \enspace .
\end{equation}
Thus $\Gamma_i^\dagger \Gamma_i$ is the direct sum of two matrices, for $l > 0$.  The first term above clearly has norm one, and we want to choose $c$ as large as possible so the second term also has norm one.  Now the eigenvector with largest eigenvalue for the second sum is, by symmetry, $\ket \psi = \sum_{x : \abs{x} = l-1, i \notin x} \ket x$, with eigenvalue $c^2 \binomial{n-l}{m-l} \binomial{m-1}{l-1}$.  Thus let 
\begin{equation} \label{e:intervaladversarymatrixc}
c = \big[ \binomial{n-l}{m-l} \binomial{m-1}{l-1} \big]^{-1/2}
\end{equation}
so $\norm{\Gamma_i} = 1$.  

Let us determine the norm of $\Gamma$.  We have 
\begin{equation} \label{e:intervaladversarymatrixnorm}
\norm{\Gamma \Gamma^\dagger} = \frac{ \bra{\psi_{m}} \Gamma \Gamma^\dagger \ket{\psi_{m}} }{ \braket{\psi_{m}}{\psi_{m}} }
 \enspace ,
\end{equation}
where $\ket{\psi_{m}} = \sum_{x : \abs{x} = m} \ket x$, $\norm{\ket{\psi_{m}}}^2 = \binomial{n}{m}$.  Then 
\begin{equation}
\begin{split}
\Gamma^\dagger \ket{\psi_{m}}
 &= \sum_{x : \abs{x} = m} \bigg[ \sum_{i \notin x} \ket{x \oplus e^i} + c \sum_{\substack{y : \abs{y} = l-1\\ \abs{x \oplus y} = m-l+1}} \ket y \bigg] \\
 &= \sum_{y : \abs{y} = m+1} (m+1) \ket y + c \sum_{y : \abs{y} = l-1} \binomial{n-l+1}{m-l+1} \ket y
\end{split}
\end{equation}
so 
\begin{equation}
\begin{split}
\norm{\Gamma \Gamma^\dagger}
 &= \frac{1}{\binomial{n}{m}} \bigg( (m+1)^2 \binomial{n}{m+1} + c^2 \binomial{n}{l-1} \binomial{n-l+1}{m-l+1}^2 \bigg) \\
 &= \begin{cases}
 (m+1)(n-m) + \frac{ m(n-l+1) }{ (m-l+1)^2 } & \text{if $l > 0$} \\
 (m+1)(n-m) & \text{if $l = 0$}
 \end{cases}
\end{split}
\end{equation}
This gives the desired lower bound on $\ADV(\interval{l}{m}{n})$.  

Next, we need to show a matching upper bound on $\ADV(\interval{l}{m}{n})$, using \thmref{t:dualadversarydef}.  For each $x$, we need a distribution $p_x$ on $[n]$.  For a function $f$ that is symmetrical under permuting the input bits, we look for distributions such that $p_x(i)$ depends only on whether $x_i = 0$ or $1$ and moreover its values in these cases depends only on $\abs{x}$.  Thus for $i = 0, 1, \ldots, n$, we fix a $p_i$, $0 \leq p_i \leq 1/i$ (with $p_0 = 0$) and set $p_i' = (1 - i p_i) / (n - i) \geq 0$ (with $p_n' = 0$).  Letting $p_i$ and $p_i'$ be the probabilities of $1$ and $0$ bits, respectively, when $\abs{x} = i$, Eq.~\eqnref{e:dualadversarydef} gives  
\begin{equation}
\ADV(f) 
\leq \min_{\{ p_i \}} \max_{\substack{x, y\\ f(x) \neq f(y)}} \bigg( \sum_{i : x_i = 1, y_i = 0} \sqrt{p_{\abs{x}} p_{\abs{y}}'} + \sum_{i : x_i = 0, y_i = 1} \sqrt{ p_{\abs{x}}' p_{\abs{y}} } \bigg)^{-1}
 \enspace .
\end{equation}
Fixing $\abs{x} = i$ and $\abs{y} = j$, the inner maximum is achieved by $x = 1^i 0^{n-i}$ and $y = 1^j 0^{n-j}$ because these strings have the fewest differing bits.  Thus the above bound simplifies to 
\begin{equation} \label{e:dualadvboundsymmetrical}
\ADV(f) 
\leq \min_{\{ p_i \}} \max_{\substack{ i < j\\ f(1^i 0^{n-i}) \neq f(1^j 0^{n-j})}} \bigg( (j-i) \sqrt{p_i' p_j} \bigg)^{-1}
 \enspace .
\end{equation}

Now specialize from symmetrical functions down to the Hamming-weight interval function $f = \interval{l}{m}{n}$.  For $i \geq m+1$, we should clearly set $p_i$ as large as possible, i.e., set $p_i = 1/i$, while for $i < l$ we should set $p_i'$ as large as possible, i.e., $p_i' = 1/(n-i)$.  

First consider the case $l = 0$.  Then we should set $p_i' = 1/(n-i)$ for all $i \leq m$ in order to minimize the expression in Eq.~\eqnref{e:dualadvboundsymmetrical}.  This gives 
\begin{equation}\begin{split}
\ADV(\interval{0}{m}{n}) 
&\leq \max_{\substack{0 \leq i \leq m\\ m+1 \leq j \leq n}} \frac{\sqrt{(n-i)j}}{j-i} \\
&= \sqrt{(m+1)(n-m)}
 \enspace ,
\end{split}\end{equation}
where the maximum is achieved at $i = m$, $j = m+1$.  

Now assume $l > 0$.  It turns out that there is some freedom in the choice of $p_i$ for $l \leq i < m$.  For $i = l, \ldots, m$, choose $p_i$ to balance the $(l-1,i)$ and $(i, m+1)$ terms above, i.e., setting 
\begin{equation}
\bigg( {(i-l+1)\sqrt{p_{l-1}' p_i}} \bigg)^{-1} = \bigg( {(m-i+1) \sqrt{p_i' p_{m+1}}} \bigg)^{-1}
 \enspace .
\end{equation}
Since $p_{l-1}' = 1 / (n-l+1)$ and $p_{m+1} = 1 / (m+1)$, this gives 
\begin{equation}
p_i = \frac{1}{ i + \frac{(m+1)(n-i)}{n-l+1} \big( \frac{i-l+1}{m-i+1} \big)^2 }
 \enspace .
\end{equation}
Substituting this value for $p_i$ back in, the $(l-1,i)$ and $(i,m+1)$ terms are both the square root of 
\begin{equation}
f(n, l, m, i) := \frac{i(n-l+1)}{(i-l+1)^2} + \frac{(n-i)(m+1)}{(m-i+1)^2}
 \enspace .
\end{equation}
The case $i = m$ gives the bound we are aiming for.  We claim that this is the worst case, i.e., that $f(n, l, m, i) \leq f(n, l, m, m)$ when $\abs{\frac{n}{2} - m} \leq \abs{\frac{n}{2} - l}$.  

First note that 
\begin{equation}
\frac{\partial^2}{\partial i^2} f(n, l, m, i) = 2 \frac{(n-l+1)(i+2l-2)}{(i-l+1)^4} + 2 \frac{(m+1)(3n-i-2m-2)}{(m-i+1)^4} > 0
 \enspace .
\end{equation}
Thus it suffices to check that $f(n, l, m, l) \leq f(n, l, m, m)$.  Indeed, 
\begin{equation}
f(n, l, m, m) - f(n, l, m, l) = \frac{(n-m-l)\big((m-l+1)^3-1\big)}{(m-l+1)^2}
 \enspace .
\end{equation}
Note that $l \leq m$.  The above difference is clearly $\geq 0$ if $m \leq \frac{n}{2}$.  If $m > \frac{n}{2}$, then the assumption $\abs{\frac{n}{2} - m} \leq \abs{\frac{n}{2} - l}$ implies that $l < \frac{n}{2}$ and $m - \frac{n}{2} \leq \frac{n}{2} - l$, i.e., $m + l \leq n$; so again the above difference is $\geq 0$.  
\end{proof}

\begin{proposition}
For the interval function $\interval{l}{m}{n}$, $\ADV(\interval{l}{m}{n}) < \ADVpm(\interval{l}{m}{n})$ if and only if $l \notin \{ 0, 1, m, n-1, n \}$.  
\end{proposition}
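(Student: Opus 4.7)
The plan is to split into the two directions of the iff.

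For the easy ``only if'' direction, fix $l \in \{0, 1, m, n-1, n\}$. Under the normalization $\abs{\frac{n}{2} - m} \leq \abs{\frac{n}{2} - l}$ and $l \in \{0, 1, m\}$, a direct substitution into \eqnref{e:intervalspanprogram} gives the span-program bound $\wsize{P_{l,m}^n} \leq \sqrt{(m+1)(n-m) + \frac{l(n-l+1)}{m-l+1}}$, which matches the formula for $\ADV(\interval{l}{m}{n})$ in \propref{t:intervaladversary}: a short calculation shows that $\frac{l(n-l+1)}{m-l+1} = \frac{m(n-l+1)}{(m-l+1)^2}$ precisely when $l = 1$ or $l = m$, while for $l = 0$ both expressions vanish. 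Combined with the sandwich $\wsize{P_{l,m}^n} \geq \ADVpm(\interval{l}{m}{n}) \geq \ADV(\interval{l}{m}{n})$ from \thmref{t:wsizeadvbound}, all three quantities coincide, so $\ADV = \ADVpm$. The remaining cases $l \in \{n-1, n\}$ are handled by the bit-complementation symmetry $\interval{l}{m}{n}(x) = \interval{n-m}{n-l}{n}(\bar x)$, which merely permutes the rows and columns of any adversary matrix while preserving each $\Delta_j$; hence both $\ADV$ and $\ADVpm$ are invariant under it, reducing to the previous cases with $l' = n-m \in \{0, 1\}$ (or $l' = m'$ when $l = n-1 = m$).

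For the harder ``if'' direction, assume $2 \leq l \leq m-1$ and $l \leq n-2$. I plan to exhibit a signed adversary matrix $\Gamma_\pm$ whose ratio $\norm{\Gamma_\pm}/\max_j \norm{\Gamma_\pm \circ \Delta_j}$ strictly exceeds $\ADV(\interval{l}{m}{n})$. Start from the extremal nonnegative adversary matrix $\Gamma$ of \eqnref{e:intervaladversarymatrix}, whose nonzero entries lie at index pairs $(x, y)$ with $(\abs x, \abs y) \in \{(m, m+1), (m, l-1)\}$ (and symmetric). I would add a signed perturbation $\epsilon R$ supported on pairs of Hamming weights $(m, l-2)$, using $l \geq 2$ to ensure $l-2 \geq 0$, whose entries carry signs determined by a nontrivial isotypic component of the $S_n$ permutation action on inputs. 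By Schur's lemma, such an $R$ makes $R \circ \Delta_j$ orthogonal to the principal (symmetric) eigenvector of each $(\Gamma \circ \Delta_j)(\Gamma \circ \Delta_j)^\dagger$, so $\norm{(\Gamma + \epsilon R) \circ \Delta_j}^2 = \norm{\Gamma \circ \Delta_j}^2 + O(\epsilon^2)$ for every $j$. Meanwhile, a direct calculation shows that the top eigenvector $\ket{\psi_m} = \sum_{\abs x = m} \ket x$ of $\Gamma \Gamma^\dagger$ from the proof of \propref{t:intervaladversary} can be coupled nontrivially to $\Gamma R^\dagger \ket{\psi_m}$, so $\norm{\Gamma + \epsilon R}$ shifts linearly in $\epsilon$. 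Taking $\epsilon > 0$ small enough then yields the sought $\Gamma_\pm$.

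The main obstacle is verifying the existence of a suitable nontrivial isotypic component on the $(m, l-2)$-weight block. This is a question about the decomposition of the permutation modules $\C^{\binom{[n]}{m}}$ and $\C^{\binom{[n]}{l-2}}$ under $S_n$; both contain irreducible components beyond the trivial one under the relevant Young-diagram constraints, and one must pick an irrep common to both and verify nonvanishing of the associated first-order perturbation. The excluded values $l \in \{0, 1, m, n-1, n\}$ appear precisely because at those boundary values the target layer $\binom{[n]}{l-2}$ is empty or too small, or the active structure of $\Gamma$ already saturates the representation-theoretic freedom, leaving no room for a genuinely signed improvement. Combining these ingredients with a careful verification of the first-order perturbation formulae for $\norm{\cdot}$ and $\norm{\cdot \circ \Delta_j}$ will complete the proof.
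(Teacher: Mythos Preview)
Your ``only if'' direction is fine and matches the paper: the span program from \propref{t:intervalspanprogram} has witness size equal to $\ADV(\interval{l}{m}{n})$ precisely when $l\in\{0,1,m\}$, and the sandwich with \thmref{t:wsizeadvbound} forces $\ADV=\ADVpm$; the bit-complement symmetry then handles $l\in\{n-1,n\}$.

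The ``if'' direction, however, has a genuine gap. Your central claim---that $\norm{\Gamma+\epsilon R}$ shifts \emph{linearly} in $\epsilon$---is false for the perturbation you propose. You place $R$ on the Hamming-weight block $(m,l-2)$, while $\Gamma$ lives on the blocks $(m,m+1)$ and $(m,l-1)$. Hence $R\ket{\psi_m}$ is supported entirely at weight $l-2$, where $\Gamma$ has no entries, so $\Gamma R\ket{\psi_m}=0$; it follows that $\bra{\psi_m}(\Gamma R+R\Gamma)\ket{\psi_m}=0$ and the first-order shift to the top eigenvalue of $\Gamma$ vanishes. In fact this obstruction is intrinsic to your representation-theoretic idea: the principal eigenvector of the $S_n$-invariant matrix $\Gamma$ lies in the trivial isotypic component, so for any $R$ in a nontrivial $S_n$-irrep one has $\bra v R\ket v=0$ by Schur's lemma, and the first-order shift is always zero. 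Thus your Schur argument, which you invoke to kill the first-order shift in $\norm{\Gamma_j}$, simultaneously kills the first-order shift in $\norm{\Gamma}$.

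The paper's proof is therefore genuinely different in mechanism. It works at \emph{second} order, using an $S_n$-\emph{invariant} perturbation $\Lambda$ (not one from a nontrivial irrep) supported on the new weight pair $(m-1,l-1)$, with entries $1$ on one sub-block and $-\delta$ on another. Because $\Gamma$ and $\Lambda$ are on disjoint weight pairs, $\Gamma\Lambda=0$, so both $\norm{\Gamma^{(\epsilon)}}$ and $\norm{\Gamma^{(\epsilon)}_j}$ have vanishing first derivative; the second derivatives are computed by nondegenerate and degenerate perturbation theory, respectively. The free parameter $\delta=\tfrac{m-l}{(l-1)(n-m+1)}$ (well-defined since $l\geq 2$) is then chosen to make $\tfrac{\partial^2}{\partial\epsilon^2}\norm{\Gamma^{(\epsilon)}_j}\big|_{\epsilon=0}=0$ while leaving $\tfrac{\partial^2}{\partial\epsilon^2}\norm{\Gamma^{(\epsilon)}}\big|_{\epsilon=0}>0$. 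The ``signed'' nature of the perturbation enters through the $-\delta$ coefficient, not through a nontrivial irrep.
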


\begin{proof}
\def\Gammaeps {{\Gamma^{(\epsilon)}}}
\def\Gammaepsi {{\Gamma_i^{(\epsilon)}}}
\newcommand{\depsilon}[1]{{\frac{\partial}{\partial \epsilon} {#1} \big\vert_{\epsilon = 0}}}
\newcommand{\ddepsilon}[1]{{\frac{\partial^2}{\partial \epsilon^2} {#1} \big\vert_{\epsilon = 0}}}

For $l \in \{ 0, 1, m, n-1, n \}$, $\ADV(\interval{l}{m}{n}) = \ADVpm(\interval{l}{m}{n})$ because \propref{t:intervalspanprogram} gave a span program $P_{l,m}^{n}$ with witness size $\wsize{P_{l,m}^{n}} = \ADV(\interval{l}{m}{n})$, and by \thmref{t:wsizeadvbound}, $\wsize{P_{l,m}^{n}} \geq \ADVpm(\interval{l}{m}{n})$.  

Otherwise, assume that $2 \leq l \leq m-1$ and $\abs{\frac{n}{2} - m} \leq \abs{\frac{n}{2}-l}$.  We will show that a perturbation of the adversary matrix $\Gamma$ from Eqs.~\eqnref{e:intervaladversarymatrix} and~\eqnref{e:intervaladversarymatrixc} in the proof of \propref{t:intervaladversary} increases $\norm{\Gamma} / \norm{\Gamma \circ \Delta_i}$ for each $i \in [n]$.  The perturbation we consider will be in the direction of 
\begin{equation}
\Lambda = \sum_{x : \abs{x} = m-1} \ket x \bigg( \sum_{\substack{y : \abs{y} = l-1\\ \abs{x \oplus y} = m-l}} \bra{y} - \delta \sum_{\substack{y : \abs{y} = l-1\\ \abs{x \oplus y} = m-l+2}} \bra{y} \bigg)
 \enspace ,
\end{equation}
where $\delta > 0$ will be determined later.  Let $\Gammaeps = \Gamma + \eps \Lambda$.  Let $\Lambda_i = \Lambda \circ \Delta_i$ and $\Gammaepsi = \Gammaeps \circ \Delta_i$.  

First of all, note that $\depsilon{\norm{\Gammaeps} / \norm{\Gammaepsi}} = 0$.  Indeed, 
\begin{equation}
\depsilon{\norm{\Gammaeps}}
= \frac{1}{2 \norm{\Gamma}} \depsilon{\norm{\Gammaeps^\dagger \Gammaeps}}
 \enspace .
\end{equation}
However, $\Gammaeps^\dagger \Gammaeps = \Gamma^\dagger \Gamma + \epsilon^2 \Lambda^\dagger \Lambda$ since $\Gamma^\dagger \Lambda = \Lambda^\dagger \Gamma = 0$.  Thus $\depsilon{\norm{\Gammaeps}} = 0$, and similarly $\depsilon{\norm{\Gammaepsi}} = 0$.  

Therefore, we need to compute $\ddepsilon{\norm{\Gammaeps} / \norm{\Gammaepsi}}$.  Now 
\begin{equation}\begin{split}
\ddepsilon{\norm{\Gammaeps}}
&= \frac{1}{2\norm{\Gamma}} \ddepsilon{\norm{\Gamma^\dagger \Gamma + \epsilon^2 \Lambda^\dagger \Lambda}} \\
&= \frac{1}{\norm{\Gamma}} \depsilon{ \norm{\Gamma^\dagger \Gamma + \epsilon \Lambda^\dagger \Lambda} }
 \enspace .
\end{split}\end{equation}
By the Perron-Frobenius theorem, $\Gamma^\dagger \Gamma$ has a unique eigenvalue of largest magnitude, and it is nondegenerate.  Letting $\ket \psi$ be the corresponding eigenvector, we have by nondegenerate perturbation theory
\begin{equation} \label{e:ddepsilonperturbation}
\ddepsilon{\norm{\Gammaeps}}
= \frac{1}{\norm{\Gamma}} \frac{ \norm{\Lambda \ket \psi}^2 }{ \norm{\ket \psi}^2 }
 \enspace .
\end{equation}
For $j = 0, 1, \ldots, n$, let $\ket{\psi_j} = \sum_{x : \abs{x} = j} \ket x$, with $\norm{\ket{\psi_j}}^2 = \binomial{n}{j}$.  By Eq.~\eqnref{e:intervaladversarymatrixnorm}, we may take 
\begin{align}
\ket \psi
&= \Gamma^\dagger \ket{\psi_m} \nonumber \\
&= \sum_{x : \abs{x} = m} \Bigg( \sum_{i \notin x} \ket{x \oplus e^i} + c \sum_{\substack{y : \abs{y} = l-1\\ \abs{x \oplus y} = m-l+1}} \ket y \Bigg) \nonumber \\
&= (m+1) \ket{\psi_{m+1}} + c \binomial{n-l+1}{m-l+1} \ket{\psi_{l-1}}
 \enspace ,
\intertext{so}
\Lambda \ket\psi 
&= c \binomial{n-l+1}{m-l+1} \ket{\psi_{m-1}} \big[ \binomial{m-1}{m-l} - \delta \binomial{m-1}{m-l-1}(n-m+1) \big]
 \enspace .
\end{align}
Substituting this into Eq.~\eqnref{e:ddepsilonperturbation}, 
\begin{align}
\ddepsilon{\norm{\Gammaeps}}
&= \frac{1}{\norm{\Gamma}} \frac{ c^2 \binomial{n-l+1}{m-l+1}^2 \binomial{n}{m-1} \big[ \binomial{m-1}{m-l} - \delta \binomial{m-1}{m-l-1}(n-m+1) \big]^2 }{ (m+1)^2 \binomial{n}{m+1} + c^2 \binomial{n-l+1}{m-l+1}^2 \binomial{n}{l-1} } \nonumber \\
&= \frac{1}{\norm{\Gamma}} \frac{\binomial{m}{l-1} \binomial{n-l+1}{m-l+1}}{\big( \frac{(m+1)(n-m)}{n-l+1} + \frac{m}{(m-l+1)^2} \big)(n-m+1)} \bigg( 1 - \frac{(m-l)(n-m+1)}{l} \delta \bigg)^2
 \enspace .
\end{align}

Unlike $\Gamma^\dagger \Gamma$, $\Gamma_i^\dagger \Gamma_i$ has a degenerate principal eigenspace.  This principal eigenspace is spanned by $\ket \phi = \sum_{\substack{x : \abs{x} = l-1 \\ i \notin x}} \ket x$ and $\ket{\phi'} = \sum_{\substack{x : \abs{x} = m+1 \\ i \in x}} \ket x$.  Since $\Lambda_i \ket{\phi'} = 0$, we have by degenerate perturbation theory 
\begin{equation} \label{e:ddepsiloniperturbation}
\begin{split}
\ddepsilon{\norm{\Gammaepsi}}
&= \frac{1}{\norm{\Gamma_i}} \depsilon{ \norm{\Gamma_i^\dagger \Gamma_i + \epsilon \Lambda_i^\dagger \Lambda_i} } \\
&= \frac{1}{\norm{\Gamma_i}} \frac{ \norm{\Lambda_i \ket \phi}^2 }{ \norm{\ket \phi}^2 }
\end{split}\end{equation}
Recall that $\norm{\Gamma_i} = 1$, and note that $\norm{\ket \phi}^2 = \binomial{n-1}{l-1}$.  Then
\begin{equation}
\Lambda_i \ket \phi 
= \big[ \binomial{m-2}{m-l-1} - \delta \binomial{m-2}{m-l-2}(n-m+1) \big] \sum_{\substack{x : \abs{x} = m-1 \\ i \in x}} \ket x
 \enspace .
\end{equation}
Substituting into Eq.~\eqnref{e:ddepsilonperturbation}, 
\begin{align}
\ddepsilon{\norm{\Gammaepsi}}
&= \frac{ \binomial{n-1}{m-2} }{ \binomial{n-1}{l-1} } \big[ \binomial{m-2}{m-l-1} - \delta \binomial{m-2}{m-l-2}(n-m+1) \big]^2 \nonumber \\
&= \binomial{m-2}{l-1} \binomial{n-l}{m-l-1} \bigg( 1 - \frac{(l-1)(n-m+1)}{m-l} \delta \bigg)^2
 \enspace .
\end{align}

Now set $\delta = \frac{m-l}{(l-1)(n-m+1)}$; recall that $l \geq 2$ so the denominator is nonzero.  We get $\ddepsilon{\norm{\Gammaepsi}} = 0$ while $\ddepsilon{\norm{\Gammaeps}} > 0$.  Thus $\ddepsilon{{\norm{\Gammaeps}}/{\norm{\Gammaepsi}}} > 0$, so $\ADV(\interval{l}{m}{n}) < \ADVpm(\interval{l}{m}{n})$.  
\end{proof}

\section{Examples of composed span programs} \label{s:compositionexamples}

In order to illustrate the different methods of span program composition used in \thmref{t:spanprogramcomposition} and \propref{t:reducedtensorproductcompose}, in this appendix we give examples of span program direct-sum composition (\defref{t:directsumcomposedef}), tensor-product composition (\defref{t:tensorproductcomposedef}), and reduced-tensor-product composition (\defref{t:reducedtensorproductcomposedef}).  For presenting the examples, we use the correspondence from \defref{t:spanprogramadjacencymatrix} between span programs and bipartite graphs.   

Our examples will use the following monotone span programs for fan-in-two AND and OR gates: 

\begin{definition} \label{t:andorspanprogramdef}
Define span programs $P_{\AND}$ and $P_{\OR}$ computing $\AND$ and $\OR$, $\B^2 \rightarrow \B$, respectively, by 
\begin{align}
P_{\AND}: &&
\ket t &= \left( \begin{matrix} \alpha_1 \\ \alpha_2 \end{matrix} \right) ,\; 
&\ket{v_1} &= \left( \begin{matrix} \beta_1 \\ 0 \end{matrix} \right) ,\; &\ket{v_2} &= \left( \begin{matrix} 0 \\ \beta_2 \end{matrix} \right) \\
P_{\OR}: &&
\ket t &= \delta ,\; &\ket{v_1} &= \epsilon_1 ,\;& \ket{v_2} &= \epsilon_2 
\end{align}
for parameters $\alpha_j, \beta_j, \delta, \epsilon_j > 0$, $j \in \{1,2\}$.  
Both span programs have $I_{1,1} = \{1\}$, $I_{2,1} = \{2\}$ and $\Ifree = I_{1,0} = I_{2,0} = \emptyset$.  
Let $\alpha = \sqrt{\alpha_1^2 + \alpha_2^2}$.  
\end{definition}

Now let $\varphi : \B^n \rightarrow \B$ be a size-$n$ AND-OR formula in which all gates have fan-in two.  By composing the span programs of \defref{t:andorspanprogramdef} according to $\varphi$, we obtain a span program $P_\varphi$ computing $\varphi$.  The particular composed span program $P_\varphi$ will depend on what composition method is used.  \figref{f:reducedtensorproductcompositionexamples} gives several examples of tensor-product and reduced-tensor-product composition.  Much like a canonical span program, the structure of the reduced-tensor-product-composed span program is related to the set of ``maximal false" inputs to $\varphi$.  
\figref{f:exampledirectsum} compares reduced-tensor-product composition to direct-sum composition, as well as to the graphs used in the AND-OR formula-evaluation algorithms of Refs.~\cite{AmbainisChildsReichardtSpalekZhang07andor, fgg:and-or}.  Although these algorithms did not use the span program framework, the graphs they use do correspond to span programs, built essentially according to direct-sum composition of $P_{\AND}$ and $P_{\OR}$.  The small-eigenvalue spectral analysis in \thmref{t:bipartitepsdreduction} simplifies their proofs.  

Although not shown here, the different composition methods can also be combined.  Hybrid-composed span programs will be analyzed in~\cite{Reichardt09andorfaster}.  

Typical parameter choices for $P_{\AND}$ and $P_{\OR}$ are given by: 

\begin{claim} \label{t:andorspanprogramwsize}
With the parameters in \defref{t:andorspanprogramdef} set to 
\begin{align}
\alpha_j &= (s_j / s_p)^{1/4} & \beta_j &= 1 \\
\delta &= 1 & \epsilon_j &= (s_j / s_p)^{1/4}
 \enspace ,
\end{align}
where $s_p = s_1 + s_2$, the span programs $P_{\AND}$ and $P_{\OR}$ satisfy: 
\begin{align}\begin{split}
\wsizexS{P_{\AND}}{x}{(\sqrt{s_1}, \sqrt{s_2})}
&=
\begin{cases}
\sqrt{s_p} & \text{if $x \in \{11, 10, 01\}$} \\
\frac{\sqrt{s_p}}{2} & \text{if $x = 00$}
\end{cases} \\
\wsizexS{P_{\OR}}{x}{(\sqrt{s_1}, \sqrt{s_2})}
&=
\begin{cases}
\sqrt{s_p} & \text{if $x \in \{00, 10, 01\}$} \\
\frac{\sqrt{s_p}}{2} & \text{if $x = 11$}
\end{cases}
\end{split}\end{align}
\end{claim}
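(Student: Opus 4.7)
The plan is to verify all eight assertions by direct calculation from Definition~\ref{t:wsizedef}, which is straightforward because $P_{\AND}$ lives in a $2$-dimensional space and $P_{\OR}$ in a $1$-dimensional space, so the set of candidate witnesses is tiny. The key algebraic simplification is that the given parameters satisfy $\alpha_j^2=\epsilon_j^2=\sqrt{s_j/s_p}$ and $\beta_j=\delta=1$, so $\alpha_j^2/\sqrt{s_j}=\epsilon_j^2/\sqrt{s_j}=1/\sqrt{s_p}$ is independent of $j$. Moreover, because the cost vector is $(\sqrt{s_1},\sqrt{s_2})$, the weighting operator $S$ from Definition~\ref{t:wsizedef} acts as multiplication by $s_j^{1/4}$ on the input index $j$, so $\|S\ket w\|^2=\sum_j\sqrt{s_j}|w_j|^2$ and the cost-weighted norms collapse to symmetric sums in every case.

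First I would dispatch the cases where the witness is essentially forced. For AND at $x=11$, the equation $A\ket w=\ket t$ has the unique solution $w_j=\alpha_j$, giving $\|S\ket w\|^2=\sum_j\sqrt{s_j}\alpha_j^2=\sqrt{s_p}$. For AND at $x=10$ (and symmetrically $x=01$), the constraint $\Pi(x)A^\dagger\ket{w'}=0$ forces the available coordinate of $\ket{w'}$ to vanish, and then $\langle t|w'\rangle=1$ pins the other coordinate to $1/\alpha_2$; the cost-weighted norm becomes $\sqrt{s_2}/\alpha_2^2=\sqrt{s_p}$. For OR at $x=10$ or $x=01$, the one-dimensional target equation forces $w=1/\epsilon_j$, yielding $\sqrt{s_j}/\epsilon_j^2=\sqrt{s_p}$; and for OR at $x=00$, $\Pi(x)=0$ leaves only the constraint $w'=1/\delta=1$, giving $\sum_j\sqrt{s_j}\epsilon_j^2=\sqrt{s_p}$.

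The two remaining cases are genuine constrained minimizations, and they are dual to each other. For AND at $x=00$, $\Pi(x)=0$ so $\ket{w'}$ varies freely subject only to $\alpha_1 w'_1+\alpha_2 w'_2=1$, and we minimize $\sqrt{s_1}|w'_1|^2+\sqrt{s_2}|w'_2|^2$. A single Lagrange multiplier $\lambda$ gives $w'_j=\lambda\alpha_j/(2\sqrt{s_j})$; substituting into the constraint and using $\sum_j\alpha_j^2/\sqrt{s_j}=2/\sqrt{s_p}$ yields $\lambda=\sqrt{s_p}$, and the optimum is $\lambda/2=\sqrt{s_p}/2$. The OR case $x=11$ has an identical structure, with $\epsilon_j$ playing the role of $\alpha_j$ and $\ket w$ the role of $\ket{w'}$; the same Lagrange calculation returns $\sqrt{s_p}/2$.

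There is no real conceptual obstacle; the parameters have been engineered so that $\alpha_j^2/\sqrt{s_j}$ and $\epsilon_j^2/\sqrt{s_j}$ are independent of $j$, which is exactly the invariance that makes the Lagrange optima symmetric and closed-form. The only care needed is bookkeeping the fourth-root factors that arise from writing the cost vector as $(\sqrt{s_1},\sqrt{s_2})$ rather than $(s_1,s_2)$ and checking that one has used the appropriate branch of Definition~\ref{t:wsizedef} (true-case versus false-case) in each of the eight subcases.
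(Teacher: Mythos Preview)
Your proposal is correct; the eight witness-size computations check out exactly as you describe, and the Lagrange-multiplier calculation for the two genuinely underdetermined cases is right (the key identity $\alpha_j^2/\sqrt{s_j}=\epsilon_j^2/\sqrt{s_j}=1/\sqrt{s_p}$ is what makes everything collapse). The paper does not actually supply a proof of this claim---it is stated as a routine verification and followed only by the remark that the $P_{\AND}$ and $P_{\OR}$ witness sizes are related via De Morgan duality and \lemref{t:dualspanprogram}---so your direct computation is precisely the intended argument, and you could even halve the work by invoking that duality to transfer the four $P_{\OR}$ cases to the four $P_{\AND}$ cases.
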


It can be seen as a consequence of De Morgan's laws and span program duality (\lemref{t:dualspanprogram}) that $\wsizexS{P_{\AND}}{x}{(\sqrt{s_1}, \sqrt{s_2})} = \wsizexS{P_{\OR}}{\bar x}{(\sqrt{s_1}, \sqrt{s_2})}$ in \claimref{t:andorspanprogramwsize}.  

\begin{figure}
\centering
\begin{tabular}{c@{$\quad$}c}
\subfigure[$x_1 \vee x_2$]{\label{f:orgate}\raisebox{.2in}{\includegraphics[scale=1]{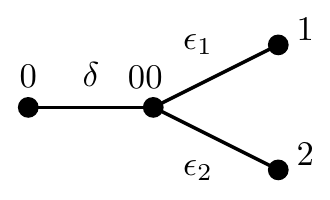}}}&
\subfigure[$x_1 \wedge x_2$]{\label{f:andgate}\includegraphics[scale=1]{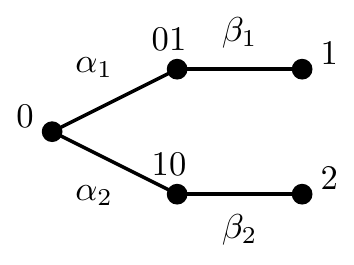}}\\
\subfigure[$(x_1 \wedge x_2) \vee x_3$]{\label{f:or_and}\raisebox{.05in}{\includegraphics[scale=1]{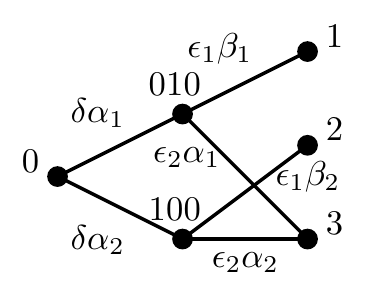}}}&
\subfigure[$(x_1 \vee x_2) \wedge x_3$]{\label{f:and_or}\includegraphics[scale=1]{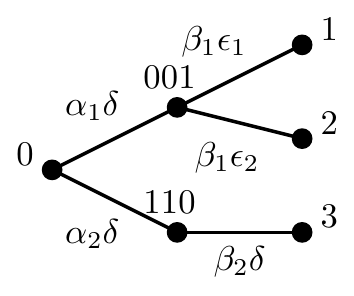}}
\\\multicolumn{1}{c}{\subfigure[$\big((x_1 \wedge x_2) \vee x_3\big) \wedge x_4$]{\label{f:and_or_and}\includegraphics[scale=1]{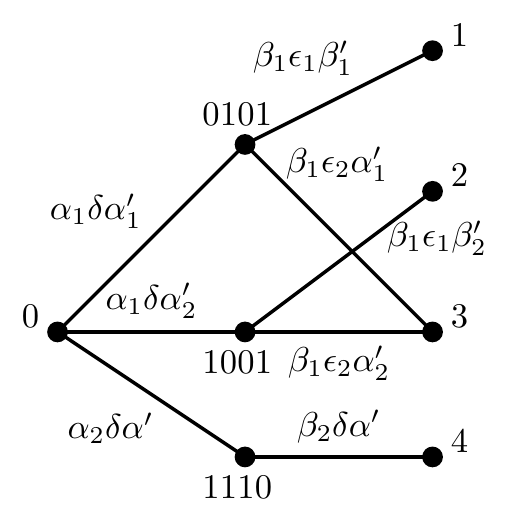}}}&
\subfigure[$\big((x_1 \wedge x_2) \vee x_3\big) \wedge x_4$]{\label{f:and_or_and_tensor}\includegraphics[scale=1]{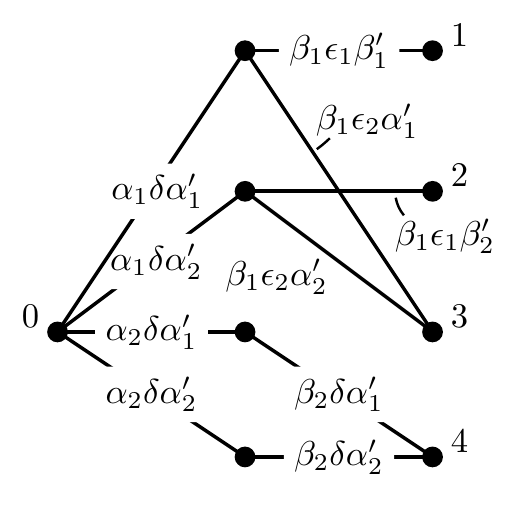}}
\end{tabular}
\caption{In (a) and (b) are given the graphs $G_{P_{\OR}}$ and $G_{P_{\AND}}$, respectively, according to \defref{t:spanprogramadjacencymatrix}.  Parts (c) and (d) show tensor-product compositions of these span programs, which are also the reduced-tensor-product compositions.  Part (e) shows the reduced-tensor-product composition of the span programs for a larger formula.  Notice that for reduced-tensor-product composition, the structure of the graph changes locally as each additional gate is composed onto the end of the formula, e.g., going from (d) to (e).  However, composing additional gates has a nonlocal effect on edge weights.  In each graph, the output vertex is labeled $0$ and the input vertices are labeled by $[n]$.  Similarly to canonical span programs, \defref{t:spanprogramcanonicaldef}, the other vertices are labeled by the maximal false inputs to the formula; notice in each example that a vertex labeled with input $x$ is connected exactly to those input bits $j \in [n]$ with $x_j = 0$.  Part~(f) shows a span program for the same formula as part (e), except built using tensor-product composition.  The vertex $1110$ has been unnecessarily duplicated.  
In (e) and (f), there are two $\AND$ gates; the primed variables refer to the $P_{\AND}$ span program coefficients for $x_1 \wedge x_2$.} \label{f:reducedtensorproductcompositionexamples}
\end{figure}

\begin{figure}
\centering
\begin{tabular}{c@{$\quad$}c}
\multicolumn{1}{c}{\subfigure[]{\label{f:formula}\includegraphics[scale=1]{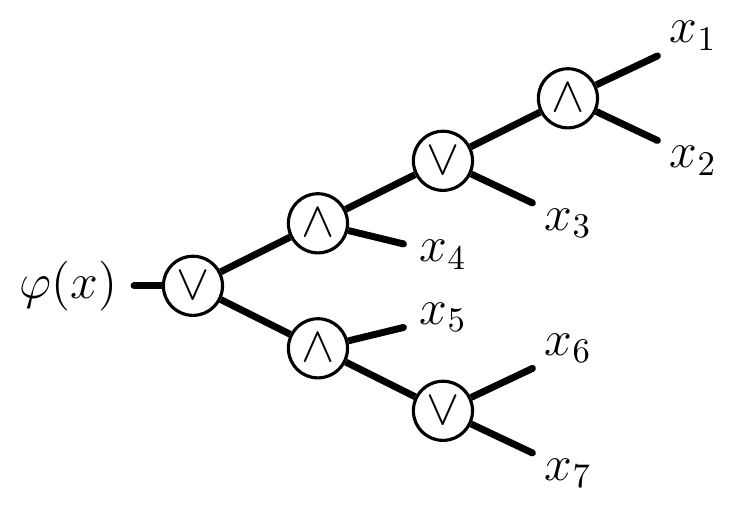}}} &
\subfigure[]{\includegraphics[scale=1]{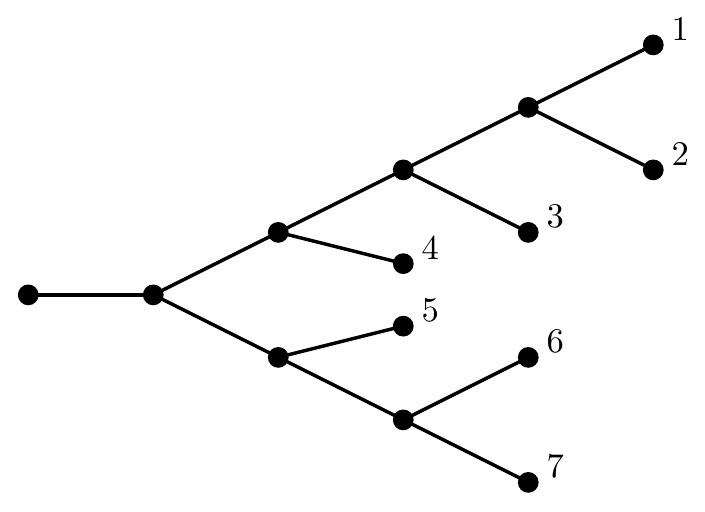}} \\
\subfigure[]{
$\place{\includegraphics[scale=1]{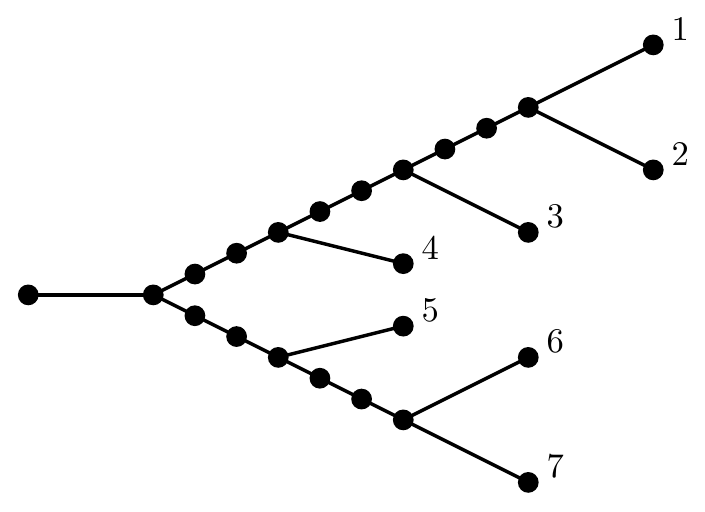}}{0mu}{53.5pt}$	%% centered vertically manually
}
& 
\subfigure[]{\label{f:tensorproductgraph}\includegraphics[scale=1]{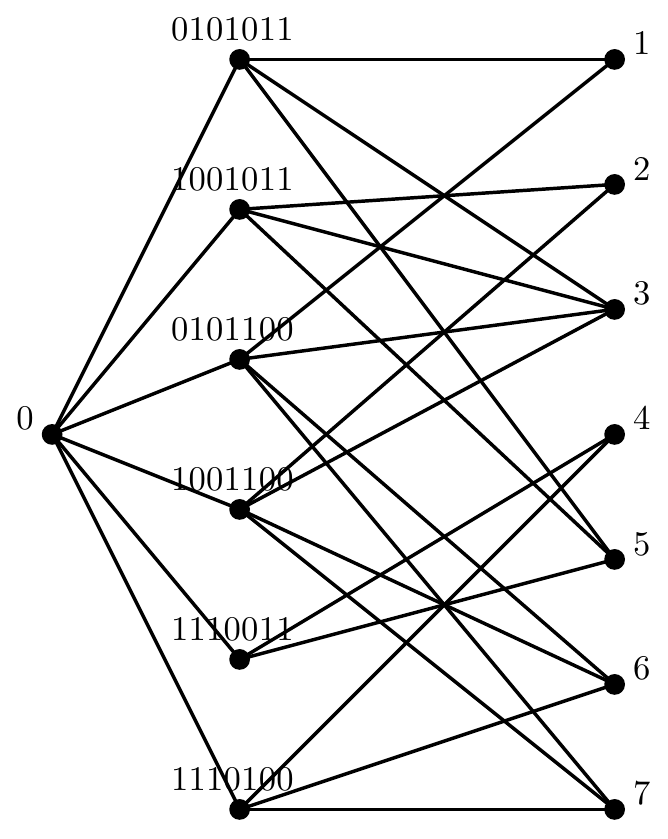}}
\end{tabular}
\caption{Consider the AND-OR formula $\varphi(x) = \big( [ (x_1 \wedge x_2) \vee x_3 ] \wedge x_4 \big) \vee \big( x_5 \wedge [x_6 \vee x_7] \big)$, represented as a tree in (a).  Part~(b) shows the graph on which~\cite{AmbainisChildsReichardtSpalekZhang07andor} runs a quantum walk in order to evaluate $\varphi$.  The graph is essentially the same as the formula tree.  The weight of an edge from child $v$ to parent $p$ is the $1/4$ power of the ratio $s_v / s_p$ of sizes of the subformula rooted at $v$ to that rooted at $p$, as in \claimref{t:andorspanprogramwsize}.  The only exception is the weight of the edge to the root, which is set to $1/n^{1/4}$ for amplification, as in \thmref{t:spanprogramspectralanalysis} and \thmref{t:generalspanprogramalgorithmnonblackbox}.  
Part~(c) shows the graph one obtains by from direct-sum composition of $P_{\AND}$ and $P_{\OR}$.  It is the same as in (b), except with two weight-one edges inserted above each internal gate.  These edges can be interpreted as pairs of NOT gates that cancel out.  Including them would slow the~\cite{AmbainisChildsReichardtSpalekZhang07andor} algorithm down only by a constant factor.  Part (d) shows a span program derived from the same formula using reduced-tensor-product composition only.  Vertices are labeled using the same convention as in \figref{f:reducedtensorproductcompositionexamples}.  Even though every gate has fan-in two, graph vertices can have exponentially large degree.  
} \label{f:exampledirectsum}
\end{figure}

\end{document}